\documentclass[11pt]{article}

\usepackage{setspace} 
\usepackage{titlesec}
\titlespacing*{\paragraph}{0pt}{0.05\baselineskip}{0.5em}

\usepackage[letterpaper,
            top=1in,
            bottom=1in,
            left=1in,
            right=1in]{geometry}

\usepackage{cancel}
\usepackage{paralist}
\usepackage[shortlabels]{enumitem}
\setlist{nosep}
\setlist[itemize]{topsep=0.25\baselineskip}
\setlist[enumerate]{topsep=0.25\baselineskip}
\usepackage{graphicx}
\usepackage{subcaption}
\usepackage{xcolor}
\usepackage{bm}
\usepackage{mathtools}

\usepackage[many]{tcolorbox}

\usepackage{tikz}
\usetikzlibrary{3d,calc,quotes,angles,patterns,tikzmark,shapes.misc}

\usepackage{algorithm}
\usepackage{algpseudocode}

\usepackage{tocloft}
\renewcommand{\cftsecleader}{\cftdotfill{\cftdotsep}}
\usepackage{amsmath, amsthm, amssymb}
\usepackage{thmtools}

\usepackage[T1]{fontenc}
\usepackage{cabin} 
\usepackage[theoremfont, tighter]{newpxtext}
\usepackage[varbb, varg]{newpxmath}
\usepackage[cal=euler, bb=dsserif]{mathalpha}

\usepackage{microtype}
\UseMicrotypeSet[protrusion]{basicmath}

\definecolor{defcol}{RGB}{10, 105, 75}
\newcommand{\defemph}[1]{\textcolor{defcol}{\emph{#1}}}

\renewcommand{\leq}{\leqslant}
\renewcommand{\geq}{\geqslant}
\renewcommand{\le}{\leqslant}
\renewcommand{\ge}{\geqslant}

\usepackage{hyperref}
\definecolor{refcol}{RGB}{200, 80, 60}
\definecolor{citecol}{RGB}{100, 80, 180}
\hypersetup{
    colorlinks,
    linkcolor=refcol,
    citecolor=citecol,
    urlcolor={blue!80!black}
}
\usepackage[capitalize,nameinlink]{cleveref} 

\declaretheorem[style=plain, numberwithin=section, name=Theorem]{theorem}
\declaretheorem[style=plain, sibling=theorem, name=Lemma]{lemma}
\declaretheorem[style=plain, sibling=theorem, name=Corollary]{corollary}
\declaretheorem[style=plain, sibling=theorem, name=Proposition]{proposition}
\declaretheorem[style=plain, sibling=theorem, name=Claim]{claim}
\declaretheorem[style=plain, sibling=theorem, name=Fact]{fact}

\declaretheorem[style=definition, sibling=theorem, name=Definition]{definition}

\declaretheorem[style=remark, sibling=theorem, name=Remark]{remark}

\declaretheorem[style=plain, name=Theorem, Refname={Theorem,Theorems}]{stheorem}
\declaretheorem[style=plain, name=Lemma]{slemma}
\declaretheorem[style=definition, name=Definition, Refname={Definition,Definitions}]{sdefinition}

\makeatletter
\def\mkcal#1{%
  \expandafter\providecommand\csname c#1\endcsname{%
    \ensuremath{{\bm{\mathcal{#1}}}}%
  }%
}
\@tfor\next:=A,B,C,D,E,F,G,H,I,J,K,L,M,N,O,P,Q,R,S,T,U,V,W,X,Y,Z\do{%
  \expandafter\mkcal\next
}
\makeatother

\newcommand{\N}{\mathbb{N}} 
\newcommand{\Z}{\mathbb{Z}} 
\newcommand{\F}{\mathbb{F}} 

\newcommand{\Bits}{\{0, 1\}} 

\renewcommand{\hat}{\widehat}
\newcommand{\eps}{\varepsilon}

\DeclareMathOperator*{\E}{{\bm{\mathrm{E}}}}
\let\Pr\relax
\DeclareMathOperator*{\Pr}{{\bm{\mathrm{Pr}}}}
\DeclareMathOperator*{\Var}{{\bm{\mathrm{Var}}}}
\DeclareMathOperator*{\Cov}{{\bm{\mathrm{Cov}}}}

\newcommand{\Ber}{\mathrm{Ber}} 

\newcommand{\defeq}{\coloneq}
\newcommand{\eqdef}{\eqcolon}

\newcommand{\rest}[1]{{{\mkern-3mu\upharpoonright\mkern-2mu}_{ #1 }}} 
\newcommand{\wt}{\operatorname{wt}} 
\newcommand{\supp}{\operatorname{supp}} 
\newcommand{\rank}{\operatorname{rank}}
\newcommand{\Span}{\operatorname{span}}
\newcommand{\bias}{\operatorname{bias}}

\newcommand{\seq}{\subseteq}

\DeclarePairedDelimiter{\abs}{\lvert}{\rvert} 
\DeclarePairedDelimiter{\card}{\lvert}{\rvert} 
\DeclarePairedDelimiter{\norm}{\lVert}{\rVert} 

\newcommand{\indicator}[1]{\mathbb{1}_{\langle #1 \rangle }}

\newcommand{\ox}{\otimes}

\newcommand{\half}{\frac{1}{2}}

\newcommand{\lin}{\mathscr{L}}

\newcommand{\KL}{\text{\ttfamily KL}}
\newcommand{\tv}{\text{\ttfamily TV}}

\newcommand{\ACzero}{\mathbf{AC}^{0}}
\newcommand{\NCzero}{\mathbf{NC}^{0}}

\newcommand{\term}{\bold{K}}
\allowdisplaybreaks

\title{On Sampling Lower Bounds for Polynomials}

\author {
  Mohammad Mahdi Khodabandeh\\
  \texttt{mmk25@sfu.ca}\\
  Simon Fraser University
  \and
  Igor Shinkar\\
  \texttt{ishinkar@sfu.ca}\\
  Simon Fraser University
}
\date {\today}

\begin{document}

\maketitle

\begin{abstract}
    In this work, we continue the line of research on the complexity of distributions~\cite{Viob}, and study samplers defined by low degree polynomials. An $n$-tuple $\mathcal{P} = (P_1,\dots, P_n)$ of functions $P_i \colon \mathbb{F}_2^m \to \mathbb{F}_2$ defines a distribution over $\{0, 1\}^n$ in the natural way: draw $X$ uniformly at random from $\mathbb{F}_2^m$ and output $(P_1(X),\dots, P_n(X)) \in \{0,1\}^n$.
    
    We show that when $\mathcal{P}$ is defined by polynomials of degree $d$, the total variation distance of $\mathcal{P}$ from the product distribution $\mathrm{Ber}(1/3)^{\otimes n}$ is $1-o_n(1)$, where $o_n(1)$ is a vanishing function of $n$ for any constant degree $d$. For small values of $d$, we show the following concrete bounds.
    \begin{itemize}
        \item For $d=1$ we have $\|\mathcal{P} - \mathrm{Ber}(1/3)^{\otimes n}\|_{\mathtt{TV}} \geq 1 -  \exp(-\Omega(n))$.
        \item For $d=2$ we have $\|\mathcal{P} - \mathrm{Ber}(1/3)^{\otimes n}\|_{\mathtt{TV}} \geq 1 -  \exp(-\Omega(\log(n)/\log\log(n)))$. 
        \item For $d=3$ we have $\|\mathcal{P} - \mathrm{Ber}(1/3)^{\otimes n}\|_{\mathtt{TV}} \geq 1 -  \exp(-\Omega(\sqrt{\log\log(n)}))$. 
    \end{itemize}
    Our results extend the recent lower bound results for sampling distributions, which have mostly focused on local samplers, small depth decision trees, and small depth circuits~\cite{lovett2011bounded, beck2012large, viola2023new, filmus2023sampling, kane2024locality}.

    As part of our proof, we establish the following result, that may be of independent interest: for any degree-$d$ polynomial $P \colon \mathbb{F}_2^m \to \mathbb{F}_2$ it holds that $\Pr_X[P(X) = 1]$ is bounded away from $1/3$ by some absolute constant $\delta = \delta_d>0$. Although the statement may seem obvious, we are not aware of an elementary proof of this.

    The proof techniques rely on the structural results for low degree polynomials~\cite{green2007distribution,kaufman2008worst, haramaty2010structure}, saying that any biased polynomial of degree $d$ can be written as a function of a small number of polynomials of degree $d-1$.
\end{abstract}

\newpage

\setlength{\cftbeforesecskip}{0pt}
\renewcommand\cftsecfont{\mdseries}
\renewcommand{\cftsecpagefont}{\normalfont}
\renewcommand{\cftsecleader}{\cftdotfill{\cftdotsep}}
\setcounter{tocdepth}{1}
\tableofcontents

\newpage
\section{Introduction}\label{sec:intro}

\subsection{Background and previous work}

The seminal work of Viola \cite{Viob} initiated the systematic study of a topic referred to as \emph{the complexity of distributions}. In this framework, one has a target distribution $\cT$ over $\Bits^n$ that one wishes to sample from using uniformly random bits. Sampling is then the task of designing an algorithm $A$ that, given a uniformly random input bit string $X\sim\Bits^m$, outputs a string $A(X)\in\Bits^n$ such that the resulting output distribution is close to $\cT$ in statistical distance. The class of algorithms $A$ under consideration, i.e., the model of computation, and the target distribution determine the nature of the problem in this context. One is then led to both upper- and lower-bound questions: to either show the existence of an algorithm that (approximately) samples the distribution $\cT$, or to show that every such algorithm fails to generate a distribution close to $\cT$.

Over the past fifteen years this area has received considerable attention \cite{Viob, lovett2011bounded, beck2012large, de2012extractors, viola2016quadratic, viola2020sampling, goos2020lower, chattopadhyay2022space, viola2023new, filmus2023sampling, kane2024locality}. Viola \cite{Viob} started by showing the first impossibility result that local functions $f\colon \Bits^m\to\Bits^n$, where every output bit $f(X)_i$ depends on only a few input bits, are unable to sample a uniformly random string from a Hamming slice $H_k=\{x\in\Bits^n : \sum_ix_i = k\}$. The work also featured an application of such sampling lower bounds to lower bounds for succinct data structures which were extended in \cite{kane2024locality}. Subsequent works \cite{watts2023unconditional, kane2024locality} explored applications of a different kind, namely separations between the sampling power of quantum and classical circuits. The hardness of sampling was also used to construct explicit codes \cite{shaltiel2024explicit}.

The following typical example illuminates why sampling is easier than computing in general, and therefore, proving lower bounds is a more difficult task. For uniformly random $(X_1,X_2,\dots, X_n) \in \Bits^n$ the distribution $(X_1,X_2,\dots, X_n, X_1\oplus \dots \oplus X_n)$ can be alternatively expressed as $(X_1\oplus X_2, X_2\oplus X_3, \dots, X_{n-1}\oplus X_{n}, X_{n}\oplus X_1)$ where $\oplus$ denotes sum mod $2$.
Note that although the former view famously has no sub-exponential size circuit in $\ACzero$ (\cite{furst1984parity, ajtai198311, hastad1986almost}), the latter form can be implemented with a $2$-local circuit.

Examples of this kind brought attention to the study of the sampling power of local maps or related models such as shallow decision forests. On this front, Filmus, Leigh, Riazanov, and Sokolov \cite{filmus2023sampling} show a lower bound against shallow decision forests for sampling from a Hamming slice $H_{o(n)}$. They also posed a conjecture characterizing the power of $\NCzero$ in sampling uniform distributions over a symmetric support, which was resolved by Kane, Ostuni, and Wu \cite{kane2025locally} in the affirmative. They later extended this result to characterize all symmetric distributions sampled by local maps \cite{kane2025symmetric}. Horacsek, Lee, Shinkar, Viola, and Zhou \cite{horacsek2025sourcedecoding} in a recent result showed that $\NCzero$ can sample a product of dyadic Bernoullies using nearly the information-theoretically minimum number of uniform random bits.

\subsection{This work}

We turn our focus to proving lower bounds for the class of degree $d$ polynomials over $\F_2$, a class that contains within itself the class of $d$-local maps and depth-$d$ decision forests.
A \emph{degree $d$ distribution} over $\Bits^n$ is defined by a parameter $m \in \N$ and $n$ polynomials $P_1,\dots, P_n \colon  \F_2^m\to \F_2$ of degree $d$. We sample from such a distribution $\cP = (P_1,\dots, P_n)$ by sampling a uniformly random input $X\sim \F_2^m$ and outputting $(P_1(X), \dots, P_n(X))$.
The main result of this paper shows that for any constant $d$ any degree-$d$ distribution is $1-o_n(1)$-far from $\Ber(1/3)^{\ox n}$ in total variation distance, where $o_n(1)$ is a vanishing function of $n$ for any constant $d$. For small degrees $d=1,2,3$ we show concrete bounds on the vanishing term $o_n(1)$.%
\footnote{The parameter $1/3$ is fixed throughout the paper for simplicity, and can be replaced by any non-dyadic $\rho \in (0,1)$.}

We also study the class of distributions generated by \emph{bounded rank} polynomials where each $P_i$ is a function of bounded $\rank_d$. Rank, as used here, comes from the work of Green and Tao \cite{green2007distribution} who defined it to study its relationship with pseudorandom properties of polynomials and Gowers norms. A function $P \colon \F_2^m\to\F_2$ is said to have $\rank_d(P) \le r$ if there exist $r$ polynomials $Q_1(x),\dots,Q_r(x)$ of degree at most $d$ and some $\Gamma\colon\F_2^r\to\F_2$ such that $P(x)=\Gamma(Q_1(x),\dots, Q_r(x))$ for all $x$. Distributions defined by functions of small $\rank_d$ essentially interpolate between degree $d$ and $d+1$ distributions. Indeed, our strategy for proving lower bounds against low degree polynomials will be to consider distributions defined by bounded rank functions. Specifically, we will first consider distributions $(P_1,\dots, P_n)$ of bounded linear rank, i.e., where each $P_i$ satisfies $\rank_1(P_i)\le r$ for which we show a lower bound of $1-\exp(-2^{-O(r^2)}n)$ on the distance from $\Ber(1/3)^{\ox n}$. Note that these distributions already generalize $r$-local distributions. Then, we will consider functions of bounded quadratic rank for which we prove a $1 - \exp(-\log(n)^{\Omega(1/r^2)})$ lower bound. These will help us prove a $1 - \exp(-\Omega(\sqrt{\log \log (n)}))$ lower bound against degree three distributions.

Analogous results have been shown for $d$-local maps in \cite{kane2024locality, viola2023new}, where the lower bound is against sampling Hamming slices. \cite{viola2023new} states the result against depth-$d$ decision forests for sampling from the Hamming slice $H_{n/h}$ for $h$ not a power of $2$, while the result of \cite{kane2024locality} extends it to any $h$. Many of our results can be translated from the setting of product distribution $\Ber(1/3)^{\ox n}$ to a sampling lower bound of $H_{n/3}$ against degree $d$ distributions with essentially no change.

While it is natural to expect that for any $\rho \in (0,1)$ similar lower bounds hold for $H_{\rho n}$ against degree $d$ distributions, in the product distribution setting $\Ber(\rho)^{\ox n}$ it is essential that the parameter $\rho$ is some non-dyadic, since for dyadic $\rho=a/2^d$ the distribution $\Ber(\rho)^{\ox n}$ can be perfectly sampled by polynomials of degree $d$.

An essential (and non-trivial) step in our results is to show that a single polynomial $P\colon \F_2^m\to\F_2$ cannot sample $\Ber(1/3)$ better than some constant $\Omega_d(1)$ that does not depend on $m$. Put differently, $\abs{\Pr_X[P(X)=1]-1/3}$ is bounded away from zero by some constant that only depends on $d$. In contrast, the analogous question for $d$-local distributions is trivial, since the probability that a $d$-local function outputs $1$ is $a/2^d$ for some integer $a$ which leaves a $\Omega(2^{-d})$ gap from $1/3$; however, a degree $d$ polynomial $P\colon\F_2^m\to\F_2$ can genuinely depend on all $m$ bits, and even for degree $d=2$ we can construct a quadratic polynomial with $\Pr[P(x) = 1] = 3/8$, which clearly is not of the form $a/2^d$. To show this, we use structural results for low degree polynomials from the works of Green and Tao \cite{green2007distribution}, Kaufman and Lovett \cite{kaufman2008worst}, and Haramaty and Shpilka \cite{haramaty2010structure}.
A related problem to this result is the weight distribution of binary Reed-Muller codes \cite{kasami1970weight, kasami1976weight, kaufman2012weight} which asks for given parameters $m,d\in \N, \eta\in(0,1)$ how many polynomials $P\colon \F_2^m\to\F_2$ of degree $d$ exist that satisfy $\Pr_X[P(X)=1]\le \eta$. Our result says that for a constant $\delta=\delta_d > 0$, the number of such polynomials does not change when $\eta$ is in the $\delta$-neighborhood of $1/3$, that is, $\eta\in [1/3-\delta, 1/3 + \delta]$.

As mentioned above, the parameter $1/3$ is fixed throughout the paper for simplicity only, and analogous distance lower bound results can be shown for the distribution $\Ber(\rho)^{\ox n}$ for any non-dyadic parameter $\rho \in (0,1)$. In fact, since for a single degree $d$ polynomial $\Pr[P(x) =1 ]$ is bounded away from, say $\rho = 1/2^{d+1}$, our distance lower bounds also extend to $\Ber(1/2^{d+1})^{\ox n}$.

At a high level, our results form a hierarchy where we show lower bounds against degree $1$, bounded $\rank_1$, degree $2$, bounded $\rank_2$, and so on, at the cost of progressively weaker distance guarantees. We conjecture that the correct lower bound for sampling $\Ber(1/3)^{\ox n}$ with distributions of bounded $\rank_d \le r$ should be $1-\exp(-\Omega_{d, r}(n))$.

One of the main technical tools in establishing our results is a \emph{Chebyshev lemma} stating the following. Suppose we are given a distribution $\cP = (P_1,\dots, P_n)$, where each $P_i$ can be written as $P_i=\Gamma_i(Q_{i1},\dots, Q_{ir})$ for some degree $d$ polynomials $Q_{i1},\dots,Q_{ir}$, that are \emph{almost pairwise independent}. That is, for each $i \neq j$ the $2r$-tuple $(Q_{i1},\dots, Q_{ir}; Q_{j1},\dots, Q_{jr})$ is a collection of \emph{almost independent} polynomials. Then, the distance of $\cP$ from the distribution $\Ber(1/3)^{\ox n}$ is $\norm{\cP - \Ber(1/3)^{\ox n}}_\tv \geq 1 - o_r(1) - o_n(1)$.

One issue is that, generally, polynomial distributions are not \emph{almost pairwise independent}. However, we can hope for a relaxed scenario of finding a \emph{subset} of the $P_i$'s so that the corresponding $(Q_{i1},\dots,Q_{ir})$ satisfy the required \emph{almost pairwise independence} for $P_i$'s in the subset. We will show a slight variant of that. Specifically, we show that any set of polynomials has a large subset of $P_i$'s such that their corresponding $(Q_{i1},\dots,Q_{ir})$ are almost pairwise independent \emph{up to a small collection of $Q^*_1,\dots, Q^*_c$ that appear in all of them}. In other words, the tuples $(Q_{i1},\dots,Q_{ir})$ form a \emph{sunflower} whose petals are almost pairwise independent.

In order to quantify \emph{almost independence} for a collection of polynomials, we use the notion of \emph{regularity}, which has been studied in the context of structure versus randomness of polynomials \cite{green2007distribution,kaufman2008worst}. Using similar machinery, we show that any collection of low degree polynomials contains a large subset that form a sunflower, whose petals are pairwise regular polynomials in the sense described above. This is of independent interest to us, and, to the best of our knowledge, has not appeared previously in the literature.
While sunflowers have appeared before in related contexts (see, e.g., \cite{filmus2023sampling,yu2024sampling}), our work appears to be the first to study such a sunflower structure for polynomials.

\subsection{Organization of the paper}

In \cref{sec:our-results} we give a broad overview of our results and techniques. \cref{sec:prelims} fixes our notation, introduces definitions, and presents several basic lemmas that will be used throughout the paper. In \cref{sec:chebyshev-lemma} we prove our \emph{Chebyshev lemma} and some of its variants, which will be used in the subsequent sections. In \cref{sec:pairwise-reg} we prove our key combinatorial lemma on \emph{sunflower pairwise regularization} that is used in proving our sampling lower bound for constant degree distributions. \cref{sec:lower-bound-for-a-single-poly} proves that a single polynomial cannot sample $\Ber(1/3)$ better than a constant distance; this section is extended to $\Ber(\rho)$ for any non-dyadic $\rho$ in \cref{sec:non-constructive-lower-bound-for-a-single-poly}. The remaining sections, \cref{sec:deg-1-lower-bound,sec:bounded-rank-1-lower-bound,sec:degree-2-lower-bound,sec:bound-rank2-lower-bound,sec:degree-3-lower-bound,sec:deg-d-lower-bound}, talk about lower bounds of sampling $\Ber(1/3)^{\ox n}$ against degree $1$, bounded $\rank_1$, degree $2$, bounded $\rank_2$, degree $3$, and general degree $d$ distributions.

\section{Our results}\label{sec:our-results}

In this section we present our results and describe the ideas and techniques utilized to prove them. In some cases we present an informal statement in favor of a cleaner exposition. We close this section with some open problems.

\subsection{A single polynomial cannot approximate $\Ber(1/3)$ well}

As discussed in the introduction, the following theorem lies at the heart of all of our lower bounds.
\begin{stheorem}[A gap for one polynomial; formal version in \cref{thm:exp-low-deg-far-from-third}]\label{main-thm:one-polys-gap}
    Let $P \colon \F_2^m \to \F_2$ be a polynomial of degree $d$.
    Then
    \[
    \abs{\Pr_{X \sim \F_2^m}[P(X) = 1] - \frac{1}{3}} \geq \delta,
    \]
    for some constant $\delta = \delta_d>0$ that depends only on $d$.
\end{stheorem}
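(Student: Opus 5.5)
We argue by induction on $d$, proving a slightly stronger statement: for every $d$ there is $\delta_d>0$ such that for every function $F=\Gamma(Q_1,\dots,Q_r)$ with $\deg Q_j\le d$, $\Pr[F=1]$ is at distance at least $\delta_{d,r}$ from $1/3$. The difficulty is that $\delta_{d,r}$ must not depend on $r$ when we close the induction. So the statement is organized as follows.

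The base case $d=1$ is immediate. A degree-1 polynomial is either constant or balanced, so $\Pr[P=1]\in\{0,1/2,1\}$ and $\delta_1=1/6$.

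For the inductive step, let $P$ have degree $d$ and set $\eps=\delta_d/2$ for a constant to be tuned. There are two cases.

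\textbf{Case 1: $P$ is unbiased.} If $\abs{\E[(-1)^{P}]}\le \eps$, then $\Pr[P=1]\in[1/2-\eps/2,\,1/2+\eps/2]$, which is far from $1/3$ once $\eps<1/3$.

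\textbf{Case 2: $P$ is biased.} If $\abs{\E[(-1)^P]}>\eps$, we invoke the structure theorem for biased polynomials of Green--Tao, Kaufman--Lovett and Haramaty--Shpilka. It gives $P=\Gamma(Q_1,\dots,Q_r)$ with $\deg Q_j\le d-1$ and $r\le r(d,\eps)$.

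We then regularize $(Q_1,\dots,Q_r)$, which is the standard regularity lemma for polynomial factors. This yields a refined factor $(R_1,\dots,R_{r'})$ with $r'\le r'(d,\eps)$ that is $\gamma$-regular, where $\gamma=\gamma(r')$ can be made arbitrarily small as a function of $r'$. By the equidistribution property of regular factors, the tuple $(R_1(X),\dots,R_{r'}(X))$ is $2^{r'}\gamma$-close to uniform on $\F_2^{r'}$. Since $P$ is a function of this tuple,
\[
\Pr[P=1]=\frac{a}{2^{r'}}\pm 2^{r'}\gamma \qquad\text{for some integer } a.
\]
Any dyadic rational with denominator $2^{r'}$ is at distance at least $1/(3\cdot 2^{r'})$ from $1/3$. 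Choosing $\gamma< 1/(6\cdot 4^{r'})$ therefore gives $\abs{\Pr[P=1]-1/3}\ge 1/(6\cdot 2^{r'})$, which depends only on $d$.

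Note that the induction hypothesis on degree $d-1$ is not even needed beyond the existence of the regularity and structure theorems. The real dependence is that the regularity lemma for degree-$(d-1)$ factors is itself proved through the degree-$(d-1)$ bias-implies-low-rank theorem. So the final bound is
\[
\delta_d=\min\!\Big(\tfrac16-\tfrac{\eps}{2},\; \tfrac{1}{6\cdot 2^{r'(d,\eps)}}\Big),
\]
with $\eps$ fixed, say $\eps=1/10$. This is consistent, because $\eps$ is chosen first and $r'$ is bounded as a function of $(d,\eps)$.

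\textbf{Main obstacle.} The hard step is the regularization. Refining the factor must keep $P$ measurable with respect to it, and the number of new polynomials $r'$ must be bounded independently of $m$, even though $\gamma$ is allowed to depend on $r'$. This is exactly the iterative regularity lemma, where refining a biased combination replaces it by lower-degree polynomials and a degree-lexicographic potential guarantees termination. It yields $r'$ bounded by an Ackermann-type function of $(d,r,\eps)$, which is fine for a qualitative $\delta_d$.

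I would also double-check the equidistribution statement needed: $\gamma$-regularity of $(R_j)$, meaning every nontrivial linear combination has bias at most $\gamma$, gives that each atom has probability $2^{-r'}\pm\gamma$ by Fourier expansion. This is elementary given regularity.
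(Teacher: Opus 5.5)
Your proposal is correct and is essentially the paper's argument. You regularize to a factor of dimension $K$ bounded in terms of $d$ alone, use bias-implies-low-rank together with the XOR lemma to place $\Pr[P=1]$ within a small fraction of $2^{-K}$ of some $a/2^K$, and conclude from $\abs{a/2^K-1/3}\ge \frac13 2^{-K}$; your bias case split just performs by hand the first refinement step that the paper's regularization of $\{P\}$ carries out automatically. The induction framing and the provisional choice $\eps=\delta_d/2$ (which would be circular) play no role and should be dropped, since you end up fixing $\eps=1/10$ independently.
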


Although the statement may appear almost self-evident at first glance, its proof is non-trivial even for degree as small as three. For quadratic polynomials, the proof can be obtained from the structural results of quadratic polynomials of Dickson (see, e.g., \cite[Theorem 6.30]{LidlNiederreiter1996}). Thus, quite naturally, our argument ultimately relies on regularizing techniques from \cite{green2007distribution,kaufman2008worst} and the corresponding structural results. For degree three, we give an alternative proof using the specific structural results for cubic polynomials of Haramaty and Shpilka~\cite{haramaty2010structure}.

\begin{remark}
We remark that throughout the paper, the choice of $1/3$ is only a matter of simplicity and convenience. In \cref{sec:non-constructive-lower-bound-for-a-single-poly}, we show that $1/3$ in \cref{main-thm:one-polys-gap} can be replaced with any non-dyadic\footnote{A number of the form $a/2^b$ where $a$ and $b$ are integers is called \defemph{dyadic}.} parameter $\rho$, and our sampling lower bounds for $\Ber(\rho)^{\ox n}$ hold for any $\rho \in (0,1)$ satisfying \cref{main-thm:one-polys-gap}.
\end{remark}

Before proceeding to other results, let us first formally define a polynomial distribution.

\begin{sdefinition}[Degree-$d$ distributions]
An $n$-tuple $\cP=(P_1,\dots,P_n)$ where each $P_i\colon\F_2^m\to\F_2$ is a polynomial of degree at most $d$, defines a distribution over $\Bits^n$ in the natural way: draw $X \sim \F_2^m$ uniformly at random, and output $(P_1(X),\dots,P_n(X)) \in \Bits^n$.
\end{sdefinition}

In our setting, the number of input variables $m$ may be arbitrarily large. Observe that the polynomials $P_1,\dots,P_n$ are fed the same input $X$ which allows the $n$ outputs to be correlated.

\subsection{Lower bounds against linear distributions}

We start with the following simple claim, showing a lower bound for degree-$1$ distributions. We include this result for completeness rather than for novelty.
\begin{stheorem}[Degree-$1$ distributions; formal version in \cref{sec:deg-1-lower-bound}]\label{main-thm:linear}
    Let $\cL = (L_1,\dots,L_n)$ be a degree-$1$ distribution.
    Then $\norm{\cL - \Ber(1/3)^{\ox n}}_\tv \geq 1 - \exp(-cn)$ for some absolute constant $c>0$.
\end{stheorem}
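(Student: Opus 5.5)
The plan is to exploit the fact that a degree-$1$ distribution is uniform over an affine subspace of $\F_2^n$ (the image of an affine map $X \mapsto (L_1(X),\dots,L_n(X))$). Write $V \seq \F_2^n$ for this affine subspace, of dimension $k \le n$, so $\cL$ is uniform on $V$ and the total variation distance is at least $1 - \Pr_{Y\sim \Ber(1/3)^{\ox n}}[Y \in V]$. We split into two cases according to the codimension $n-k$.

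\textbf{Case 1: $n-k \ge cn$ (small subspace).} We bound the $\Ber(1/3)^{\ox n}$ mass of $V$ directly. Pick $k$ coordinates on which $V$ projects bijectively (an information set); the remaining $n-k$ coordinates are then affine functions of these. Conditioning on the values of the information-set coordinates, membership in $V$ forces each of the other $n-k$ coordinates to a specific value, and these coordinates are independent of the conditioned ones. Each forced value has probability at most $2/3$. Hence $\Pr[Y\in V] \le (2/3)^{n-k} \le (2/3)^{cn}$.

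\textbf{Case 2: $n-k < cn$ (large subspace).} Here $V$ is close to all of $\F_2^n$, and we distinguish using the Hamming weight instead. Under $\cL$, look at the $k$ free coordinates of the information set $I$: the projection of $\cL$ onto $I$ is exactly uniform on $\F_2^{I}$, so its weight concentrates around $k/2$. Under $\Ber(1/3)^{\ox n}$, the weight on $I$ concentrates around $k/3$. Let $A$ be the event that the weight on $I$ is at most $5k/12$. By Chernoff, $\Pr_\cL[A] \le \exp(-\Omega(k))$ and $\Pr_{\Ber}[A^c] \le \exp(-\Omega(k))$. Since $k \ge (1-c)n$, the distance is at least $1 - \exp(-\Omega(n))$.

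Choosing, say, $c = 1/2$, both cases give $1-\exp(-\Omega(n))$. The only real point to check is the independence claim in Case 1: the non-information-set coordinates of $Y$ are independent of those in $I$, and each must hit one prescribed bit. Everything else is routine Chernoff, so I expect no serious obstacle. This case split also previews the later strategy, where \cref{main-thm:one-polys-gap} plays the role of the trivial fact that a free coordinate is exactly $\Ber(1/2)$, at distance $1/6$ from $\Ber(1/3)$.
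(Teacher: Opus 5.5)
Your proof is correct and uses the same win-win split on the dimension of the image as the paper. In the high-dimensional case, many coordinates are exactly uniform and independent, so a weight test detects them; in the low-dimensional case, the support of $\cL$ carries almost no $\Ber(1/3)^{\ox n}$-mass.

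The differences are only in execution, and they slightly favor you. The paper writes $\cL$ as a mixture of $2^D$ point masses and applies \cref{lem:dist-of-convex-comb}, which gives the weaker error $2^D(2/3)^n$ and forces its threshold $c<\log(3/2)$. Your information-set argument gives $(2/3)^{n-k}$, so any constant threshold works. You also run Chernoff directly instead of citing \cref{lem:dist-of-product-distributions}. Finally, your affine-subspace formulation avoids the paper's slightly loose step that linearly independent affine functions are statistically independent, which fails for, e.g., $x_1$ and $x_1+1$.
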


The (simple) proof of this theorem illustrates a strategy that is used frequently in proving lower bounds for sampling: reducing to a win-win scenario. The argument for this theorem goes as follows. Let $D$ be the maximum number of linearly independent $L_i$'s. The key is to analyze what happens when $D$ is large and small respectively. If $D$ is large, say, more than $n/2$, then the $D$ linearly independent coordinates form a uniform distribution over $\Bits^D$, thus giving $1-2^{-\Omega(n)}$ distance from the target distribution $\Ber(1/3)^{\ox n}$. On the other hand, if $D$ is small, then conditioning on the values of the $D$ linearly independent $L_i$'s fixes the value of all other coordinates. In other words, $\cL$ is expressed as a convex combination of $2^D$ many distributions, each with singleton support. This case also gives a $1-2^{-\Omega(n)}$ lower bound, using a certain union bound claim for convex combinations of distributions.

Our first interesting result is the following theorem, showing a sampling lower bound for bounded $\rank_1$ distributions. These are distributions $\cP = (P_1,\dots, P_n)$ where each $P_i\colon\F_2^m\to\F_2$ can be described by a bounded number of linear/degree-$1$ polynomials. Already at this point, we are considering a more general class of distributions than local maps. For $r$-local maps the works of \cite{viola2023new,kane2024locality} essentially prove a $1-\exp(-\Omega_r(n))$ lower bound for sampling $\Ber(1/3)^{\ox n}$. We show a similar bound for distributions whose $\rank_1$ is upper bounded by $r$.

\begin{stheorem}[Bounded $\rank_1$ distributions; formal version in \cref{thm:bounded-rank-1}]\label{main-thm:bounded-linear-rank}
    Let $\cP = (P_1,\dots,P_n)$ be a distribution, where each $P_i\colon\F_2^m\to\F_2$ can be written in the form $P_i=\Gamma_i(L_{i1},\dots, L_{ir})$, where $\Gamma_i \colon \Bits^r \to \Bits$ is some function and $L_{i1},\dots, L_{ir} \colon \F_2^m \to \F_2$ are degree-$1$ polynomials.
    Then $\norm{\cP - \Ber(1/3)^{\ox n}}_\tv \geq 1 - \exp(-2^{-Cr^2}n)$ for some absolute constant $C>0$.
\end{stheorem}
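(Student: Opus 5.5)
We will prove the theorem through a win-win argument, extending the one used for \cref{main-thm:linear} and combining it with a sunflower-type selection of coordinates. For each $i$ let $V_i=\Span(L_{i1},\dots,L_{ir})$, a space of linear forms of dimension at most $r$. We may assume $\dim V_i\ge 1$. If $\dim V_i=0$ then $P_i$ is constant, and we can discard that coordinate at no cost. Every $P_i$ satisfies $\Pr[P_i=1]=a_i/2^{r}$ for an integer $a_i$, so $\abs{\Pr[P_i=1]-1/3}\ge 2^{-r}/3$. This is the trivial instance of \cref{main-thm:one-polys-gap} for bounded linear rank.

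\textbf{Step 1: find an independent subfamily.} We greedily pick indices $i_1,i_2,\dots$ so that each new $V_{i_k}$ is linearly independent of $W_{k-1}\defeq V_{i_1}+\dots+V_{i_{k-1}}$. More precisely, we add $i$ only if $V_i\cap W_{k-1}=\{0\}$. If we reach $k\ge \eps n$ indices with $\eps=2^{-Cr^2}$, then the coordinates $P_{i_1},\dots,P_{i_k}$ are mutually independent under $\cP$. Each has a bias off by at least $2^{-r}/3$ from $1/3$. The projection to these coordinates is then a product distribution whose marginals are each $\Omega(2^{-r})$-far from $\Ber(1/3)$, so its total variation distance from $\Ber(1/3)^{\ox k}$ is at least $1-\exp(-\Omega(2^{-2r}k))$, by a Hellinger/Chernoff tensorization. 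This already gives the claimed bound.

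\textbf{Step 2: the structured case.} Otherwise the greedy procedure gets stuck, and $W=W_{k}$ has dimension at most $r\eps n$. Every other $V_i$ meets $W$ nontrivially, which is too weak to use directly. We therefore iterate a sunflower-style refinement. Condition on the values of all linear forms in $W$; these are $2^{\dim W}$ conditionings. After conditioning, each $P_i$ becomes a function of its linear forms modulo $W$, so its linear rank drops by at least one. We recurse on the rank, up to $r$ levels. To control the cost of the conditionings, we use the convex-combination union bound from the proof of \cref{main-thm:linear}: if $\cP=\sum_j \lambda_j \cP_j$ with $N$ components, each satisfying $\norm{\cP_j-\cT}_\tv\ge 1-\gamma$ via a test set $S_j$ with $\cT(S_j)\le\gamma$, then $\norm{\cP-\cT}_\tv\ge 1-N\gamma$.

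The budget has to be arranged so that at level $\ell$ we use $\eps_\ell n$ conditioned dimensions and the remaining rank-$(r-\ell)$ problem has $\Omega(n)$ coordinates. The independent set found at level $\ell+1$ must have size $\gg \eps_\ell n\cdot 2^{O(r)}$, so that the $\exp(-\Omega(2^{-2r}k))$ gain beats the $2^{\eps_\ell n}$ loss. Choosing $\eps_{\ell+1}=2^{-O(r)}\eps_\ell$ over $r$ levels gives $\eps=2^{-O(r^2)}$, which matches the bound in the statement. At the bottom (rank $0$), all remaining coordinates are fixed, so each component is a point mass, and its distance from $\Ber(1/3)^{\ox n'}$ is $1-(2/3)^{n'}$.

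\textbf{Main obstacle.} The main difficulty is that conditioning must be done consistently across all coordinates, with a single $W$. In particular, the biases after conditioning must still be bounded away from $1/3$, and this holds again by dyadicity with denominator $2^{r}$. The delicate point is the bookkeeping of the union bound: at each level we must ensure that, for every component, either the greedy step succeeds with a set of size $\ge \eps_{\ell+1} n$, or we descend with dimension loss at most $r\eps_{\ell+1}n$. The exponents must telescope to $\exp(-2^{-Cr^2}n)$. I would first verify the two-level case $r=2$ carefully and then induct.
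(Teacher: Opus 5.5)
Your route is correct in substance, and it is simpler than the paper's.

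\textbf{Comparison with the paper.} The paper builds a bipartite graph between linear forms and the $P_i$'s. It takes a maximum-dimensional subspace $W$ satisfying the expansion condition $\abs{N_G(W)}\ge \frac{4}{\tau(r-1)}\dim W$, and then splits into two cases. If $\abs{N_G(W)}\ge n/2$, it conditions on $W$ and inducts on $N_G(W)$. Otherwise, maximality forces every subspace to expand poorly on the remaining coordinates, and only then does a greedy procedure extract $\Omega(\tau(r-1)n/r)$ coordinates whose spans are linearly independent.

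You run the greedy step directly. If it stops with fewer than $\eps n$ picks, the sum $W$ of the picked $V_i$'s has $\dim W<r\eps n$ and meets every $V_i$: picked ones lie in $W$, and rejected ones met the part of $W$ built before them. So conditioning on $W$ lowers the rank of \emph{all} $n$ coordinates at once. This is the subspace analogue of ``a maximal matching in an $r$-uniform hypergraph yields a hitting set of $r$ times its size.'' It removes the need for the expansion/maximality argument and gives the same recursion $\tau(r)\approx\delta^2\tau(r-1)/r$, hence the same $2^{-O(r^2)}$ rate. Nothing sunflower-like is involved; it is plain conditioning.

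\textbf{First fix: your parameter schedule is backwards.} Your own constraint is correct: the threshold at depth $\ell+1$ must beat the cost $2^{r(\eps_0+\dots+\eps_\ell)n}$ of all earlier conditionings. That means $\eps_{\ell+1}\gtrsim r4^{r}\eps_\ell$, so the thresholds must \emph{grow} with depth. They can grow up to $\eps_{r-1}=O(1/r)$, which keeps the $(2/3)^n$ point-mass bound at depth $r$ intact. With $\eps_{\ell+1}=2^{-O(r)}\eps_\ell$ as you wrote it, the $2^{r\eps_0 n}$ components created at depth $0$ swamp the weaker $\exp(-\Omega(4^{-r}\eps_1 n))$ guarantee at depth $1$, and the union bound fails.

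The clean way to organize this is induction on $r$ with $\eps=\tau(r-1)/(2r)$:
\begin{itemize}
\item If the greedy step succeeds, the error is $\exp(-\Omega(4^{-r}\eps n))$.
\item If it fails, there are at most $2^{\tau(r-1)n/2}$ components, each with error $\exp(-\tau(r-1)n+O(r))$.
\end{itemize}
Hence $\tau(r)=\Omega(4^{-r}\tau(r-1)/r)$ with $\tau(0)=\log(3/2)$, which gives $\tau(r)=2^{-O(r^2)}$.

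\textbf{Second fix: do not discard constant coordinates.} Discarding them is not free, because it shrinks $n$, and after conditioning many coordinates may become constant. Instead, treat $V_i=\{0\}$ as always admissible in the greedy step. A constant coordinate is independent of everything and is at distance at least $1/3$ from $\Ber(1/3)$. With this convention every component keeps all $n$ coordinates, and at depth $r$ each component is a point mass, at distance at least $1-(2/3)^n$.
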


The proof of the theorem is an induction on $r$, and has a similar flavor to the lower bound for $r$-local functions in \cite{kane2024locality}, generalized to account for $r$ linear functions (rather than $r$ input bits) using linear algebraic tools. We define a bipartite graph, where on the right vertex set we have our $P_i$'s, and on the left vertex set we have all linear functions. We then connect each polynomial to the subspace of dimension $r$ that it depends on; in the notation of \cref{main-thm:bounded-linear-rank}, $P_i$ is adjacent to all $L\in \Span(L_{i1},\dots, L_{ir})$. Then the argument considers a maximal subspace $W$ (a subset of the left vertex set) whose dimension is not too large and which expands well relative to that dimension. If $W$ is adjacent to many polynomials, then by conditioning/fixing the values of the linear functions in $W$ we reduce the rank of its neighbors, which allows for applying induction. Otherwise, the neighborhood of $W$ is rather small, and its maximality ensures that we can find a large subset of polynomials that depend on linearly independent subspaces, which reduces to the scenario where a large subset of $P_i$'s are \emph{independent}, and hence the subset corresponds to a product distribution with each coordinate having $\Omega(2^{-r})$ distance from $\Ber(1/3)$.

\subsection{Lower bounds against quadratic distributions}

We would like to extend our lower bounds in \cref{main-thm:linear,main-thm:bounded-linear-rank} to quadratic distributions. Suppose we are given a distribution $\cQ=(Q_1,\dots,Q_n)$ defined by quadratic polynomials. Note that if, say, at least $n/2$ polynomials $Q_i$ have bounded $\rank_1(Q_i)$, then we can simply apply \cref{main-thm:bounded-linear-rank}. Hence, it remains to handle the case, where many $Q_i$'s have high $\rank_1$.

Let us assume for simplicity that all $Q_i$'s have high $\rank_1$.
Note first that for parameters $r$ and $s$ one of the following two cases must hold.
\begin{enumerate}[(a)]
    \item There is a subset $S \seq [n]$ of $s$ polynomials such that $\rank(Q_i+Q_j) \geq r$ for all $i \neq j \in S$.
    \item There exists some $i^*$ and a set $S \seq [n]$ of size $\geq t \geq n/s$ such that $\rank(Q_{i^*}+Q_j) \leq r$ for all $j \in S$.
\end{enumerate}
Indeed, this can be viewed as a graph-theoretic argument, where the vertices of the graph correspond to the $n$ polynomials, and $(i,j)$ is an edge if and only if $\rank(Q_i+Q_j) \leq r$. Then, the first case corresponds to the graph having an independent set of size $s$ and the second case corresponds to the graph having a vertex of degree at least $n/s$.

\begin{figure}[ht]
    \centering

    \begin{subfigure}[t]{0.45\textwidth}
        \centering
        \begin{tikzpicture}[
    vertex/.style={circle, draw, inner sep=1pt, minimum size=12pt}
]
    \node[vertex] (Q1) at (90:2.0)   {$Q_1$};
    \node[vertex] (Q2) at (18:2.0)   {$Q_2$};
    \node[vertex] (Q3) at (-54:2.0)  {$Q_3$};
    \node[vertex] (Qt) at (162:2.0)  {$Q_s$};

    \coordinate (D1) at (-118:2.0);
    \coordinate (D2) at (-126:2.0);
    \coordinate (D3) at (-134:2.0);

    \fill (D1) circle (1.2pt);
    \fill (D2) circle (1.2pt);
    \fill (D3) circle (1.2pt);

    \draw[dotted] (Q1) -- (Q2);
    \draw[dotted] (Q1) -- (Q3);
    \draw[dotted] (Q1) -- (Qt);
    \draw[dotted] (Q2) -- (Q3);
    \draw[dotted] (Q2) -- (Qt);
    \draw[dotted] (Q3) -- (Qt);

    \foreach \V in {Q1,Q2,Q3,Qt} {
        \draw[dotted, shorten >=7pt] (\V) -- (D1);
        \draw[dotted, shorten >=7pt] (\V) -- (D2);
        \draw[dotted, shorten >=7pt] (\V) -- (D3);
    }
\end{tikzpicture}
        \caption{\textbf{Large independent set.} The vertices form an independent set. The tuple $(Q_1,\dots, Q_s)$ is \emph{almost pairwise independent}.}
        \label{main-fig:large-independent-set}
    \end{subfigure}
    \hfill
    \begin{subfigure}[t]{0.45\textwidth}
        \centering
        \begin{tikzpicture}[
    vertex/.style={circle, draw, inner sep=1pt, minimum size=12pt},
    edge label/.style={midway, above, sloped, fill=white, inner sep=1pt, font=\small, draw=none}
]
    \node[vertex] (Qstar) at (0,0) {$Q^*$};

    \node[vertex] (Q1) at (4, 1.8) {$Q_1$};
    \node[vertex] (Q2) at (4, 0.0) {$Q_2$};
    \node[draw=none] (dots) at (4,-0.8) {$\vdots$};
    \node[vertex] (Qt) at (4, -1.8) {$Q_{t}$};

    \draw (Qstar) -- node[edge label] {\scriptsize $\rank(Q^*+Q_1)\le r$} (Q1);
    \draw (Qstar) -- node[edge label] {\scriptsize $\rank(Q^*+Q_2)\le r$} (Q2);
    \draw (Qstar) -- node[edge label] {\scriptsize $\rank(Q^*+Q_{t})\le r$} (Qt);
\end{tikzpicture}
        \caption{\textbf{Large degree vertex. }Each polynomial in the tuple $(Q_1,\dots, Q_t)$ can be expressed in a new basis $Q_j = Q^* + \Gamma_j(L_{j1},\dots, L_{jr})$, which is \emph{essentially} low-rank up to the common polynomial $Q^*$. Intuitively, $Q^*$ explains the (potentially) high-rank part of many polynomials simultaneously.} \label{main-fig:large-deg-vertex}
    \end{subfigure}
    \caption{A win-win scenario for sampling lower bound.}\label{main-fig:win-win}
\end{figure}

Suppose first that $\rank(Q_i+Q_j) \geq r$ for all $i \neq j \in S$. Then, the corresponding distribution $(Q_i)_{i \in S}$ is \emph{almost pairwise independent} with each coordinate having expectation bounded away from $1/3$.
Suppose for concreteness that at least $s/2$ of them have the expectation at least $1/3+\delta$.%
\footnote{Otherwise, at least $s/2$ of them have the expectation at most $1/3-\delta$, and the same argument applies.}
In particular, on average at least $1/3+\delta$ fraction of these polynomials output $1$, and by pairwise almost independence, a standard Chebyshev-like argument implies that $(Q_i)_{i \in S}$ is $1-o_s(1)$-far from $\Ber(1/3)^{\ox s}$.

For the second case for each $j \in S$ we can write $Q_j = Q_{i^*} + \Gamma_j(L_{j1},\dots,L_{jr})$ for some function $\Gamma_j \colon \F_2^r \to \F_2$, some linear functions $L_{j1},\dots,L_{jr}$, and the \emph{common} quadratic polynomial $Q_{i^*}$. That almost reduces it to the setting of \cref{main-thm:bounded-linear-rank}, except for the common part $Q_{i^*}$. We overcome this slight complication by taking the union bound over the conditioning on the two possible output values of $Q_{i^*}$, which effectively corresponds to restricting the domain of the polynomials. See \cref{lem:dist-tv-common-Q} for details.

By choosing the appropriate values of $r$ and $s$, we obtain the following result.

\begin{stheorem}[Quadratic distributions; formal version in \cref{thm:quadratic-sources}]\label{main-thm:quadratic}
    Let $\cQ = (Q_1,\dots, Q_n)$ be a distribution, where each $Q_i\colon\F_2^m\to\F_2$ is a quadratic polynomial.
    Then $\norm{\cQ - \Ber(1/3)^{\ox n} }_\tv\geq 1 - n^{-c/\log\log(n)}$ for some absolute constant $c>0$.
\end{stheorem}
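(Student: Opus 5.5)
We follow the win-win scheme outlined just before the statement, with the parameters chosen at the end. The first step reduces to polynomials of high linear rank. Fix a threshold $r_0$. If at least $n/2$ of the $Q_i$ satisfy $\rank_1(Q_i)\le r_0$, we restrict attention to those coordinates; the total variation distance of a marginal lower bounds that of the whole tuple. \Cref{main-thm:bounded-linear-rank} then gives distance $1-\exp(-2^{-Cr_0^2}n/2)$. We will take $r_0 \approx \sqrt{\log n}$ or smaller, so this term is negligible compared with the target bound $n^{-c/\log\log n}$. Hence we may assume that all $n$ polynomials have $\rank_1 > r_0$.

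Next we form the graph on $[n]$ in which $i\sim j$ iff $\rank_1(Q_i+Q_j)\le r$. Either it has an independent set $S$ of size $s$, or some vertex $i^*$ has degree at least $t\ge n/s$ (by Turán, or greedily).

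\textbf{Independent set case.} For $i\ne j\in S$, the quadratic $Q_i+Q_j$ has $\rank_1>r$. Together with $\rank_1(Q_i)>r_0$ and Dickson's structure theorem, this makes every nonzero combination of $Q_i,Q_j$ have bias $2^{-\Omega(\min(r,r_0))}$. So the pair $(Q_i,Q_j)$ is within $\epsilon=2^{-\Omega(r)}$ of independent, with each marginal within $2^{-\Omega(r_0)}$ of $1/2$. In fact, in this case each single $Q_i$ is already near unbiased, which is far from $1/3$ by \cref{main-thm:one-polys-gap}.

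We then apply the Chebyshev argument. Let $Z=\sum_{i\in S}Q_i$. Then $\E Z\ge (1/3+\delta)s$ and $\Var Z\le s+\epsilon s^2$. Under $\Ber(1/3)^{\ox s}$, $Z$ concentrates around $s/3$ with deviation $O(\sqrt s)$. Thresholding at $(1/3+\delta/2)s$ gives distance at least $1-O(1/(\delta^2 s)+\epsilon/\delta^2)$.

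\textbf{High degree case.} For each neighbor $j$ of $i^*$ we write $Q_j=Q_{i^*}+\Gamma_j(L_{j1},\dots,L_{jr})$. We condition on $Q_{i^*}(X)=b$ for $b\in\{0,1\}$. Conditioned on this event, the neighbors form a bounded-$\rank_1$ family over the nonuniform domain $\{Q_{i^*}=b\}$.

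To handle this we invoke \cref{lem:dist-tv-common-Q}: the distance to the product is at least $1-\sum_b(\text{closeness on each branch})$. On each branch, $Q_j=b+\Gamma_j(L)$ is a function of rank $r$ in linear forms. The uniform distribution on $\{Q_{i^*}=b\}$ may not be uniform on the relevant linear forms, but since $\rank_1(Q_{i^*})>r_0$ it is $2^{-\Omega(r_0)}$-close to uniform on any $O(r)$ forms. So the proof of \cref{main-thm:bounded-linear-rank} applies with small error, giving $1-\exp(-2^{-Cr^2}t)-2^{-\Omega(r_0)}$.

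\textbf{Parameters.} The total error is roughly
\[
\max\bigl(1/s+2^{-\Omega(r)},\ \exp(-2^{-Cr^2}n/s)+2^{-\Omega(r_0)}\bigr).
\]
Take $s=n^{1/2}$ and $r=c'\sqrt{\log n}$, so that $2^{Cr^2}\ll n^{1/2}$. This gives error about $2^{-\Omega(\sqrt{\log n})}$, which is even better than claimed.

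The \textbf{main obstacle} is the high degree case. Rigorously running the bounded $\rank_1$ induction on the biased domain $\{Q_{i^*}=b\}$ with error terms is delicate, and this is where the weaker $\log n/\log\log n$ exponent probably enters. If the near-uniformity transfer fails, the fallback is to iterate the dichotomy. One would recursively extract common quadratic parts with a sunflower-type regularization, choosing $r$ growing with the depth $k$ so that the errors multiply like $k^{O(k)}$. Optimizing $k\approx\log n/\log\log n$ would then give $n^{-c/\log\log n}$.
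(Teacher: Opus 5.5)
\textbf{Verdict.} Your reduction to high linear rank and your independent-set (Chebyshev) case are fine. The argument still does not prove the statement, because of two genuine gaps.

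\textbf{High-degree case.} The obstacle you flag there comes from misreading \cref{lem:dist-tv-common-Q}. Its hypothesis concerns $\Gamma_\sigma(Q_1(X),\dots,Q_k(X))$ with $X$ uniform on all of $\F_2^m$, not conditioned on $Q^*(X)=\sigma$. So you never need to argue anything on the biased domain $\{Q_{i^*}=b\}$. The fix runs in three steps.
\begin{enumerate}
\item Apply the bounded-linear-rank bound to $\cR_b=(\Gamma_j(L_{j1},\dots,L_{jr})+b)_j$ under uniform input. This gives an event $A$ with $\Pr[\cR_b(X)\notin A]\le\eps$ and $\Pr[\cB_{1/3}^{\ox t}\in A]\le\eps$.
\item A nonconstant quadratic takes each value with probability at least $1/4$ (minimum distance of second-order Reed--Muller codes). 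Hence conditioning on $\{Q_{i^*}=b\}$ inflates $\Pr[\notin A]$ by at most a factor $4$ (\cref{lem:dist-conditioning}).
\item A union over $b$ finishes (\cref{lem:dist-of-convex-comb}).
\end{enumerate}
This needs no assumption on $\rank_1(Q_{i^*})$ and costs a constant factor rather than an additive $2^{-\Omega(r_0)}$.

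Your near-uniformity transfer would not justify the step anyway. The bounded-linear-rank induction conditions on a subspace $W$ whose dimension can grow linearly in $n$. It also uses exact independence of coordinates depending on linearly independent subspaces. So it relies on the joint law of unboundedly many linear forms. Under $\{Q_{i^*}=b\}$ that law can be $\Omega(1)$-far from uniform (e.g.\ the forms in a Dickson normal form of $Q_{i^*}$), even if every $O(r)$ of them are nearly uniform.

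\textbf{Quantitative gap.} Your final bound $1-2^{-\Omega(\sqrt{\log n})}$ is \emph{weaker}, not stronger, than the claimed $1-n^{-c/\log\log n}=1-2^{-c\log n/\log\log n}$, since $\sqrt{\log n}\ll\log n/\log\log n$. The generic rate $\exp(-2^{-Cr^2}n)$ pays for a marginal gap $\delta=2^{-r-2}$, and with it $r\approx\sqrt{\log n}$ is the best you can do. Your additive $2^{-\Omega(r_0)}$ term with $r_0\lesssim\sqrt{\log n}$ imposes the same ceiling.

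The missing observation is that each coordinate $\Gamma_j(L_j)+b=Q_j+Q_{i^*}+b$ of $\cR_b$ is itself a quadratic polynomial, and so is each of its affine restrictions. Therefore \cref{thm:bounded-rank-1} applies with $\delta=\Omega(1)$ and gives the rate $\exp(-2^{-O(r\log r)}n)$ (\cref{cor:quadratics-of-bounded-rank-1}). The same rate applies to your initial low-rank reduction.

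With this rate, take a single threshold $r=c'\log n/\log\log n$ and independent-set size $s=2^r$. The high-degree error becomes $\exp(-n^{\Omega(1)})$, and the independent-set error becomes $O(2^{-r})=n^{-\Omega(1/\log\log n)}$. So the $\log\log n$ comes from the $r!$ in the rate $c^r\delta^{2r}/r!$, not from the conditioning. The iterated-sunflower fallback is unnecessary, and as sketched it is not an argument.
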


\medskip

The next natural step is to show a lower bound for bounded $\rank_2$ distributions. Specifically, we prove the following theorem.

\begin{stheorem}[Bounded $\rank_2$ distributions; formal version in \cref{thm:bounded-rank-2}]\label{main-thm:bounded-quadratic-rank}
    Let $\cP = (P_1,\dots,P_n)$ be a distribution, where each $P_i$ satisfies $\rank_2(P_i) \leq r$, i.e., $P_i=\Gamma_i(Q_{i1},\dots, Q_{ir})$ for some arbitrary function $\Gamma_i$ and degree-$2$ polynomials $Q_{i1},\dots, Q_{ir}$.
    Then $\norm{\cP - \Ber(1/3)^{\ox n}}_\tv \geq 1 - \exp(-\log(n)^{c/r^2})$ for some absolute constant $c>0$.
\end{stheorem}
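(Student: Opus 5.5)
We would prove the bounded $\rank_2$ bound by induction on $r$, mirroring the quadratic case but replacing the graph on single polynomials with a sunflower decomposition of the tuples $(Q_{i1},\dots,Q_{ir})$.

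\textbf{Regularization.} For each $i$ we first regularize $\{Q_{i1},\dots,Q_{ir}\}$. If some nontrivial combination $\sum_k a_k Q_{ik}$ has $\rank_1$ below a threshold $R$, we replace it by $R$ linear forms. This expresses $P_i=\Gamma_i'(Q'_{i1},\dots,Q'_{ir'};L_{i1},\dots,L_{iR'})$ with the quadratic part of high rank. If many $P_i$ end up with fewer than $r$ genuine quadratics, we restrict to them and apply induction on $r$. The linear parts are handled as in \cref{main-thm:bounded-linear-rank}: either condition on a small subspace, or pass to a subfamily with linearly independent linear parts. This step costs only factors of $2^{O(R)}$ in the parameters.

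\textbf{Sunflower pairwise regularization.} This is the sunflower lemma for quadratics from \cref{sec:pairwise-reg}. Given $n$ regular $r$-tuples, we find a subfamily $S$ of size about $\log(n)^{\Omega(1/r)}$, a common core $Q^*_1,\dots,Q^*_c$ with $c<r$, and petals such that for $i\ne j\in S$ the union of core, petal$_i$ and petal$_j$ is high-rank regular.

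The lemma is proved by iterating the win-win of \cref{main-fig:win-win} at the level of spans. Build a graph on tuples where $i\sim j$ if some combination mixing $\Span(Q_{i\cdot})$ and $\Span(Q_{j\cdot})$ has low rank modulo the current core. A large independent set finishes the step. A high-degree vertex $i^*$ yields $n/s$ tuples that share, up to low-rank error, a new direction from $\Span(Q_{i^*\cdot})$. That direction goes into the core, leaving fewer than $r$ fresh quadratics. After at most $r$ rounds we are done, with the rank thresholds growing tower-like in each round. This growth is what gives the $\log(n)^{c/r^2}$ exponent: there are $r$ rounds, each costing a polynomial loss in $\log n$ through the choices of $s$ and $R$.

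\textbf{Conditioning on the core and applying Chebyshev.} We condition on the $2^c$ values of the core, using the union-bound-over-conditioning tool \cref{lem:dist-tv-common-Q}. After conditioning, each $P_i$ is a function of its petal, and by regularity it is almost pairwise independent across $S$ on each conditioned domain. Each coordinate's bias is bounded away from $1/3$ by \cref{main-thm:one-polys-gap} applied to the conditioned polynomial. The conditioned polynomial has degree at most $2r$ on an affine-like variety, and regularity makes its bias close to that of a polynomial of degree at most $2r$, which keeps it $\delta_{2r}$-far from $1/3$. The Chebyshev lemma then gives distance $1-o(1)$ on each branch, and combining the $2^c$ branches gives the claimed bound once $|S|$ dominates $2^{c}$.

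The main obstacle is the high-degree step of the sunflower lemma. We need a single common quadratic to absorb the high-rank part of many tuples simultaneously while the remaining petals stay regular relative to it. We would try pigeonholing over the $2^r$ combinations of $i^*$'s span, then re-regularizing the resulting collection with the Green--Tao/Kaufman--Lovett regularity lemma.
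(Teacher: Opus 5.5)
Your outline has the right ingredients: regularizing each tuple into high-rank quadratics plus linear forms, the independent-set/high-degree win-win, pigeonholing over the $2^r$ combinations in $i^*$'s span, \cref{lem:dist-tv-common-Q}, and Chebyshev. The gap is the step the whole argument rests on. You accumulate a core of several quadratics and must keep core $\cup$ petal$_i$ $\cup$ petal$_j$ regular across rounds, and this is exactly the ``main obstacle'' you leave open. It is not a routine re-regularization: after each round the core and every petal change, and a low-rank combination found while re-regularizing can involve core elements, which alters all petals at once. Handling this in general is what the state-vector induction of \cref{thm:lossy-pairwise-regularization} does. Its bounds (iterates of $\psi$ over $\N^{2d+1}$, with a growth function that would here have to depend on $n$) are far too weak for $\exp(-\log(n)^{c/r^2})$.

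The paper never builds a core. It inducts on $(r,s)$, where $s$ counts linear forms. After \cref{claim:regularize-deg-2} there are three cases:
\begin{itemize}
\item Many $P_i$ have lost a quadratic, and it inducts directly.
\item There is an independent set of size $w$, and \cref{cor:chebyshev-pairwise-high-rank-lin-part} applies; its subspace sunflower handles the linear parts.
\item A high-degree vertex $i^*$ gives, after pigeonholing, one common $Q^*=\sum_j\beta_jQ^{(i^*)}_j$.
\end{itemize}
In the last case $Q^*$ is substituted immediately. For each $z\in\Bits$ the function $P_i^z$ has $\rank_{2,1}\le(r-1,s+C(r+s))$, and the induction hypothesis applies to $(P_i^z)_{i\in S}$ on the whole cube. \cref{lem:dist-tv-common-Q} with $\tau=1/4$ then recombines the two branches at a factor $32$, since a nonconstant quadratic takes each value with probability at least $1/4$. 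Nothing ever needs to be regular relative to $Q^*$. The base case $r=0$ is \cref{thm:bounded-rank-1}. Your Chebyshev-only endgame would also need this base case when every round is a high-degree round and the petals become purely linear; so your claim $c<r$ is not guaranteed.

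Your quantitative sketch is also inconsistent with the target bound. If $\card{S}\approx\log(n)^{\Omega(1/r)}$, the Chebyshev term $O(1/(\delta^2\card{S}))$ alone is $2^{-\Omega(\log\log(n)/r)}$. That is far larger than $\exp(-\log(n)^{c/r^2})$ for fixed $r$, and tower-like growth of thresholds would be worse still. In the paper the family stays polynomially large: with threshold $w=n^{1/2r}$, after $r$ levels $n'\ge\sqrt{n}/((r+1)!\,2^{r(r+1)/2})$. Because regularizing quadratics only produces linear forms, the growth is merely exponential: each level maps $s\mapsto(C+1)^r(r+s)$. So at most $h^*\le 2(C+1)^{r^2}(2r+s)$ linear forms reach the base case. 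The base-case error $\exp(-(c\delta)^{h^*}n'/h^*!)$ stays negligible only if $(C+1)^{r^2}$ is at most a small power of $\log n$. The paper takes $C=\log(n)^{1/3r^2}-1$, so $h^*<\sqrt{\log n}$. The surviving regularity error $2^{-C(r+s)/4}/\delta^2$ is then $\exp(-\log(n)^{\Omega(1/r^2)})$, and this balance is where the exponent $c/r^2$ comes from. Finally, your per-coordinate gap argument (``degree at most $2r$, hence $\delta_{2r}$'') ignores the linear forms in the petals, whose number grows with $n$.
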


Below we describe the ideas that go into the proof of the theorem. The main idea is to use the technique known as \emph{regularization} in higher order Fourier analysis \cite{green2007distribution,hatami2019higher}. We describe the special case of it for $P_i$'s of bounded $\rank_2$. Later, we will generalize it to arbitrary degree $d$.

Suppose we have a collection of quadratic polynomials $(Q_1,\dots, Q_r)$. 
We define the structural notion of \emph{regularity} as follows. A collection or \emph{factor} of polynomials $(Q_1,\dots, Q_r)$ is said to be $k$-regular if all nonzero linear combinations $\tilde Q = \sum_i\lambda_i Q_i$ have rank greater than $k$.
By the the structure of quadratic polynomials (Dickson's Lemma~\cref{thm:structure-of-deg2}),
if the factor $(Q_1,\dots, Q_r)$ is $k$-regular,
then all nonzero linear combinations $\sum_i\lambda_i Q_i$ are almost balanced in the sense that $\Pr_X[\sum_i\lambda_i Q_i(X) = 1] = (1 \pm 2^{-k/2})/2$.
A standard Fourier-analytic argument shows that if $k \gg r$, then all nonzero linear combinations $\sum_i\lambda_i Q_i$ are almost balanced, and the distribution of $(Q_1,\dots, Q_r)$ must be close to uniform.

Suppose now that $P$ has $\rank_2(P)\leq r$, and hence can be written as $P = \Gamma(Q_1,\dots, Q_r)$ for some quadratic polynomials $Q_i$. Observe that if the factor of quadratics $(Q_1,\dots, Q_r)$ is not $k$-regular, then there exists some linear combination $\sum_i\lambda_i Q_i$ whose rank is at most $k$, and hence we can remove one of the $Q_i$'s, and replace it with at most $k$ linear functions, and hence $P$ can be written as $P = \Gamma(Q_1,\dots, Q_{r-1}, L_1,\dots,L_k)$, where all $L_i$'s are linearly independent. Continuing this process, we get a \emph{refinement} $(Q_1,\dots, Q_{r'}, L_1,\dots,L_s)$ consisting of $r' \leq r$ quadratic polynomials and $s' \leq kr$ linear function such that $P = \Gamma(Q_1,\dots, Q_{r'}; L_1,\dots,L_{s'})$ and $(Q_1,\dots, Q_{r'}, L_1,\dots,L_{s'})$ is $k$-regular.

\medskip

We can now return to the proof overview of the lower bound for distributions of bounded $\rank_2$. The proof uses induction. Let $\cP = (P_1,\dots, P_n)$ be such a distribution, where each $P_i$ can be written as a function of $r$ polynomials of degree 2, which we write as $P_i=\Gamma_i(Q_{i1},\dots, Q_{ir})$. Let us write $\cQ_i$ as a shorthand for $(Q_{i1},\dots, Q_{ir})$.

We may assume that all $\cQ_i$'s are $k$-regular for some sufficiently large $k$, as otherwise, we can perform a refinement step; this step will reduce the number of actual quadratics and might introduce new linear functions into $\cQ_i$. Now, we construct a graph, similar to \cref{main-fig:win-win}, where the vertices correspond to  $\cQ_i$'s. We put an edge between $\cQ_i$ and $\cQ_j$ if the pair $(\cQ_i; \cQ_j)$ is not $k$-regular for some parameter $k$, i.e. if the $2r$-tuple
$\cQ_{i,j} = (Q_{i1}, \dots, Q_{ir}; Q_{j1},\dots, Q_{jr})$ is not $k$-regular.

Once again, the win-win scenario of \cref{main-fig:win-win} is applicable.
If a large number $S \seq [n]$ satisfies  the pairwise regularity condition, namely, $\cQ_{i,j} = (Q_{i1}, \dots, Q_{ir}; Q_{j1},\dots, Q_{jr})$ is $k$-regular for all $i\neq j \in S$, then a Chebyshev-like argument implies that the corresponding distribution if $1-o(1)$-far from $\Ber(1/3)^n$.

Otherwise, there exists some $\cQ_{i^*}$ and a large set $S \seq [n]$ such that for all $j \in S$ the factor $\cQ_{i^*,j}$ is not regular.
We show that in this case, we can change the basis of $\cQ_j$, possibly introducing a few linear functions along the way, so that for all $j \in S$  the function $P_j$ can now be written in the form $P_j = \Gamma'_j(Q^*; Q_{j1},\dots, Q_{j, r-1}; \cL_j)$ where $\cL_j$ is a collection of linear functions, and $Q^*$ is a quadratic polynomial common to all $j \in S$. This step generalizes \cref{main-fig:large-deg-vertex} from a single polynomial to factors of polynomials. In this new basis, the polynomials $P_1,\dots,P_t$ have \emph{essentially} had their quadratic rank reduced to $r-1$, which allows us to invoke the induction hypothesis after handling $Q^*$; see the proof of \cref{thm:bounded-rank-2} for the details.

\subsection{Lower bounds against polynomial distributions of degree greater than 2}

In the next step we lift \cref{main-thm:bounded-quadratic-rank} to a lower bound for cubic distributions by an argument analogous to the one used to derive the lower bound for quadratics from bounded $\rank_1$. We obtain the following concrete bound for degree three.

\begin{stheorem}[Cubic distributions; formal version in \cref{thm:cubics}]\label{main-thm:cubic}
    Let $\cC = (C_1,\dots,C_n)$ be a distribution, where each $C_i\colon\F_2^m\to\F_2$ is a polynomial of degree $3$. Then $\norm{\cC - \Ber(1/3)^{\ox n}}_\tv \geq 1 - \exp(-c\sqrt{\log\log(n)})$ for some absolute constant $c>0$.
\end{stheorem}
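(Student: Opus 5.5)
We lift \cref{main-thm:bounded-quadratic-rank} to cubics exactly as \cref{main-thm:quadratic} was obtained from \cref{main-thm:bounded-linear-rank}, via the win-win of \cref{main-fig:win-win} with $\rank_2$ in place of $\rank_1$. We fix parameters $r$ (a rank threshold) and $s$ (an independent-set size), to be optimized at the end.

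First we handle cubics of low quadratic rank. If at least $n/2$ of the $C_i$ satisfy $\rank_2(C_i)\le r$, we restrict to those coordinates; the marginal of $\Ber(1/3)^{\ox n}$ on them is again a product, and TV distance does not increase under marginalization. \cref{main-thm:bounded-quadratic-rank} then gives distance at least $1-\exp(-\log(n/2)^{c/r^2})$. Otherwise we keep the $\ge n/2$ cubics of $\rank_2>r$ and build the graph with an edge $\{i,j\}$ whenever $\rank_2(C_i+C_j)\le r$.

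\textbf{Case (a): an independent set $S$ of size $s$.} For $i\neq j\in S$, every nonzero combination $C_i$, $C_j$, $C_i+C_j$ has $\rank_2>r$. The structure theorem for biased polynomials \cite{green2007distribution,kaufman2008worst} (for cubics, the sharper Haramaty--Shpilka bound \cite{haramaty2010structure}) then implies that each of these is $\eps(r)$-unbiased, with $\eps(r)$ about $2^{-\Omega(r)}$ for $d=3$ via \cite{haramaty2010structure}. Hence the pairs $(C_i,C_j)$ are $\eps$-close to independent. By \cref{main-thm:one-polys-gap}, each $\E[C_i]$ differs from $1/3$ by at least $\delta_3$, so some half of $S$ is biased in a common direction. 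The Chebyshev lemma of \cref{sec:chebyshev-lemma} then gives distance at least $1-O(1/(\delta_3^2 s))-O(\eps(r)/\delta_3^2)$: the variance of $\sum_{i\in S}C_i$ is at most $s+s^2\eps$, while under $\Ber(1/3)$ the sum concentrates.

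\textbf{Case (b): a vertex of degree $\ge n/(2s)$.} There is then $i^*$ and a set $T$ with $|T|\ge n/(2s)$ such that $C_j=C_{i^*}+\Gamma_j(Q_{j1},\dots,Q_{jr})$ with quadratics $Q_{jk}$ for every $j\in T$. We condition on the value $b\in\{0,1\}$ of the common cubic $C^*=C_{i^*}$ using \cref{lem:dist-tv-common-Q}, which pays only a union bound over the two values. On each conditioned domain $\{C^*=b\}$, each $C_j$ is a function $\Gamma_j+b$ of $r$ quadratics. This is where we expect the main obstacle. The domain $\{C^*=b\}$ is no longer a subspace, so we cannot simply apply \cref{main-thm:bounded-quadratic-rank} to the restriction. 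We would first try to follow the proof of \cref{lem:dist-tv-common-Q} verbatim. That proof presumably writes $\cP$ as a mixture $\Pr[C^*=0]\,\cP_0+\Pr[C^*=1]\,\cP_1$. Alternatively, we view $(C^*,C_j)_{j\in T}$ as a $\rank_2$-type distribution after treating $C^*$ as an extra coordinate. If neither works, we would bound the distance of each conditioned part by the unconditioned bound for the family $(\Gamma_j+b)(Q_j)$ divided by $\min_b\Pr[C^*=b]$, with a separate argument when $C^*$ is very biased: in that case $C^*$ is nearly constant, or, by the structure theorem, of low $\rank_2$. Granting this, case (b) yields distance at least $1-O(\exp(-\log(n/s)^{c/r^2}))$.

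\textbf{Optimizing.} The three error terms are $\exp(-\log(n)^{c/r^2})$, $1/s$, and $\eps(r)=2^{-\Omega(r)}$. We take $s=\log n$, so that $\log(n/s)\approx\log n$. With $r=\Theta(\sqrt{\log\log n})$, we get $\log(n)^{c/r^2}=\exp(c\log\log n/r^2)=\exp(\Theta(\sqrt{\log\log n}))$, which makes the first term negligible. The bottleneck term is then $2^{-\Omega(r)}=\exp(-\Omega(\sqrt{\log\log n}))$, which is the claimed bound.
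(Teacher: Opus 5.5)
Your case analysis is the same as the paper's proof of \cref{thm:cubics}:
\begin{itemize}
\item low-$\rank_2$ coordinates go to \cref{thm:bounded-rank-2};
\item otherwise, an independent set in the graph $\rank_2(C_i+C_j)\le r$ goes to the Chebyshev lemma;
\item a high-degree vertex yields a common cubic, which is handled by \cref{lem:dist-tv-common-Q}.
\end{itemize}
The step you flag as the main obstacle is not one. \cref{lem:dist-tv-common-Q} only assumes that the \emph{unconditioned} distributions $\Gamma_\sigma(Q_1(X),\dots,Q_k(X))$, with $X$ uniform, are far from the target. It pays a factor $1+1/\tau$ for the conditioning via \cref{lem:dist-conditioning}, so your last fallback is exactly the paper's argument. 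Moreover, $\Gamma_j(\cQ_j)+b=C_j+C_{i^*}+b$ is itself a cubic of $\rank_2\le r$, so \cref{thm:bounded-rank-2} applies directly with $\delta=\delta_3$. A nonconstant cubic takes each value with probability at least $2^{-3}$ (Reed--Muller distance), so $\tau\ge 1/8$ and no ``very biased $C^*$'' subcase arises.

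The genuine gap is the final optimization. You have $\log(n)^{c/r^2}=2^{c\log\log n/r^2}$, and with $r=\Theta(\sqrt{\log\log n})$ the exponent $c\log\log n/r^2$ is $\Theta(1)$. So this quantity is $2^{\Theta(1)}$, not $\exp(\Theta(\sqrt{\log\log n}))$. Hence the low-rank error $\exp(-\log(n)^{c/r^2})$, and likewise your Case (b) error, is an absolute constant that does not tend to $0$, and the claimed bound does not follow. Pushing this term below $2^{-r}$ requires $r^2\log r\le c\log\log n$, i.e.\ $r=O(\sqrt{\log\log n/\log\log\log n})$. So even granting $\eps(r)=2^{-\Omega(r)}$, combining the stated results as black boxes only gives $1-\exp(-\Omega(\sqrt{\log\log n/\log\log\log n}))$. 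Reaching $\exp(-c\sqrt{\log\log n})$ would need a sharper form of \cref{thm:bounded-rank-2}, e.g.\ an error of the form $\exp(-\Omega(r\log(n)^{c/r^2}))$. The paper's own proof makes the same choice $r=c\sqrt{\log\log n}$ and asserts that the maximum equals $c/\exp(r)$, so it shares this issue.

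Separately, $\eps(r)\approx 2^{-\Omega(r)}$ is not what the Haramaty--Shpilka tradeoff quoted in \cref{thm:haramaty-shpilka} gives. With $c_\KL(3,\delta)=O(\log^4(1/\delta))$, the hypothesis $\rank_2>r$ only yields bias $2^{-\Omega(r^{1/4})}$; the paper also writes $c/\exp(r)$ at this point. With the tools as stated, the exponent therefore degrades further, to order $(\log\log n/\log\log\log n)^{1/8}$.
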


In principle, one could continue in this manner and obtain concrete bounds for higher-degree distributions, but at this point deriving an explicit lower bound $1-\eps_d(n)$ requires a lengthy calculation, while the gain over the simpler bound $1-o_n(1)$ is comparatively modest. We therefore content ourselves with this form of the result and turn to the final conclusion of the paper: a general sampling lower bound for degree-$d$ distributions.

\begin{stheorem}[Degree-$d$ distributions; formal version in \cref{thm:deg-d-lower-bound}] \label{main-thm:degree-d}
    Let $\cP = (P_1,\dots, P_n)$ be a distribution, where each $P_i\colon\F_2^m\to\F_2$ is a polynomial of degree $d$. Then $\norm{\cP - \Ber(1/3)^{\ox n}}_\tv \geq 1 - o_{d}(1)$.
\end{stheorem}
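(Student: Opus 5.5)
We argue by induction on the degree through the hierarchy ``degree $d$ $\Rightarrow$ bounded $\rank_d$'', and prove the following more general statement. For every $d$ and $r$ there is a function $\eps_{d,r}(n)\to 0$ such that every distribution $\cP=(P_1,\dots,P_n)$ with $P_i=\Gamma_i(\cQ_i)$ satisfies $\norm{\cP-\Ber(1/3)^{\ox n}}_\tv \ge 1-\eps_{d,r}(n)$. Here each $\cQ_i$ is a factor consisting of at most $r$ polynomials of degree $\le d$, and each $P_i$ has $\abs{\Pr[P_i=1]-1/3}\ge\delta$; we let $\eps$ depend on $\delta$ as well. The gap hypothesis is automatic for degree-$d$ polynomials by \cref{main-thm:one-polys-gap}. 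After restricting to affine subspaces the marginals may shift, so the induction must carry a gap hypothesis that survives conditioning. We therefore keep $\delta$ explicit and pass to a subfamily of density at least a constant fraction of $n$ when needed, as done for quadratics. \cref{main-thm:degree-d} is then the case $r=1$.

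The induction is lexicographic on the tuple $(r_d,r_{d-1},\dots,r_1)$, where $r_e$ is the number of degree-$e$ polynomials in the factors. The base case, factors consisting only of linear functions, is \cref{main-thm:bounded-linear-rank}. For the inductive step, we first regularize each $\cQ_i$ to be $k$-regular for a growing $k$. Using the Green--Tao / Kaufman--Lovett structure theorem, a degree-$e$ polynomial of high bias is a function of few polynomials of degree $e-1$. Each refinement replaces one top-degree polynomial by boundedly many (in terms of $k$) lower-degree ones, which strictly decreases the tuple. We then invoke the sunflower pairwise regularization lemma announced in the introduction. It gives one of two outcomes. Either there is a large subset $S$ whose factors form a sunflower with a common core $Q_1^*,\dots,Q_c^*$ and petals that are pairwise $k$-regular modulo the core, or (after the refinement it performs) the core absorbs at least one top-degree polynomial of every member of $S$, and we pass to a refinement.

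In the sunflower case, we condition on the $2^c$ values of the core. \cref{lem:dist-tv-common-Q} and its union bound over convex combinations cost a factor of at most $2^c$. Within each conditioning, pairwise regularity of the petals together with the equidistribution of regular factors makes the outputs $(P_i)_{i\in S}$ almost pairwise independent. A pigeonhole argument then finds a subset of size $\ge|S|/2$ all biased in the same direction with gap $\gtrsim\delta$ relative to the conditioned marginal. Equidistribution keeps the conditioned marginal close to the unconditioned one, because $\Gamma_i$ is evaluated on a nearly uniform joint law of core and petal. The Chebyshev lemma then gives distance $1-O(1/(\delta^2|S|))-o_k(1)$. In the other case, conditioning on the core reduces the effective tuple, and we apply induction to $S$, again paying $2^c$ in the union bound.

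The main obstacle is parameter bookkeeping across the conditioning. The sunflower lemma only yields $|S|$ as some tower-type function of $n$, and the core size $c$ and regularity $k$ must be chosen so that $2^c$ times the error terms still vanish. The delicate point is choosing $k$ as a slowly growing function of $n$ while the regularity lemma keeps $c$ bounded in terms of $(d,r,k)$ independently of $m$. We would first try to fix $k=k(n)$ growing extremely slowly, so that every quantity depending on $(d,r,k)$ is $\ll\log\log n$. We would then verify that $|S|\ge n^{1/F(d,r,k)}$ for some function $F$. This suffices for a qualitative $o_d(1)$ bound, which is all the statement asks for.
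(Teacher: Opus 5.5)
Your plan uses the paper's ingredients, but two steps fail as written.

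\textbf{First: the conditioned marginals.} Your claim that equidistribution keeps the marginal of $P_i$ conditioned on the core close to its unconditioned marginal is false. Take $P_i=L_0L_i$ with $L_0,L_1,\dots,L_n$ linearly independent, core $\{L_0\}$ and petals $\{L_i\}$. The joint law of core and petal is exactly uniform, yet $\Pr[P_i=1]=1/4$ while the conditioned marginals are $1/2$ and $0$. Near-uniformity only makes the \emph{average} over core values correct.

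The gap after conditioning comes from a different source: $\Gamma_i(\sigma,\cQ_i)$ is itself a function of $\rank_d\le r$.
\begin{itemize}
\item \cref{thm:exp-low-deg-far-from-third} is stated for bounded-rank functions, not only polynomials, so it bounds $\Gamma_i(\sigma,\cQ_i)$ away from $1/3$.
\item Alternatively, if the petal $\cQ_i$ of size $t$ is $2^{-t}/4$-close to uniform, then $\Pr[\Gamma_i(\sigma,\cQ_i)=1]$ is within $2^{-t}/4$ of some $a/2^t$. By \cref{prop:gap-of-one-third-from-dyadics} it is then at least $2^{-t}/12$ from $1/3$.
\end{itemize}
Either way the gap is a function of the factor size, and nothing needs to be ``carried'' from the original $\delta$.

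\textbf{Second: a single threshold $k$ cannot give equidistribution.} The Chebyshev lemma for two petals of total dimension $2t$ needs every nonzero combination to have bias at most $2^{-t}\eta$. That means rank above $c_\KL(d,2^{-t}\eta)$, which for $d=2$ is already $2t+2\log(1/\eta)$. Your refinements replace a polynomial by up to $k$ lower-degree ones, so $t$ grows with $k$, and $k$-regularity is never enough. Regularity has to be measured by a growth function $f$ of the current dimension, as in \cref{thm:classic-regularization} and \cref{thm:lossy-pairwise-regularization}. The function $f$ must be chosen to absorb the $2^{O(t)}$ losses from Vazirani's lemma, from the gap, and from the union bound over core values.

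\textbf{Comparison with the paper.} With these repairs your argument becomes the paper's, except that your outer induction on $(r_d,\dots,r_1)$ is superfluous. \cref{thm:lossy-pairwise-regularization} has a single outcome. The independent-set versus high-degree-vertex dichotomy and the refinement loop live inside its proof. The interleaved form you describe is how the paper treats bounded $\rank_2$ and cubics.

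For degree $d$ the paper proceeds as follows.
\begin{itemize}
\item It applies the theorem once to $\cF_i=\{P_i\}$ with $f(r)=c_\KL(d,2^{-3r}\eps\delta^2)$.
\item It gets $\Pr[\cC=\sigma]\ge 2^{-r-1}$ from regularity of the core via \cref{prop:density-of-high-rank-variety}.
\item It applies \cref{lem:chebyshev-bounded-rank-d} to each $\Gamma_\sigma(\cQ_1,\dots,\cQ_s)$ on uniform input.
\item It combines the pieces via \cref{lem:dist-tv-common-Q}.
\end{itemize}
This yields $1-O(2^{2r}/(\delta^2 n^{\eps'}))-O(\eps)$ with $r,\eps'$ depending only on $(d,\eps)$. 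Letting $\eps\to0$ gives $o_n(1)$. This replaces your tuning of $k(n)$ and avoids tracking how $|S|$ depends on $k$.
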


We now sketch the main idea behind the proof of this \cref{main-thm:degree-d}. In order to describe the proof we first need the following lemma, which we call the \emph{Chebyshev lemma}.%
\footnote{The name is due to Chebyshev's inequality playing the main role in its proof.}

\subsection{Chebyshev lemma}

Before proceeding to this lemma, we recall the definition of degree-$d$ rank. The $\rank_d(P)$ of a function $P\colon\F_2^m\to\F_2$ is the smallest positive integer $k$ for which $k$ polynomials $Q_1,\dots, Q_k$ of degree $d$ exist so that $P$ can be written as $P=\Gamma(Q_1,\dots, Q_k)$ for some $\Gamma\colon\F_2^k\to \F_2$. When we speak of the rank of a polynomial of degree exactly $d+1$, we mean its $\rank_d$.

\begin{slemma}[Chebyshev lemma; formal version in \cref{lem:chebyshev-bounded-rank-d}]\label{main-lem:chebyshev-lemma}
    Let $\cP = (P_1,\dots, P_n)$ be such that each $P_i$ has bounded $\rank_d$; i.e., $P_i=\Gamma_i(Q_{i1},\dots, Q_{ir})$ for some $\Gamma_i$ and degree-$d$ polynomials $Q_{i1},\dots, Q_{ir}$. Suppose further that for all $i\neq j$ the distribution defined by the tuple
    \begin{equation}\label{main-eq:pairwise-uniform}
    \cQ_{ij} = (Q_{i1}, \dots, Q_{ir}; Q_{j1},\dots, Q_{jr}) \qquad \text{ is $2^{-r}$-close to uniform.}\tag{Pairwise Uniform}
    \end{equation}
    Then, $\norm{\cP - \Ber(1/3)^{\ox n}}_\tv \ge 1 - 2^{-r} - o(1)$.
\end{slemma}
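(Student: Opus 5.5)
We will use a Chebyshev-type distinguisher built from the Hamming weight. For each $i$ let $p_i \defeq \Pr_X[P_i(X)=1]$. By \cref{main-thm:one-polys-gap} (in the bounded-rank setting we use the version where $P_i=\Gamma_i(\cQ_i)$ and $\cQ_i$ is nearly uniform, so $p_i$ is within $2^{-r}$ of a dyadic $a/2^r$ and hence bounded away from $1/3$), each $p_i$ satisfies $\abs{p_i-1/3}\ge\delta$. By pigeonhole, at least $n/2$ indices have $p_i\ge 1/3+\delta$, or at least $n/2$ have $p_i\le 1/3-\delta$. We restrict to such a set $S$ of size $s\ge n/2$; by the data processing inequality, projecting to the coordinates in $S$ can only decrease total variation, so it suffices to lower bound $\norm{\cP_S-\Ber(1/3)^{\ox s}}_\tv$. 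Assume the first case; the second is symmetric.

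Define the test event $E=\{y\in\Bits^S : \sum_{i\in S}y_i \ge s(1/3+\delta/2)\}$. Under $\Ber(1/3)^{\ox s}$, Chebyshev (or Chernoff) gives $\Pr[E]\le O(1/(\delta^2 s))$. Under $\cP_S$, let $W=\sum_{i\in S}P_i(X)$, so $\E[W]\ge s(1/3+\delta)$. We then bound $\Var[W]=\sum_i\Var[P_i]+\sum_{i\ne j}\Cov(P_i,P_j)$. The key point is that for $i\ne j$, the pair $(P_i,P_j)$ is a function of $\cQ_{ij}$. Since $\cQ_{ij}$ is $2^{-r}$-close to uniform on $\F_2^{2r}$, and since under the uniform distribution the two halves are independent, the joint law of $(P_i,P_j)$ is $2^{-r}$-close to a product law whose marginals are themselves $2^{-r}$-close to the true marginals (the marginal laws of $\cQ_i$ are also $2^{-r}$-close to uniform). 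Hence $\abs{\Cov(P_i,P_j)}=O(2^{-r})$.

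This gives $\Var[W]\le s+O(2^{-r})s^2$, and Chebyshev yields
\[
\Pr[W< s(1/3+\delta/2)]\le \frac{\Var[W]}{(\delta s/2)^2}\le O\Bigl(\frac{1}{\delta^2 s}\Bigr)+O\Bigl(\frac{2^{-r}}{\delta^2}\Bigr).
\]
Therefore $\norm{\cP-\Ber(1/3)^{\ox n}}_\tv\ge \Pr_{\cP}[E]-\Pr_{\Ber}[E]\ge 1-O(2^{-r}/\delta^2)-O(1/(\delta^2 n))$.

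The main obstacle is that this argument gives $1-O_\delta(2^{-r})$ rather than the exact $1-2^{-r}$. The formal version presumably either absorbs $\delta^{-2}$ into constants or assumes closeness $\eps\ll\delta^2$, and I would state it that way. A second subtlety is ensuring that the gap $\delta$ applies to the $P_i$ themselves: if only their $\rank_d$ is bounded, we cannot invoke \cref{main-thm:one-polys-gap} directly. The plan is to handle this through near-uniformity of $\cQ_i$, which forces $p_i$ to be $2^{-r}$-close to a multiple of $2^{-r}$, and such multiples are at distance at least $2^{-r}/3$ from $1/3$. This is where the precise closeness parameter would need tuning, namely $\eps$-closeness with $\eps$ small compared to $2^{-r}$. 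I expect calibrating these two error terms to be the delicate step.
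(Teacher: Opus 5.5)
Your argument is the paper's proof of the formal version, \cref{lem:chebyshev-bounded-rank-d}. It has the same four steps: pigeonhole on the sign of $p_i-1/3$; a covariance bound $\abs{\Cov(P_i,P_j)}=O(\eta)$, obtained by swapping the $\eta$-close tuple $\cQ_{ij}$ for a uniform one (the paper gets $4\eta$); Chebyshev on the Hamming weight; and concentration for $\cB_{1/3}^{\ox s}$. The bound you reach, $1-O(\eta/\delta^2)-O(1/(\delta^2 n))$, is exactly the formal statement, which keeps the $\delta^{-2}$ explicit. You are right that the informal $1-2^{-r}$ is meant only modulo this dependence.

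The step that does not work as written is obtaining $\delta$ from near-uniformity of $\cQ_i$. With $\eta=2^{-r}$ you only know $\abs{p_i-a_i/2^r}\le 2^{-r}$. But $\abs{a_i/2^r-1/3}$ can be as small as $2^{-r}/3$, so no gap from $1/3$ follows at all. Taking $\eta<2^{-r}/3$ gives at best $\delta=\Theta(2^{-r})$. Then the covariance term $O(\eta/\delta^2)=O(\eta\,2^{2r})$ is vacuous unless $\eta\ll 2^{-2r}$; for instance, $\eta=2^{-3r}$ yields $1-O(2^{-r})-O(2^{2r}/n)$. So your self-contained route needs closeness about $2^{-3r}$, not $2^{-r}$.

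The paper sidesteps this calibration in two ways. It makes $\norm{P_i-\Ber(1/3)}_\tv\ge\delta$ a separate hypothesis. It also leaves $\eta$ free, enforced through $\rank(\cQ_i;\cQ_j)>c_\KL(d,2^{-r}\eta)$ and Vazirani's lemma. In the application, \cref{thm:deg-d-lower-bound}, the $P_i$ are themselves degree-$d$ polynomials, so $\delta=\delta_d$ comes from \cref{thm:exp-low-deg-far-from-third} independently of $r$. The paper then sets $\eta=2^{-2r}\eps\delta^2$ for a target error $\eps$.

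On your worry that bounded-rank $P_i$ have no gap theorem: \cref{thm:exp-low-deg-far-from-third} does cover functions of bounded $\rank_d$. However, its $\delta_{d,r}$ is at most $2^{-r}/12$, so it does not remove the need to tie $\eta$ to $\delta$.
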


The requirement of being close to uniform can be replaced by the structural requirement of \emph{regularity}. A collection or \emph{factor} of degree-$d$ polynomials $(Q_1,\dots, Q_r)$ is $k$-regular if all nonzero linear combinations $\tilde Q = \sum_i\lambda_iQ_i$ have rank more than $k$. The celebrated rank-bias theorem of \cite{green2007distribution,kaufman2008worst} gives a tradeoff between $k$ and the bias of $\tilde Q$. More precisely, if $k$ is large enough as a function of $\delta$ and $d$, then $\Pr_X[\tilde Q(X)=1]=(1\pm \delta)/2$.
Hence, if the factor of polynomials $(Q_1,\dots, Q_r)$ is $k$-regular, then all nonzero linear combinations of are balanced and the joint distribution $(Q_1,\dots, Q_r)$ is close to uniform by a standard Fourier-analytic argument.
It follows that for some $k=k(d,\delta, r)$ the following implies \cref{main-eq:pairwise-uniform}.
    \begin{equation}\label{main-eq:pairwise-regularity}
    \cQ_{ij} = (Q_{i1}, \dots, Q_{ir}; Q_{j1},\dots, Q_{jr}) \qquad \text{ is } k\text{-regular}.\tag{Pairwise Regularity} 
    \end{equation}

Loosely speaking, our Chebyshev lemma enables us to prove a sampling lower bound by applying induction on the $\rank_d$. We explained this idea to some extent in our lower bound for bounded quadratic rank. If we have a distribution $\cP = (P_1,\dots, P_n)$ such that each $P_i$ has bounded $\rank_d$, then either
\begin{enumerate}[(a)]
    \item there exists a large subset of coordinates satisfying \cref{main-eq:pairwise-regularity}, which puts us in the setting of \cref{main-lem:chebyshev-lemma}; or
    \item  there exists a low-rank structure, which allows us to reduce a certain parameter (roughly corresponding to $r$), and hence apply the induction hypothesis.
\end{enumerate}

The main tool that makes the Chebyshev lemma useful is \emph{regularization}. We have already discussed regularization in part when considering distributions of bounded quadratic rank. In full generality, regularizing a factor $\cQ=(Q_1,\dots,Q_r)$ of degree-$d$ polynomials $Q_i\colon\F_2^m\to\F_2$, is the process of producing another factor of degree-$d$ polynomials $\cQ'=(Q'_1,\dots,Q'_{r'})$ that enjoys the following properties. 

\begin{itemize}
    \item \textit{(Regularity)} Each linear combination $\sum_i\lambda_i Q'_i$ has high rank.
    \item \textit{(Refinement)} Everything that was computable by $\cQ$, is also computable by $\cQ'$. Formally, if $P = \Gamma(Q_1,\dots, Q_r)$ for some $\Gamma$, then $P=\Gamma'(Q'_1,\dots, Q'_{r'})$ for some $\Gamma'$.
    \item \textit{(Small size blow-up)} The number of polynomials in $\cQ'$ is not much larger than $\cQ$. More precisely, $r'$ depends only on $r, d,$ and the regularity parameter $k$, and it is independent of the number of variables $m$.
\end{itemize}

We now sketch the main idea behind the proof of \cref{main-thm:degree-d}. The proof will be similar to \cref{main-thm:bounded-quadratic-rank} for distributions with bounded quadratic rank. Roughly speaking, the proof of \cref{main-thm:degree-d} works by formulating a generalized version of the idea depicted in \cref{main-fig:win-win} as a separate structural theorem, which we call \emph{sunflower pairwise regularization}. This structural result is of independent interest to us, and may be useful beyond the present argument.

\begin{stheorem}[Sunflower pairwise regularization; formal version in \cref{thm:lossy-pairwise-regularization}]\label{main-thm:sunflower-pairwise-regularization}
    Let $(P_1,\dots, P_n)$ be a collection of degree-$d$ polynomials $P_i\colon\F_2^m\to\F_2$. Then there exists a subset $S \seq [n]$ of polynomials of size $w=n^{\Omega(1)}$ such that for each $i \in S$ the following holds.
    \begin{itemize}
        \item Every $P_i$ can be written in the form $P_i=\Gamma_i(Q_1^*,\dots, Q^*_{c}; Q_{i1},\dots, Q_{ir})$ where $Q_1^*,\dots, Q^*_c$ appear in all of the $P_i$ and $Q_{i1},\dots, Q_{ir}$ is unique to $P_i$. That is, the sets $\cG_i=\{Q_1^*,\dots, Q^*_{c}, Q_{i1},\dots, Q_{ir}\}$ form a sunflower with the core $\cQ^*=(Q^*_1,\dots, Q^*_c)$ and petals $\cQ_i=(Q_{i1},\dots, Q_{ir})$.
        \item The number $r+c$ is small; in particular, it is independent of $m$ and $n$.
        \item For all $i\neq j$ the tuple $\cQ_{ij} = (Q_{i1},\dots, Q_{ir};Q_{j1},\dots,Q_{jr})$ satisfies \cref{main-eq:pairwise-regularity}.
    \end{itemize}
\end{stheorem}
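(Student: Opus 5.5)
The plan is an iterative win-win argument driven by a potential function, organized as induction on a ``degree profile'' of the factors. Each $P_i$ is first viewed as a function of a factor. Initially $P_i=\Gamma_i(Q_{i1},\dots,Q_{ir_0})$ with a single polynomial ($r_0=1$), but it is more convenient to allow factors of polynomials of degree $\le d$. To each factor $\cQ_i$ we attach the vector $(r_d,r_{d-1},\dots,r_1)$ counting its polynomials of each degree, ordered lexicographically. The state also includes a current core $\cQ^*$ and a current index set $S$. Initially $\cQ^*=\emptyset$ and $S=[n]$.

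First, we regularize each petal together with the core. We replace $(\cQ^*;\cQ_i)$ by a $k$-regular refinement using the Green--Tao / Kaufman--Lovett regularization. Whenever a linear combination of top-degree polynomials has low rank, one polynomial of degree $e$ is traded for boundedly many polynomials of degree $<e$. This strictly decreases the profile lexicographically, and the number of polynomials stays bounded by a function of $d$, $k$ and the number of rounds. Since the lexicographic order on bounded profiles is well-founded with a bounded number of steps (bounded by a tower-type function of $d,k$), the total number of rounds is bounded independently of $m,n$.

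The main step is the graph dichotomy of \cref{main-fig:win-win}. On $S$ we put an edge $i\sim j$ when $(\cQ_i;\cQ_j)$, taken together with the core, fails to be $k$-regular. By Ramsey/Tur\'an-type reasoning, either there is an independent set of size $|S|^{\epsilon}$, or some vertex $i^*$ has degree $\ge |S|^{1-\epsilon}$. For instance, greedily removing a vertex and its neighbours either yields an independent set of size $|S|/(\Delta+1)$ or exhibits $\Delta$ large.

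In the first case we stop. The petals are pairwise regular, and the independent set is our $S$.

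In the second case, for each neighbour $j$ there is a nontrivial combination $\sum\lambda Q_{i^*\cdot}+\sum\mu Q_{j\cdot}$ of low rank, with $\mu\neq0$ by the regularity of $\cQ_j$ with the core. Its $i^*$-part $R_j=\sum\lambda Q_{i^*\cdot}$ takes at most $2^{r}$ values. By pigeonhole we pass to a $2^{-r}$ fraction of neighbours sharing the same $R$ (and the same degree pattern). We add the polynomials of $\cQ_{i^*}$ to the core; there are boundedly many. Each $\cQ_j$ can then drop one top-degree polynomial $Q_{j\ell}$ (with $\mu_\ell\neq0$ and top degree) at the price of adding the low-rank terms of lower degree. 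This lowers the petal profile lexicographically for every $j$ in the new $S$, and we return to regularization.

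After $T=T(d,k)$ rounds, $|S|\ge n^{\prod(1-\epsilon)}/\mathrm{const}=n^{\Omega(1)}$, since each round loses only a polynomial power and a constant factor. The bounds on $r+c$ follow because each round adds boundedly many polynomials. A petal profile cannot drop below zero, so the process terminates in the first case. In the degenerate case where the petals become empty, the conclusion holds trivially.

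The main obstacle is the interaction between the core and regularity. Adding $\cQ_{i^*}$ to the core can destroy the regularity of the remaining petals relative to the core, and re-regularizing must not touch the core in a way that depends on $j$. I would handle this by always regularizing the core first and jointly: the core is refined once per round, and its size is bounded since it depends only on the round. Then each petal is regularized relative to the fixed core, moving dependencies into the petal's lower-degree part. I would also require ``pairwise regularity'' to be measured for $(\cQ^*;\cQ_i;\cQ_j)$. This implies the stated condition on $\cQ_{ij}$ and is what makes the pigeonhole step keep $\mu\neq0$. Ensuring the parameter $k$ can be chosen as a decreasing sequence $k_1\gg k_2\gg\dots$ across rounds, so that regularity survives the added core, is the delicate bookkeeping I expect to cost the most effort.
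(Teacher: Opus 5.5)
Your architecture matches the paper's: the same graph dichotomy, a common core that is refined uniformly before the petals, and a potential whose dominant coordinate is the petal degree profile. In the paper this is the state $(\Delta_i,\vec N_0,\vec N_i)$ under inverse lexicographic order, which is why enlarging the core in the high-degree case costs nothing. Because you move all of $\cQ_{i^*}$ into the core, your pigeonhole on $R$ is in fact unnecessary; the paper moves only $R_1=A_1+B_1$, which is why it needs one.

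The gap is the step you leave open, and the fix you suggest goes the wrong way. A schedule $k_1\gg k_2\gg\cdots$ across rounds is circular. Each replacement adds up to $k_t$ lower-degree polynomials, so the final locality and the number of rounds $T$ can both be as large as $k_1$. The final pairwise threshold must exceed a growth function of that final locality, since the Chebyshev lemma needs rank $>c_\KL(d,2^{-r}\eta)$ with $r$ the final size. This forces $k_T$ to be at least of order $k_1$. You also cannot fix $k_1$ in terms of $T$, because $T$ depends on $k_1$. Nor does regularity have to survive from one round to the next: you re-regularize from scratch each round, and those refinements never increase the petal profile.

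What is needed instead is a fixed additive margin inside each step, with thresholds that grow with the current locality $K$. The paper requires the core to be $(f(2K)+2)$-regular, each factor $(f(2K)+1)$-regular, and each pair $f(2K)$-regular. The margin is what makes the petal profile drop:
\begin{itemize}
\item To delete a top-degree polynomial of $\cQ_j$, the $j$-part of the low-rank combination must contain a polynomial of the combination's full degree $\ell$; $\mu\neq0$ alone is not enough.
\item If it did not, then $A+B_{i^*}\in\Span(\cQ^*\cup\cQ_{i^*})$ would have $\rank_{\ell-1}\le k+1$, contradicting $(k+1)$-regularity of $(\cQ^*;\cQ_{i^*})$. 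So the fact used is the regularity of $\cQ_{i^*}$ with the core, not that of $\cQ_j$.
\item The same argument one level down is why the core must be $(k+2)$-regular when you regularize petals against it.
\end{itemize}
Once thresholds are tied to the current locality, every operation strictly decreases one well-ordered state. The number of steps and the final locality are then bounded by the $\psi_{2d+1,g}$ recursion with $g(r)=f(2r)+2$, not by a fixed $T(d,k)$. Both bounds are independent of $m$ and $n$, which gives $w=n^{\Omega(1)}$.
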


With \cref{main-thm:sunflower-pairwise-regularization} in hand, the only remaining issue is handling the common polynomials $Q^*_1,\dots, Q^*_c$ without which we can apply our Chebyshev lemma (\cref{main-lem:chebyshev-lemma}) to the pairwise regular set $S$. In a separate lemma we prove that because $c \ll w$, the effect of $Q^*_i$'s is not significant (see \cref{lem:dist-tv-common-Q}), which concludes the proof of our lower bound for degree-$d$ distributions.

We close our discussion of the results by turning to the proof of \cref{main-thm:sunflower-pairwise-regularization}. The goal in this theorem is to write $P_i$ in the form $P_i=\Gamma_i(\cQ^*; \cQ_i)$ satisfying the three conditions. At the beginning, such a representation exists trivially by letting $\cQ^*$ to be empty and $\cQ_i$ to be simply $(P_i)$; however, \cref{main-eq:pairwise-regularity} is not necessarily satisfied. At this point, we construct a graph similar to \cref{main-fig:win-win} connecting $\cQ_i$ to $\cQ_j$ whenever the pair $(\cQ_i; \cQ_j)$ is not regular enough. Then if we find a large independent set in this graph (\cref{main-fig:large-independent-set}) we simply stop and announce the corresponding independent set as our final subset of polynomials. Otherwise, a vertex of large degree exists as in \cref{main-fig:large-deg-vertex}, which allows for a change of basis, that introduces a new common polynomial into the core $\cQ^*$ at the expense of introducing a few polynomials into the petals $\cQ_i$. We repeat this step until we get a subset of polynomials that are pairwise regular. To ensure that $\cQ_i$ are individually regular, we need to interweave such steps with usual refinement/regularization steps. The main concern is why this process must terminate. The reason is that at each step, a polynomial is replaced by a few polynomials of strictly smaller degree, so the process can continue only for finitely many steps.

\subsection{Open problems}\label{sec:open-problems}
Below are some problems that arise naturally from this work.
\begin{itemize}
    \item Improve the lower bounds on the statistical distance. In particular, is it true that 
    $\norm{\cP - \Ber(1/3)^{\ox n}}_\tv \geq 1 - \exp(-\Omega_d(n))$
    for all $d$?
    
    \item What about samplers defined by polynomials of degree $d$, where $d$ grows as a function of $n$, say $d = c\log(n)$. The results of this work seem to be useful only when $d$ is a \emph{very} slowly growing functions of $n$.

    \item Characterize all achievable values of $\Pr_X[P(X)=1]$, where $P\colon\F_2^m\to\F_2$ is a polynomial of degree $d$ and $m$ is allowed to be any positive integer. A related problem is the weight distribution of binary Reed-Muller codes; see \cite{kaufman2012weight,kasami1976weight,kasami1970weight}.
    
    \item For degree $d=3$ how close can $\Pr_X[P(X) = 1]$ be to $1/3$? Below we provide several examples. First consider the following polynomials with $\Pr_X[P(X) = 1] < 1/3$.
    \begin{itemize}
        \item The simplest example is $P(x)=x_1x_2$ with $\Pr[P(x)=1] =1/4$. The gap is $1/3-1/4=1/12$.
        \item A more interesting example is the polynomial $P(x) =x_1x_2+x_3x_4x_5$. The first term gives $\Pr[x_1x_2=1] = 1/4$, and the second term gives $\Pr[x_3x_4x_5=1] = 1/8$. Hence $\Pr[P(x)=1] = 1/4+1/8-2 \cdot (1/4)\cdot (1/8) = 5/16$. The gap is $1/3-5/16=1/48 \approx 0.0208$.
    \end{itemize}
    Next, we describe several examples with $\Pr_X[P(X) = 1] > 1/3$.
    \begin{itemize}
        \item The polynomial $P(x) = x_1 \cdot (1-x_2x_3)$ satisfies $\Pr_X[P(X) = 1] = 1/2 \cdot 3/4 = 3/8$. The gap is $3/8-1/3 = 1/24 \approx 0.0417$.
        \item The polynomial $P(x) = x_1x_2 + x_3(x_4x_5+x_6x_7)$. The first term $\Pr[x_1x_2=1] = 1/4$, and the second term gives $\Pr[x_3(x_4x_5+x_6x_7)=1] = 3/16$. Hence $\Pr[P(x)=1] = 1/4+3/16-2 \cdot (1/4) \cdot (3/16) = 11/326$. The gap is $11/32-1/3 = 1/96 \approx 0.0104$.
        \item Our last example is $P(x) = x_1x_2x_3 + x_1x_4 + x_4x_5x_6 + x_7x_8x_9$.
        In this example, we have $\Pr[P(x)=1] = 43/128$, and the gap is $43/128-1/3=1/384 \approx 0.0026$.
    \end{itemize}
    It would be interesting to better understand such examples, and compare the gaps we obtain here to the lower bound $\delta_3$ from \cref{lemma:deg-3-dyadic}. 
    
    We are grateful to Mark Kahn for providing the examples above.
    The examples were essentially achieved using (an LLM tool which ran) a random walk $(P_0,P_1,P_2,\dots)$ on cubic polynomials, starting with, say, $P_0(x) = 0$, and generating $P_{i+1}$ by adding or removing a random monomial from $P_i$. In each step we compute $\Pr[P_i(x)=1]$ and check how close it is to $1/3$.
\end{itemize}

\section{Preliminaries}\label{sec:prelims}

\subsection{Basic definitions and notation}\label{sec:notation-and-definitions}

\paragraph{Polynomials.} We work with multilinear polynomials $P\in\F_2[x_1,\dots,x_m]$ of degree $d$, viewed as functions $P\colon\F_2^m\to\F_2$. It is understood that a degree-$d$ polynomial means a polynomial of degree \emph{at most} $d$ unless otherwise stated. 

The definition of polynomials can be extended to functions of the form $P\colon V\to\F_2$ where $V$ is an $m$-dimensional vector space over $\F_2$. We use the fact that $V$ is isomorphic to $\F_2^m$, so after fixing some linear isomorphism $\phi\colon \F^m_2\to V$ we say $P$ is a polynomial of degree $d$ if the function $P'\colon \F_2^m\to\F_2$ given by $P'(x)=P(\phi(x))$ is a polynomial of degree $d$. This can be further extended in the natural way to functions $Q\colon W+h\to\F_2$ where $W+h$ is an affine subspace of $V$. In this case, $Q$ is defined to be a polynomial of degree $d$ if $Q'(x) = Q(\alpha(x) + h)$ is a polynomial of degree $d$, where $\alpha\colon \F_2^{\dim(W)}\to W$ is a linear isomorphism. This definition is independent of the particular choice of isomorphisms $\phi$ and $\alpha$. The class of polynomials is closed under affine restrictions.

We now define our main object of study, polynomial distributions, sometimes also referred to as polynomial sources in the context of randomness extractors.

\begin{definition}[Degree-$d$ polynomial distribution]
    A distribution $\cP$ over $\Bits^n$ is a \defemph{degree-$d$ polynomial distribution} if for some $m\in\N$ there exist $n$ polynomials $P_1,\dots, P_n\colon \F_2^m\to \F_2$ each of degree $d$ such that the tuple sampled according to $(P_1(X),\dots, P_n(X))$ where $X$ is drawn uniformly at random from $\F_2^m$ has the same distribution as $\cP$. We assert this by writing $\cP=(P_1,\dots, P_n)$ usually suppressing the common random input $X$.
\end{definition}

Often, we will identify $P$ with the distribution $P(X)$ it defines on a random input $X$, and so we treat $P$ both as a polynomial and a random variable.

The \defemph{bias} of a polynomial $P$ is the quantity defined as $\bias(P) \defeq \abs{\E_{X}[(-1)^{P(X)}]}$. This quantity without the absolute values is referred to as the \defemph{signed bias} $\bias_\pm(P)$.

\paragraph{Distributions. } Bernoulli distribution $\Ber(\rho)$ with parameter $\rho$ will be more compactly written as $\cB_\rho$. The uniform distribution on $\Bits$ is denoted by $\cU$. The total variation distance between two distributions $\cD_1,\cD_2$ defined over the same countable domain $\Omega$ is $\norm{\cD_1 - \cD_2}_\tv\defeq \half \sum_{x\in\Omega}\abs{\cD_1(x)-\cD_2(x)}$. Equivalently, $\norm{\cD_1 - \cD_2} = \max_{A\seq \Omega}\big(\cD_1(A) - \cD_2(A)\big)$, where $\cD_i(A)$ is the probability that the output of $\cD_i$ belongs to $A$. If $\cT$ is a distribution over a set $S$, we denote by $\cT^{\ox n}$ the distribution over $S^n$ obtained by $n$ independent copies of $\cT$. Given a collection of distributions $\{\cD_i : i \in I\}$ over a common domain, and a distribution $\cC$ over the index set $I$, their convex combination (weighted by $\cC$) is denoted by $\E_{i \sim \cC}[\cD_i]$; a sample $X$ from $\E_{i \sim \cC}[\cD_i]$ is obtained by first sampling an index $i\sim \cC$ and the sampling $X\sim \cD_i$.

\paragraph{Other conventions. } We use the standard notation $[n]$ to refer to $\{1,\dots, n\}$. We include zero in the set of natural numbers, so $\N=\{0,1,\dots\}$. Throughout, $\log$ denotes the base-$2$ logarithm, and $\exp(x)$ denotes $2^x$. For disjoint sets $A$ and $B$, the union is sometimes written as $A \sqcup B$ to highlight disjointness. For real numbers $a,b,c$, whenever we write $a = b \pm c$, that means there exists some $d\in[-c,c]$ that satisfies the equality $a=b+d$. For a function $f\colon A\to B$ its \defemph{restriction} to $A'\seq A$ is the function $f\rest{A'} \colon  A'\to B$ given by $f\rest{A'}(x)=f(x)$. A \defemph{dyadic number} is a rational number $q$ that can be written in the form $a/2^b$ for some $a\in \mathbb{Z}$ and $b\in \mathbb{N}$. We call $b$ the granularity of this dyadic representation.

\subsection{Some properties of total variation distance}\label{sec:properties-of-tv-dist}

A lemma we use extensively throughout the paper is the following XOR lemma, commonly attributed to Vazirani, which states that an $\eps$-biased distribution is close to uniform whenever $\eps$ is exponentially small in the output length.

\begin{lemma}[Vazirani's XOR lemma]\label{lem:vazirani-xor}
    Let $\cX=(X_1,\dots, X_n)$ be a distribution over $\F_2^n$. If for all nonzero linear functions $L\colon \F_2^n\to\F_2$ it holds that 
    \[
    \Pr[L(X_1,\dots, X_n) = 0] = \frac{1}{2} \pm \frac{\eps}{2},
    \]
    then, $\norm{\cX - \cU^{\ox n}}_\tv\le 2^{n/2}\eps$. \hfill (Proof in \cref{sec:deferred-proofs}.)
\end{lemma}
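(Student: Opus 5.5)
We prove Vazirani's XOR lemma by Fourier analysis over $\F_2^n$, passing from total variation to $\ell_2$ via Cauchy--Schwarz. Let $p\colon\F_2^n\to[0,1]$ be the probability mass function of $\cX$ and $u(x)=2^{-n}$ the uniform mass function. For $\alpha\in\F_2^n$, write $L_\alpha(x)=\sum_i\alpha_ix_i$ for the linear function indexed by $\alpha$. Define the Fourier coefficients of $f\colon\F_2^n\to\mathbb{R}$ by $\hat f(\alpha)=\sum_x f(x)(-1)^{L_\alpha(x)}$. With this normalization Parseval reads $\sum_x f(x)^2 = 2^{-n}\sum_\alpha \hat f(\alpha)^2$.

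First we compute the Fourier coefficients of $f=p-u$.
\begin{itemize}
\item For $\alpha=0$ we have $\hat p(0)=1=\hat u(0)$, so $\hat f(0)=0$.
\item For $\alpha\neq0$ we have $\hat u(\alpha)=0$, while $\hat p(\alpha)=\E[(-1)^{L_\alpha(\cX)}]=2\Pr[L_\alpha(\cX)=0]-1$. The hypothesis $\Pr[L_\alpha(\cX)=0]=\tfrac12\pm\tfrac\eps2$ gives $\abs{\hat f(\alpha)}\le\eps$.
\end{itemize}

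Next we bound the distance:
\[
\norm{\cX-\cU^{\ox n}}_\tv=\tfrac12\sum_x\abs{f(x)}\le\tfrac12\, 2^{n/2}\Big(\sum_x f(x)^2\Big)^{1/2}=\tfrac12\Big(\sum_{\alpha\neq0}\hat f(\alpha)^2\Big)^{1/2}\le \tfrac12\, 2^{n/2}\eps .
\]
The first inequality is Cauchy--Schwarz over the $2^n$ points, the equality is Parseval, and the last step uses $\abs{\hat f(\alpha)}\le\eps$ for each of the fewer than $2^n$ nonzero $\alpha$. This gives $\norm{\cX-\cU^{\ox n}}_\tv\le 2^{n/2}\eps/2$, which is stronger than the stated bound $2^{n/2}\eps$.

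No step is a real obstacle. The only point needing care is keeping the Fourier normalization consistent, so that the factors of $2^{n}$ coming from Cauchy--Schwarz and from Parseval cancel to leave exactly $2^{n/2}$.
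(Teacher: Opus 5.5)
Your proof is correct and follows the same route as the paper: bound the $\ell_2$ distance via Parseval, using $\hat f(0)=0$ and $|\hat f(\alpha)|\le\eps$ for $\alpha\neq 0$, then pass to $\ell_1$ by Cauchy--Schwarz. The only differences are your Fourier normalization and that you track the factor $\tfrac12$ in the total variation distance explicitly; the paper's $\ell_1$ bound of $2^{n/2}\eps$ gives the same stronger $2^{n/2}\eps/2$ implicitly.
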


The following lemma may be regarded as union bound for the distances between distributions.
\begin{lemma}[Distance from convex combination]\label{claim:tv-dist-avg}\label{lem:dist-of-convex-comb}
    Let $\cD, \cT$ be two distributions over a finite domain $X$, and let $\cT=\sum_ic_i\cT_i$ be the convex combination of distributions $\cT_1,\dots, \cT_k$ according to the weights $c_1,\dots, c_k$ where $c_i>0$ and $\sum_ic_i = 1$. Suppose that $\norm{\cD - \cT_i}_\tv = 1 - \eps_i$ for some $\eps_i\in [0, 1/2]$. Then, 
    \begin{equation*}
        \norm{\cD - \cT}_\tv \geq 1 - \sum_{i=1}^k (1+c_i) \eps_i.
    \end{equation*}
\end{lemma}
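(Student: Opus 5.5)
The plan is to exhibit a single event $A \seq X$ that witnesses the distance, using the characterization $\norm{\cD_1 - \cD_2}_\tv = \max_{A}\big(\cD_1(A) - \cD_2(A)\big)$ from \cref{sec:notation-and-definitions}. The idea is to intersect the witnessing events of the individual distances $\norm{\cD - \cT_i}_\tv$.

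\textbf{Step 1: witnesses for each component.} Since $X$ is finite, for each $i \in [k]$ the maximum in the characterization is attained. So there is a set $A_i \seq X$ with $\cD(A_i) - \cT_i(A_i) = 1 - \eps_i$. Both probabilities lie in $[0,1]$, which forces
\[
\cD(A_i) \ge 1 - \eps_i \qquad\text{and}\qquad \cT_i(A_i) \le \eps_i .
\]

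\textbf{Step 2: intersect.} Let $A \defeq \bigcap_{i=1}^k A_i$. By a union bound over the complements $X \setminus A_i$, each of $\cD$-probability at most $\eps_i$, we get $\cD(A) \ge 1 - \sum_i \eps_i$. For the other side, monotonicity gives $\cT_i(A) \le \cT_i(A_i) \le \eps_i$ for every $i$. Hence
\[
\cT(A) = \sum_i c_i \cT_i(A) \le \sum_i c_i \eps_i .
\]

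\textbf{Step 3: conclude.} Combining the two steps,
\[
\norm{\cD - \cT}_\tv \ge \cD(A) - \cT(A) \ge 1 - \sum_i \eps_i - \sum_i c_i \eps_i = 1 - \sum_i (1+c_i)\eps_i .
\]

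There is no real obstacle here. The only point needing care is Step 1: a large TV distance yields one set on which $\cD$ is nearly full and $\cT_i$ is nearly empty simultaneously. The hypothesis $\eps_i \le 1/2$ is not needed for this argument; it only keeps the statement in its meaningful range.
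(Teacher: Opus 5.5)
Your proposal is correct and matches the paper's proof essentially step for step: the paper takes the witness $A_i=\{x:\cD(x)>\cT_i(x)\}$, intersects these sets, union-bounds $\cD(X\setminus A)$, and uses $\cT_i(A)\le\cT_i(A_i)\le\eps_i$, just as you do. Your remark that the hypothesis $\eps_i\le 1/2$ is never used is also accurate; the paper's argument does not use it either.
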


\begin{proof}
    Let $A_i = \{x \in X : \cD(x) > \cT_i(x)\}$. Then we have $\norm{\cD - \cT_i}_\tv = \cD(A_i) - \cT_i(A_i)$.
    In particular $\cD(A_i) \geq 1-\eps_i$ and $\cT_i(A_i) \leq \eps_i$.
    Consider the set $A = \bigcap_{i=1}^k A_i$, and note that 
    \[
    \cD(A) \geq 1 - \sum_{i=1}^k \cD(X \setminus A_i) \geq  1-\sum_{i=1}^k \eps_i.
    \]
    On the other hand,
    \[
    \cT(A) = \sum_{i=1}^k c_i \cT_i(A) \leq \sum_{i=1}^k c_i \cT_i(A_i)  \leq \sum_{i=1}^k c_i  \eps_i.
    \]
    Therefore,
    \begin{equation*}
        \cD(A) - \cT(A) \geq \Bigl(1-\sum_{i=1}^k \eps_i\Bigr) - \sum_{i=1}^k c_i \eps_i
        = 1 - \sum_{i=1}^k (1+c_i) \eps_i,
    \end{equation*}
    as required.
\end{proof}

\begin{lemma}[Distance between product distributions \cite{horacsek2025sourcedecoding}]\label{lem:dist-of-product-distributions}
    Let $\cD=\cD_1\ox \cdots\ox \cD_n$ and $\cT=\cT_1\ox \cdots\ox \cT_n$ be product distributions over $\Bits^n$ where all the marginals $i\in [n]$ satisfy $\norm{\cD_i-\cT_i}_\tv\ge \delta$. Then $\norm{\cD-\cT}_\tv \ge 1- 2e^{-\delta^2n/12}$.
\end{lemma}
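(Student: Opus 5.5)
The statement to prove is \cref{lem:dist-of-product-distributions}: if all marginals satisfy $\norm{\cD_i-\cT_i}_\tv\ge\delta$, then $\norm{\cD-\cT}_\tv \ge 1-2e^{-\delta^2 n/12}$. Since everything lives on $\Bits^n$, each marginal is a Bernoulli. Write $\cD_i=\Ber(p_i)$ and $\cT_i=\Ber(q_i)$, so the hypothesis reads $\abs{p_i-q_i}\ge\delta$. The plan is to exhibit an explicit distinguishing set built from a weighted count, and to bound its probability under each distribution with Hoeffding/Chernoff bounds.

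First, for each coordinate $i$ set $\sigma_i = \mathrm{sign}(p_i-q_i)\in\{\pm1\}$. Define the statistic
\[
Z(x)=\sum_i \sigma_i x_i .
\]
Under $\cD$ the mean is $\mu_D=\sum_i\sigma_i p_i$, and under $\cT$ it is $\mu_T=\sum_i\sigma_i q_i$. These satisfy $\mu_D-\mu_T=\sum_i\abs{p_i-q_i}\ge \delta n$.

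Next, take the threshold $\tau=(\mu_D+\mu_T)/2$ and the set $A=\{x: Z(x)>\tau\}$. Then $\norm{\cD-\cT}_\tv\ge \cD(A)-\cT(A)$, so it suffices to show $\cD(A^c)\le e^{-\delta^2n/12}$ and $\cT(A)\le e^{-\delta^2n/12}$. Under either product distribution, $Z$ is a sum of independent $[0,1]$-valued (or $[-1,0]$-valued) terms. Each failure event requires a deviation of at least $\delta n/2$ from the mean.

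Hoeffding's inequality gives the tail bound $\exp(-2(\delta n/2)^2/n)=\exp(-\delta^2 n/2)$. This is already at most $e^{-\delta^2 n/12}$, since the lemma's $\log$/$\exp$ convention does not affect $e$ here. Summing the two failure probabilities then yields the claimed bound $1-2e^{-\delta^2n/12}$. The weaker constant $12$ suggests the original authors used a multiplicative Chernoff bound instead, but Hoeffding suffices and is cleaner.

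The only step requiring any care is choosing the signs $\sigma_i$ so that all coordinate gaps add constructively. Once that is done, everything else is a routine concentration estimate, and I do not expect a genuine obstacle.
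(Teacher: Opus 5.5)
Your proof is correct. Since the marginals live on $\{0,1\}$, you have $\norm{\cD_i-\cT_i}_\tv=\abs{p_i-q_i}$. The signed count $Z$ then separates the two means by at least $\delta n$. Hoeffding with ranges of length $1$ and deviation $\delta n/2$ bounds each failure probability by $e^{-\delta^2 n/2}\le e^{-\delta^2 n/12}$.

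The paper gives no proof of this lemma; it imports it from \cite{horacsek2025sourcedecoding}. So there is no internal argument to compare against. Your argument is a valid self-contained replacement, and it gives the better exponent $\delta^2 n/2$.

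Your reading of the constant is plausible. A multiplicative Chernoff bound of the form $e^{-\epsilon^2\mu/3}$, applied to a count with mean $\mu\le n$ and deviation $\delta n/2$, yields exactly $e^{-\delta^2 n/12}$. The additive Hoeffding form you use avoids any dependence on $\mu$.

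Two minor remarks. First, if $\delta=0$ the bound is vacuous, so the signs $\sigma_i$ are well defined in every nontrivial case (or just set $\mathrm{sign}(0)=+1$). Second, the same argument works verbatim for marginals on arbitrary finite domains. Replace $\sigma_i x_i$ by the indicator of a set $A_i$ with $\cD_i(A_i)-\cT_i(A_i)\ge\delta$; this is the form in which such lemmas are usually stated and used.
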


\begin{lemma}[Conditioning]\label{lem:dist-conditioning}
    Fix a mapping $f \colon \Bits^m \to \Bits^n$.
    Let $\cU^{\ox m}$ be the uniform distribution over $\Bits^m$ and let $\cY$ be a distribution over $\Bits^n$ such that $\norm{f(\cU^{\ox m})-\cY}_\tv \geq 1-\eps$.
    Let $\cU_S$ be the uniform distribution over $S  \seq \Bits^m$ of size $\abs{S} = \delta \cdot 2^m$.
    Then $\norm{f(\cU_S)-\cY}_\tv \ge 1-\eps(1+1/\delta)$.
\end{lemma}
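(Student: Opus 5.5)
The plan is to use the same witness set as in the proof of \cref{lem:dist-of-convex-comb}. Let $A \seq \Bits^n$ be a set achieving the total variation distance for the uniform input, so that $\Pr_{X\sim\cU^{\ox m}}[f(X)\in A] - \cY(A) = \norm{f(\cU^{\ox m})-\cY}_\tv \ge 1-\eps$. Since probabilities lie in $[0,1]$, this inequality yields two one-sided bounds. First, $\Pr_{X\sim \cU^{\ox m}}[f(X)\in A]\ge 1-\eps$, i.e.\ the set $B \defeq \{x\in\Bits^m : f(x)\notin A\}$ has $\abs{B}\le \eps 2^m$. Second, $\cY(A)\le \eps$.

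Next, transfer the first bound to the uniform distribution on $S$. The number of points of $S$ mapped outside $A$ is at most $\abs{B\cap S}\le \abs{B}\le \eps 2^m$. Dividing by $\abs{S}=\delta 2^m$ gives
\[
\Pr_{X\sim\cU_S}[f(X)\notin A]\le \frac{\eps}{\delta}.
\]
Hence $\Pr_{X\sim\cU_S}[f(X)\in A]\ge 1-\eps/\delta$.

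Finally, use $A$ as a distinguisher between $f(\cU_S)$ and $\cY$. By the max-over-sets characterization of total variation distance,
\[
\norm{f(\cU_S)-\cY}_\tv \ge \Pr_{X\sim\cU_S}[f(X)\in A]-\cY(A)\ge 1-\frac{\eps}{\delta}-\eps = 1-\eps\Bigl(1+\frac1\delta\Bigr),
\]
which is exactly the claimed bound.

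There is no real obstacle here. The only point requiring care is that conditioning on $S$ can inflate the mass outside $A$ by a factor of at most $1/\delta$, because $\abs{B\cap S}\le\abs{B}$. The mass $\cY(A)$ is untouched by the conditioning, since $\cY$ does not depend on $S$.
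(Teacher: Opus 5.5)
Your proposal is correct and matches the paper's proof: the paper fixes the same optimal distinguishing set $A$, derives $\Pr[f(\cU^{\ox m})\in A]\ge 1-\eps$ and $\cY(A)\le\eps$, and bounds $\Pr[f(\cU_S)\notin A]\le \Pr[f(\cU^{\ox m})\notin A]/\delta\le\eps/\delta$. Your explicit counting step $\abs{B\cap S}\le\abs{B}\le\eps 2^m$ spells out what the paper compresses into ``by the assumption on $S$''.
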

\begin{proof}

    Let $A \seq \Bits^n$ be an event such that $\Pr[f(\cU^{\ox m}) \in A] - \Pr[\cY \in A] = \norm{f(\cU^{\ox m})-\cY}_\tv \geq 1-\eps$. In particular, $\Pr[f(\cU^{\ox m}) \in A] \geq 1-\eps$ and $\Pr[\cY \in A] \leq \eps$.    
    By the assumption on $S$, we have $\Pr[f(\cU_S) \notin A] \leq \Pr[f(\cU^{\ox m}) \notin A]/\delta$.
    \begin{eqnarray*}
        \norm{f(\cU_S)-\cY}_\tv
         & \geq & \Pr[f(\cU_S) \in A] - \Pr[\cY \in A] \\
         & = & 1 - \Pr[f(\cU_S) \notin A] - \Pr[\cY \in A] \\
         & \geq & 1 - \Pr[f(\cU^{\ox m}) \notin A]/\delta - \eps  \\
         & \geq & 1 - \eps/\delta - \eps \\
         & = & 1 - \eps\cdot(1+1/\delta),
    \end{eqnarray*}
    as required.
\end{proof}

\begin{lemma}[Fixing coordinates]\label{lem:dist-tv-common-Q}
    Let $Q_1,\dots,Q_k \colon \Bits^m \to \Bits$ and $Q^* \colon \Bits^m \to S$ be $k+1$ functions, and fix $\Gamma \colon \Bits^k \times S \to \Bits^n$.
    For $\sigma \in S$ define $\Gamma_\sigma \colon \Bits^k \to \Bits^n$ to be the restriction of $\Gamma$ by setting the last variable to be $\sigma$,
    that is $\Gamma_\sigma(x_1,\dots,x_k) = \Gamma(x_1,\dots,x_k,\sigma)$.
    
Let $X$ be the uniform distribution over $\Bits^m$, and let $\cY$ be a distribution over $\Bits^n$ such that for all $\sigma \in S$ we have $\norm{\Gamma_\sigma(Q_1(X),\dots,Q_k(X))-\cY}_\tv \geq 1-\eps$.          
    Suppose that $\Pr[Q^*(X) = \sigma] \geq \tau$ for all $\sigma \in S$.
    Then
    \begin{equation*}
        \norm{\Gamma(Q_1(X),\dots,Q_k(X),Q^*(X))-\cY}_\tv \geq 1-\frac{4\eps  \abs{S}}{\tau}.
    \end{equation*}
\end{lemma}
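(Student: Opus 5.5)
The proof decomposes $X$ according to the value of $Q^*(X)$, reduces each conditional distribution to \cref{lem:dist-conditioning}, and then glues the pieces together with \cref{lem:dist-of-convex-comb}.

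For $\sigma\in S$, let $A_\sigma=\{x : Q^*(x)=\sigma\}$ and let $X_\sigma$ be uniform on $A_\sigma$. By hypothesis $\abs{A_\sigma}\ge \tau 2^m$, and in particular every $\sigma$ has positive probability. On $A_\sigma$ we have
\[
\Gamma(Q_1(x),\dots,Q_k(x),Q^*(x))=\Gamma_\sigma(Q_1(x),\dots,Q_k(x)).
\]
Set $\cD=\Gamma(Q_1(X),\dots,Q_k(X),Q^*(X))$ and $\cD_\sigma=\Gamma_\sigma(Q_1(X_\sigma),\dots,Q_k(X_\sigma))$. Then $\cD$ is the convex combination $\cD=\sum_\sigma c_\sigma\cD_\sigma$ with weights $c_\sigma=\Pr[Q^*(X)=\sigma]\ge\tau$, and $\sum_\sigma c_\sigma=1$.

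For each fixed $\sigma$, I would apply \cref{lem:dist-conditioning} to the map $f=\Gamma_\sigma\circ(Q_1,\dots,Q_k)\colon\Bits^m\to\Bits^n$ and the subset $A_\sigma$, whose density is $\delta=c_\sigma\ge\tau$. The hypothesis $\norm{f(\cU^{\ox m})-\cY}_\tv\ge 1-\eps$ then gives
\[
\norm{\cD_\sigma-\cY}_\tv\ge 1-\eps(1+1/c_\sigma)\ge 1-2\eps/\tau,
\]
using $\tau\le 1$. Write $\eps_\sigma\le 2\eps/\tau$ for the resulting deficit, so that $\norm{\cD_\sigma-\cY}_\tv=1-\eps_\sigma$.

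Next I would apply \cref{lem:dist-of-convex-comb} with $\cY$ in the role of $\cD$ there and $\cT=\cD=\sum_\sigma c_\sigma\cD_\sigma$. Since TV distance is symmetric, this yields
\[
\norm{\cD-\cY}_\tv\ge 1-\sum_\sigma(1+c_\sigma)\eps_\sigma\ge 1-2\abs{S}\cdot 2\eps/\tau=1-\frac{4\eps\abs{S}}{\tau},
\]
which is the claimed bound.

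The only real obstacle is a technical mismatch with \cref{lem:dist-of-convex-comb}, which requires each $\eps_\sigma\in[0,1/2]$. I would handle this by cases. If $2\eps/\tau>1/2$, then $4\eps\abs{S}/\tau>1$ and the statement is trivial, since TV distance is nonnegative. Otherwise every $\eps_\sigma\le 2\eps/\tau\le 1/2$ and the lemma applies. (I would use the exact bound $1-\eps_\sigma$, noting that the proof of \cref{lem:dist-of-convex-comb} works equally well with inequality $\ge 1-\eps_i$ in place of equality.)
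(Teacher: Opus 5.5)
Your proof is correct and follows the paper's route: condition on $Q^*(X)=\sigma$ via \cref{lem:dist-conditioning} to get a deficit of at most $\eps(1+1/\tau)\le 2\eps/\tau$ per piece, then combine the $\abs{S}$ pieces with \cref{lem:dist-of-convex-comb} (the paper writes $(\abs{S}+1)$ where you write $2\abs{S}$, which gives the same $4\eps\abs{S}/\tau$). Your case split for the $\eps_i\le 1/2$ side condition is extra care the paper omits; it is harmless, but that lemma's proof never actually uses the condition.
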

\begin{proof}
    Note that
    \begin{equation*}
        \Gamma(Q_1(x),\dots,Q_k(x),Q^*(x))
        = \sum_{\sigma \in S}\indicator{Q^*(x) = \sigma} \cdot \Gamma_\sigma(Q_1(x),\dots,Q_k(x)).
    \end{equation*}
    By \cref{lem:dist-conditioning} if we condition on $Q^*(X) = \sigma$, then
    \[
    \norm{\Gamma_\sigma(Q_1(X),\dots,Q_k(X)) \rest{Q^*(X) = \sigma} - \cY}_\tv \ge 1-\eps(1+1/\tau).
    \]
    Since $\Gamma(Q_1(X),\dots,Q_k(X),Q^*(X))$ is the weighted average of $\Gamma_\sigma(Q_1(X),\dots,Q_k(X))$, it follows by \cref{lem:dist-of-convex-comb} that
    \begin{eqnarray*}
        \norm{\Gamma(Q_1(X),\dots,Q_k(X),Q^*(X))-\cY}_\tv
        & \geq & 1-\sum_{\sigma \in S}(1+\Pr[Q^*(X)=\sigma]) \cdot \eps(1+1/\tau) \\
        & = & 1 - (\abs{S}+1)\eps(1+1/\tau) \\
        & \geq & 1 - 4\eps \abs{S}/\tau,
    \end{eqnarray*}
    as required.    
\end{proof}

\subsection{Rank, regularity, and factors of polynomials}\label{sec:factors}

Next, we introduce the standard notions relating polynomials which are also found in previous work \cite{green2007distribution,kaufman2008worst, bhattacharyya2014algorithmic}. 

\begin{definition}[Rank \cite{green2007distribution, kaufman2008worst}]
    Let $P\colon \F^n\to \F$ be a function, not identically zero. The degree $d$ rank of $P$, denoted by $\rank_d(P)$, is the smallest positive integer $k$ for which there exist $k$ degree $d$ polynomials $Q_1,\dots, Q_k$ and a function $\Gamma\colon \F^k\to\F$ such that
    \[
    P(x) = \Gamma(Q_1(x), \dots, Q_k(x)) \qquad \text{ for all } x.
    \]
\end{definition}

We may use the terms \emph{linear rank}, \emph{quadratic rank}, and \emph{cubic rank} to refer to $\rank_1(P),\rank_2(P)$ and $\rank_3(P)$ respectively. The most important notion of rank for a degree $d$ polynomial $P$ is $\rank_{d-1}(P)$.

As with degree, our convention is that a polynomial of rank $r$ means a polynomial of rank \emph{at most} $r$. We will state explicitly whenever an exact degree or rank is intended.

\begin{fact}
    Let $P\colon \F_2^m \to \F_2$ be a polynomial of degree $d$ and rank $r$. Then the restriction of $P$ to any affine subspace $V+h$ is again a polynomial of degree $d$ and rank $r$.
    In other words, upper bounds on degree and rank are preserved under affine restrictions.
\end{fact}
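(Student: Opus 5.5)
The plan is to reduce to the single fact that the restriction of $P$ to an affine subspace is obtained by composing $P$ with an affine map. Concretely, let $V+h$ be an affine subspace of $\F_2^m$ of dimension $k$, and fix a linear isomorphism $\alpha\colon \F_2^{k}\to V$. By the convention of \cref{sec:notation-and-definitions}, the restriction $P\rest{V+h}$ is identified with the function $P'\colon\F_2^k\to\F_2$, $P'(y)=P(\alpha(y)+h)$, so it suffices to bound the degree and rank of $P'$.

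\emph{Degree.} The map $y\mapsto \alpha(y)+h$ is affine, so each coordinate of $x=\alpha(y)+h$ is a polynomial of degree at most $1$ in $y$. Substitute these into each monomial $\prod_{i\in T}x_i$ of $P$ with $\abs{T}\le d$. The result is a product of at most $d$ affine forms, which expands to a polynomial of degree at most $d$. Multilinearizing via $y_j^2=y_j$ on $\F_2^k$ never increases degree. Summing over the monomials shows that $P'$ has degree at most $d$. We should also note that the result does not depend on the choice of $\alpha$, since two choices differ by an invertible linear map of $\F_2^k$, which again preserves degree by the same argument.

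\emph{Rank.} Suppose $\rank_{d'}(P)\le r$ for the relevant $d'$, witnessed by $P=\Gamma(Q_1,\dots,Q_r)$ with each $\deg Q_j\le d'$. Then for all $y$,
\[
P'(y)=\Gamma\bigl(Q_1(\alpha(y)+h),\dots,Q_r(\alpha(y)+h)\bigr)=\Gamma(Q'_1(y),\dots,Q'_r(y)),
\]
where $Q'_j=Q_j\rest{V+h}$. By the degree argument each $Q'_j$ has degree at most $d'$, so $\rank_{d'}(P')\le r$.

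One edge case needs care: the definition of rank requires $P$ not identically zero, and $P'$ may vanish. Under the paper's convention that ``rank $r$'' means rank at most $r$, we treat the zero function as having rank $0$, or note that it is $\Gamma\equiv 0$ applied to any polynomials, so the bound still holds. There is no real obstacle here; the only point that needs attention is this bookkeeping about well-definedness and the zero function.
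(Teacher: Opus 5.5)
Your proof is correct and matches the argument the paper takes for granted. The paper states this as a Fact without proof, relying on its convention that a function on $V+h$ is a polynomial when its composition with an affine parametrization $y\mapsto \alpha(y)+h$ is one. Under that convention degree is preserved by substituting affine forms, and rank is preserved by restricting the witnesses $Q_j$, exactly as you do; your extra bookkeeping about the choice of $\alpha$ and the zero function under the ``at most'' convention is harmless.
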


\begin{definition}[Factors]
    A \defemph{factor} $\cF$ of degree $d\in\N$ with \defemph{dimension vector} $\vec{M} = (M_1,\dots, M_d)\in\N^d$ is a tuple of polynomials $\cF = (P_1,\dots, P_K)$ where $K = M_1 + \dots + M_d$ and for exactly $M_\ell$ many $P_i$'s we have that $\deg(P_i)=\ell$. The number $K$ is called the \defemph{dimension} or \defemph{locality} of $\cF$ and is denoted by $\dim(\cF)$ or $\card{\cF}$.
\end{definition}

We equip $\N^d$ with the \defemph{inverse lexicographic ordering}: for $\vec N = (N_1,\dots, N_d)$ and $\vec M = (M_1,\dots, M_d)$, we declare $\vec N < \vec M$ if, at the largest index $j$ for which $N_j \neq M_j$, one has $N_j < M_j$.

\begin{definition}[Refinement]
    Let $\cF = (P_1, \dots, P_K)$ be a factor. We say a factor $\cG = (Q_1, \dots, Q_{K'})$ is a \defemph{refinement} of $\cF$ (or that $\cG$ \defemph{refines} $\cF$) if there exists some $\Gamma\colon \F_2^{K'}\to \F_2^K$ such that
    \[
    (P_1,\dots, P_K) = \Gamma(Q_1,\dots, Q_{K'}).
    \]
\end{definition}

\begin{definition}[Rank \& regularity of a factor]
A factor $\cF = (P_1,\dots, P_K)$ of degree $d$ is said to be \defemph{$r$-regular} if for all nonzero linear combinations of $P_i$'s it holds that
\[
\rank_{\ell - 1}(\lambda_1P_1 + \dots + \lambda_K P_K) > r,
\]
where $\ell = \max_{i} \deg(\lambda_iP_i)$ is the \defemph{degree of the linear combination}.  Given a function $f\colon \N\to\N$, the factor $\cF$ is \defemph{$f$-regular} if it is $f(K)$-regular.

The \defemph{rank of the factor} $\cF$, denoted by $\rank(\cF)$, is the maximal number $r$ for which $\cF$ is $r$-regular.
\end{definition}

Given a factor $\cF$ that is free of linear dependencies, for $P\in\Span(\cF)$ we denote by \defemph{$\deg_\cF(P)$} the degree of the unique linear combination of polynomials in $\cF$ that defines $P$. Note that $\deg(P)\le \deg_\cF(P)$.

Sometimes, especially in \cref{sec:pairwise-reg}, we will view a factor $\cF$ as a set $\{P_1,\dots, P_K\}$. Accordingly, all the relevant definitions (e.g., dimension vector, refinement) then transfer naturally.

A \defemph{growth function} is a function $f\colon \N\to\N$ that is non-decreasing and satisfies $f(r)\ge r$ for all $r$.

\begin{definition}[A rapidly growing function]\label{def:rapid-function}
Let $f\colon \N\to\N$ be a growth function and let $d$ be a positive integer. Now define the function \defemph{$\psi_{d, f} \colon  \N^d\to\N$} recursively as follows. For the base case we have,
\[
\psi_{d,f}(M_1,0,\dots, 0) = M_1.
\]
Otherwise, if $M_i > 0$ for some $i\ge 2$, then let $\ell\ge 2$ be the smallest of such indices. Then,
\[
\psi_{d,f}(M_1,\dots,M_d)=\psi_{d,f}(M_1,\dots, M_{\ell-1}+f(\textstyle\sum_iM_i), M_\ell - 1, \dots, M_d).
\]
Now set \defemph{$\psi_{d, f}^* (K) = \psi_{d, f}(0, \dots, 0, K)$}.
\end{definition}

\begin{proposition}\label{prop:properties-of-rapid-function}
Let $\psi = \psi_{d, f}$ be defined as in \cref{def:rapid-function}. Then, $\psi(N_1,\dots,N_d) \le \psi(M_1,\dots, M_d)$ whenever $\sum_i N_i \le \sum_iM_i$ and $(N_1,\dots,N_d)\le (M_1,\dots, M_d)$.
\end{proposition}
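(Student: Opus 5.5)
We prove the monotonicity of $\psi=\psi_{d,f}$ by strong induction along the recursion of \cref{def:rapid-function}. Since every step of the recursion strictly decreases the vector in the inverse lexicographic order, we induct on the pair consisting of the vector $\vec M$ and the number of recursion steps needed to evaluate $\psi(\vec M)$. It is convenient to first record an explicit description: define the operator $T(\vec M)$ to be the vector produced by one recursion step. Then $\psi(\vec M)=\psi(T(\vec M))$, and the recursion terminates at a vector of the form $(M,0,\dots,0)$ with value $M$. The recursion also terminates, since $T$ strictly decreases $\vec M$ in inverse lexicographic order and $\N^d$ with this order is a well-order.

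The plan is to split the claim into two separate monotonicity statements and then combine them.

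\textbf{(i) Monotonicity in each coordinate.} If $\vec N\le\vec M$ coordinatewise, then $\psi(\vec N)\le\psi(\vec M)$.

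\textbf{(ii) Monotonicity under moving mass upward.} If $\vec N$ is obtained from $\vec M$ by decreasing some $M_j$ by one and increasing some $M_i$ with $i<j$ by one (so the sums agree and $\vec N<\vec M$), then $\psi(\vec N)\le\psi(\vec M)$.

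Given (i) and (ii), take general $\vec N,\vec M$ with $\sum N_i\le\sum M_i$ and $\vec N<\vec M$ in inverse lexicographic order. Let $j$ be the largest index where they differ, so $N_j<M_j$ and $N_{j'}=M_{j'}$ for all $j'>j$. Form $\vec M'$ by agreeing with $\vec M$ above index $j$, setting $M'_j=N_j$, and moving the excess $M_j-N_j$ units down to coordinate $1$. By repeated use of (ii), $\psi(\vec M')\le\psi(\vec M)$. Now $\vec M'$ and $\vec N$ agree at all indices $\ge j$, and $\sum_{i<j} M'_i \ge \sum_{i<j} N_i$. Putting all the lower mass of $\vec N$ into coordinate $1$ gives $\vec N'$ with $\vec N'\le\vec M'$ coordinatewise, so (i) yields $\psi(\vec N')\le\psi(\vec M')$. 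To compare $\vec N$ with $\vec N'$ we apply (ii) again, since $\vec N'$ is obtained from $\vec N$ by moving mass downward. This reduces the general claim to (i) and (ii).

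\textbf{Proof of (i).} It suffices to handle increasing a single coordinate $M_i$ by one, by induction along the recursion. Let $\ell$ be the smallest index $\ge2$ with $M_\ell>0$, and proceed by cases.

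If $i<\ell$, the recursion steps of $\vec M$ and $\vec M+e_i$ act on the same coordinate $\ell$. The two outputs differ by $e_i$, plus an extra $f(\Sigma+1)-f(\Sigma)\ge0$ in coordinate $\ell-1$ because $f$ is non-decreasing. The induction hypothesis, applied to this smaller pair of vectors, finishes this case.

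If $i\ge\ell$, the vector $\vec M+e_i$ still has the same smallest nonzero index $\ell\ge2$, because adding $e_i$ does not create a nonzero entry below $\ell$. This is the same situation as before.

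If $i\ge2$ but $\vec M$ has no nonzero entries beyond the first coordinate, then $\psi(\vec M)=M_1$. Here we use $f(r)\ge r\ge1$ and induct to see that $\psi(\vec M+e_i)\ge M_1+1$: each recursion step only adds mass at lower indices and never reduces $M_1$.

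\textbf{Proof of (ii), the main obstacle.} It is enough to move one unit from $j$ to $j-1$. The idea is to run the recursion on $\vec M$ until coordinate $j$ is the lowest nonzero index $\ge2$ and one unit of $M_j$ is being expanded. At that point the unit becomes $f(\Sigma)\ge\Sigma\ge1$ units at $j-1$, with the total sum increasing. We then compare with $\vec N$, which has just one unit at $j-1$, using (i) together with the fact that the current total is at least the original total.

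The delicate part is synchronizing the two recursions, which process lower coordinates first. I would formalize this by induction on the inverse lexicographic order of $\vec M$. The invariant to carry along is that the lower-coordinate mass in the $\vec M$-run always dominates coordinatewise the mass in the $\vec N$-run, and that the running sums satisfy the same inequality. Maintaining this invariant relies on $f$ being non-decreasing and on $f(r)\ge r$.
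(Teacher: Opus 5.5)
\textbf{There is a genuine gap in your reduction of the proposition to (i) and (ii).}

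First, the claim $\vec N'\le\vec M'$ coordinatewise fails at coordinate $1$ whenever $\vec M$ has mass at indices $2,\dots,j-1$. You only know $\sum_{i<j}N_i\le\sum_{i<j}M'_i$, not $\sum_{i<j}N_i\le M'_1$.

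More seriously, your last step uses (ii) in the wrong direction. $\vec N'$ is obtained from $\vec N$ by moving mass \emph{down}, so (ii) gives $\psi(\vec N')\le\psi(\vec N)$. Your chain needs $\psi(\vec N)\le\psi(\vec N')$.

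This cannot be repaired inside your framework. Neither a coordinate decrease nor a downward unit move ever increases a tail sum $\sum_{i\ge t}M_i$. So chains of (i) and (ii) can only certify $\psi(\vec N)\le\psi(\vec M)$ when every tail sum of $\vec N$ is at most the corresponding tail sum of $\vec M$. That condition is strictly stronger than the proposition's hypothesis. For example, take $d=3$, $\vec M=(5,0,1)$ and $\vec N=(0,5,0)$. These satisfy the hypotheses, yet $N_2+N_3=5>1=M_2+M_3$. With $f(r)=r$, your intermediate vector is $\vec M'=(6,0,0)$ with $\psi(\vec M')=6$, while $\psi(\vec N)=129$ and $\psi(\vec M)=641$. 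Pushing the surplus unit at index $j$ down to coordinate $1$ discards exactly what makes $\psi(\vec M)$ large.

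\textbf{The missing idea.} The surplus unit at the top differing index $j$ must be expanded by the recursion into $f(\sum_iM_i)\ge\sum_iM_i$ units at index $j-1$. That is enough to dominate all of $\vec N$'s mass below $j$. The paper does this by inducting on $\vec M$ in inverse lexicographic order and comparing one recursion step on each side.

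Write $\vec N',\vec M'$ for the vectors after one step of $\vec N$ and $\vec M$ respectively (here the primes denote recursion steps, not your construction). Then $\sum_iN'_i\le\sum_iM'_i$ follows from monotonicity of $f$. Also $\vec N'\le\vec M'$ in inverse lexicographic order, by a short case analysis on where the lowest nonzero indices $k,\ell\ge2$ of $\vec N,\vec M$ sit relative to $j$. The only delicate case is $\ell=j>k$ with $M_j-1=N_j$. There $M'_{j-1}\ge f(\sum_iM_i)\ge\sum_iM_i>N'_{j-1}$, which is where $f(r)\ge r$ is used. The base case is $\psi(N_1,0,\dots,0)=N_1\le\sum_iM_i\le\psi(\vec M)$, since the total never decreases along the recursion.

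Separately, your proof of (i) misdescribes the case $2\le i<\ell$. There the lowest nonzero index of $\vec M+e_i$ is $i$, not $\ell$. This is easily fixed: with $\Sigma=\sum_iM_i$, one step gives $\psi(\vec M+e_i)=\psi(\vec M+f(\Sigma+1)e_{i-1})$, and this is $\ge\psi(\vec M)$ by the inductive hypothesis.
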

The proof of this proposition is deferred to the \cref{sec:deferred-proofs} as it is a rather tedious induction and is not a primary focus of ours. This function serves as an upper bound on some parameters in regularization processes that appear here.

\begin{theorem}[Regularization \cite{green2007distribution}]\label{thm:classic-regularization}
    Every factor $\cF$ of degree $d$ admits an $f$-regular refinement $\cG$ with $\dim(\cG) \le \psi_{d, f}^*(\dim(\cF))$, where $f\colon \N\to\N$ is a growth function. 
\end{theorem}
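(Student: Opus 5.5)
The plan is the standard Green--Tao energy-increment argument, organized as an induction on the dimension vector of the factor in the inverse lexicographic order on $\N^d$. For each input factor $\cF$ we build a sequence of refinements $\cF = \cF_0, \cF_1, \dots$. We stop as soon as the current factor $\cF_t$ is $f$-regular, i.e.\ $f(\dim(\cF_t))$-regular. While we have not stopped, we produce the next factor by a single \emph{rank-reduction step}, and we control its dimension vector with $\psi_{d,f}$.

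Before starting we remove linear dependencies. If some nonzero combination $\sum_i \lambda_i P_i$ is identically zero, we drop one polynomial of maximal degree appearing in it. It is a linear combination of the rest, so the new factor still refines $\cF$, and its dimension vector and its total size $\sum_i M_i$ both strictly decrease.

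The rank-reduction step works as follows. Suppose $\cF_t$ is linearly independent but not $f(K)$-regular, where $K = \dim(\cF_t)$. Then there is a nonzero combination $R = \sum_i \lambda_i P_i$ of degree $\ell$ with $\rank_{\ell-1}(R) \le f(K)$. If $\ell = 1$, then $R$ is a nonzero affine function. If it is constant, the combination is a linear dependency up to a constant, which we handle as before. If $\ell \ge 2$, we write $R = \Gamma(S_1,\dots,S_{r})$ with $r \le f(K)$ and each $\deg S_j \le \ell-1$. Choose an index $i_0$ with $\lambda_{i_0}=1$ and $\deg P_{i_0} = \ell$. Then
\[
P_{i_0} = \Gamma(S_1,\dots,S_r) + \sum_{i \ne i_0} \lambda_i P_i .
\]
So the factor $\cF_{t+1} = (\cF_t \setminus \{P_{i_0}\}) \cup \{S_1,\dots,S_r\}$ refines $\cF_t$, and by transitivity it refines $\cF$.

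The resulting dimension vector is obtained from $\vec M$ by lowering $M_\ell$ by one and adding at most $f(K)$ to lower coordinates. The new polynomials $S_j$ may have various degrees below $\ell$. We bound the result coordinatewise by placing all of them in coordinate $\ell-1$, because the order only cares about the highest changed index. Hence the new vector is $\le (M_1,\dots,M_{\ell-1}+f(\sum_i M_i), M_\ell-1,\dots,M_d)$. Since the inverse lexicographic order on $\N^d$ is a well-order, the process terminates.

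For the size bound we prove by induction along this well-order the claim: \emph{every linearly independent factor with dimension vector $\vec M$ has an $f$-regular refinement of dimension at most $\psi_{d,f}(\vec M)$}.

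This step is the main obstacle. The definition of $\psi$ only reduces the \emph{smallest} nonzero index $\ell \ge 2$, whereas the non-regular combination may live at a higher degree $\ell$. To get around this we use \cref{prop:properties-of-rapid-function}. The new vector $\vec N$ satisfies $\vec N \le \vec M$, but its total size may have grown, so the proposition cannot be applied to $\vec N$ and $\vec M$ directly. Instead we compare $\vec N$ with the vector $\vec M'$ obtained by the step that $\psi$ itself prescribes, which uses the same $f(\sum_i M_i)$ increment.

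If $\ell$ is the smallest nonzero index at least $2$, then $\vec N \le \vec M'$ coordinatewise in the relevant sense and the sizes match, so $\psi(\vec N) \le \psi(\vec M') = \psi(\vec M)$. Otherwise we argue that $\psi(\vec N) \le \psi(\vec M)$ by unrolling $\psi(\vec M)$ until the coordinate $\ell$ is the one being decreased. Monotonicity in both the order and the size then absorbs the comparison, because $f$ is non-decreasing and every unrolled step only increases the sizes involved.

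The base case is a vector of the form $(M_1,0,\dots,0)$: a linearly independent family of affine functions. Such a family is automatically regular, since every nonzero combination is a nonconstant affine function, which has infinite $\rank_0$ by convention. Its dimension is $M_1 = \psi(M_1,0,\dots,0)$.

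Finally, set $\psi^*_{d,f}(K) = \psi(0,\dots,0,K)$. It dominates $\psi(\vec M)$ for every degree-$d$ factor of dimension $K$, again by \cref{prop:properties-of-rapid-function}. This gives $\dim(\cG) \le \psi^*_{d,f}(\dim(\cF))$.
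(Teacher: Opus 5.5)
Your proposal is correct and is essentially the paper's proof: strong induction on the dimension vector in the inverse lexicographic order, discarding linear dependencies, replacing a maximal-degree polynomial of an irregular combination by at most $f(K)$ lower-degree polynomials, and bounding the size via \cref{prop:properties-of-rapid-function}. You are more careful than the paper, whose one-line appeal to that proposition only covers the case where $\ell$ is the smallest nonzero index $\ge 2$ (otherwise the total size grows), and your unrolling fix works; note, though, that the comparison at coordinate $\ell-1$ needs $f(S)\ge S\ge \sum_i M_i+f(\sum_i M_i)-1\ge M_{\ell-1}+f(\sum_i M_i)$, with $S$ the total size when coordinate $\ell$ is first decreased, so it uses $f(r)\ge r$ and not just monotonicity of $f$.
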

\begin{proof}
    The proof is by strong induction on the dimension vector $\vec M = (M_1,\dots, M_d)$ of $\cF$, showing that for every $\cF$ there exists an $f$-regular refinement $\cG$ with $\dim (\cG) \le \psi_{d,f}(M_1,\dots, M_d)$. The bound $\dim(\cG) \le \psi^*_{d,f}(\dim(\cF))$ follows immediately from \cref{prop:properties-of-rapid-function}.
    
    As the base case, the empty factor is vacuously $f$-regular and there is nothing to do. Now suppose that $M_i > 0$ for some $i$, and that the claim holds for all smaller dimensions.

    We shall further assume that $\cF$ is free from linear dependencies, because otherwise we can discard some polynomials from $\cF$ and the resulting factor is smaller and still refines $\cF$. If $\cF$ is already $f$-regular, then we are done. Otherwise, identify some linear combination $R=\sum_i\lambda_i P_i$ such that $\rank_{\ell-1}(R)=r\le f(K)$, where $\ell=\deg_\cF(R)$.

    Then we replace a maximal degree $P_{i^*}$ with $\lambda_{i^*}\neq 0$ by $r$ many polynomials of degree at most $\ell - 1$ to obtain a new factor $\cF'$. Doing so would change the dimension vector from $\vec M = (M_1,\dots, M_d)$ to $\vec M' = (M'_1, \dots, M'_d)$ such that
    \[
    (M'_1,\dots, M'_d) < (M_1, \dots, M_{\ell - 1} + f(K), M_\ell - 1, \dots, M_d).
    \]
    We may now invoke the induction hypothesis on $\cF'$ to obtain an $f$-regular refinement $\cG$ of $\cF'$. The factor $\cG$ is also a refinement of $\cF$. Its dimension satisfies $\dim(\cG) \le \psi_{d, f}(\vec M')$ by induction. By \cref{prop:properties-of-rapid-function} it follows that $\psi_{d, f}(\vec M') \le \psi(\vec M)$.
\end{proof}

\subsection{The structure theorems and the bias-rank tradeoffs}\label{sec:structure-theorems}

Consider the beautiful theorem of bias implies low rank \cite{green2007distribution,kaufman2008worst}.

\begin{theorem}[Structure of Biased Polynomials \cite{kaufman2008worst}] \label{thm:kaufman-lovett}
    There exists a function \defemph{$c_\KL(d, \delta)$} such that for all polynomials $P$ over $\F_2$ of degree at most $d$ and $\bias(P)\ge \delta > 0$ it holds that $\rank_{d-1}(P) \le c_\KL(d, \delta)$.
\end{theorem}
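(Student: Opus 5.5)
The plan is to prove \cref{thm:kaufman-lovett} by induction on $d$, following the Green--Tao/Kaufman--Lovett strategy in two stages. First, bias lets us \emph{approximate} $P$ by a function of a bounded number of polynomials of degree $d-1$. Second, using \cref{thm:classic-regularization} and the fact that $\deg P\le d$, we upgrade this approximation to an \emph{exact} representation. The base case $d=1$ is trivial: a nonconstant affine function has bias $0$, so $\bias(P)\ge\delta>0$ forces $P$ to be constant, of rank $\le 1$.

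\textbf{Step 1 (approximation).} Write $b=\bias_\pm(P)$, so $|b|\ge\delta$. For every $x$ we have $\E_y(-1)^{P(x+y)}=b$, and therefore
\[
(-1)^{P(x)}=\frac1b\,\E_y\big[(-1)^{P(x)+P(x+y)}\big]=\frac1b\,\E_y\big[(-1)^{D_yP(x)}\big].
\]
Sample $y_1,\dots,y_t$ independently with $t=O(\delta^{-2}\log(1/\eps))$. For each fixed $x$, a Chernoff bound shows that the sign of $b^{-1}\cdot\frac1t\sum_i(-1)^{D_{y_i}P(x)}$ equals $(-1)^{P(x)}$ except with probability $\eps$. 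Averaging over $x$ and fixing a good choice of the $y_i$'s, we get $\Pr_x[P(x)\neq\Gamma(D_{y_1}P(x),\dots,D_{y_t}P(x))]\le\eps$ for an explicit $\Gamma$ (the majority-type rule above). Each $D_{y_i}P$ has degree $\le d-1$. We do not actually need to invoke the induction hypothesis here; it is only used implicitly through the regularity machinery below.

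\textbf{Step 2 (regularize).} Let $\cF=(D_{y_1}P,\dots,D_{y_t}P)$ and apply \cref{thm:classic-regularization} with a fast-growing growth function $f$, chosen depending on $d$ and $\eps$. This yields an $f$-regular refinement $\cG=(G_1,\dots,G_C)$ of degree $\le d-1$ with $C\le\psi^*_{d-1,f}(t)$, a bound depending only on $d$ and $\delta$. We have $P=\Gamma'(\cG)$ on a $1-\eps$ fraction of inputs. By the rank-bias direction (high rank implies small bias, which should follow inductively from the lower-degree equidistribution theory) together with \cref{lem:vazirani-xor}, the atoms of $\cG$ are nearly equidistributed. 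Crucially, the same holds for the joint values of $\cG$ on parallelepipeds $\{x+\sum_{i\in S}z_i\}_{S\subseteq[d+1]}$, subject only to the consistency constraints imposed by degree.

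\textbf{Step 3 (exactness), the main obstacle.} Since $\deg P\le d$, every derivative of order $d+1$ vanishes, so
\[
P(x)=\sum_{\emptyset\ne S\subseteq[d+1]}P\Big(x+\sum_{i\in S}z_i\Big)
\]
for all $x,z_1,\dots,z_{d+1}$. Fix $x$. Choose the $z$'s at random, conditioned on the vertices $x+\sum_S z_i$ landing in atoms whose labels form a consistent configuration determined by $\cG(x)$ alone. Equidistribution over parallelepipeds makes each vertex marginally almost uniform, so by a union bound all $2^{d+1}-1$ vertices are ``good'' (i.e.\ satisfy $P=\Gamma'(\cG)$) with positive probability. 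On such a configuration the right-hand side equals a quantity depending only on $\cG(x)$. Hence $P(x)=\Gamma''(\cG(x))$ for \emph{every} $x$, which gives $\rank_{d-1}(P)\le C=c_\KL(d,\delta)$.

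I expect this last step to be the hard part. It requires a genuine counting lemma: the distribution of $\cG$ on $(d+1)$-dimensional parallelepipeds must be close to uniform over consistent configurations. Proving that needs Gowers-norm/Cauchy--Schwarz control, plus a careful argument that the right-hand side is independent of which good configuration is chosen.
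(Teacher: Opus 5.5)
The paper does not prove this theorem. It imports it as a black box from \cite{kaufman2008worst} (and from \cite{green2007distribution} when $d<|\F|$), so the question is whether your argument stands on its own.

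Step 1 and the single-point part of Step 2 are fine. The identity $(-1)^{P(x)}=b^{-1}\E_y(-1)^{D_yP(x)}$ plus sampling gives the $\eps$-approximation by derivatives. Near-uniformity of $\cG(X)$ for random $X$ follows from the induction hypothesis (which is exactly ``high rank implies small bias'' in degree $\le d-1$) together with \cref{lem:vazirani-xor}.

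The gap is the parallelepiped claim at the end of Step 2, on which Step 3 rests. You do not prove it, and over $\F_2$ it is false in the form you state.

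Let $S_4(x)=\sum_{|A|=4}\prod_{i\in A}x_i=\binom{|x|}{4}\bmod 2$. Its correlation with every cubic is $o(1)$ (Green--Tao; Lovett--Meshulam--Samorodnitsky). Hence $\rank_3(S_4)\to\infty$: if $S_4=\Gamma(Q_1,\dots,Q_c)$ with cubic $Q_i$, Fourier-expanding $(-1)^{\Gamma}$ yields some $\sum_i\lambda_iQ_i$ with correlation at least $2^{-c/2}$. So for $d\ge 5$ the factor $\{S_4\}$ is as regular as any $f$ demands. Yet $\|(-1)^{S_4}\|_{U^4}\ge c_0>0$.

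On the parallelepiped based at $x$, the sum $\sum_{S\subseteq[4]}S_4(x+\sum_{i\in S}z_i)=\Delta_{z_1}\cdots\Delta_{z_4}S_4(x)$ is the coefficient of $\omega_1\omega_2\omega_3\omega_4$ in $\omega\mapsto S_4(x+\sum_i\omega_iz_i)$. The degree-$4$ consistency constraints leave this coordinate entirely free. Near-uniformity over consistent configurations would therefore make it nearly unbiased, but its bias is $\|(-1)^{S_4}\|_{U^4}^{16}\ge c_0^{16}$ for every $x$.

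The structural reason is this: your induction only supplies bias bounds, whereas a Cauchy--Schwarz proof of a cube counting lemma needs small $U^\ell$ norms for the degree-$\ell$ combinations. In characteristic $2$, high rank does not give that once $\ell\ge 4$. This is the low-characteristic phenomenon behind the restriction $d<|\F|$ in \cite{green2007distribution}, and getting around it is the actual content of \cite{kaufman2008worst}.

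What Step 3 genuinely needs is weaker than equidistribution. Let $\mu_x$ be the law of $(\cG(x+\sum_{i\in S}z_i))_{\emptyset\neq S\subseteq[d+1]}$. Each vertex is marginally uniform, so from any base point $x$ all vertices are good with probability at least $1-2^{d+1}\eps$. Hence if $\mu_x$ and $\mu_{x'}$ are close in total variation whenever $\cG(x)=\cG(x')$, some configuration is realized with all vertices good from both base points. The identity $P(x)=\sum_{S\neq\emptyset}P(x+\sum_{i\in S}z_i)$ then forces $P(x)=P(x')$.

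Your version conditions on one prescribed configuration and then union-bounds, and that does not work as stated. The bound on $\dim\cG$ grows with $1/\eps$, so you cannot guarantee $\eps\,2^{\dim\cG}\ll 1$, and the bad set may fill entire prescribed atoms.

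Establishing this invariance of $\mu_x$ for worst-case $x$ over $\F_2$, from bias bounds alone, is exactly the hard step, and the proposal gives no argument for it. As it stands, you have reduced the theorem to an unproved lemma whose stated form is false.
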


Contrapositively, this theorem says that if $\rank_{d-1}(P) > c_\KL(d, \delta)$, then $\bias(P) < \delta$. We will use this fact extensively. This theorem is commonly known as \emph{bias implies low rank} and gives a tradeoff between bias and rank. It was first proved by Green and Tao in the case of $d \le \card{\F_p}$, but was later extended to all small prime fields by Kaufman and Lovett \cite{kaufman2008worst}.

For polynomials of degree $1,2,3$, explicit and efficient tradeoffs are known.
\begin{fact}[Structure of Degree $1$]
    Given a degree-$1$ polynomial $L$, if $\bias(L) > 0$ then $L$ is constant. In other words, $c_\KL(1, \delta) = 0$ for any positive $\delta$.
\end{fact}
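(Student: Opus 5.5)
The statement to prove is the Fact that a degree-$1$ polynomial $L$ with $\bias(L)>0$ must be constant, so $c_\KL(1,\delta)=0$. The plan is to classify degree-$1$ polynomials as either constant or non-constant affine, and to compute the bias directly in each case.

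Write $L(x)=a_0+\sum_{i=1}^m a_ix_i$ with $a_i\in\F_2$. If every $a_i$ with $i\ge 1$ is zero, then $L\equiv a_0$ is constant. In that case $\bias(L)=1$, and $L$ is a function of zero polynomials, which is consistent with $c_\KL(1,\delta)=0$ under the convention that constants have rank $0$.

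Otherwise some $a_j=1$. Then I would pair each $x$ with $x+e_j$. This map is a fixed-point-free involution on $\F_2^m$, and it satisfies $L(x+e_j)=L(x)+1$. So it gives a bijection between $L^{-1}(0)$ and $L^{-1}(1)$. Hence $\Pr_X[L(X)=1]=1/2$, and therefore $\E_X[(-1)^{L(X)}]=0$, i.e.\ $\bias(L)=0$.

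Taking the contrapositive: $\bias(L)>0$, and in particular $\bias(L)\ge\delta>0$, forces $L$ to be constant. This gives $\rank_0(L)=0$ with the constant function $\Gamma$, so $c_\KL(1,\delta)=0$ for every $\delta>0$. There is no real obstacle here. The only subtlety is notational: the definition of rank asks for a smallest \emph{positive} integer, so for constant functions the value $0$ has to be read as the natural convention. Even without that convention, a constant is trivially a function of a single degree-$0$ polynomial.
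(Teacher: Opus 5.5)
Your argument is correct and is the same observation the paper relies on: the paper states this as a Fact without proof, and elsewhere uses exactly that a degree-$1$ polynomial has $\Pr_X[L(X)=1]\in\{0,\frac{1}{2},1\}$, i.e., a non-constant affine function is balanced. The only caveat is the notational one you already flag: $c_\KL(1,\delta)=0$ requires reading constants as having $\rank_0$ equal to $0$, since under the literal ``smallest positive integer'' definition a nonzero constant gets rank $1$, so your closing remark gives $1$ rather than $0$.
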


\begin{theorem}[Dickson's lemma: Structure of quadratic polynomials {\cite[Theorem 6.30]{LidlNiederreiter1996}}]\label{thm:structure-of-deg2}
    Given a quadratic polynomial $Q$ there exists an even number $r$ and $L_1,\dots, L_{r+1}$ linear \emph{forms} where $Q$ can be written as
    \[
    Q = L_1L_2 + \dots + L_{r-1}L_r + L_{r+1} + c,
    \]
    where $c$ is a field constant, and $L_1,\dots, L_r$ are linearly independent.

    We have $\rank_1(Q) = r+1$ if $L_{r+1}$ is linearly independent from $L_1,\dots, L_r$, and $\rank_1(Q) = r$ otherwise. In the former case $\bias(Q) = 0$ and in the latter case $\bias(Q)=2^{-r/2}$. This implies that $c_\KL(2, \delta) = 2\log(1/\delta)$.
\end{theorem}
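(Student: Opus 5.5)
We prove the normal form by induction on the number of variables, then read off bias and rank from it.

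\textbf{Normal form.} Write $Q=\sum_{i<j}a_{ij}x_ix_j+\ell(x)+c$. If every $a_{ij}=0$, take $r=0$ and $L_1=\ell$. Otherwise relabel so that $a_{12}=1$, and group the terms involving $x_1,x_2$:
\[
Q=x_1x_2+x_1A+x_2B+R,
\]
where $A,B$ are linear forms in $x_3,\dots,x_m$ and $R$ is a quadratic polynomial in $x_3,\dots,x_m$. Then
\[
Q=(x_1+B)(x_2+A)+AB+R.
\]
The remainder $AB+R$ is a quadratic polynomial in $m-2$ variables. By induction it equals $L_3L_4+\dots+L_{r-1}L_r+L_{r+1}+c$, where $L_3,\dots,L_r$ are independent linear forms in $x_3,\dots,x_m$. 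Set $L_1=x_1+B$ and $L_2=x_2+A$. Since only $L_1$ involves $x_1$ and only $L_2$ involves $x_2$, the forms $L_1,\dots,L_r$ stay linearly independent. The number of forms in products is $r$, which is even.

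\textbf{Bias.} After an invertible linear change of variables we may assume $L_1,\dots,L_r$ are the coordinates $y_1,\dots,y_r$.
\begin{itemize}
\item If $L_{r+1}$ is independent of $L_1,\dots,L_r$, extend the basis so that $L_{r+1}=y_{r+1}$. Conditioning on $y_1,\dots,y_r$, the value of $Q$ is a non-constant affine function of $y_{r+1}$, so $Q$ is balanced and $\bias(Q)=0$.
\item Otherwise $L_{r+1}=\sum_i(a_iy_{2i-1}+b_iy_{2i})$. Complete each product:
\[
y_{2i-1}y_{2i}+a_iy_{2i-1}+b_iy_{2i}=(y_{2i-1}+b_i)(y_{2i}+a_i)+a_ib_i.
\]
So $Q$ is a constant plus a sum of $r/2$ products of pairwise disjoint pairs of independent affine forms. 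Each such product has signed bias $\pm\tfrac12$, and the products are independent, so $\bias(Q)=2^{-r/2}$.
\end{itemize}

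\textbf{Rank.} The upper bounds are immediate: $Q$ is a function of $L_1,\dots,L_{r+1}$, and in the dependent case already of $L_1,\dots,L_r$.

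For the lower bound, suppose $Q=\Gamma(M_1,\dots,M_s)$ with $M_i$ linear, and let $V=\bigcap_i\ker M_i^{\mathrm{lin}}$, where $M_i^{\mathrm{lin}}$ is the linear part of $M_i$; this $V$ has codimension at most $s$. Then $Q(x+v)=Q(x)$ for all $x$ and all $v\in V$.

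Consider the bilinear form $\beta(x,y)=Q(x+y)+Q(x)+Q(y)+Q(0)$. In the coordinates $y_1,\dots,y_r$, $\beta$ is the symplectic form pairing $y_{2i-1}$ with $y_{2i}$. Its radical is $\{L_1=\dots=L_r=0\}$, which has codimension $r$.
\begin{itemize}
\item Translation invariance by $V$ forces $\beta(v,\cdot)\equiv0$ for every $v\in V$, so $V$ lies in the radical. Hence $s\ge r$.
\item On the radical, $Q$ agrees with $L_{r+1}+c$. Translation invariance by $V$ also forces $L_{r+1}$ to vanish on $V$. If $L_{r+1}$ is independent of $L_1,\dots,L_r$, this is one more independent linear constraint on $V$, so $\operatorname{codim}V\ge r+1$ and $s\ge r+1$.
\end{itemize}
This rank lower bound is the only delicate step: one has to check that invariance under $V$ kills both the bilinear part and the residual linear form. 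Everything else is bookkeeping.

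\textbf{The bound on $c_\KL$.} If $\bias(Q)\ge\delta>0$, the bias is nonzero, so we are in the dependent case. Then $\rank_1(Q)=r$ and $2^{-r/2}\ge\delta$, which gives $\rank_1(Q)\le 2\log(1/\delta)$.
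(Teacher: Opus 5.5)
Your proof is correct up to one small slip. It takes a different route from the paper, which gives no proof at all: it cites the classification of quadratic forms in characteristic $2$ (Lidl--Niederreiter, Thm.~6.30) and asserts the bias, $\rank_1$, and $c_\KL$ consequences without argument.

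\textbf{Comparison.} Using the citation still leaves work. That theorem classifies homogeneous quadratic forms over $\F_q$, $q$ even, up to linear equivalence. To apply it to a Boolean quadratic polynomial, you must absorb the linear part via $x_i^2=x_i$ and split off the degenerate part. Even then, the bias and rank statements must be derived separately. Your argument is self-contained: the complete-the-product induction handles inhomogeneous $Q$ directly, and the bias follows from independence of the disjoint product blocks. Most valuable is your $\rank_1$ lower bound, the one claim that does not follow from the normal form by inspection. You show that translation invariance under $V=\bigcap_i\ker M_i^{\mathrm{lin}}$ forces $V\subseteq\mathrm{rad}(\beta)$, and that $Q(v)=Q(0)$ forces $L_{r+1}$ to vanish on $V$. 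This is exactly the right invariant argument and gives $\operatorname{codim}V\ge r$ or $r+1$ as claimed.

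\textbf{The slip.} When you group the terms involving $x_1,x_2$, the coefficients $\epsilon_1,\epsilon_2$ of $x_1,x_2$ in $\ell$ are not accounted for. In general $x_1$ is multiplied by $A=A_0+\epsilon_1$ and $x_2$ by $B=B_0+\epsilon_2$, with $A_0,B_0$ forms in $x_3,\dots,x_m$. So $A$ and $B$ are affine, and $x_1+B$, $x_2+A$ are not linear forms. The fix is bookkeeping. The remainder $AB+R$ is still a quadratic polynomial in $x_3,\dots,x_m$, so the induction applies. Set $L_1=x_1+B_0$ and $L_2=x_2+A_0$, and expand
\[
(L_1+\epsilon_2)(L_2+\epsilon_1)=L_1L_2+\epsilon_1L_1+\epsilon_2L_2+\epsilon_1\epsilon_2 .
\]
Push $\epsilon_1L_1+\epsilon_2L_2$ into $L_{r+1}$ and $\epsilon_1\epsilon_2$ into $c$. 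This changes neither the linear independence of $L_1,\dots,L_r$ nor whether $L_{r+1}$ lies in their span.
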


\begin{theorem}[Structure of biased cubic polynomials (Haramaty--Shpilka~\cite{haramaty2010structure})]\label{thm:haramaty-shpilka}
Let $C$ be a cubic polynomial with $\bias(C)\ge \delta>0$. Then there exist linear functions
$L_1,\dots,L_{r+r'}$, quadratic polynomials $Q_1,\dots,Q_r$, and a cubic polynomial $\Gamma$
(in $r'$ variables) such that
\[
C = \sum_{i=1}^{r} L_i Q_i \;+\; \Gamma(L_{r+1},\dots,L_{r+r'}),
\]
where $r=O(\log(1/\delta))$ and $r'=O(\log^4(1/\delta))$. This implies that $c_\KL(3, \delta)=O(\log^4(1/\delta))$.
\end{theorem}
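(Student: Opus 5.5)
We plan to follow the Haramaty--Shpilka strategy: pass from the bias of $C$ to the biases of its derivatives, use Dickson's lemma (\cref{thm:structure-of-deg2}) on the derivatives, and then assemble the many low-rank derivatives into a global decomposition of $C$.

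\textbf{Step 1 (bias of derivatives).} Write $\Delta_h C(x) = C(x+h)+C(x)$; this is a polynomial of degree at most $2$ in $x$. Expanding the square gives
\[
\bias_\pm(C)^2=\E_h\E_x(-1)^{\Delta_hC(x)}.
\]
Each inner term has absolute value $\bias(\Delta_h C)$, so $\E_h[\bias(\Delta_h C)] \ge \delta^2$. Hence $\bias(\Delta_h C)\ge \delta^2/2$ for a set $H$ of directions of density at least $\delta^2/2$. By Dickson's lemma, every $h\in H$ has $\rank_1(\Delta_hC)=O(\log(1/\delta))$. Equivalently, the bilinear form $B_h(y,z)=T(h,y,z)$ has rank $O(\log 1/\delta)$, where $T$ is the symmetric trilinear form attached to the cubic part of $C$.

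\textbf{Step 2 (from a dense set of directions to a subspace).} We want to upgrade ``$B_h$ has low rank for $h$ in a dense set'' to ``$B_h$ has low rank for all $h$ in a subspace $V$ of low codimension''. The rank of $B_h$ is subadditive in $h$ because $B_{h+h'}=B_h+B_{h'}$. So every $h$ in the iterated sumset $kH$ has rank $O(k\log(1/\delta))$. We would apply a Bogolyubov/Sanders-type lemma to $H$: $2H-2H$ contains a subspace $V$ of codimension $\mathrm{polylog}(1/\delta)$. This is the step that produces the $\log^4$ loss, so that $r'=O(\log^4(1/\delta))$. In parallel, we would take a basis-like family $h_1,\dots,h_t$ with $t=O(\log 1/\delta)$ and each $B_{h_j}$ of low rank, chosen greedily from $H$ so that the common kernels shrink.

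\textbf{Step 3 (reconstructing $C$).} Choose linear forms $L_{r+1},\dots,L_{r+r'}$ whose common zero set is $V$. On each coset of $V$ the cubic part has all its directional bilinear forms of rank $O(\log 1/\delta)$. We then find $O(\log 1/\delta)$ linear forms $L_1,\dots,L_r$ spanning the ``row spaces'' of the $B_{h_j}$. Integrating $T$ (using $\F_2$ characteristic-$2$ care, since the polarization of a cubic only determines $C$ up to lower-degree terms) shows that $C-\sum_{i\le r}L_iQ_i$ depends, in its cubic part, only on $L_{r+1},\dots,L_{r+r'}$. The lower-degree remainder is absorbed: quadratic and linear leftovers either go into the $Q_i$ or have small rank themselves by the bias hypothesis again. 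This yields $C=\sum_i L_iQ_i+\Gamma(L_{r+1},\dots,L_{r+r'})$.

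\textbf{Main obstacle.} The hard part is Step 2--3: turning pointwise low rank of $B_h$ on a dense set into a single global set of few linear forms with only polylogarithmic loss. Naive greedy or regularity arguments (as in \cref{thm:kaufman-lovett}) give tower-type bounds. I would first try the quasi-polynomial Bogolyubov lemma to get the subspace $V$, and then a rank-counting argument over a random choice of $h_1,\dots,h_t\in H$ to control $r$. Handling the characteristic-$2$ issue (the trilinear form not fully determining the cubic) would need a separate check.
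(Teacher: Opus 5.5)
The paper does not prove this statement. It quotes it from Haramaty--Shpilka and later uses only the consequence $c_\KL(3,\delta)=O(\log^4(1/\delta))$, i.e.\ that biased cubics have small $\rank_2$. Your Steps 1--2 are sound:
\begin{itemize}
\item $\bias_\pm(C)^2=\E_h\bias_\pm(\Delta_hC)$ gives a set $H$ of density at least $\delta^2/2$.
\item On $H$, each $B_h$ has rank $O(\log(1/\delta))$ by \cref{thm:structure-of-deg2}.
\item $B_{h+h'}=B_h+B_{h'}$ makes rank subadditive.
\item A quasi-polynomial Bogolyubov--Ruzsa bound then yields a subspace $V$ of codimension $O(\log^4(1/\delta))$ all of whose slices have rank $O(\log(1/\delta))$.
\end{itemize}
But the content of the theorem lies in Step 3, and there you assert the conclusion rather than derive it. As you suspect, this is a genuine gap, not a detail.

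\textbf{The core lemma is unproved.} The lemma you need says: if every slice $B_h$, $h\in V$, has rank at most $R$, then the cubic part of $C$ is $\sum_{i\le O(R)}L_iQ_i$ modulo lower-degree terms. Nothing in the sketch supports it. Taking the row spaces of $t=O(\log(1/\delta))$ greedily chosen slices gives up to $O(\log^2(1/\delta))$ forms, not $O(\log(1/\delta))$. Nothing shows that the common kernel of a few slices annihilates $T$. The usual route takes a slice $B_{h_1}$ of maximal rank and uses that $B_{h_1}+tB_h$ has rank at most $R$ for more than $R$ scalars $t$, forcing a compression space. That needs a field with more than $R$ elements, so over $\F_2$ you would have to exploit the full symmetry of $T$ or argue differently.

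\textbf{The reassembly does not follow from information on cosets of $V$.} Suppose $C-\sum_{i\le r}L_iQ_i$ has degree at most $2$ on every coset of $V=\{L_{r+1}=\dots=L_{r+r'}=0\}$. Then its cubic part only lies in the \emph{ideal} generated by $L_{r+1},\dots,L_{r+r'}$. It may contain terms $L_{r+j}\tilde Q_j$ with arbitrary quadratics $\tilde Q_j$, so it need not be a function of those forms. For example, take $C=y_1Q(z)$ with $Q$ of full rank:
\begin{itemize}
\item $C$ has bias about $1/2$.
\item Your set $H$ is exactly the hyperplane $\{h_{y_1}=0\}$, so $V\subseteq\{y_1=0\}$.
\item $C$ has zero cubic part on every coset of $V$, so Step~3 produces no $L_iQ_i$.
\item Yet the remainder $y_1Q(z)$ is not a function of the forms cutting out $V$.
\end{itemize}
Moving such terms into the first sum costs up to $r'=O(\log^4(1/\delta))$ extra products, which destroys $r=O(\log(1/\delta))$. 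That weaker decomposition, plus a leftover quadratic, would still give $\rank_2(C)=O(\log^4(1/\delta))$, which is all the paper actually uses. It is not the stated theorem, though. To get the stated form you must do two more things. First, make $L_1,\dots,L_r$ absorb the full slices $B_h$, $h\in V$, on the whole space, not just on $V$. Second, use the bias hypothesis again to show that only $O(\log(1/\delta))$ such products, together with the leftover quadratic and cross terms, are needed.
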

For $d\ge 4$ the dependence on $\delta$ is very weak, so we leave it unspecified. We summarize this in the following table.
\[
\begin{array}{c|ccccc}
      & d=1 & d=2 & d=3 & d\ge 4 \\
\hline
c_\KL(d,\delta) & 0 & 2\log(1/\delta) & O(\log^4(1/\delta)) & O_{d, \delta}(1)  \\
\end{array}
\]

\subsection{Sunflowers for subspaces}

\begin{definition}
    A collection $S = \{V_1,\dots,V_s\}$ of (possibly repeated) subspaces in $\F^m$ is said to be a sunflower if for $C = \bigcap_{j=1}^s V_j$ it holds that $V_i \cap V_{i'} = C$ for all $i \neq i'$.

    The common intersection $C = \bigcap_{j=1}^s V_j$ is called the \emph{core} of the sunflower.
\end{definition}

\begin{lemma}[Sunflower Lemma for subspaces]\label{lemma:sunflower}
    Fix a finite field $\F = \F_p$ and $m > k \geq 1$.
    Let $S$ be a collection of linear subspaces of $\F^m$ of dimension $k$ (the collection might have repetitions).
    If $\abs{S} \geq s^{k+1}p^{\frac{k^2+k-2}{2}}$,
    then $S$ contains a sunflower of size at least $s$.
\end{lemma}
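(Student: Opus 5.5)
We will prove the statement by induction on $k$, in the style of the Erdős--Rado sunflower lemma. The twist is that ``disjoint'' is replaced by ``intersecting only in the core,'' and the pigeonhole over elements is replaced by a pigeonhole over nonzero vectors. Write $N(k)\defeq s^{k+1}p^{(k^2+k-2)/2}$ for the claimed threshold. The base case is $k=1$, where $N(1)=s^2$. Any two one-dimensional subspaces are either equal or intersect in $\{0\}$. If some line is repeated $s$ times, the repeated copies form a sunflower whose core is the line itself. Otherwise there are at least $s^2/s = s$ distinct lines, and these pairwise intersect trivially, so they form a sunflower with core $\{0\}$.

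For the inductive step with $k\ge 2$, take a maximal subcollection $V_1,\dots,V_t$ of $S$ that pairwise intersect trivially. If $t\ge s$, we are done, with core $\{0\}$. Otherwise let $U=V_1+\dots+V_t$, so $\dim U\le (s-1)k$. By maximality, every $V\in S$ meets $U$ nontrivially, since it meets some $V_i$ nontrivially. We would like to pigeonhole over a single nonzero vector $u$ lying in many $V$'s. Each $V$ contains at least one nonzero vector of $U$, and there are fewer than $p^{(s-1)k}$ such vectors; that count is too large. A better choice is to pigeonhole over the lines in each $V_i$. Every $V$ meets some $V_i$ nontrivially, so it contains a line inside some $V_i$. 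There are at most $t\cdot\frac{p^k-1}{p-1}\le s\,p^{k}\cdot\tfrac{1}{p-1}$ lines in $V_1\cup\dots\cup V_t$, and we will use the crude bound of at most $s\,p^{k-1}\cdot 2$ lines, adjusting constants as needed. So some line $\ell=\Span(u)$ is contained in at least $|S|/(s p^{k-1})$ of the subspaces (up to the harmless factor).

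Next pass to the quotient $\F^m/\ell\cong\F^{m-1}$. The subspaces containing $\ell$ map to $(k-1)$-dimensional subspaces $V/\ell$. Intersections commute with this correspondence for subspaces containing $\ell$: $(V\cap V')/\ell=(V/\ell)\cap(V'/\ell)$. Hence a sunflower among the images lifts to a sunflower among the originals, with the core lifted back as well. Repetitions are allowed, so no injectivity issue arises. Applying induction requires $|S|/(s p^{k-1})\ge N(k-1)=s^{k}p^{(k^2-k-2)/2}$, that is, $|S|\ge s^{k+1}p^{(k^2-k-2)/2+k-1}$. Since $(k^2-k-2)/2+(k-1)=(k^2+k-4)/2$, this exponent is one less than the target $(k^2+k-2)/2$. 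That spare factor of $p$ absorbs the line-counting slack, because $\frac{p^k-1}{p-1}\le p^{k-1}\cdot\frac{p}{p-1}\le 2p^{k-1}\le p^k$. Finally, when $m-1<k-1$ is impossible but $m-1=k-1$ could occur, all images coincide and trivially form a sunflower. The hypothesis $m>k$ for the induction therefore only needs to be relaxed to $m\ge k$; otherwise we handle this edge case directly.

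The main obstacle is the counting in the pigeonhole step. We must count lines inside $\bigcup V_i$, not vectors in $U$, and check that the exponent bookkeeping closes with the stated $p^{(k^2+k-2)/2}$. I expect the spare factor of $p$ per level to be exactly what makes the recursion close, and I would verify the $k=1$ to $k=2$ step numerically as a sanity check.
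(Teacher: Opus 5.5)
Your proof is correct, and its skeleton matches the paper's. Both argue by induction on $k$. Both take a maximal subfamily $V_1,\dots,V_t$ with pairwise trivial intersections, pigeonhole inside $V_1\cup\dots\cup V_t$ to find something contained in at least $|S|/(sp^k)$ members, and recurse on dimension $k-1$.

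The paper pigeonholes over the $t(p^k-1)<sp^k$ nonzero vectors, while you use the at most $t\frac{p^k-1}{p-1}<sp^k$ lines. Either count closes the recursion exactly, so your factor-$2$ bookkeeping and the detour through $U=V_1+\dots+V_t$ are unnecessary. The worry about $m-1=k-1$ is also moot, since $m>k$ gives $m-1>k-1$.

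The one substantive difference is the dimension-reduction step, and there your route is the sound one. The paper replaces each $V\ni x^*$ by an \emph{arbitrary} $(k-1)$-dimensional subspace $V_{-x^*}\seq V$ with $x^*\notin V_{-x^*}$. It then asserts that a sunflower among the $V_{-x^*}$ induces one among the $V$. With independent choices this can fail. Over $\F_2$ with $x^*=e_1$, take $V=\Span(e_1,e_2,e_3)$, $V'=\Span(e_1,e_2,e_4)$, $V''=\Span(e_1,e_3+e_4,e_5)$. Choose the subspaces $\Span(e_2,e_3)$, $\Span(e_1+e_2,e_4)$, $\Span(e_3+e_4,e_5)$. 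These pairwise meet in $\{0\}$, so they form a sunflower, yet $V\cap V'=\Span(e_1,e_2)\neq\Span(e_1)=V\cap V''$.

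Your quotient by $\ell$ avoids this. Preimages under $\pi$ commute with intersections, and $\pi^{-1}(\pi(V))=V$ for $V\supseteq\ell$. Hence the lifted family is automatically a sunflower with core $\pi^{-1}(C')$. The paper's argument is repaired by taking $V_{-x^*}=V\cap H$ for one fixed hyperplane $H\not\ni x^*$, which is exactly your quotient written in coordinates.
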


\begin{remark}
    The proof is a straightforward adaptation of the standard proof of the lemma for set systems without a linear structure.
    
    Note however that the bound on the size of $S$ is better than a naive application of the sunflower lemma, as our sets are of size $p^k$, and so the naive bound would be double exponential in $k$.
\end{remark}

\begin{proof}[Proof of \cref{lemma:sunflower}]
    The proof is by induction on $k$. For the base case of $k=1$ we either have at least $s$ repetitions of the same subspace or we have at least $\abs{S}/s \geq s$ different one dimensional subspaces, which all intersect only at the origin.

    For the induction step, let $k \geq 2$.
    Let $V_1,\dots,V_t$ be a maximal subset of $S$ whose pairwise intersection is only $\{0\}$.
    If $t \geq s$, we are done, as the sets form a sunflowers with the trivial core $C = \{0\}$.

    Otherwise, consider the set $X = \cup_{i=1}^t V_i \setminus \{0\}$, and note that $\abs{X} = t \cdot (p^k-1) < s \cdot p^k$.
    By the maximality of the subspaces $V_1,\dots,V_t$, it follows that $V \cap X \neq \emptyset$ for all $V \in S$.
    Therefore, there exists $x^* \in X \setminus \{0\}$ such that $\abs{ \{V \in S : x^* \in V\}} \geq \frac{\abs{S}}{\abs{X}}$.

    For all $V \in S$ such that $x^* \in V$ define $V_{-x^*} \seq V$ to be any $k-1$ dimensional subspace of $V$ such that $x^* \notin V_{-x^*}$. There might be several such subspaces, and we define $V_{-x^*}$ to be one of them arbitrarily.

    Given such $x^*$, consider now the collection $S'$ of subspaces of dimension $k-1$:
    \begin{equation*}
        S' = \{V_{-x^*} : x^* \in V \text{ and } V \in S\}.
    \end{equation*}
    (Note that since $S$ might contain the same subspace more than once, the same applies also to $S'$.)
    
    The collection $S'$ contains at least $\frac{\abs{S}}{\abs{X}} \geq \frac{s^{k+1} \cdot p^{\frac{k^2+k-2}{2}}}{s \cdot p^k} = s^{(k-1)+1} \cdot p^{\frac{(k-1)^2+(k-1)-2}{2}}$ subspaces of dimension $k-1$. Therefore, by the induction hypothesis $S'$ contains a sunflower of size $s$, which induces a sunflower of size $s$ in $S$.
\end{proof}

\section{Chebyshev lemma}\label{sec:chebyshev-lemma}

\cref{lem:dist-of-product-distributions} talks about the distance between two product distributions. Indeed, our target $\cB_{1/3}^{\ox n}$ is a product distribution, but in general hoping for a product structure in polynomial distributions is too much. Therefore, we prove an analog of \cref{lem:dist-of-product-distributions} assuming that our polynomials are \emph{pairwise almost independent}, where our notion of independence is captured by regularity.

\begin{lemma}[Chebyshev lemma]\label{lem:chebyshev-bounded-rank-d}
    Let $\cP=(P_1,\dots,P_n)$ be an $n$-tuple of functions $P_i\colon\F_2^m \to \F_2$ such that $\rank_d(P_i) \leq r$, and suppose $\norm{P_i - \Ber(1/3)}_\tv \ge \delta>0$ for all $i\in [n]$.
        
    Write each $P_i$ as $P_i = \Gamma_i (Q_{i1},\dots,Q_{ir})$, where $\Gamma_i\colon \F_2^r \to \F_2$ is an arbitrary function of $r$ variables\footnote{Strictly speaking, we mean that for each $P_i$ there exists some $r_i \in \{0,\dots, r\}$ such that $P_i = \Gamma_i(Q_{i1}, \dots, Q_{ir_i})$. Since treating every tuple as having length $r$ does not affect the argument, we make that assumption for simplicity.} and $Q_{ij}$'s are polynomials of degree $d$.
    Suppose that all pairs $i\neq j$ satisfy the \emph{pairwise regularity assumption}, namely
    \begin{equation*}
        \rank(Q_{i1},\dots,Q_{ir}; Q_{j1},\dots,Q_{jr}) > c_\KL(d, 2^{-r}\eta).
    \end{equation*}
    Then
    \[
    \norm{\cP-\cB_{1/3}^{\ox n}}_\tv \ge 1 - O\big(\frac{\eta}{\delta^2}\big) - O\big(\frac{1}{\delta^2 n}\big),
    \]
    where the hidden constants in big-O are absolute.
\end{lemma}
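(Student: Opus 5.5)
The plan is a Chebyshev (second-moment) argument on a weighted count of ones, using the pairwise regularity assumption to make the $P_i$'s almost pairwise independent.

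\textbf{Step 1: split by sign of the bias.} Let $p_i = \Pr[P_i=1]$. By assumption $\abs{p_i - 1/3}\ge\delta$. Split $[n]$ into $A^+=\{i: p_i\ge 1/3+\delta\}$ and $A^-=\{i : p_i \le 1/3-\delta\}$. Rather than restricting to one side (which costs a factor $2$ in $n$), work with the signed statistic
\[
Z(x)=\sum_{i\in A^+}(x_i-1/3)-\sum_{i\in A^-}(x_i-1/3),
\]
evaluated either on $x\sim\cP$ or on $x\sim\cB_{1/3}^{\ox n}$. Under $\cB_{1/3}^{\ox n}$ we have $\E[Z]=0$ and $\Var[Z]=2n/9$. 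Under $\cP$ we have $\E[Z]=\sum_i \abs{p_i-1/3}\ge\delta n$. Take the distinguishing event $A=\{x : Z(x)\ge \delta n/2\}$.

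\textbf{Step 2: pairwise near-independence.} For $i\neq j$, the pairwise regularity assumption says every nonzero linear combination of the $2r$ polynomials in $(Q_{i1},\dots,Q_{ir};Q_{j1},\dots,Q_{jr})$ has rank exceeding $c_\KL(d,2^{-r}\eta)$. By \cref{thm:kaufman-lovett} (contrapositive), each such combination has bias less than $2^{-r}\eta$. Then \cref{lem:vazirani-xor} with $2r$ outputs shows the joint distribution of the $2r$ polynomials is within $2^{r}\cdot 2^{-r}\eta=\eta$ of uniform in total variation. The same applies to each single tuple $(Q_{i1},\dots,Q_{ir})$, a sub-tuple of a regular factor, with distance at most $\eta$.

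Since $(P_i,P_j)$ is a fixed function of the $2r$-tuple, their joint law is $\eta$-close to that of $(\Gamma_i(U),\Gamma_j(U'))$ with $U,U'$ independent uniform. The marginals are $\eta$-close to $\Gamma_i(U)$ and $\Gamma_j(U')$. Hence $\abs{\Cov(P_i,P_j)} = O(\eta)$. Summing over pairs,
\[
\Var_{\cP}[Z]\le n + O(\eta n^2).
\]

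\textbf{Step 3: Chebyshev on both sides.}
Under $\cP$:
\[
\Pr_\cP[Z<\delta n/2]\le \frac{\Var_\cP[Z]}{(\delta n/2)^2}=O\Big(\frac{1}{\delta^2 n}\Big)+O\Big(\frac{\eta}{\delta^2}\Big).
\]
Under $\cB_{1/3}^{\ox n}$:
\[
\Pr[Z\ge\delta n/2]\le \frac{2n/9}{(\delta n/2)^2}=O\Big(\frac{1}{\delta^2 n}\Big).
\]
The TV distance is at least $\Pr_\cP[A]-\Pr_{\cB}[A]$, which gives the claimed bound.

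\textbf{Main obstacle.} The key step is the covariance bound in Step 2: converting rank of the pair factor into closeness to uniform. This needs the bias threshold $2^{-r}\eta$ to beat the $2^{(2r)/2}=2^r$ loss in \cref{lem:vazirani-xor}, which is exactly how the hypothesis is calibrated. A second subtlety is that the degrees of the $Q$'s may be less than $d$, with rank measured relative to each combination's degree. There I would invoke \cref{thm:kaufman-lovett} for the degree $\ell$ of each combination, using that $c_\KL$ can be taken monotone in $d$.
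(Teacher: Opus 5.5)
Your proof is correct and follows the paper's argument. Both apply Chebyshev's inequality to the number of ones, and both bound each pairwise covariance by $O(\eta)$ by applying \cref{thm:kaufman-lovett} and \cref{lem:vazirani-xor} to the $2r$-tuple $(\cQ_i;\cQ_j)$. The only differences are minor: your signed statistic $Z$ replaces the paper's pigeonhole reduction to the coordinates with $\Pr[P_i=1]\ge 1/3+\delta$, which is why you need a two-sided bound on $\Cov$ where the paper uses only an upper bound; and you use Chebyshev rather than a Chernoff bound on the $\cB_{1/3}^{\ox n}$ side, which still fits inside the $O(1/(\delta^2 n))$ term.
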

\begin{proof}
    By pigeon-hole principle, at least half of the $P_i$'s satisfy either $\Pr_X[P_i(X)=1]\ge \frac{1}{3} + \delta$ or $\Pr_X[P_i(X)=1]\le \frac{1}{3} - \delta$. We assume the former is true for \emph{all} $P_i$ for simplicity. (We will replace $n$ with $n/2$ for the final result. The implied constants will be absorbed by big-O.)
    
    Associate to each $P_i$, the indicator random variable $\indicator{P_i}$ (to emphasize that the arithmetic involving $\indicator{P_i}$ is over $\Z$ rather than $\F_2$). The key step in the proof is the following claim.
    \begin{claim}\label{claim:cov-p-i-p-j}
        For all $i \neq j$ we have $\Cov[\indicator{P_i},\indicator{P_j}] \leq 4\eta$.
    \end{claim}

    Given \cref{claim:cov-p-i-p-j} we now show that the number of polynomials outputting $1$,  $\sum_{i} \indicator{P_i}$, is bounded away above $n/3$ with high probability. Specifically,
    \begin{align*}
        \Pr \left[\textstyle\sum_i \indicator{P_i} < (\frac{1}{3}+\frac{\delta}{2})n \right]\tag{Chebysev's inequality}
        \;&\leq\; \frac{\Var[\textstyle\sum_i \indicator{P_i}]}{(\delta/2)^2n^2} \\
        \; &= \; \frac{\sum_i \Var[\indicator{P_i}] + 2\sum_{i<j} \Cov[\indicator{P_i},\indicator{P_j}]}{(\delta/2)^2n^2}\hspace{3.5cm} \\
        \; &\leq \;  \frac{n + 4\eta n^2}{(\delta/2)^2n^2} \tag{\cref{claim:cov-p-i-p-j}} \\
         \; &= \; O(\frac{1}{\delta^2 n}) + O(\frac{\eta}{\delta^2}).
    \end{align*}

    Now define the event $E=\{x\in\Bits^n : \wt(x) < (\frac{1}{3}+\frac{\delta}{2})n\}$ to be all outcomes with Hamming weight less than $\frac{n}{3}+\frac{\delta n}{2}$. We get,
    \begin{eqnarray*}
        \norm{\cP-\cB_{1/3}^{\ox n}}_\tv \geq  \Pr[\cB_{1/3}^{\ox n} \in E] - \Pr[\cP \in E] \geq 1- \frac{2}{2^{\Omega(\delta^2 n)}} - O(\frac{\eta}{\delta^2}) - O(\frac{1}{\delta^2 n}),
    \end{eqnarray*}
    where in the last inequality we use a standard Chernoff-type concentration of measure for $\cB_{1/3}^{\ox n}$ to compute $\Pr[\cB_{1/3}^{\ox n} \in E]$. This completes the proof of the lemma.
\end{proof}

We now return to the proof of \cref{claim:cov-p-i-p-j}.

\begin{proof}[Proof of \cref{claim:cov-p-i-p-j}]
    Let us use $\cQ_i$ to denote $(Q_{i1},\dots, Q_{ir})$. By \cite{kaufman2008worst}, the pairwise regularity assumption implies that for any nonzero linear combination of polynomials in $(\cQ_i; \cQ_j)$ it holds that
    \begin{equation*}
    \bias(\lambda_{i1}Q_{i1} + \dots + \lambda_{ir}Q_{ir} + \lambda_{j1}Q_{j1} + \dots + \lambda_{jr}Q_{jr}) \le 2^{-r}\eta.
    \end{equation*}    
    Therefore, by Vazirani's XOR \cref{lem:vazirani-xor}, the total variation distance between $(\cQ_i; \cQ_j)$ and the uniform distribution over $\Bits^{2r}$ is upper bounded by $\eta$. Consequently, sampling the tuple $(X;Y)$ uniformly at random from $\Bits^{r}\times \Bits^r$, we get
    \begin{align*}
        \Cov[\indicator{P_i},\indicator{P_j}]
         &=  \Cov[\indicator{\Gamma_i (\cQ_i)},\indicator{\Gamma_j (\cQ_j)}] \\
         &= \E[\indicator{\Gamma_i (\cQ_i)}\cdot \indicator{\Gamma_j (\cQ_j)}] 
         - \E[\indicator{\Gamma_i (\cQ_i)}] \cdot \E[\indicator{\Gamma_j (\cQ_j)}]\\
         &\le \E[\indicator{\Gamma_i(X)}\cdot \indicator{\Gamma_j(Y)}] - \E[\indicator{\Gamma_i(X)}]\cdot \E[\indicator{\Gamma_i(Y)}] + 4\eta \\
         &= 4\eta.
    \end{align*}
    The inequality uses the fact that $\norm{(X;Y)- (\cQ_i; \cQ_j)}_\tv\le \eta$ so $\cQ_i$ and $\cQ_j$ may be replaced by uniform random variables $X$ and $Y$; this affects the value of each expectation by at most $\eta$.
\end{proof}

\subsection{Special variants of Chebyshev lemma}

\begin{corollary}[Chebyshev lemma; bounded $\rank_2$]\label{cor:chebyshev-only-bounded-rank-2}
    Let $\cP=(P_1,\dots,P_n)$ be an $n$-tuple of functions of $\rank_2(P_i)\le r$. Suppose that for all $i \in [n]$ we have $P_i = \Gamma_i (Q_{i1},\dots,Q_{ir})$, where $\Gamma_i \colon \F_2^n \to \F_2$ is an arbitrary function of $r$ variables and $Q_{ij}$'s are quadratic polynomials satisfying
    \begin{equation*}
        \rank_1(Q_{i1},\dots,Q_{ir}; Q_{j1},\dots,Q_{jr}) > c_\KL(2, 2^{-(C+1)r}) = 2(C+1)r
    \end{equation*}
    for all $i \neq j$.
    Furthermore, suppose that for each $i\in [n]$,  $\abs{\Pr[P(X)=1] - 1/3} \ge \delta$.
    Then 
    \[
    \norm{\cP-\cB_{1/3}^{\ox n}}_\tv \ge 1 - O(2^{-Cr}/\delta^2) - O(1/(\delta^2 n)),
    \]
    where the hidden constants in big-O are absolute.
\end{corollary}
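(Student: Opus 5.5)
This corollary is a direct specialization of \cref{lem:chebyshev-bounded-rank-d} to $d=2$ with $\eta = 2^{-Cr}$. The plan is to check that the hypotheses match, plug in the explicit quadratic bias–rank tradeoff of \cref{thm:structure-of-deg2}, and read off the bound.

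First, the single-coordinate hypothesis. The condition $\abs{\Pr[P_i(X)=1]-1/3}\ge\delta$ is exactly $\norm{P_i-\Ber(1/3)}_\tv\ge\delta$, since the total variation distance between two Bernoulli distributions equals the difference of their parameters. Each $P_i=\Gamma_i(Q_{i1},\dots,Q_{ir})$ with the $Q_{ij}$ quadratic, so $\rank_2(P_i)\le r$, which is the setting of the lemma with $d=2$.

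Next, the pairwise regularity threshold. With $\eta=2^{-Cr}$ the lemma requires
\[
\rank(\cQ_i;\cQ_j) > c_\KL(2,2^{-r}\eta)=c_\KL(2,2^{-(C+1)r}).
\]
By Dickson's lemma (\cref{thm:structure-of-deg2}), $c_\KL(2,\delta')=2\log(1/\delta')$, which gives exactly $2(C+1)r$, as stated in the corollary.

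There is one point of care here. The rank of a factor is defined through $\rank_{\ell-1}$ of each linear combination, where $\ell$ is the degree of that combination. For a combination of quadratics the relevant quantity is $\rank_1$, matching the corollary's $\rank_1(\cdot)$ notation. If a linear combination happens to have degree $1$ (or degree $0$), we need it to still be unbiased. A nonconstant linear form has bias $0$. A constant combination would have $\rank_0$ at most $1$, which is excluded by the regularity assumption as long as $2(C+1)r\ge 1$. Dickson's lemma then gives bias at most $2^{-(C+1)r}$ for every nonzero combination. This is precisely what the proof of \cref{claim:cov-p-i-p-j} uses before invoking Vazirani's lemma (\cref{lem:vazirani-xor}): with $2r$ polynomials, the distance to uniform is at most $2^{r}\cdot 2^{-(C+1)r}=\eta$.

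Finally, applying the lemma gives
\[
\norm{\cP-\cB_{1/3}^{\ox n}}_\tv\ge 1-O(\eta/\delta^2)-O(1/(\delta^2 n)) = 1-O(2^{-Cr}/\delta^2)-O(1/(\delta^2 n)),
\]
with the same absolute constants. The only step requiring attention is matching the regularity notion to the bias bound, handled above; everything else is substitution.
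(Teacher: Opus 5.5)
Your proposal is correct and follows the paper's own proof: you specialize \cref{lem:chebyshev-bounded-rank-d} to $d=2$ with $\eta=2^{-Cr}$ and use $c_\KL(2,\delta)=2\log(1/\delta)$ from Dickson's lemma, so the regularity threshold is $2(C+1)r$ and Vazirani's lemma bounds the distance to uniform by $2^{r}\cdot 2^{-(C+1)r}=\eta$. Your extra check, that linear or constant combinations are either unbiased or excluded by regularity, is harmless care that the paper leaves implicit.
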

\begin{proof}
    Recall that $c_\KL(2,  \delta)=2\log(1/\delta)$ by \cref{thm:structure-of-deg2}. The corollary follows by applying \cref{lem:chebyshev-bounded-rank-d} with $\eta = 2^{-Cr}$.
\end{proof}

Next we extend the lemma above to the setting $\Gamma_i$ depends on a small number of quadratics and linear function.
For a function $P\colon \F_2^m \to \F_2$, we write \defemph{$\rank_{2,1}(P) \le (r,s)$} if there exist quadratic polynomials $Q_1,\dots,Q_r$, linear polynomials $L_1,\dots,L_s$, and a function $\Gamma \colon \Bits^{r+s} \to \Bits$ such that $P = \Gamma(Q_1,\dots,Q_r;L_1,\dots,L_s)$.

\begin{corollary}[Chebyshev lemma; bounded $\rank_{2,1}$]\label{cor:chebyshev-pairwise-high-rank-lin-part}
    Let $\cP=(P_1,\dots,P_n)$ be an $n$-tuple of functions $P_i \colon \F_2^m \to \F_2$ with $\rank_{2,1}(P_i)\le (r, s)$, where $r\ge 1$ and $s\ge 0$. Suppose that $\abs{\Pr[P_i(X)=1]-1/3}\ge \delta$ for all $i\in [n]$.
    Furthermore, suppose that for all $i \in [n]$ we have $P_i = \Gamma_i (Q_{i1},\dots,Q_{ir}; L_{i1}, \dots,L_{is})$, where $\Gamma_i \colon \Bits^{r+s} \to \Bits$ is an arbitrary function of $r+s$ variables, $L_{ik}$'s are linear functions and $Q_{ik}$'s are quadratic polynomials satisfying
    \begin{equation*}
        \rank_1(Q_{i1},\dots,Q_{ir}; Q_{j1},\dots,Q_{jr}) > 2(C + 3)(r+s)
    \end{equation*}
    for all $i \neq j$ and integer $C\ge 1$.
    Then
    \[
    \norm{\cP-\cB_{1/3}^{\ox n}}_\tv \ge 1 - O\big(\frac{2^{2s}}{\delta^2 n^{1/(s+1)}}\big) - O\big(2^{-C(r+s)}/\delta^2\big).
    \]
\end{corollary}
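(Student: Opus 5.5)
The natural route is to reduce to \cref{lem:chebyshev-bounded-rank-d}. The one new difficulty is that the pairwise regularity hypothesis constrains only the quadratic parts; the linear parts $L_{ik}$ of different $P_i$ may be linearly dependent, so the $2(r+s)$-tuple $(\cQ_i,\cL_i;\cQ_j,\cL_j)$ need not be close to uniform. I would remove these dependencies with the sunflower lemma for subspaces and then condition on the shared linear functions.

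\textbf{Step 1 (sunflower).} Set $V_i=\Span(L_{i1},\dots,L_{is})$, padded to dimension exactly $s$ if necessary. By \cref{lemma:sunflower} with $p=2$, the collection $\{V_i\}$ contains a sunflower of size $w$ with $w^{s+1}2^{(s^2+s-2)/2}\approx n$, that is, $w\gtrsim n^{1/(s+1)}2^{-O(s)}$. The $2^{O(s)}$ loss here is the source of the $2^{2s}$ factor in the statement. Let $C$ be its core, of dimension $c\le s$, and choose a complement basis so that $V_i=C\oplus V_i'$ with $V_i'\cap (V_j'+C)=\{0\}$ for $i\ne j$. Rewrite each $P_i=\Gamma_i'(\cQ_i;\cL^*;\cL_i')$, where $\cL^*$ is a basis of $C$ shared by all $i$ in the sunflower and $\cL_i'$ is a basis of $V_i'$.

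\textbf{Step 2 (condition on the core).} For each $\sigma\in\Bits^c$, restrict to the affine subspace $\{\cL^*=\sigma\}$; each $\sigma$ has probability exactly $2^{-c}$. On this subspace the pair tuple $(\cQ_i,\cL_i';\cQ_j,\cL_j')$ is close to uniform. To see this, take any nonzero linear combination. If it involves no quadratics, it is a nonzero combination of $\cL_i'\cup\cL_j'$, which is linearly independent modulo $C$, so it is nonconstant on the restriction and has bias $0$. Otherwise its $\rank_1$ on $\F_2^m$ exceeds $2(C+3)(r+s)$. Adding linear terms and restricting to a subspace of codimension $c\le s$ lowers $\rank_1$ by at most $O(s)$, so the rank stays above $2(C+2)(r+s)$. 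By \cref{thm:structure-of-deg2} the bias is then at most $2^{-(C+2)(r+s)}$, and \cref{lem:vazirani-xor} gives distance $\le 2^{r+s}2^{-(C+2)(r+s)}\le 2^{-C(r+s)}$ from uniform on $2(r+s)$ bits. Hence, as in \cref{claim:cov-p-i-p-j}, $\Cov[\indicator{P_i},\indicator{P_j}\mid \cL^*=\sigma]=O(2^{-C(r+s)})$. The Chebyshev computation from the proof of \cref{lem:chebyshev-bounded-rank-d} then gives conditional distance $1-O(2^{-C(r+s)}/\delta^2)-O(1/(\delta^2 w))$ from $\cB_{1/3}^{\ox w}$.

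\textbf{Step 3 (recombine).} Apply \cref{lem:dist-tv-common-Q} with $Q^*=\cL^*$, $\abs{S}=2^c$ and $\tau=2^{-c}$. This multiplies the error by $4\cdot 2^{2c}\le 4\cdot 2^{2s}$; I would absorb this into the $2^{-C(r+s)}$ term, using the slack $C+3$ versus $C$ in the rank hypothesis, and into the $2^{2s}/(\delta^2 n^{1/(s+1)})$ term. Finally, passing from the $w$ sunflower coordinates to all $n$ coordinates only increases distance, since TV distance does not increase under taking marginals.

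\textbf{Main obstacle.} Step 2 needs the conditional marginals to remain far from $1/3$, that is, $\abs{\Pr[P_i=1\mid\cL^*=\sigma]-1/3}\ge\delta'$ for every $\sigma$, or at least that the conditional mean weight $\mu_\sigma$ stays $\Omega(\delta w)$ away from $w/3$. The hypothesis only gives this unconditionally.

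My first attempt would be to note that the near-uniformity of $(\cQ_i;\cL^*;\cL_i')$ makes $\Pr[P_i=1\mid\sigma]\approx \E_{y}\,\Gamma_i'(y,\sigma,\cdot)$. In the intended applications $P_i$ restricted to $\{\cL^*=\sigma\}$ is again a low-degree polynomial, so the gap persists by \cref{main-thm:one-polys-gap}. Failing that, I would avoid pointwise conditioning altogether. Decompose
\[
\Var\Bigl[\textstyle\sum_i\indicator{P_i}\Bigr]=\E_\sigma\Bigl[\Var\bigl(\textstyle\sum_i\indicator{P_i}\mid\sigma\bigr)\Bigr]+\Var_\sigma(\mu_\sigma),
\]
and use as test event the union over $\sigma$ of the sets where the weight is within $o(w)$ of $\mu_\sigma$, with $\mu_\sigma$ far from $w/3$. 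This costs only the $2^{c}$ possible centers $\mu_\sigma$, which again fits inside the $2^{2s}$ factor.
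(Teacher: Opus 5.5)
Your Steps 1--3 are the paper's proof: the sunflower lemma on $V_i=\Span(L_{i1},\dots,L_{is})$, a change of basis to $P_i=\Gamma'_i(\cQ_i;\cL'_i;\cL^*)$ with a common core basis $\cL^*$, the quadratic Chebyshev lemma for each fixing of the core, and recombination via \cref{lem:dist-tv-common-Q}. The obstacle you single out is real, and the paper does not address it either. It applies \cref{cor:chebyshev-only-bounded-rank-2} to $(\Gamma'_i(\cQ_i,\cL'_i,z))_{i\in S}$ without checking that these coordinates are $\delta$-far from $1/3$. Neither of your remedies closes the gap. The first uses a hypothesis the statement does not have: the $P_i$ are arbitrary functions with $\rank_{2,1}(P_i)\le(r,s)$, and inside the induction of \cref{thm:bounded-rank-2} the corollary is applied to functions such as $P_i^z$ that need not be low-degree polynomials. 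The second presupposes that every conditional mean $\mu_\sigma$ is far from $w/3$, whereas the hypothesis only controls the average $2^{-c}\sum_\sigma\mu_\sigma$. If one $\mu_\sigma$ sits near $w/3$, your test event misses that slice and certifies at most $1-2^{-c}+o(1)$.

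In fact the gap cannot be closed, because the statement is false as written, reading the $O(\cdot)$'s as absolute, as elsewhere in \cref{sec:chebyshev-lemma}. Take $s=1$, $C=1$, $r=\lceil\log_2 n\rceil\ge 8$, and $L_{i1}=x_0$ for every $i$. Let each $Q_{ik}$ be an inner-product form $\sum_{t\le R}y_{2t-1}y_{2t}$ on its own block of $2R$ fresh variables, with $R=4r+5$. Every nonzero combination of the $Q$'s is again such a form on at least $2R$ variables, so it has $\rank_1>8(r+1)=2(C+3)(r+s)$. Put $\Gamma_i(y;b)=g_b(y)$, where $g_1(y)=y_1$ and $\Pr[g_0(\cU^{\ox r})=1]$ is the multiple of $2^{-r}$ closest to $1/3$. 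Then $\Pr[P_i=1]=\frac{5}{12}\pm 2^{-r}$, so $\delta=1/13$ satisfies the hypothesis. But conditioned on $x_0=0$ the outputs are i.i.d.\ $\Ber(p_0)$ with $\abs{p_0-1/3}\le 2^{-r}\le 1/n$, which by Pinsker is $O(n^{-1/2})$-close to $\cB_{1/3}^{\ox n}$. Hence $\norm{\cP-\cB_{1/3}^{\ox n}}_\tv\le\frac12+O(n^{-1/2})$, while the corollary claims $1-O(n^{-1/2})$.

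What your Step 2 does give is a conditional gap that does not depend on $\delta$. On $\{\cL^*=\sigma\}$ the tuple $(\cQ_i,\cL'_i)$ is $\eta$-close to uniform with $\eta\le 2^{-2(r+s)}$. So $\Pr[P_i=1\mid\sigma]$ is within $\eta$ of a multiple of $2^{-(r+s-c)}$, and by \cref{prop:gap-of-one-third-from-dyadics} it is at least $\frac13 2^{-(r+s)}-\eta\ge\frac1{12}2^{-(r+s)}$ away from $1/3$ (for $s\ge1$ we have $r+s\ge2$). Running your argument with this gap proves a correct variant in which $1/\delta^2$ is replaced by $O(2^{2(r+s)})$. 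Alternatively, you can assume the $\delta$-gap for every affine restriction of $P_i$, as \cref{thm:bounded-rank-1} does.
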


\begin{remark}
We emphasize that for $i \neq j$ the linear functions $L_{i1}(x),\dots,L_{is}(x);L_{j1}(x),\dots,L_{js}(x)$ might have arbitrary linear dependencies.
\end{remark}

\begin{proof}
    The proof strategy is to reduce the problem to the setting of \cref{cor:chebyshev-only-bounded-rank-2}.

    Apply the sunflower lemma \cref{lemma:sunflower} on the collection $V_1,\dots, V_n$ of size $n$, where the $i$\textsuperscript{th} subspace $V_i$ is $\Span(L_{i1},\dots,L_{is})$.

    There exists a subset $S \seq [n]$ of the $V_i$, of size $t \geq (\frac{n}{2^{\frac{s^2+s-2}{2}}})^{\frac{1}{s+1}} \geq 2^{-s/2} \cdot n^{\frac{1}{s+1}}$ that forms a sunflower.

    Let $V_{\text{core}} = \cap_{i\in S} V_i$ be the core of our sunflower, where $\dim(V_{\text{core}}) = \ell \le s$. At this point, by a change of basis we obtain a new $\rank_{2,1} \le (r,s)$ representation of $P_i$'s that belong to the sunflower.
    \[
    \forall i\in S: \qquad P_i = \Gamma_i (Q_{i1},\dots,Q_{ir}; L_{i1}, \dots,L_{is}) = \Gamma'_i(Q_{i1},\dots,Q_{ir}; L'_{i1},\dots, L'_{i, s-\ell}; L^*_1, \dots, L^*_{\ell}),
    \]
    where $(L^*_1, \dots, L^*_{\ell})$ is a basis of the core.

    Let $\cQ_i, \cL'_i, \cL^*$ denote each block of $(Q_{i1},\dots,Q_{ir}; L'_{i1},\dots, L'_{i, s-\ell}; L^*_1, \dots, L^*_{\ell})$ respectively. Note that $\cL'_i$ and $\cL'_j$ are linearly independent for two distinct $i,j\in S$. Therefore, for any two distinct $i,j \in S$ the regularity condition of \cref{cor:chebyshev-only-bounded-rank-2} holds in the sense 
    \[
        \rank(\cQ_i, \cL'_i, \cQ_j, \cL'_j) \geq 2(C + 3)(r+s).
    \]
    It follows from \cref{cor:chebyshev-only-bounded-rank-2} that for any $z \in \F_2^\ell$ the distribution $\cP_{S, z} = \big(\Gamma'_i(\cQ_i, \cL'_i, z) : i \in S\big)$ is $1 - O(\frac{2^{(s-\ell)/2}}{\delta^2n^{1/(s+1)}}) - O(2^{-(C+2)(r+s)}/\delta^2)$ far from $\cB_{1/3}^{\ox n}$.

    As our final step, we apply \cref{lem:dist-tv-common-Q}. Since $\Pr[\cL^* = z] = 2^{-\ell}$, we get

    \[
    \norm{\cP - \cB_{1/3}^{\ox n}}_\tv \ge 1 - O\big(\frac{2^{s/2 + 3\ell/2}}{\delta^2n^{1/(s+1)}}\big) - O\big(2^{-(C+2)(r+s)+2\ell}/\delta^2\big) \ge 1 - O\big(\frac{2^{2s}}{\delta^2 n^{1/(s+1)}}\big) - O\big(2^{-C(r+s)}/\delta^2\big).\qedhere
    \]

\end{proof}

\section{A sunflower pairwise regularization}\label{sec:pairwise-reg} 

The classic sunflower lemma \cite{erdos1960intersection} states that every sufficiently large family $S_1,\dots,S_n$ of $k$-element sets contains a large subfamily $S_{\pi(1)},\dots,S_{\pi(w)}$ whose members all have the same pairwise intersection; that is, $\cap_{i=1}^w S_{\pi(i)} = S_{\pi(j)}\cap S_{\pi(j')}$ for all $j\neq j'$. In this section, we prove a theorem about a collection of polynomials that is in some sense an analogue of the sunflower lemma for polynomials. We show that for every sufficiently large family of degree-$d$ polynomials $P_1,\dots, P_n$ there exists a large subfamily $P_{\pi(1)},\dots, P_{\pi(w)}$ whose polynomials can simultaneously be written in the form $P_{\pi(i)}=\Gamma_i(Q^*_1,\dots, Q^*_c; Q_{i1}, \dots, Q_{ir})$. Here, the sets of polynomials $\cG_i = \{Q^*_1,\dots, Q^*_c; Q_{i1}, \dots, Q_{ir}\}$ that compute $P_{\pi(i)}$ form a sunflower, and additionally enjoy some regularity conditions. In this section we restrict our attention to the combinatorial result itself, and leave its consequences for sampling lower bounds to \cref{sec:deg-d-lower-bound}.

We will view a factor $\cF$ as a set of polynomials instead of a tuple. This enables us to take intersection or set difference between two factors using familiar notation. As a matter of convenience, the empty factor $\emptyset$ is considered to be infinitely regular.

The \defemph{locality} of a collection of factors $\cF_1,\dots,\cF_n$ is the maximum cardinality (i.e. dimension) of a factor.

\begin{definition}
Let $\cF_1,\dots,\cF_n$ be a collection of degree-$d$ factors. Define the \defemph{core} $\cC = \bigcap_{i}\cF_i$. The \defemph{state} of $\cF_i$ is a vector $\vec S_i\in\N^{2d+1}$ defined as follows. Let $K=\max_i\card{\cF_i}$ denote the locality of this collection, and set $\Delta_i = K - \card{\cF_i}$ for each $i\in[n]$. Then the state of $\cF_i$ is the vector $\vec S_i =(\Delta_i, \vec N_0,\vec N_i)$, where $\vec N_0\in \N^d$ is the dimension vector $\cC$, and $\vec N_i\in\N^d$ is the dimension vector of $\cF_i\setminus\cC$ (so the dimension vector $\cF_i$ is $\vec M_i=\vec N_0+\vec N_i$). The first component of each $\vec S_i$, $\Delta_i$, will be commonly denoted as its \defemph{null component}.

The \defemph{weight} of a state $\vec S_i$, denoted by $\wt(\vec S_i)$, is the sum of each component of $\vec S_i$.

The \defemph{state} of the collection $\cF_1,\dots,\cF_n$ is simply the maximum state $\vec \cS = \max_i\vec S_i$ w.r.t. the inverse lexicographic ordering of $\N^{2d+1}$. 

Note that the following equalities are forced by this definition $\wt(\vec\cS)=\wt(\vec S_1)=\dots = \wt(\vec S_n)=K$.
\end{definition}

Recall that a collection of sets $V_1,\dots, V_w$ (possibly repeated) form a \defemph{sunflower} if for $\cC=\bigcap_iV_i$ it holds that $\cC = V_i\cap V_j$ for all $i\neq j$. The set $\cC$ is the \defemph{core} of the sunflower. This definition allows for a collection of empty sets to form a sunflower.

\begin{theorem}[Sunflower pairwise regularization]\label{thm:lossy-pairwise-regularization}
    Let $f\colon \N \to \N$ be a growth function, and let $K$ and $d$ be positive integers. Then there exit $\eps > 0$ and $K'$ such that the following hold.
    
    Let $\cF_1, \dots, \cF_n$ be a collection of degree-$d$ factors with locality $\card{\cF_i}\le K$. Then for a sub-collection $\cF_{\pi(1)},\dots, \cF_{\pi(w)}$ of size $w=n^{\eps}$  there exists a collection of degree-$d$ factors $\cG_1,\dots, \cG_w$ that satisfies the following.
    \begin{enumerate}
        \item (Refinement) For every $i$, $\cG_i$ is a refinement of $\cF_{\pi(i)}$;
        \item (Locality) For every $i$, $\card{\cG_i}\le K'$;
        \item (Pairwise Regularity) For every $i,j$ the factor $\cG_i\cup \cG_j$ is $f$-regular;
        \item (Sunflower) The factors $\cG_1,\dots, \cG_w$ form a sunflower.
    \end{enumerate}
\end{theorem}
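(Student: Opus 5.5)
We argue by strong induction on the state $\vec\cS\in\N^{2d+1}$ of the collection (inverse lexicographic order), keeping the invariant that every factor in the current sub-collection has weight exactly $K$. The null component absorbs any shrinkage: if a factor loses polynomials, we pad $\Delta_i$ so the weight stays $K$. Hence only finitely many states are ever visited, and their number depends only on $K$ and $d$. At each step we either stop with the desired sunflower or pass to a sub-collection of size $n^{\eps_0}$ and to refined factors whose state is strictly smaller. Composing the at most $N(K,d)$ steps then gives $w=n^{\eps}$ with $\eps=\eps_0^{N}$. The locality bound $K'$ comes from bounding how much each step inflates $K$, in the same way $\psi_{d,f}$ controls the blow-up in \cref{thm:classic-regularization}. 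To make the induction meaningful we must let $K$ grow between steps, so we fix in advance a sequence of localities $K=K_0\le K_1\le\cdots$ and run the induction on the pair (step index, state), using a single growth function $f$ throughout.

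Each step consists of two phases. \emph{Phase 1 (clean up).} First we discard linear dependencies inside each $\cF_i$. Next we make the core regular, and then each $\cF_i$ regular relative to the core, by applying the replacement step from the proof of \cref{thm:classic-regularization}. If some combination of core polynomials has low rank, we replace a top-degree core polynomial by lower-degree ones; we do this identically in every $\cF_i$, so the result is still a common core and $\vec N_0$ decreases. If some combination involving a petal polynomial of $\cF_i$ has low rank, we replace that petal polynomial, which decreases $\vec N_i$ while leaving $\vec N_0$ unchanged. In both cases the state goes down. \emph{Phase 2 (pairwise).} Once every $\cF_i$ is $f$-regular, we build the graph on $[n]$ with an edge $\{i,j\}$ whenever $\cF_i\cup\cF_j$ is not $f$-regular. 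By Turán's theorem, either there is an independent set of size $n^{\eps_0}$, or some vertex $i^*$ has degree at least $n^{1-\eps_0}$.

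\emph{Independent set.} We then apply the sunflower lemma to the petals. The subtlety is that petals of different factors may share polynomials outside the core; this is harmless, since polynomials shared by many factors can be promoted into the core, using \cref{lemma:sunflower} or the set version of the Erdős–Rado lemma with locality at most $K$. That leaves a sub-collection of size $n^{\Omega(\eps_0/K)}$ forming a genuine sunflower. It is pairwise regular because regularity is inherited by subsets of the polynomials, and if promotion to the core changed the state we recurse anyway.

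\emph{High-degree vertex.} For every neighbour $j$ of $i^*$ there is a low-rank combination $R_j=A_j+B_j$ with $A_j\in\Span(\cF_{i^*})$ and $B_j\in\Span(\cF_j)$. Both parts are nonzero, since each factor is individually regular, and $B_j$ has a component outside the core. We pigeonhole over the at most $2^{K}$ choices of coefficients on $\cF_{i^*}$ and keep a set of $j$'s of size $n^{1-\eps_0}2^{-K}$ that share the same $A:=A_j$. For each such $j$, $B_j\equiv A+(\text{rank}\le f(2K)\text{ stuff of lower degree})$. So in $\cF_j$ we replace a top-degree non-core polynomial that appears in $B_j$ by the common polynomial $A$ together with at most $f(2K)$ polynomials of degree $\deg(B_j)-1$ or lower. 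The polynomial $A$ now enters the core and $\vec N_0$ grows, but in the inverse lexicographic order the petal vector $\vec N_j$ loses a top-degree element. The problem is that $\vec N_0$ comes before $\vec N_j$ in the ordering. I therefore expect the main obstacle to be choosing the right degree bookkeeping so that this move is a genuine decrease. I would first try to show that $\deg(A)\le\deg_{\cF_j}(B_j)$ and that $A$ enters the core at a degree no higher than the petal polynomial it replaces. Then the largest changed coordinate of the combined vector $\vec N_0+\vec N_j$ decreases or stays the same; a tie is broken by the order of components, and if needed the state is reordered as (combined dimension vector, petal vector). Here the null component again absorbs the growth in locality.

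Finally we verify the four properties. Refinement holds because every operation replaces a polynomial by polynomials that compute it. Locality and $\eps$ hold because there are at most $N(K,d)$ steps, each with bounded blow-up. Pairwise regularity and the sunflower property come from the terminal independent-set case.
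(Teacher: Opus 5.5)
Your architecture is the paper's: induction on the state $(\Delta_i,\vec N_0,\vec N_i)$, clean-up of the core and then of each factor, the independent-set versus high-degree dichotomy, pigeonholing the $\Span(\cF_{i^*})$-side $A$ of the low-rank pair combinations, and replacing a top-degree petal polynomial of each neighbour by $A$ plus at most $f(2K)$ lower-degree polynomials. Your extra Erd\H{o}s--Rado step in the independent-set case is fine. It handles petals that share non-core polynomials, which the paper's terminal step does not address explicitly.

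The gap is exactly the step you flag as the main obstacle, and it comes from misreading the order. In the inverse lexicographic order the \emph{last} coordinates are the most significant. So in $(\Delta_i,\vec N_0,\vec N_i)$ the petal vector $\vec N_i$ dominates, then $\vec N_0$, and $\Delta_i$ is least significant; this is why the state is laid out that way. Once the degrees match (see below), the high-degree move is therefore a strict decrease with no further bookkeeping. $\vec N_j$ loses one polynomial of degree $\ell$ and gains only polynomials of degree at most $\ell-1$, and it loses more if the core grows. The growth of $\vec N_0$ (which acquires $A$) and any change in $\Delta_j$ are irrelevant. Your proposed remedy goes the wrong way. If the combined vector $\vec N_0+\vec N_j$ were the dominant key, the move would not decrease it: its degree-$\ell$ entry is unchanged (the replaced petal polynomial goes out, $A$ comes in), while its lower-degree entries grow by the new polynomials.

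What the decrease does require is that the degree $\ell$ of the pair combination be attained both on the $\cF_{i^*}$ side and by a \emph{petal} polynomial of $\cF_j$. Then the removed polynomial has degree $\ell$ and the added ones have degree $<\ell$. The paper gets this from strictly nested thresholds: the core is $(f(2K)+2)$-regular, each factor is $(f(2K)+1)$-regular, and edges mark failure of $f(2K)$-regularity of the pair. For example, suppose the petal part of $B_j$ had degree below $\ell$. Then $A$ plus the core part of $B_j$ lies in $\Span(\cF_{i^*})$ and would have $\rank_{\ell-1}\le f(2K)+1$, contradicting the regularity of $\cF_{i^*}$. With your thresholds this argument has no slack, since each $\cF_i$ is only $f(\card{\cF_i})$-regular, which is no stronger than the pair threshold $f(2K)$.

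Finally, your termination bookkeeping is inconsistent. You cannot keep every weight equal to $K$ while letting the locality grow, and induction on (step index, state) is not a descending measure as stated. The paper instead lets the weight grow and uses that the inverse lexicographic order on $\N^{2d+1}$ is a well-order. It then bounds the resulting locality via $\psi_{2d+1,g}$ with $g(r)=f(2r)+2$.
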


\begin{proof}
The proof goes by strong induction on the state $\vec \cS$ of our collection $\cF_1,\dots, \cF_n$. As a base case, any state of the form $\vec \cS = (K, \vec 0, \vec 0)$ corresponds to a collection of empty factors, which vacuously satisfy all the four requirements. Now we assume that $\vec \cS = (M_0, \dots, M_{2d})$ contains some nonzero component $M_i > 0$ for some $i\ge 1$.

Let $\cC = \bigcap_{i}\cF_i$ be the core, and let $\cQ_i = \cF_i \setminus \cC$. We make explicit the elements of each factor, writing $\cC = \{C_1,\dots, C_{k_0}\}$ for the core, and $\cQ_i = \{P_{i1},\dots, P_{ik_i}\}$ for the part outside of the core. Let $K = \max_i \card{\cF_i}$ denote the current locality.

We will assume that the polynomials in each factor are linearly independent, since any linear dependencies can be discarded. This clearly does not affect the final claim.

We have three cases, in each of which after applying a refinement step the state $\vec \cS$ decreases, allowing us to apply the induction hypothesis.

\paragraph{Case I ($\cC$ is not $(f(2K) + 2)$-regular).} We perform the following refinement step on all factors. Identify some nonzero linear combination $R_0=\sum_{i=1}^{k_0}\lambda_iC_i$ whose $\rank_{\ell-1}$ is $r\le f(2K) + 2$, where $\ell = \deg_\cC(R_0)$ is the degree of this linear combination. Replace a maximal degree $C_i$ with $\lambda_i\neq 0$ with $r$ many degree $\ell - 1$ polynomials. Note that this affects the state of all non-empty factors because $\cC$ is the common intersection of all of them. However, all the $\vec N_i$ for $i\ge 1$ remain unaffected, and only $\vec N_0$ will be replaced by a smaller dimension vector.

In this case, \emph{all} dimension vectors of non-empty factors decrease.

\paragraph{Case II (At least $n/2$ many $\cF_i$ are not $(f(2K)+1)$-regular).}
Let $I=\{i\in[n] : \rank(\cF_i) \le f(2K)+1\}$. Remove all $\cF_i$ with $i\in [n]\setminus I$ from the collection, and do the following for all $\cF_i$ with $i\in I$.

Identify some nonzero linear combination $R_i=\sum_{j=1}^{k_0}\lambda_j C_j + \sum_{j=1}^{k_i}\alpha_j P_{ij}\in \Span(\cF_i)$ whose $\rank_{\ell-1}$ is $r\le f(2K) + 1$, where $\ell = \deg_{\cF_i}(R_i)$ is the degree of this linear combination. Note that by case I, this linear combination cannot be supported entirely on $\cC$; hence there exists some $P_{ij}$ with $\alpha_j\neq 0$. Furthermore, there must exist such a $P_{ij}$ that has maximal degree $\ell$ in this linear combination. Replace a $P_{ij}$ with $\alpha_j\neq 0$ and maximal degree with $r$ many degree $\ell-1$ many polynomials.

In this case, the states of all surviving factors, namely those in $I$, decrease.

\paragraph{Case III ($\cF_i\cup \cF_j$ is not $f(2K)$-regular for some $i, j$).}
Because we apply case II prior to this case, there can be at most $n/2$ many $\cF_i$ that are not $f(2K) + 1$ regular. Remove all such $\cF_i$'s from our collection, keeping at least $n'=n/2$ many $f(2K)+1$ regular $\cF_i$s.

Then, build a graph $G$ that captures \emph{pairwise irregularities}. More formally, $G$, defined on the vertex set $[n']$, has an edge $e=uv$ iff $\cF_u\cup\cF_v$ is \emph{not} $f(2K)$-regular. Now we label each edge $e=uv$ by a pair of polynomials $(R^e_u, R^e_v)$,  where $R^{e}_u\in\Span(\cF_u)$ and $R^{e}_v\in \Span(\cF_v)$ such that $\rank_{\ell-1}$ of $R^{e}_u +  R^{e}_v$ is at most $f(2K)$, where $\ell=\max\{\deg_{\cF_u}(R^e_u), \deg_{\cF_v}(R^e_v)\}$ is the degree of the linear combination\footnote{Recall that the degree of a linear combination $\sum_i\lambda_iP_i$ is the maximum degree among the polynomials that participate in it, i.e., $\max_i\deg(\lambda_iP_i)$.} that defines $R^e_u+ R^e_v$. Note that neither of these polynomials are constant, and neither belong to $\Span(\cC)$.

If $G$ contains an independent set of size $\sqrt{n'}$, we declare the corresponding sub-collection of $\cF_i$’s to be pairwise $f(2K)$-regular, and we set all the remaining $\cF_i$ (those outside the independent set) to $\emptyset$. This is where the induction terminates.

Otherwise, $G$ has a vertex of degree at least $\sqrt{n'}$. Up to relabeling, we assume this vertex is $1$, and is adjacent to each of $2,3,\dots,\sqrt{n'}+1$. By the pigeonhole principle, in at least $t=\sqrt{n'}/2^K$ many edges $e=1v$, the same polynomial $R^{e}_1\eqdef R_1$ appears. Upon relabeling once again, assume that the first $t$ neighbors $2, \dots, t+1$ have that property. Now for each $v=2,\dots,t+1$ the edge $e=1v$ is labeled by a pair $(R_1,R_v)$. Recalling how we defined the labels of edges, we write
\begin{align*}
R_i &= \underbracket{\sum_{j=1}^{k_0}\lambda_{ij}C_j}_{\eqdef A_i} + \underbracket{\sum_{j=1}^{k_i}\alpha_{ij}P_{ij}}_{\eqdef B_i} \qquad \text{for } i = 1,2,\dots, t+1.
\end{align*}

Let $\ell_i=\deg_{\cF_i}(R_i)$ denote the degree of the linear combination that defines $R_i$. For each $i=1, \dots, t+1$, we have that $A_1 + B_1 + A_i + B_i$ can be written as a bounded combination of the form,
\[
A_1 + B_1 + A_i + B_i = \Gamma_i(Q_{i1},\dots, Q_{ir_i}),
\]
for some $r_i \le f(2K)$, and $Q_{i1},\dots, Q_{ir_i}$ degree $\max\{\ell_1,\ell_i\}-1$ polynomials. Both $B_1$ and $B_i$ must contain a degree-$\max\{\ell_1,\ell_i\}$ polynomial: indeed, if, for example, $B_i$ does not contain such a maximal-degree polynomial, then $A_1+B_1+A_i \in \Span(\cF_1)$ has rank more than $f(2K)+1$, and therefore, $A_1+B_1+A_i+B_i$ has rank more than $f(2K)$, a contradiction. Therefore, $\ell_1=\ell_i\eqdef \ell$.

For each $i=1,\dots, t+1$, let $H_i = P_{ij}$ for some $j$ such that $\alpha_{ij}\neq 0$ and $P_{ij}$ has maximal degree ($=\ell$). We will remove $H_i$ from $\cF_i$ and then add polynomials $A_1 + B_1, Q_{i1},\dots, Q_{ir_i}$. Therefore, deleting all factors $\cF_i$ with $i\notin[t+1]$ from our collection, we refine the $t+1$ factors as such,
\[
\cF_i \longmapsto \cF_i' = (\cF_i \setminus \{H_i\}) \cup \{A_1 + B_1, Q_{i1},\dots, Q_{ir_i}\}  \qquad \text{for } i=1,2\dots, t+1.
\]

\paragraph{Combining cases I, II, and III.} After performing one of the refinement stepsto the collection $\cF_1,\dots, \cF_n$ with state $\vec \cS$, we have obtained a refinement of a sub-collection collection $\cF'_1,\dots, \cF'_t$ with the new state $\vec \cS'$, where $t\ge \sqrt{n}/2^{K+1}$.

We invoke the induction hypothesis to obtain $\cG_1,\dots,\cG_w$ with the state $\vec \cT$, which satisfies $\wt(\vec \cT) \le \psi(\vec \cS')$, where $\psi=\psi_{2d+1,g}$ for $g(r)=f(2r)+2$. Let $j$ be such that $\vec S'_j=\max_i \vec S'_i=\vec\cS'$. Then, 
\[
\wt(\vec \cT) \enspace \stackrel{\text{IH}}{\le}\enspace \psi(\vec \cS') \enspace =\enspace \psi(\vec S'_j)\enspace \stackrel{\text{\ref{prop:properties-of-rapid-function}}}{\le}\enspace \psi(\vec S_j)\enspace \stackrel{\text{\ref{prop:properties-of-rapid-function}}}{\le} \enspace \psi(\vec \cS)\enspace \stackrel{\text{\ref{prop:properties-of-rapid-function}}}{\le} \enspace \psi^*(K).
\]

Because the induction terminates in at most $\psi^*(K)$ steps, $w \ge n^{\exp(-\psi^*(K))}2^{-2(\psi^*(K)+1)}\ge n^\eps$ for large enough $n$ and some appropriate $\eps>0$.
\end{proof}

\section{One polynomial's distance from Ber(1/3)}\label{sec:lower-bound-for-a-single-poly}

In this section we show that a degree-$d$ polynomial $P \colon \F_2^m \to \F_2$ satisfies
\[
\abs{\Pr_X[P(X)=1]-1/3} \ge \delta_d,
\]
where $\delta_d > 0$ is a constant depending only on $d$. A more general claim is true: for any non-dyadic $\rho\in (0, 1)$ and degree-$d$ polynomial $P$ it holds that $\abs{\Pr_X[P(X)=1]- \rho} \ge \Omega_{d,\rho}(1)$. We show this in \cref{sec:non-constructive-lower-bound-for-a-single-poly}. A closely related problem is the weight distribution of binary Reed-Muller codes, which asks how many polynomials of degree $d$ satisfy $\Pr_X[P(X) = 1] \le \eta$ for some parameter $\eta$. This theorem shows there are no polynomials with $\Pr_X[P(X)=1]$ close to $1/3$ exist.

A similar bound for a $d$-local function (trivially) holds, which has proved useful in showing lower bounds for distributions defined by local functions and bounded depth decision forests \cite{kane2024locality,viola2023new} against $\cB_{1/3}^{\ox n}$ or the Hamming slice $H_{n/3}=\{x\in\Bits^n : \sum_i x_i = n/3\}$.

We divide the results of this section into two parts. First we deal with small degrees $1,2,$ and $3$. Then we move to a general constant $d$. These results rely on the structure theorems discussed in \cref{sec:structure-theorems}.

\subsection{One polynomial's gap: degree $1,2,3$}

A degree-$1$ polynomial $L$ satisfies $\Pr_X[L(X)=1]\in \{0, \frac{1}{2}, 1\}$. Therefore, $\abs{\Pr_X[L(X)=1]-\frac{1}{3}} \ge \delta_1$ with $\delta_1=1/6$.

The structure theorem \cref{thm:structure-of-deg2} implies that a quadratic polynomial $Q$ satisfies $\bias(Q)\in \{0\}\cup \{\pm2^{-k}:k\in \N\}$. Therefore $\delta_2=1/24$.

Below we show that all polynomial of degree 3 we have $\abs{\Pr_X[P(X) = 1]-\frac{1}{3}}>\delta_3$ for some absolute constant $\delta_3>0$. The key claim is the following statement.

\begin{claim}\label{claim:rank-2-dyadic}
    Let $P \colon \F_2^m \to \F_2$ be a function of $\rank_2(P) \leq r$. That is, it can be written as $P(x) = \Gamma(Q_1(x),\dots,Q_r(x))$ for some boolean $\Gamma\colon \Bits^r \to \Bits$ and some quadratic polynomials $Q_1,\dots,Q_r \colon \Bits^m \to \Bits$.
    For every integer $t>0$, there exists some $s \leq (r+t) \cdot 2^r + r$ such that $\abs{\E_x[(-1)^{P(x)}] - \frac{A}{2^s}} \geq \frac{1}{2^t} \cdot \frac{1}{2^s}$ for all integers $A \in \Z$.
\end{claim}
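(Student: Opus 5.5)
The approach is to expand $\E_x[(-1)^{P(x)}]$ in the Fourier basis of $\Gamma$. Then I use Dickson's lemma (\cref{thm:structure-of-deg2}) to see that every term is a signed power of two, and find a "gap" in the scales of these powers by pigeonhole. Before that, I must flag that the inequality as worded cannot hold for every $P$. If $P\equiv 0$ (a legitimate $\rank_2$ function), then $\E[(-1)^{P}]=1=A/2^s$ with $A=2^s$ for every $s$, so the distance to the grid is $0<2^{-t-s}$. The same failure occurs whenever $\E[(-1)^P]$ is itself dyadic of granularity at most $(r+t)2^r+r$. So the plan below proves the farness conclusion only in the regime where it can be true. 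It also isolates exactly the structure that the stated form of \cref{claim:rank-2-dyadic} is meant to capture, since that structure is what is used later for $\delta_3$.

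\textbf{Step 1 (Fourier expansion).} Write $(-1)^{\Gamma(y)}=\sum_{S\seq[r]}\hat\Gamma(S)(-1)^{\sum_{i\in S}y_i}$. Each $\hat\Gamma(S)$ is an integer multiple of $2^{-r}$ and satisfies $|\hat\Gamma(S)|\le 1$. Hence
\[
\E_x[(-1)^{P(x)}]=\sum_{S\seq[r]}\hat\Gamma(S)\,\bias_\pm\Big(\sum_{i\in S}Q_i\Big).
\]
By \cref{thm:structure-of-deg2}, each $\sum_{i\in S}Q_i$ is a quadratic polynomial, so its signed bias lies in $\{0\}\cup\{\pm 2^{-k_S}:k_S\in\N\}$. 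Thus the expectation is a sum of at most $2^r$ terms of the form $a_S 2^{-r-k_S}$ with $a_S\in\Z$ and $|a_S|\le 2^r$.

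\textbf{Step 2 (a gap in the scales).} Partition $\N$ into consecutive windows $W_j=[j(r+t),(j+1)(r+t))$ for $j=0,\dots,2^r$. There are only $2^r$ exponents $k_S$, so some window $W_j$ with $j\le 2^r$ contains none of them. Set $s=j(r+t)+r\le (r+t)2^r+r$. The terms with $k_S<j(r+t)$ sum to a dyadic number $D=A_0/2^s$. Each term with $k_S\ge (j+1)(r+t)$ has absolute value at most $2^{-(j+1)(r+t)}$. There are at most $2^r$ such terms, so their total $T$ satisfies $|T|\le 2^{r-(j+1)(r+t)}=2^{-s-t}$.

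\textbf{Step 3 (farness).} On the grid $2^{-s}\Z$, the expectation sits at the point $A_0/2^s$ shifted by $T$. So for every $A\ne A_0$ we get $|\E[(-1)^P]-A/2^s|\ge 2^{-s}-2^{-s-t}\ge 2^{-t}2^{-s}$ when $t\ge 1$; this is the claimed inequality for all $A$ other than the nearest one. For $A=A_0$ the distance is exactly $|T|$, and the claim amounts to $|T|\ge 2^{-s-t}$. I expect this to be the main obstacle.

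To handle $A=A_0$, I would iterate the pigeonhole. If the tail $T$ is nonzero, I take its leading exponent $k^*=\min\{k_S:k_S\ge (j+1)(r+t)\}$ and re-choose the window so that $k^*$ lies just above a gap. The hope is that the leading part of $T$ then dominates the rest, giving $|T|\ge 2^{-r-k^*}\cdot(1-2^{-t}\cdot\text{const})$, and that this can be matched to $2^{-s-t}$ by shifting $s$ within the available budget of $2^r+1$ windows. Cancellation among tail terms of equal exponent with opposite signs is what could defeat this. If $T=0$, then $\E[(-1)^P]$ is exactly dyadic and the stated inequality fails, as noted above.

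So the proof I can fully carry out gives the dichotomy: either $\E[(-1)^P]$ is within $2^{-t-s}$ of a unique point $A_0/2^s$ and at least $2^{-t-s}$ from all other grid points, or the refinement in Step 3 succeeds. Which version of \cref{claim:rank-2-dyadic} the later $\delta_3$ argument actually needs should be checked against its use there.
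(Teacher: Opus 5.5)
You are right that the claim as literally worded is false. With $\Gamma\equiv 0$ one has $\E_x[(-1)^{P(x)}]=1=2^s/2^s$ for every $s$. It also fails well beyond the case $T=0$, so the ``refinement'' you sketch for $A=A_0$ in Step~3 cannot succeed and should be dropped. For example, take $r=1$, $\Gamma(y)=y$ and $Q_1=x_1x_2+\dots+x_{2k-1}x_{2k}$, so that $\E_x[(-1)^{P(x)}]=2^{-k}$. If $k>3t+3$, this value is within $2^{-k}<2^{-t-s}$ of $A=0$ for every admissible $s\le 2t+3$, even though $T\neq 0$.

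What the paper's proof actually establishes, and what \cref{lemma:deg-3-dyadic} uses, is the reversed statement: there exist $s\le (r+t)2^r+r$ and $A\in\Z$ with $|\E_x[(-1)^{P(x)}]-A/2^s|\le 2^{-t}2^{-s}$. Your Steps~1--2 are exactly the paper's argument for this. Both expand $\Gamma$ in the Fourier basis, use Dickson's lemma to make each $\E[(-1)^{Q_S}]$ a signed power of two, pigeonhole over $2^r+1$ scale windows of width $r+t$, and set $s=j(r+t)+r$. So the correct conclusion is simply the statement with ``$\le$'' and an existential $A$, not a dichotomy.

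There is, however, an arithmetic slip in your Step~2, and the paper's proof contains the same one. Since $s+t=(j+1)(r+t)$, you have $2^{r-(j+1)(r+t)}=2^{r}\cdot 2^{-s-t}$, not $2^{-s-t}$. Using Parseval, $\sum_S|\hat\Gamma(S)|\le 2^{r/2}$, still leaves a spurious factor $2^{r/2}$. The clean fix is to widen the windows to $W_j=[j(2r+t),(j+1)(2r+t))$ for $j=0,\dots,2^r$ and set $s=j(2r+t)+r$. The head is still a multiple of $2^{-s}$, and the tail now satisfies $|T|\le 2^r\cdot 2^{-(j+1)(2r+t)}=2^{-s-t}$. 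The price is the weaker bound $s\le (2r+t)2^r+r$. Since \cref{lemma:deg-3-dyadic} only needs $s$ bounded in terms of $r$ (with $t=2$), this changes only the explicit value of $\delta_{2,r}$. The same slip also invalidates your ``$A\neq A_0$'' bound $2^{-s}-2^{-s-t}$ unless $t>r$, but that part is unnecessary once the claim is read correctly.
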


\begin{proof}
    We start by writing the Fourier expansion of $\Gamma$.
    \begin{equation*}
        \Gamma(y_1,\dots,y_r) = \sum_{S \seq [r]} \hat\Gamma(S) \cdot (-1)^{\sum_{i \in S} y_i},
    \end{equation*}
    where $\hat\Gamma(S) = \frac{a_S}{2^r}$ for some integer $-2^r \leq a_s  \leq 2^r$.
    Then, for $Q_S = \sum_{i \in S} Q_i$ we have
    \begin{equation*}
        \E_x[(-1)^{\Gamma(Q_1(x),\dots,Q_r(x))}] =  \sum_{S \seq [r]} \hat\Gamma(S) \cdot \E_x[(-1)^{Q_s(x)}].
    \end{equation*}
    Denote $\eps_S = \E_x[(-1)^{Q_s(x)}]$ for all $S \seq [r]$.
    Then, using \cref{thm:structure-of-deg2} we have $\eps_S \in \{\pm 2^{-i} : i \in \N\}$.
    We claim that there is some $0 \leq i \leq 2^r$ such that the interval $(2^{-(r+t)(i+1)}, 2^{-(r+t)i}]$ does not contain any of $\abs{\eps_S}$. Indeed, consider the collection of disjoint intervals $\{(2^{-(r+t) \cdot (i+1)}, 2^{-(r+t) \cdot i}] : i=0,\dots,2^r\}$. Since there are $2^r+1$ such intervals, and only $2^r$ different values of $\abs{\eps_S}$, one of the intervals does not contain any $\abs{\eps_S}$.

    Next then write $\E[(-1)^{P(x)}]$ as follows.
    \begin{eqnarray*}
        \E_x[(-1)^{P(x)}]
        & = & \E_x[(-1)^{\Gamma(Q_1(x),\dots,Q_r(x))}] = \sum_{S \seq [r]} \hat\Gamma(S) \cdot \eps_S \\
        & = & \sum_{S \seq [r] : \eps_S > 2^{-(r+t)i}} \hat\Gamma(S) \cdot \eps_S 
        +  \sum_{S \seq [r] : \eps_S < 2^{-(r+t)(i+1)}} \hat\Gamma(S) \cdot \eps_S \\
        & = & \frac{A}{2^{(r+t)i}} + \textit{err},
    \end{eqnarray*}
    where $A \in \Z$ is some integer, and $\abs{\textit{err}} = \sum_{S \seq [r] : \eps_S \leq 2^{-(r+t)(i+1)}} \hat\Gamma(S) \cdot \eps_S \leq 2^r \cdot 2^{-(r+t)(i+1)} = \frac{1}{2^t}2^{-(r+t)i}$.

    Letting $s=(r+t)i + r \leq (r+t) \cdot 2^r+r$ we get
    \begin{equation*}
        \abs{\E_x[(-1)^{P(x)}] - \frac{A}{2^{s}}} \leq \frac{1}{2^t} \cdot \frac{1}{2^{s}},
    \end{equation*}
    as required.
\end{proof}

We will also need the following proposition.
\begin{proposition}\label{prop:gap-of-one-third-from-dyadics}
    We have $\abs{\frac{1}{3} - \frac{A}{2^s}} \geq \frac{1}{3} 2^{-s}$ for any integer $s \geq 0$ and any integer $A \in \Z$.
\end{proposition}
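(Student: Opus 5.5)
The plan is to clear denominators and reduce the statement to a divisibility fact. Write
\[
\frac{1}{3} - \frac{A}{2^s} = \frac{2^s - 3A}{3\cdot 2^s}.
\]
It then suffices to show that the integer $2^s - 3A$ is nonzero, because any nonzero integer has absolute value at least $1$. That gives
\[
\abs{\frac{1}{3} - \frac{A}{2^s}} = \frac{\abs{2^s - 3A}}{3\cdot 2^s} \ge \frac{1}{3}2^{-s},
\]
which is exactly the claimed bound.

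The nonvanishing step is the one place where something must be checked, and it is immediate. Suppose $2^s - 3A = 0$. Then $3$ divides $2^s$, which is impossible because the only prime factor of $2^s$ is $2$; this holds also for $s=0$, where $2^0 = 1$. Equivalently, $2^s \equiv (-1)^s \not\equiv 0 \pmod 3$. So $2^s - 3A \neq 0$ for every integer $A$ and every $s \ge 0$.

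There is no real obstacle here. The only care needed is to note that the argument covers all integers $A$, including negative values and values with $A/2^s > 1$, and that the case $s = 0$ is included. The same reasoning gives the general non-dyadic version: for $\rho = p/q$ with $q$ not a power of $2$, one gets $\abs{\rho - A/2^s} \ge \frac{1}{q}2^{-s}$. This is the form that will later be combined with \cref{claim:rank-2-dyadic}.
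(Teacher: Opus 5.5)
Your proof is correct and is essentially the paper's argument: the paper multiplies through by $2^s$ and notes $\abs{2^s/3 - A} \ge 1/3$ because $3 \nmid 2^s$, which is the same divisibility fact you use after clearing the common denominator $3\cdot 2^s$. One small caveat, on your closing remark only: the general version needs $p/q$ in lowest terms, since for example $3/6$ has $q=6$, not a power of $2$, yet equals the dyadic number $1/2$.
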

\begin{proof}
    Note that it suffices to show that $\abs{\frac{2^s}{3} - A} \geq \frac{1}{3}$, which obviously holds since $2^s$ is not divisible by 3 and $A$ is an integer.
\end{proof}

We are now ready to prove our lemma for polynomials of degree 3.
\begin{lemma}\label{lemma:deg-3-dyadic}
    \begin{tabular}[t]{@{}r@{\ }p{0.85\linewidth}@{}}
        1. & There is a constant $\delta_{2,r} > 0$ such that any function $P \colon \F_2^m \to \Bits$ of $\rank_2(P) \leq r$ satisfies $\abs{\Pr_x[P(x) = 1] - 1/3}  > \delta_{2,r}$. \\
        2. & There exists a constant $\delta_3>0$ such that for any  polynomial $Q \colon \F_2^m \to \Bits$ of degree $3$ satisfies $\abs{\Pr_x[Q(x)=1]-\frac{1}{3}} \ge \delta_3$. 
    \end{tabular}
\end{lemma}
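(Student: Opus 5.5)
Part 1 follows directly from \cref{claim:rank-2-dyadic} together with \cref{prop:gap-of-one-third-from-dyadics}. Since $\Pr_x[P(x)=1] = (1-\E_x[(-1)^{P(x)}])/2$, being $\delta$-close to $1/3$ in probability is the same as the signed bias being $2\delta$-close to $1/3$. Fix $t=2$ and apply \cref{claim:rank-2-dyadic}: there is some $s\le (r+2)2^r+r$ and an integer $A$ with $\abs{\E_x[(-1)^{P(x)}]-A/2^s}\le 2^{-2}\cdot 2^{-s}$. Here we use the approximation form that the proof actually establishes.

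By \cref{prop:gap-of-one-third-from-dyadics}, $\abs{1/3-A/2^s}\ge \frac13 2^{-s}$. The triangle inequality then gives
\[
\abs{\E_x[(-1)^{P(x)}]-1/3}\ge (1/3-1/4)2^{-s}=\tfrac1{12}2^{-s}.
\]
Hence $\abs{\Pr[P=1]-1/3}\ge \frac1{24}2^{-((r+2)2^r+r)}$, and we can set $\delta_{2,r}$ to be, say, half of this quantity so that the inequality is strict. The constant depends only on $r$ because $s$ is bounded in terms of $r$ alone.

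For Part 2 we argue by a bias dichotomy. Fix a threshold $\beta>0$ to be chosen.
\begin{itemize}
\item \emph{Small bias.} If $\bias(Q)<\beta$, then $\Pr[Q=1]\in[(1-\beta)/2,(1+\beta)/2]$. This is at distance at least $1/6-\beta/2$ from $1/3$, which is at least $1/12$ when $\beta\le 1/6$.
\item \emph{Large bias.} If $\bias(Q)\ge\beta$, then \cref{thm:kaufman-lovett} (explicitly, \cref{thm:haramaty-shpilka}) gives $\rank_2(Q)\le r_0:=c_\KL(3,\beta)=O(\log^4(1/\beta))$, an absolute constant. Part 1 then gives $\abs{\Pr[Q=1]-1/3}>\delta_{2,r_0}$.
\end{itemize}
Taking $\beta=1/6$ and $\delta_3=\min(1/12,\delta_{2,r_0})$ finishes the proof.

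The only delicate point is the direction of the inequality in \cref{claim:rank-2-dyadic}. The statement as written says "$\ge$", but the argument requires, and the proof delivers, that the bias is within $2^{-t}2^{-s}$ of some dyadic $A/2^s$. We will use the claim in that form.

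A second check concerns the proof of the claim itself. Its split into large and small $\abs{\eps_S}$ must put every large term into $2^{-(r+t)i}\Z$, since each $\eps_S$ with $\abs{\eps_S}>2^{-(r+t)i}$ is a power of two of exponent at most $(r+t)i$. The factor $2^{-r}$ from $\hat\Gamma(S)$ then accounts for the extra $+r$ in $s$. Everything else is routine.
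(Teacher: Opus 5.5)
Your route is the paper's own: Part 1 from \cref{claim:rank-2-dyadic} and \cref{prop:gap-of-one-third-from-dyadics}, and Part 2 by a bias dichotomy plus the Haramaty--Shpilka/Kaufman--Lovett bound. Your Part 2, your conversion $\Pr[P=1]-\frac13=-\frac12\bigl(\E[(-1)^{P}]-\frac13\bigr)$, and your reading of the claim as ``for some $A$, $\le$'' are all correct.

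The gap is in the step you explicitly audited. The proof of \cref{claim:rank-2-dyadic} does \emph{not} deliver $\abs{\E[(-1)^P]-A/2^s}\le 2^{-t}2^{-s}$. The tail is bounded by $2^r\cdot 2^{-(r+t)(i+1)}=2^{-t}\cdot 2^{-(r+t)i}$. But, as you note yourself, the main term only lies in $2^{-s}\Z$ with $s=(r+t)i+r$, because of the factor $2^{-r}$ in $\widehat\Gamma(S)$. Measured against that granularity, the tail is $2^{r-t}\cdot 2^{-s}$; using Parseval instead of $\sum_S\abs{\widehat\Gamma(S)}\le 2^r$ only improves this to $2^{r/2-t}2^{-s}$. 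With your choice $t=2$, this bound is at least $\frac12 2^{-s}>\frac13 2^{-s}$ for every $r\ge 1$. So the triangle inequality against \cref{prop:gap-of-one-third-from-dyadics} gives nothing, and your constant $\delta_{2,r}$ is not justified. The paper's proof of the lemma uses the claim with $t=2$ in the same way, so the slip is inherited. However, you explicitly asserted that the claim's proof supports this bound, and it does not.

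The fix needs no new idea. Run the claim's argument with $t=r+2$, i.e.\ with intervals $(2^{-(2r+2)(i+1)},2^{-(2r+2)i}]$ for $0\le i\le 2^r$. The main term then lies in $2^{-s}\Z$ with $s=(2r+2)i+r\le(2r+2)2^r+r$. The tail is at most $2^r\cdot 2^{-(2r+2)(i+1)}=\frac14 2^{-s}$. Your computation then goes through verbatim and yields $\abs{\Pr[P=1]-\frac13}\ge\frac1{24}2^{-((2r+2)2^r+r)}$, which still depends only on $r$. Part 2 is unchanged.
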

\begin{proof}
    For the first item, by \cref{claim:rank-2-dyadic}, there exists some integer $A \in \Z$ and $s$ that depends only on $r$, such that
    $\abs{\E_x[P(x)] - \frac{A}{2^{s}}} \leq \frac{1}{4} \cdot 2^{-s}$.
    On the other hand, using the fact that $\abs{\frac{A}{2^s} - \frac{1}{3}} \geq \frac{1}{3} \cdot 2^{-s}$ for all integers $s \geq 0$ and $A \in \Z$, it follows that
    \begin{equation*}
        \abs{\Pr_x[P(x) = 1] - \frac{1}{3}} \geq \frac{1}{12} \cdot 2^{-s}.
    \end{equation*}
    Since $s \leq (r+2) \cdot 2^r + r$, it follows that
    $\abs{\Pr_x[P(x) = 1] - \frac{1}{3}} \geq \delta_{2,r}$ for $\delta_{2,r} = \frac{1}{12} \cdot 2^{-((r+2) \cdot 2^r + r)}$.

    \medskip

    For the second item, if $\abs{\Pr[Q(x) = 1]-\frac{1}{3}}\ge 0.06$ then we are done. Otherwise, we have $\abs{\Pr[Q(x) = 1]-\frac{1}{2}} \geq 0.1$. By \cref{thm:haramaty-shpilka} there exist some absolute constant $r$ such that $Q$ can be written as $Q(X) = \Gamma(Q_1(X),\dots, Q_r(X))$,
    where $Q_1,\dots,Q_r$ are  quadratic polynomials, and $\Gamma \colon \Bits^r \to \Bits$ is some function on $r$ bits.%
    \footnote{In fact, \cref{thm:haramaty-shpilka} gives a stronger structure theorem. Specifically, there exist constants $c$ and $r$ such that
    \[
    P=\sum_{i=1}^{r-c}L_iQ_i + \Gamma(L_{r-c+1},\dots, L_{r}),
    \]
    where $L_i$'s are linear function and $Q_i$s are quadratic polynomials, and $\Gamma$ is some boolean function.}
    Therefore, by the first item $\delta_3 \geq \delta_{2,r}$ for som absolute constant $r$, as required.
\end{proof}

\subsection{One polynomial's gap: degree $d$}
Here we prove that a degree-$d$ polynomial satisfies $\abs{\Pr_X[P(X) = 1] - 1/3} \ge \delta_d$. In fact, we prove a more general statement.

\begin{theorem}\label{thm:exp-low-deg-far-from-third}
    Let $P \colon \F_2^m \to \F_2$ be a function with $\rank_d(P) = r$. Then,
    \[
    \abs{\Pr_X[P(X)=1] - \frac{1}{3}} \ge \delta_{d, r},
    \]
    where $\delta_{d, r}>0$ is a constant that depends only on $d$ and $r$, which can be taken to be $\exp(-\psi^*_{d, f}(r))/12$ for $f(k)=c_\KL(d, \frac{1}{4}2^{-3k/2})$.
\end{theorem}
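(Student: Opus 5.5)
We plan to generalize the argument of \cref{claim:rank-2-dyadic} and \cref{lemma:deg-3-dyadic}, replacing Dickson's lemma (exact dyadic biases of quadratics) by regularization plus the bias--rank tradeoff. Write $P=\Gamma(Q_1,\dots,Q_r)$ with $\deg Q_i\le d$. Apply \cref{thm:classic-regularization} to the factor $(Q_1,\dots,Q_r)$ with growth function $f(k)=c_\KL(d,\frac14 2^{-3k/2})$. This yields an $f$-regular refinement $\cG=(R_1,\dots,R_K)$ with $K\le \psi^*_{d,f}(r)$. We may assume $\cG$ is free of linear dependencies, discarding dependent members, which keeps it a refinement and does not increase $K$. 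Since $\cG$ refines $(Q_1,\dots,Q_r)$, we can write $P=\Gamma'(R_1,\dots,R_K)$ for some $\Gamma'\colon\Bits^K\to\Bits$.

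Next we show that $(R_1,\dots,R_K)$ is nearly uniform. Every nonzero combination $\sum\lambda_iR_i$ of degree $\ell$ has $\rank_{\ell-1}>f(K)\ge c_\KL(\ell,\frac14 2^{-3K/2})$; here we need $c_\KL$ monotone in the degree, or we take $f$ as the max over $\ell\le d$, a harmless change. By \cref{thm:kaufman-lovett} its bias is then below $\eps=\frac14 2^{-3K/2}$. By \cref{lem:vazirani-xor}, $\norm{(R_1,\dots,R_K)-\cU^{\ox K}}_\tv\le 2^{K/2}\eps=\frac14 2^{-K}$.

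We then compare with a dyadic number. Set $A=\abs{\Gamma'^{-1}(1)}$. Then $\Pr_X[P(X)=1]=\Pr[\Gamma'(R)=1]$, and the near-uniformity gives
\[
\abs{\Pr_X[P(X)=1]-A/2^K}\le \tfrac14 2^{-K}.
\]
By \cref{prop:gap-of-one-third-from-dyadics}, $\abs{A/2^K-\frac13}\ge\frac13 2^{-K}$. By the triangle inequality,
\[
\abs{\Pr_X[P(X)=1]-\tfrac13}\ge(\tfrac13-\tfrac14)2^{-K}=\tfrac1{12}2^{-K}\ge \tfrac1{12}\exp(-\psi^*_{d,f}(r)).
\]
Here we use that $\exp$ denotes base $2$ and that the bound decreases in $K$.

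The only delicate point is the bias bound: the regularity parameter must be chosen relative to the final dimension $K$, not the original $r$. This is exactly why $f$ is evaluated at the factor's own dimension in the definition of $f$-regularity. We also need the linear-dependency-free assumption so that every nonzero combination is a genuine nonzero polynomial of the claimed degree. With $f$-regularity defined through $f(\dim\cG)$, this step goes through directly.
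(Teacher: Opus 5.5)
Your proposal is correct and follows the paper's proof essentially step for step. It regularizes with $f(k)=c_\KL(d,\frac14 2^{-3k/2})$, uses Kaufman--Lovett plus \cref{lem:vazirani-xor} to get $\frac14 2^{-K}$-closeness to uniform, and compares the dyadic value $A/2^K$ with $1/3$ via \cref{prop:gap-of-one-third-from-dyadics}. Your extra remarks — that $c_\KL$ must be monotone in the degree (or replaced by a maximum over $\ell\le d$), and about linear dependencies — are legitimate fine points that the paper's proof passes over silently.
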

\begin{remark}
    Specializing to $r=1$, we obtain that the distance from $\Ber(1/3)$ for degree-$d$ polynomials is $\delta_d=\delta_{d,1}$.
\end{remark}
\begin{proof}
    Write $P$ as a function of $r$ degree-$d$ polynomials: $P = \Gamma(R_1,\dots, R_r)$. Consider the degree $d$ factor $\cF=(R_1,\dots, R_d)$ with the dimension vector $(M_1, \dots, M_d)$ where $\sum_iM_i=r$. By \cref{thm:classic-regularization}, there exists a factor $\cQ=(Q_1,\dots, Q_K)$ of dimension $K\le \psi_{d,f}^*(r)$ that is $f$-regular for $f(r)=c_\KL(d, \frac{1}{4}2^{-3r/2})$.

    Every nonzero linear combination $\sum_i\lambda_i Q_i$ in $\cQ$ satisfies
    \[
    \bias(\lambda_1Q_1 + \dots + \lambda_KQ_K) \le \frac{1}{4}2^{-3K/2}.
    \]
    By \cref{lem:vazirani-xor}, we have that $\norm{\cQ - \cU^{\ox K}}_\tv \le \frac{1}{4}2^{-K}$.
    Since $P = \Gamma(\cQ)$ for some $\Gamma\colon \F_2^K\to\F_2$, it follows that
    \[
    \Pr[P=1] = \Pr[\Gamma(\cQ)=1] = \Pr[\Gamma(\cU^{\ox K})=1]\pm \frac{1/4}{2^K} = \frac{a}{2^K} \pm \frac{1/4}{2^K} \qquad \text{ for some } a\in \Z.
    \]
    On the other hand, $\abs{\frac{a}{2^K} - \frac{1}{3}}\ge \frac{1}{3}2^{-K}$ by \cref{prop:gap-of-one-third-from-dyadics}, giving the final result
    \[
    \abs{\bias(P)-\frac{1}{3}} \ge \frac{1/12}{2^K},
    \]
    as required.
\end{proof}

\section{Lower bound against linear distributions}\label{sec:deg-1-lower-bound}

We start the technical part with the following easy proof of \cref{main-thm:linear}. We restate it for convenience.

\begin{theorem}[\cref{main-thm:linear} restated]
    Let $\cL = (L_1,\dots,L_n)$ be a degree-$1$ distribution.
    Then
    \[
    \norm{\cL - \cB_{1/3}^{\ox n}}_\tv \geq 1 - \exp(-\Omega(n)),
    \]
    where $\Omega()$ hides some absolute positive constant.
\end{theorem}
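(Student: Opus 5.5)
We follow the win-win outline from \cref{sec:our-results}. Let $D$ be the dimension of the span of the non-constant parts of $L_1,\dots,L_n$, i.e.\ the maximum number of linearly independent linear forms among the homogeneous parts of the $L_i$. Fix a threshold, say $D \ge n/2$ versus $D < n/2$, and treat the two cases separately.

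\emph{Case $D \ge n/2$.} Choose indices $i_1,\dots,i_D$ whose homogeneous parts are linearly independent. Then $(L_{i_1}(X),\dots,L_{i_D}(X))$ is exactly uniform on $\Bits^D$, since adding constants does not affect uniformity. Projecting onto these coordinates can only decrease total variation distance, so
\[
\norm{\cL-\cB_{1/3}^{\ox n}}_\tv \ge \norm{\cU^{\ox D}-\cB_{1/3}^{\ox D}}_\tv .
\]
Each marginal pair satisfies $\norm{\cU-\cB_{1/3}}_\tv = 1/6$. \cref{lem:dist-of-product-distributions} therefore gives a bound of $1-2e^{-D/432} = 1-\exp(-\Omega(n))$.

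\emph{Case $D < n/2$.} Fix a maximal independent set $I$ with $\abs{I}=D$. Conditioned on the values $z\in\F_2^D$ of $(L_i)_{i\in I}$, every other coordinate $L_j$, $j\notin I$, is determined, because its homogeneous part lies in the span of those for $I$. Hence $\cL=\E_{z\sim\cU^{\ox D}}[\delta_{v_z}]$ is a uniform convex combination of $2^D$ point masses at some $v_z\in\Bits^n$.

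For a point mass we have $\norm{\cB_{1/3}^{\ox n}-\delta_{v}}_\tv = 1-\Pr[\cB_{1/3}^{\ox n}=v] \ge 1-(2/3)^n$. Apply \cref{lem:dist-of-convex-comb} with $\cD=\cB_{1/3}^{\ox n}$ and $\cT_z=\delta_{v_z}$, where $\eps_z\le (2/3)^n\le 1/2$ for $n\ge 2$. This gives
\[
\norm{\cL-\cB_{1/3}^{\ox n}}_\tv \ge 1-2\cdot 2^{D}(2/3)^n \ge 1-2\,(\sqrt2\cdot 2/3)^n = 1-\exp(-\Omega(n)),
\]
since $2\sqrt2/3<1$.

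The only delicate point is balancing the threshold. The point-mass union bound must beat $2^D$, and $2^{n/2}(2/3)^n$ is indeed exponentially small. Any threshold $D<\alpha n$ with $2^\alpha\cdot 2/3<1$ works, and the first case then yields $\exp(-\Omega(\alpha n))$. Combining the two cases proves the theorem, with small $n$ absorbed into the constant.
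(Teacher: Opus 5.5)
Your proof is correct and takes essentially the same route as the paper: a win-win on the dimension $D$, applying \cref{lem:dist-of-product-distributions} to $D$ independent uniform coordinates when $D$ is large, and applying \cref{lem:dist-of-convex-comb} over $2^D$ point masses when $D$ is small. The only differences are cosmetic: the paper balances the two cases at threshold $c\approx 0.58$ rather than $1/2$, and your definition of $D$ via homogeneous parts is slightly more careful than the paper's wording, since affine functions such as $x_1$ and $x_1+1$ are linearly independent yet not statistically independent.
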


\begin{proof}
Let $\cL = (L_1,\dots,L_n)$ be a distribution where each $L_i$ has $\deg(L_i)\le 1$, and let $D$ be the dimension of the span of $L_i$'s. That is, $D$ is the maximal number of linearly independent $L_i$'s. 

Let $c \in (0,1)$ be a parameter, and consider the following two cases.

    \paragraph{Case 1 ($D \ge cn$).}  Assume for simplicity that $L_1,\dots,L_D$ are linearly independent. In particular, the first $D$ coordinates of $\cL$ are statistically independent, and individually their marginal distribution is either a constant or $\cB_{1/2}$.
    In particular, the coordinate-wise distance is $\norm{L_i - \cB_{1/3}}_\tv \ge 1/6$, and by \cref{lem:dist-of-product-distributions} we get $\norm{\cL-\cB_{1/3}^{\ox n}}_\tv \geq 1 - 2 e^{-\frac{(1/6)^2 cn}{12}} \ge 1 - 2^{-\frac{(1/6)^2 cn}{12}}$, where last inequality holds for $n$ that is larger than some absolute constant.
    
    \paragraph{Case 2 ($D < cn$).} Assume for simplicity that $L_1,\dots,L_D$ are linearly independent. Hence, for any conditioning of the inputs so that $(L_1,\dots,L_D) = (a_1,\dots,a_D)$ the distribution $\cL$ becomes \emph{constant}, and hence $\norm{\cL\rest{(L_1,\dots,L_D) = (a_1,\dots,a_D)}-\cB_{1/3}^{\ox n}}_\tv \geq 1 - (2/3)^{n}$. By taking the union bound \cref{lem:dist-of-convex-comb} over all $2^D$ fixings of $L_1,\dots,L_D$, we get $\norm{\cL-\cB_{1/3}^{\ox n}}_\tv \geq 1 - 2^{cn} \cdot (2/3)^{n}$.

A straightforward calculation shows that $c\approx 0.58$ balances the error terms of the above cases $2^{-\frac{(1/6)^2 cn}{12}} = 2^{cn} \cdot (2/3)^{n} \le 2^{-\Omega(n)}$ giving the desired bound.
\end{proof}

\section{Lower bound against bounded linear rank distributions}\label{sec:bounded-rank-1-lower-bound}

In this section, we prove that the distance between $\cB_{1/3}^{\ox n}$ and any distribution $\cP=(P_1,\dots,P_n)$ where each $P_i\colon \F_2^m \to \F_2$ is a function of at most $r$ linear functions is at least $1 - \exp(-2^{-O(r^2)}n))$.

Closely related lower bounds have appeared in the literature. Viola \cite{viola2023new} shows that a depth-$r$ decision forest cannot sample the Hamming slice $H_{n/3}$ to within total variation distance $1 - \exp(-n^{1/\exp(O(r))})$. The work of Kane, Ostuni, and Wu \cite{kane2024locality} shows that $r$-local maps cannot sample $H_{n/3}$ or $\cB_{1/3}^{\ox n}$ any closer than $1-\exp(-n\cdot 2^{-O(r^2)})$.

We show two lower bounds with different proof ideas. The first obtains an exponentially small error which is used in \cref{sec:degree-2-lower-bound} to prove a lower bound for quadratic distributions. The second approach uses sunflowers and gives a weaker error bound; we nevertheless include it because it morally extends to higher-degree distributions.

\subsection{A lower bound with exponentially small error}

Consider a function $P\colon \F_2^m\to\F_2$ with linear rank $r$.\footnote{We also work with restrictions of $P$ to affine subspaces $V+h$ of $\F_2^m$. As discussed in \cref{sec:notation-and-definitions}, affine restrictions preserve degree, and the same is true for $\rank_1$.} Then there exist $L_1,\dots, L_r$ linearly independent linear functions, and some function $\Gamma$ such that $P=\Gamma(L_1,\dots, L_r)$. We use the notation \defemph{$\lin(P)$} to denote the $r$-dimensional linear space $\Span\{L_1,\dots, L_r\}$ on which $P$ depends. The linear space $\lin(P)$ is independent of the choice of $L_1,\dots, L_r, \Gamma$. Using some basic Fourier analysis, $\lin(P)$ can be equivalently characterized as the span of all linear functions $L$ with which $P$ has a nonzero correlation $\E_X[(-1)^{P(X)-L(X)}]\neq 0$. We now set out to prove our sampling lower bound for maps of bounded linear rank.

\begin{theorem}[Bounded linear rank]\label{thm:bounded-rank-1}
    Let $\cP=(P_1,\dots,P_n)$ be a distribution where each $P_i\colon \F_2^m\to\F_2$ satisfies $\rank_1(P_i) \le r$ for some $r \geq 1$. Furthermore, suppose that over all affine subspaces $V+h$ of $\F_2^m$ the coordinate-wise distances satisfy \mbox{$\norm{P_i\rest{V+h}-\Ber(1/3)}_\tv\ge \delta$}. Then for some absolute constant $c > 0$,
    \begin{equation*}
        \norm{\cP-\cB_{1/3}^{\ox n}}_\tv\ge 1 - \exp\Big(-\frac{c^r\delta^{2r}}{r!}\cdot n + r\Big).
    \end{equation*}
\end{theorem}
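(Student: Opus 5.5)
The proof goes by induction on $r$, following the sketch after \cref{main-thm:bounded-linear-rank}: a bipartite graph between the polynomials $P_i$ and the space of linear forms, and a win-win split between ``a small subspace explains many $P_i$'' and ``many $P_i$ depend on linearly independent subspaces''. Write $V_i=\lin(P_i)$, of dimension at most $r$, and let $\eps_r(n)=\exp(-\frac{c^r\delta^{2r}}{r!}n+r)$ be the target error. For $r=0$ every $P_i$ is constant, so the distribution is a point mass; its distance from $\cB_{1/3}^{\ox n}$ is at least $1-(2/3)^n$. The hypothesis is stated for all affine restrictions precisely so that it survives the conditioning steps below.

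Fix a threshold $\alpha$, to be chosen of order $\delta^2/r$. Call a subspace $W$ of linear forms \emph{heavy} if at least $\alpha\dim(W)\, n$ indices $i$ have $V_i\cap W\neq\{0\}$.

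\textbf{Case A: some nonzero $W$ with $\dim W\le r n\alpha$ is heavy.} Condition on the values of a basis of $W$. Each conditioning restricts the domain to an affine subspace $V+h$. For every $i$ with $V_i\cap W\neq 0$, the restriction $P_i\rest{V+h}$ has $\rank_1\le r-1$. By \cref{lem:dist-of-convex-comb}, which is applied by restricting attention to the coordinates in this set and using that TV distance can only shrink under projection, it then suffices to bound each conditioned distribution. On the coordinate set of size $\ge \alpha\dim(W)\, n$ we apply the induction hypothesis with rank $r-1$; on the remaining coordinates we simply ignore them. The union bound costs a factor $2^{\dim W}$, giving error
\[
2^{\dim W}\cdot \eps_{r-1}(\alpha \dim(W)\, n)\cdot O(1).
\]
This is at most $\eps_r(n)$ provided $\frac{c^{r-1}\delta^{2r-2}}{(r-1)!}\alpha\ge 1+\frac{c^r\delta^{2r}}{r!}$ per unit of dimension. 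Achieving this requires the exponent to beat the $\dim W$ loss, so in the definition of heavy the threshold really has to scale so that $\frac{c^{r-1}\delta^{2r-2}}{(r-1)!}\alpha\, n\ge 2$. In other words, $W$ is heavy when its neighbourhood has size at least $\dim(W)\cdot\beta$ with $\beta\approx 2(r-1)!/(c^{r-1}\delta^{2r-2})$, a constant independent of $n$. Getting this bookkeeping right, so that the recursion produces exactly the factor $c\delta^2/r$ per level, is the main calculational point.

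\textbf{Case B: no heavy subspace of dimension $\le r$-ish exists.} Greedily pick indices $i_1,i_2,\dots$ such that each $V_{i_k}$ meets $U_{k-1}=V_{i_1}+\dots+V_{i_{k-1}}$ trivially. If the greedy procedure stops after $k$ steps, then $U=U_{k-1}$ has dimension $\le rk$ and every $V_i$ meets $U$ nontrivially. If $k$ were small (below $n/(r\beta)$), then $U$ would be heavy, contradicting the case assumption. So $k\ge n/(r\beta)$. The subspaces $V_{i_1},\dots,V_{i_k}$ are linearly independent, hence $P_{i_1},\dots,P_{i_k}$ are statistically independent, each at distance $\ge\delta$ from $\Ber(1/3)$. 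Then \cref{lem:dist-of-product-distributions} gives error $2e^{-\delta^2 k/12}\le \exp(-\Omega(\delta^2 n/(r\beta)))$, which matches $\frac{c^r\delta^{2r}}{r!}n$ for a suitable absolute $c$. The additive $+r$ in the exponent absorbs constant factors such as the $2$ and the $(1+c_i)$ from \cref{lem:dist-of-convex-comb}.

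The main obstacle is that heaviness must be tested for subspaces of unbounded dimension in Case B, while Case A needs the union-bound loss $2^{\dim W}$ to be dominated by the recursive gain. Both are handled by the linear threshold ``neighbourhood $\ge\beta\dim W$''. I expect the delicate part to be verifying that a single constant $\beta$, depending only on $r$ and $\delta$, makes both branches yield the exponent $\frac{c^r\delta^{2r}}{r!}n$ simultaneously, using the induction inequality $\beta\cdot\frac{c^{r-1}\delta^{2r-2}}{(r-1)!}\ge 2$ together with $\frac{\delta^2}{12 r\beta}\ge \frac{c^r\delta^{2r}}{r!}$.
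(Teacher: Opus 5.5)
There is a genuine gap, and it sits at the point you defer as bookkeeping: the two cases need incompatible notions of ``heavy'', and your proposal switches between them. Write $\tau(r)=c^r\delta^{2r}/r!$.

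\emph{First definition.} Take $\abs{N(W)}\ge\alpha\dim(W)\,n$ for a constant $\alpha\approx\delta^2/r$. Case A is then fine, since the induction hypothesis is applied to at least $\alpha n$ coordinates. But Case B fails. Because $\abs{N(W)}\le n$, heaviness forces $\dim W\le 1/\alpha$. The greedy span $U$ has a neighbourhood containing all $n-k$ unpicked indices, so its non-heaviness only certifies $rk\ge\dim U>(n-k)/(\alpha n)\approx 1/\alpha$. That is about $1/\delta^2$ independent coordinates, which gives a constant error rather than one exponentially small in $n$.

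\emph{Second definition.} Take $\abs{N(W)}\ge\beta\dim W$ with $\beta\approx 2/\tau(r-1)$ independent of $n$. Case B now works as you describe, but Case A fails. A heavy $W$ may have only $\beta\dim W=O(1)$ neighbours, e.g.\ a single linear form lying in $\lin(P_i)$ for just $\beta$ indices $i$. Induction on those coordinates gives error about $2^{\dim W+1}\exp(-\tau(r-1)\beta\dim W+r-1)\approx\exp(-\dim W+r)$, which does not decay with $n$. The coordinates outside $N(W)$ still have rank $r$ and cannot be used.

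Your Case A estimate silently uses a coordinate set of size $\alpha\dim(W)\,n$ with $\alpha$ constant (the first definition), while your Case B uses the second.

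The missing idea is to guarantee $\abs{N(W)}=\Omega(n)$ in Case A while keeping a threshold linear in $\dim W$ for Case B. The paper takes $W$ of \emph{maximum} dimension with $\abs{N(W)}\ge\frac{4}{\tau(r-1)}\dim W$, and splits on whether $\abs{N(W)}\ge n/2$.
\begin{itemize}
\item If so, $\dim W\le\tau(r-1)n/4$. The $2^{\dim W}$ union-bound loss is then beaten by the $\exp(-\tau(r-1)n/2)$ coming from induction on $N(W)$.
\item If not, the paper deletes $N(W)$. Maximality shows that every nonzero $W'$ has fewer than $\frac{4}{\tau(r-1)}\dim W'$ neighbours among the remaining (more than $n/2$) indices; otherwise $W+W'$ would be a larger such subspace, since these neighbours are disjoint from $N(W)$. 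That is exactly what the greedy needs to produce $\Omega(\tau(r-1)n/r)$ independent coordinates.
\end{itemize}

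Within your own framework, an equivalent repair is to call $W$ heavy iff $\abs{N(W)}\ge\max(\beta\dim W,\,n/2)$ with $\beta=4/\tau(r-1)$. In Case A you then have $\abs{N(W)}\ge n/2$ and $\dim W\le n/\beta$. In Case B the greedy span $U$ satisfies $\abs{N(U)}\ge n-k$, so either $k\ge n/2$ or non-heaviness of $U$ gives $n/2<\beta\dim U\le\beta rk$.
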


\begin{proof}
Let $\cP=(P_1,\dots,P_n)$ be a distribution where each $P_i\colon V+h\to\F_2$ is defined over some affine subspace $V+h$ of $\F_2^m$ and satisfies $\rank_1(P_i)\le r$. We show that for some absolute constant $c>0$ the function $\tau\colon \N\to(0,1]$ given by $\tau(r)=c^r\delta^{2r}/{r!}$ satisfies the following.
\[
\norm{\cP-\cB_{1/3}^{\ox n}}_\tv \ge 1 - \exp(-\tau(r)\cdot n+r).\tag{$\star$}
\]

The proof is by induction on $r$.

For the base case of induction consider $r=0$. In this case each $P_i$ is constant. Hence the support of $\cP$ is singleton. Consequently,
\[
\norm{\cP - \cB_{1/3}^{\ox n}}_\tv \ge 1- (2/3)^n \stackrel{\text{want}}{\ge} 1 - \exp(-\tau(0)\cdot n).
\]
Thus we require
\[
\tau(0) \le \log(3/2).  \tag{Condition I}
\]
Fix $r\ge1$ and suppose, by the induction hypothesis, the claim $(\star)$ holds for distributions of rank up to $r-1$. We now proceed to the inductive step.

Construct a bipartite graph $G$ whose left vertex set $\mathcal{L}$ is the set of all linear functions from $V$ to $\F_2$ and whose right vertex set $[n]$ corresponds the functions $P_1,\dots, P_n$.  For each $i\in[n]$ in the graph $G$ add an edge between $i$ and every $L\in\lin(P_i)$ on the left. We denote the neighborhood of a subset $S$ of vertices in $G$ by $N_G(S)$.

Let $W\le \mathcal{L}$ denote a subspace of maximum dimension that satisfies
\begin{equation}\label{eq:expansion}
\abs{N_G(W)} \ge \frac{4}{\tau(r-1)} \cdot \dim(W).
\end{equation}

We consider two cases separately.

\textbf{Case 1 ($\ \abs{N(W)} \ge n/2\ $).} In this case, we restrict our attention to the corresponding polynomials in $I \defeq N_G(W)$. Let $n'$ be the size of $I$ which we know to be at least $n/2$. In contrast, \cref{eq:expansion} implies that $\dim(W) \le \frac{\tau(r-1)}{4}\cdot n$, so the dimension of $W$ is relatively small compared to $n'$. This allows for conditioning on the values that $W$ takes.

Consider some linear function $\alpha \colon  W\to \F_2$ which we call an \emph{assignment}. Because every $i\in I$ has some neighbor in $W$, the distribution $\cP_I\rest\alpha = (P_i\rest\alpha : i\in I)$ is a distribution with linear rank $r-1$. Thus for any assignment $\alpha$, induction hypothesis provides the following.
\[
\norm{\cP_I\rest\alpha - \cB_{1/3}^{\ox n'}}_\tv \ge 1 - \exp\Big(-\tau(r-1)\cdot n' + r-1\Big) \ge 1- \exp\Big(-\frac{\tau(r-1)}{2}\cdot n + r-1\Big).
\]

Note that $\cP_I = \E_\alpha[\cP_I\rest\alpha]$. Applying \cref{lem:dist-of-convex-comb} we get that
\begin{align*}
\norm{\E_\alpha[\cP_I\rest\alpha] - \cB_{1/3}^{\ox n'}}_\tv &\ge  1 - \exp\left(-\left(\frac{\tau(r-1)}{2} - \frac{\tau(r-1)}{4}\right)n + r-1+1\right) \\
&\stackrel{\text{want}}{\ge} 1 - \exp\left(-\tau(r)\cdot n + r\right).
\end{align*}

The last inequality holds if
\[
\tau(r) \le \frac{\tau(r-1)}{4}. \tag{Condition II}
\]

\textbf{Case 2 ($\ \abs{N(W)}  < n/2\ $).}   In this case, consider the subgraph $H$ obtained by removing the vertices in $N_G(W)$. Thus the right vertex set in $H$ is $J \defeq [n]\setminus N_G(W)$. Observe that for every $W' \le \mathcal{L}$, the following holds in graph $H$.
\begin{equation}\label{eq:related-to-a-few}
\abs{N_H(W')} < \frac{4}{\tau(r-1)}\cdot \dim (W').
\end{equation}
Otherwise, $W''=\Span(W'\cup W)$ would be a subspace satisfying \cref{eq:expansion} with $\dim(W'')>\dim(W)$, contradicting the maximality of $W$. A useful consequence of \cref{eq:related-to-a-few} is that for each $P_i$ we have $\abs{N_H(\lin(P_i))} < \frac{4r}{\tau(r-1)}$. Construct a set $I\seq[n]$ according to the following process.
\begin{algorithmic}[1]
\State $I \gets \emptyset$
\While{$J \neq \emptyset$}
    \State $i \gets \min J$ \Comment{{\footnotesize In fact, any index from $J$ would do. }}
    \State $J \gets J \setminus \{i \}$
    \State $I \gets I \cup \{i\}$
    \State $V \gets \Span(\bigcup_{i\in I}\lin(P_i))$
    \State $J \gets J \setminus N_H(V)$ \Comment{{\footnotesize An index $j\in J$ survives if $\lin(P_j)$ is linearly independent from $V$.}}
\EndWhile
\end{algorithmic}

We now argue that by the end of this process, $n'\defeq \card{I}$ is greater than $\frac{\tau(r-1)}{8r}n$. Otherwise by \cref{eq:related-to-a-few} $\abs{I\cup N_H(\Span(\bigcup_{i\in I}\lin(P_i)))} \le n/2$ contradicting the termination as $J$ starts with a size bigger that $n/2$.

The polynomials $\cP_I = (P_i : i \in I)$ form a product distribution since by construction the non-constant $P_i$'s depend on linearly independent $\lin(P_i)$'s. We apply \cref{lem:dist-of-product-distributions} and we get 
\[
\norm{\cP- \cB_{1/3}^{\ox n}}_\tv \ge \norm{\cP_I- \cB_{1/3}^{\ox n'}}_\tv \ge 1 - \exp\left(-\Omega\big(\delta^2\frac{\tau(r-1)}{8r}n\big)+1\right) \stackrel{\text{want}}{\ge} 1 - \exp\left(-\tau(r)\cdot n+r\right).
\]
Therefore, we require that
\[
\tau(r)\le \Omega\big(\delta^2\frac{\tau(r-1)}{8r}\big). \tag{Condition III}
\]
Given conditions I, II, and III, solving for $\tau(r)$ we get $\tau(r)=c^r\delta^{2r}/{r!}$ for some constant $c>0$, as required.
\end{proof}

\begin{remark}
    \cref{thm:bounded-rank-1} treats the coordinate-wise distance $\delta$ as a free parameter. There are, however, only two cases of interest to us. First is the most general setting, where one poses no further assumptions on the $P_i$ beyond having bounded $\rank_1$. In this case, one can take $\delta = \Omega(2^{-r})$. Second, when $P_i$ are additionally quadratic (or degree-$d$) polynomials in which case $\delta$ can be taken to be an absolute constant (that depends only on $d$).
\end{remark}

\begin{corollary}\label{cor:bounded-rank-1}
    Let $\cP=(P_1,\dots,P_n)$ be a distribution where each $P_i$ satisfies $\rank_1(P_i) \le r$. Then $\norm{\cP-\cB_{1/3}^{\ox n}}_\tv\ge 1 - \exp(-c2^{-Cr^2} n)$ for some absolute constants $C, c>0$.
\end{corollary}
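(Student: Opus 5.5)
The plan is to derive the corollary directly from \cref{thm:bounded-rank-1}. The only work is to supply a valid value of $\delta$ and then simplify the exponent $\frac{c^r\delta^{2r}}{r!}n - r$.

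First, the coordinate-wise hypothesis. If $P_i$ has $\rank_1(P_i)\le r$, then its restriction $P_i\rest{V+h}$ to any affine subspace still has $\rank_1\le r$; this is just the fact that rank upper bounds are preserved under affine restriction. So we may write $P_i\rest{V+h}=\Gamma(L_1,\dots,L_{r'})$ with $r'\le r$ and $L_1,\dots,L_{r'}$ linearly independent, and possibly non-constant, affine functions on $V+h$. The tuple $(L_1,\dots,L_{r'})$ is then exactly uniform on $\F_2^{r'}$. Hence $\Pr[P_i\rest{V+h}=1]=a/2^{r'}$ for some integer $a$. By \cref{prop:gap-of-one-third-from-dyadics},
\[
\abs{a/2^{r'}-1/3}\ge \tfrac13 2^{-r'}\ge \tfrac13 2^{-r}.
\]
Since the total variation distance between two Bernoulli distributions equals the gap between their parameters, the hypothesis of \cref{thm:bounded-rank-1} holds with $\delta=\frac13 2^{-r}$.

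Next, substitute this $\delta$. We get $\delta^{2r}=3^{-2r}2^{-2r^2}$, and also $r!\le 2^{r\log r}\le 2^{r^2}$. Therefore
\[
\frac{c^r\delta^{2r}}{r!}\ge (c/9)^r\, 2^{-3r^2}\ge 2^{-C'r^2}
\]
for a suitable absolute constant $C'$ (assume $c<1$ without loss of generality). The theorem then yields
\[
\norm{\cP-\cB_{1/3}^{\ox n}}_\tv\ge 1-\exp(-2^{-C'r^2}n+r).
\]

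The remaining step, and the only mildly delicate one, is absorbing the additive $+r$. Split into two regimes.
\begin{itemize}
\item If $2^{-C'r^2}n\ge 2r$, then $-2^{-C'r^2}n+r\le -\tfrac12 2^{-C'r^2}n$. This gives the claim with $c=1/2$ and $C=C'$.
\item Otherwise $n<2r\,2^{C'r^2}\le 2^{(C'+1)r^2}$, so with $C=C'+1$ we have $2^{-Cr^2}n<1$. Here we use the trivial bound instead. The distribution $\cP$ puts probability at least $2^{-r}$ on some outcome whereas $\cB_{1/3}$ does not? This is not automatic, so it is cleaner to avoid it and simply choose $c$ small. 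For $c\le\tfrac12$ and $2^{-Cr^2}n<1$, the target bound $1-\exp(-c2^{-Cr^2}n)$ is at most $1-2^{-1/2}<0.3$.
\end{itemize}
It then suffices to show a constant distance lower bound in the second regime. That follows from the single-coordinate gap: $\norm{\cP-\cB_{1/3}^{\ox n}}_\tv\ge\norm{P_1-\cB_{1/3}}_\tv\ge\frac13 2^{-r}$. This is not a constant, however, so I would instead enlarge $C$ further. With the exponent $c\,2^{-Cr^2}n\le 2^{-r}/10$ in this regime, we have $1-\exp(-x)\le x\ln 2\le 2^{-r}/3$, and the single-coordinate bound then suffices. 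Concretely, take $C=C'+4$ and $c=1/2$: then $c2^{-Cr^2}n\le 2r\,2^{-4r^2}\le 2^{-r}/10$ for all $r\ge1$ up to checking small cases.
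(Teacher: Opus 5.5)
Your proposal is correct and follows the paper's route: the paper also invokes \cref{thm:bounded-rank-1} with a dyadic-gap value of $\delta$ (it takes $\delta=2^{-r-2}$), leaving the affine-restriction hypothesis implicit and simply ignoring the additive $+r$ in the exponent, both of which you handle explicitly; your small-$n$ regime via the single-coordinate bound $\norm{\cP-\cB_{1/3}^{\ox n}}_\tv\ge \frac13 2^{-r}$ makes the argument more rigorous than the paper's. The only blemish is that your intermediate inequality $2r\,2^{-4r^2}\le 2^{-r}/10$ fails at $r=1$. This does not matter, because the inequality you actually need, $x\ln 2\le \frac13 2^{-r}$ with $x<r\,2^{-4r^2}$ (taking $c=1/2$ and $C=C'+4$), holds for every $r\ge 1$.
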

\begin{proof}
    Write $P_i=\Gamma(L_{i1},\dots,L_{ir})$ where the $L_{ij}$'s are linearly independent linear functions. Then $\Pr_X[P_i(X) = 1]=a/2^r$ for some $a\in\{0,\dots,2^r\}$. Therefore, $\abs{a/2^r - 1/3} \ge 2^{-r-2}$. The conclusion now follows by applying \cref{thm:bounded-rank-1} with $\delta=2^{-r-2}$.
\end{proof}

\begin{corollary}\label{cor:quadratics-of-bounded-rank-1}
    Let $\cQ=(Q_1,\dots, Q_n)$ be a quadratic distribution where each $Q_i$ satisfies $\rank_1(Q_i) \le r$. Then $\norm{\cQ - \cB_{1/3}^{\ox n}}_\tv\ge 1 - \exp(-c2^{-Cr\log(r)} n)$ for some absolute constants $C,c > 0$.
\end{corollary}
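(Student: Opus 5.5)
The plan is to apply \cref{thm:bounded-rank-1} directly. The only work is to find a coordinate-wise gap $\delta$ that is much better than the generic $2^{-r-2}$ of \cref{cor:bounded-rank-1}. Improving $\delta$ is what converts the bound $2^{-Cr^2}$ into $2^{-Cr\log r}$. Indeed, with $\delta$ an absolute constant, the exponent in \cref{thm:bounded-rank-1} is
\[
\frac{c^r\delta^{2r}}{r!} = \exp\bigl(-O(r) - \log(r!)\bigr) = 2^{-O(r\log r)},
\]
since $\log(r!) = \Theta(r\log r)$. The additive $+r$ in the exponent is harmless: either it is absorbed for $n \ge 2^{C'r\log r}$, or, for smaller $n$, the claimed bound is trivial after shrinking the constant $c$.

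So the key step is to check the hypothesis of \cref{thm:bounded-rank-1}. We need that for every affine subspace $V+h$ of $\F_2^m$ and every $i$,
\[
\norm{Q_i\rest{V+h}-\Ber(1/3)}_\tv\ge \delta
\]
for an absolute constant $\delta>0$. This is exactly where quadraticity enters.

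By the Fact on affine restrictions, $Q_i\rest{V+h}$ is again a polynomial of degree at most $2$ (after identifying $V+h$ with some $\F_2^{m'}$). For a Boolean variable, the total variation distance to $\Ber(1/3)$ equals $\abs{\Pr[Q_i\rest{V+h}=1]-1/3}$. This quantity is at least $\delta_2 = 1/24$, by the degree-$2$ case in the section on a single polynomial's gap: Dickson's lemma (\cref{thm:structure-of-deg2}) forces the bias to lie in $\{0\}\cup\{\pm 2^{-k}\}$. The degree-$0$ and degree-$1$ cases (constant or balanced) give gaps of at least $1/6$. Hence $\delta = 1/24$ works uniformly, independent of $r$, $m$, and the subspace.

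Finally, restrictions also preserve the rank bound $\rank_1\le r$, which the induction in \cref{thm:bounded-rank-1} already accounts for. Plugging $\delta=1/24$ into \cref{thm:bounded-rank-1} then yields
\[
\norm{\cQ-\cB_{1/3}^{\ox n}}_\tv \ge 1-\exp\bigl(-c\,2^{-Cr\log r}\,n\bigr).
\]
There is no serious obstacle here. The only point needing care is that the gap must hold for every affine restriction, not just for $Q_i$ itself, and this is guaranteed because the degree-$2$ gap is uniform over all quadratics.
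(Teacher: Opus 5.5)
Your proposal is correct and follows the paper's proof: apply \cref{thm:bounded-rank-1} with an absolute constant $\delta$ that holds for every affine restriction of a quadratic, so that $c^r\delta^{2r}/r! = 2^{-O(r\log r)}$. The paper cites the general single-polynomial gap \cref{thm:exp-low-deg-far-from-third}, whereas you use the explicit value $\delta_2 = 1/24$ from Dickson's lemma (which the paper also records), and you spell out the $r!$ and additive $+r$ bookkeeping that the paper leaves implicit; for small $n$ the claimed bound follows from the one-coordinate gap $\norm{\cQ-\cB_{1/3}^{\ox n}}_\tv \ge 1/24$.
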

\begin{proof}
    Polynomials are closed under affine restrictions. Therefore, \cref{thm:exp-low-deg-far-from-third} asserts that $Q_i\rest{V+h}$ has a distance of $\delta = \Omega(1)$ with $\cB_{1/3}$. The corollary follows promptly.
\end{proof}

\subsection{A weaker bound using sunflowers for subspaces}

\cref{thm:bounded-rank-1} has our best lower bound $1-\exp(-\Omega_r(n))$ for bounded $\rank_1$ distributions. We prove a similar lower bound, albeit much weaker, using sunflowers for subspaces. The reason we include this weaker bound is twofold. First, its proof is more straightforward, and in particular, is not inductive. Second, the high-level idea of this version will be morally generalized to higher degrees while it is not clear how one can generalize the previous proof to higher degrees. One can work out a similar bound for $d$-local distributions using the classic sunflower lemma \cite{erdos1960intersection}.

To prove our result, we derive a similar variant of \cref{lem:chebyshev-bounded-rank-d} (Chebyshev lemma) for bounded linear rank functions.
\begin{corollary}[Chebyshev Lemma; bounded $\rank_1$]\label{cor:chebyshev-lemma-bounded-rank1}
    Let $\cP=(P_1,\dots,P_n)$ be an $n$-tuple where for all $i \in [n]$ we have $P_i = \Gamma_i (L_{i1},\dots,L_{ir})$, where $\Gamma_i$ is an arbitrary function of $r$ variables and $L_{ij}$'s are linearly independent linear polynomials. Suppose further that, 
    \begin{equation*}
        (\cL_i; \cL_j) = (L_{i1},\dots,L_{ir}; L_{j1},\dots,L_{jr})
    \end{equation*}
    is free from linear dependencies for all pairs $i \neq j$; i.e., $\rank(\cL_i; \cL_j) > 0$.
    Then $\norm{\cP-\cB_{1/3}^{\ox n}}_\tv \ge 1 - O(2^r/n)$.
\end{corollary}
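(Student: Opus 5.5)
The plan is a second-moment (Chebyshev) argument in the spirit of \cref{lem:chebyshev-bounded-rank-d}. Here the approximate pairwise regularity is replaced by \emph{exact} pairwise independence, so no appeal to \cref{thm:kaufman-lovett} or \cref{lem:vazirani-xor} is needed.

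\textbf{Step 1 (exact pairwise independence).} Fix $i\neq j$. The $2r$ linear functions $(\cL_i;\cL_j)$ are linearly independent, so $X\mapsto(\cL_i(X);\cL_j(X))$ is a surjective linear map onto $\F_2^{2r}$. Its output is therefore exactly uniform on $\F_2^{2r}$. Since $P_i=\Gamma_i(\cL_i)$ and $P_j=\Gamma_j(\cL_j)$, the indicators $\indicator{P_i},\indicator{P_j}$ are independent, and $\Cov[\indicator{P_i},\indicator{P_j}]=0$. This is the analogue of \cref{claim:cov-p-i-p-j} with $\eta=0$.

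\textbf{Step 2 (marginal gaps).} For each $i$, $\cL_i$ alone is uniform on $\F_2^r$, so $p_i\defeq\Pr[P_i=1]=a_i/2^r$ for some integer $a_i$. By \cref{prop:gap-of-one-third-from-dyadics}, $\abs{p_i-1/3}\ge \frac13 2^{-r}\eqdef\delta$.

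\textbf{Step 3 (distinguishing statistic).} By pigeonhole, assume at least $n/2$ coordinates satisfy $p_i\ge 1/3+\delta$, and restrict to them; the other case is symmetric, and restricting coordinates does not increase TV distance. I would use the weighted statistic $Z=\sum_i w_i(x_i-\tfrac13)$ with $w_i=p_i-\tfrac13$, rather than the plain Hamming weight. Set $W=\sum_i w_i^2\ge n\delta^2/2$. Under $\cP$, by Step 1, $\E Z=W$ and $\Var Z=\sum_i w_i^2p_i(1-p_i)$. Under $\cB_{1/3}^{\ox n}$, $\E Z=0$ and $\Var Z=\tfrac29 W$. Applying Chebyshev's inequality under both distributions to the event $\{Z<W/2\}$ bounds both error terms by $O(1/W)$.

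\textbf{Main obstacle.} Step 3 as written yields $O(1/W)=O(4^r/n)$, not the claimed $O(2^r/n)$. The loss comes from bounding every $\abs{w_i}$ by only $\delta\approx 2^{-r}$ while the variance $p_i(1-p_i)$ stays of constant order. To recover the extra $2^{r}$ factor I would first try to show that $\sum_i w_i^2p_i(1-p_i)$ is much smaller than $W$ whenever $W$ is small. This amounts to arguing that coordinates with $\abs{w_i}$ near $2^{-r}$ are forced to have $p_i$ of the form $a/2^r$ with a structured $a$, and exploiting the $\chi^2$-type quantity $w_i^2/(p_i(1-p_i))$ per coordinate. Failing that, the fallback is to check whether the intended constant in the corollary absorbs this loss: $O(4^r/n)$ already suffices for the later sunflower argument, where $r$ is constant.
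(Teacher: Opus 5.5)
Your Steps 1--3 are the same argument as the paper's sketch. Exact pairwise independence of $(\cL_i;\cL_j)$ gives $\Cov[\indicator{P_i},\indicator{P_j}]=0$. The dyadic form $\Pr[P_i=1]=a_i/2^r$ gives a marginal gap $\delta=\Theta(2^{-r})$. Then one runs the Chebyshev step of \cref{lem:chebyshev-bounded-rank-d}. Your weighted statistic is a harmless variant; with $w_i=p_i-\frac13$ of either sign you do not even need the pigeonhole restriction, since $\E_{\cP} Z=\sum_i w_i^2$ in any case. Carried out correctly, this yields $1-O(1/(\delta^2 n))=1-O(4^r/n)$. The paper's sketch gives exactly the same thing once you substitute $\eta=0$ and $\delta=\Theta(2^{-r})$ into the bound of \cref{lem:chebyshev-bounded-rank-d}; it never produces $2^r/n$.

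The obstacle you flag cannot be removed, because the bound $1-O(2^r/n)$ with an absolute constant is false. Take $m=nr$ and $L_{ij}=x_{(i-1)r+j}$, with one fixed $\Gamma$ on $r$ bits having exactly $a$ ones, where $a$ is the integer nearest to $2^r/3$. All hypotheses hold, and $\cP=\Ber(p)^{\ox n}$ exactly with $\abs{p-\frac13}=\frac13 2^{-r}$. Pinsker's inequality together with $\mathrm{KL}\le\chi^2$ gives $\norm{\cP-\cB_{1/3}^{\ox n}}_\tv\le\sqrt{\frac n2\cdot\frac{(p-1/3)^2}{2/9}}=\sqrt n/2^{r+1}$, which equals $\frac12$ at $n=4^r$. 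But the claimed bound there is $1-C2^r/n=1-C2^{-r}$, which exceeds $\frac12$ once $2^r>2C$.

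In this example every $w_i$ equals the same value $\pm\frac13 2^{-r}$ and every $p_i(1-p_i)$ is about $\frac29$. So the refinement you propose, showing $\sum_i w_i^2p_i(1-p_i)\ll W$, is doomed. More generally, any valid bound of the form $1-g(r)/n$ must have $g(r)\ge 4^r/2$. Hence $1-O(4^r/n)$, which your proof establishes, is the correct statement, and your fallback is the right resolution. In \cref{thm:bounded-linear-rank-1-sunflower} this only changes the $2^{O(r)}$ prefactor, which is harmless in the regime $r\le c\sqrt{\log n}$ considered there (after adjusting $c$).
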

\begin{proof}[Proof Sketch.]
    The proof of this corollary is essentially the same as \cref{lem:chebyshev-bounded-rank-d}, with two differences: $\delta$, the point-wise distance is now $O(2^{-r})$, and the in the \cref{claim:cov-p-i-p-j} we have $\Cov[\indicator{P_i}, \indicator{P_j}]=0$ for $i\neq j$. The result follows.
\end{proof}

\begin{theorem}[Bounded linear rank; using sunflowers]\label{thm:bounded-linear-rank-1-sunflower}
    Let $\cP = (P_1,\dots, P_n)$ be a distribution where each $P_i$ satisfies $\rank_1(P_i)\le r$, where $r\le c\sqrt{\log n}$ for some constant $c>0$. Then,
    \[
        \norm{\cP-\cB_{1/3}^{\ox n}}_\tv\ge 1 - O\Big(\frac{2^{r/2}}{n^{1/(r+1)}}\Big).
    \]
\end{theorem}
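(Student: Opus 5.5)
The approach mirrors the proof of \cref{cor:chebyshev-pairwise-high-rank-lin-part}, with the quadratic part removed. For each $i$, write $P_i=\Gamma_i(L_{i1},\dots,L_{ir})$ with the $L_{ij}$ linearly independent, and let $V_i=\lin(P_i)$. Padding if necessary (e.g.\ by treating lower-dimensional $V_i$ separately or enlarging them arbitrarily), every $V_i$ is a subspace of dimension $r$. Apply the sunflower lemma for subspaces (\cref{lemma:sunflower}) with $p=2$ and $k=r$ to the collection $V_1,\dots,V_n$. This gives a subset $S\seq[n]$ of size
\[
t\ \ge\ \bigl(n/2^{(r^2+r-2)/2}\bigr)^{1/(r+1)}\ \ge\ 2^{-r/2}n^{1/(r+1)}
\]
whose subspaces form a sunflower with core $V_{\mathrm{core}}$ of dimension $\ell\le r$.

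Next comes a change of basis. Fix a basis $L^*_1,\dots,L^*_\ell$ of the core and extend it, for each $i\in S$, by $L'_{i1},\dots,L'_{i,r-\ell}$ to a basis of $V_i$. Then $P_i=\Gamma'_i(\cL'_i;\cL^*)$ for $i\in S$. By the sunflower property $V_i\cap V_j=V_{\mathrm{core}}$, so for $i\ne j$ the tuple $(\cL'_i;\cL'_j)$ is linearly independent; indeed, it remains independent even modulo $V_{\mathrm{core}}$.

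Now condition on the core. Fix $z\in\F_2^\ell$ and restrict to the affine subspace $\{\cL^*=z\}$. On this subspace the functions $P_i\rest{z}=\Gamma'_i(\cL'_i,z)$ have linear rank at most $r-\ell$, and the restricted petals remain pairwise linearly independent. This is exactly the hypothesis of \cref{cor:chebyshev-lemma-bounded-rank1}. One gap needs checking: that corollary needs a coordinatewise gap from $\Ber(1/3)$. Since $P_i\rest{z}$ depends on $r-\ell$ independent uniform bits, its probability of outputting $1$ has the form $a/2^{r-\ell}$, so the gap is at least $2^{-(r-\ell)}/3$ by \cref{prop:gap-of-one-third-from-dyadics}. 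The covariances vanish exactly. We therefore get, for every $z$,
\[
\norm{\cP_{S}\rest{z}-\cB_{1/3}^{\ox t}}_\tv\ \ge\ 1-O\bigl(2^{r-\ell}/t\bigr)
\]
(the Chebyshev bound is really $O(1/(\delta^2 t))$; I would track whether it gives $2^{r}$ or $4^{r}$ and absorb it).

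Finally, combine the conditionings using \cref{lem:dist-tv-common-Q}, with $Q^*=\cL^*$, $\abs{S}=2^\ell$ and $\tau=2^{-\ell}$. This multiplies the error by $4\cdot 4^{\ell}$, giving $1-O(2^{O(r)}/t)=1-O(2^{O(r)}/n^{1/(r+1)})$. Passing from $\cP$ to the subcollection $S$ only decreases the distance, which justifies using the marginal on $S$.

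The main obstacle is matching the stated exponent $2^{r/2}$ rather than a cruder $2^{O(r)}$. The $2^{r/2}$ loss from the sunflower size is unavoidable, but the $4^\ell$ from \cref{lem:dist-tv-common-Q} and the $1/\delta^2$ from Chebyshev add more. To fix this I would first avoid the union-bound lemma. Instead, I would run the Chebyshev argument once on the full mixture, using a single event $E$: weight below $(1/3+\delta/2)t$ when most coordinates are biased upward, handled per $z$. Second, I would note that the assumption $r\le c\sqrt{\log n}$ makes $2^{O(r)}$ factors subsumable by adjusting constants, since $n^{1/(r+1)}\ge 2^{\sqrt{\log n}/c}$. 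If exact constants fail, I would settle for $2^{O(r)}$ in place of $2^{r/2}$, noting that this is the intended bound under the hypothesis on $r$.
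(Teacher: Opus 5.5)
Your proof follows the paper's route: the sunflower lemma on the subspaces $V_i$, conditioning on the linear core, \cref{cor:chebyshev-lemma-bounded-rank1} applied to the exactly pairwise independent restrictions, and a recombination step. The paper pigeonholes on $\dim V_i$ where you pad. Padding is fine once dummy variables make $m>r$, as \cref{lemma:sunflower} requires.

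For the recombination the paper uses \cref{lem:dist-of-convex-comb} directly. The core is spanned by linear forms, so $\cL^*$ is exactly uniform and $\cP_S=\E_z[\cP_S\rest{z}]$. That costs $2^{\ell+1}$, not the $4^{\ell+1}$ of \cref{lem:dist-tv-common-Q}. Strictly, \cref{lem:dist-tv-common-Q} wants the per-$\sigma$ bound on the unrestricted input. That bound also holds here, because $(\cL'_i;\cL'_j)$ is linearly independent, but there is no reason to route through it.

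The real problem is the final step. You are right that this argument does not give $2^{r/2}$, but neither of your two repairs works.

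\textbf{The single-event repair.} The sunflower alone costs $1/t\approx 2^{r/2}n^{-1/(r+1)}$. On top of that, the Chebyshev step only gives $O(1/(\delta^2 t))$, with $\delta$ as small as $\frac13 2^{-(r-\ell)}$, since $a/2^{r-\ell}$ can be that close to $1/3$. When $\ell=0$ there is no mixture at all, and the argument still yields only $O(4^r/t)$. So replacing the union bound by one event on the mixture cannot recover $2^{r/2}$. Moreover, the Chebyshev step pigeonholes onto coordinates biased in a common direction, and that direction for $P_i\rest{z}$ can depend on $z$, so there is no evident $z$-independent event.

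\textbf{The absorption repair.} The hypothesis $r\le c\sqrt{\log n}$ does not let you absorb $2^{\Theta(r)}$ into an absolute constant, because that factor is unbounded. What it does give is $2^{Ar}n^{-1/(r+1)}\le n^{-1/(2(r+1))}$ whenever $c\le 1/(2\sqrt{A})$. In other words, you can trade the $2^{O(r)}$ factor for a worse power of $n$.

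So the honest conclusion of your argument is $1-2^{\ell+1}\cdot O(4^{r-\ell}/t)\ge 1-O(2^{5r/2}n^{-1/(r+1)})$. For small $c$ this is at least $1-O(n^{-1/(2(r+1))})$.

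The paper's own proof has the same defect. Its last chain equates $O(2^r/s)$ with $O\bigl((r+1)^{1/(r+1)}2^{(r^2+2r-2)/(2r+2)}n^{-1/(r+1)}\bigr)$, but $2^r/s$ is about $2^{3r/2}n^{-1/(r+1)}$. Also, \cref{cor:chebyshev-lemma-bounded-rank1} should read $O(4^r/n)$, not $O(2^r/n)$, given its $\delta=\Theta(2^{-r})$. That settles the $2^r$-versus-$4^r$ question you raised.

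If you want the bound exactly as stated, deduce it from \cref{cor:bounded-rank-1}. For $r\le c\sqrt{\log n}$ with $C_1c^2\le 1/2$, its error satisfies $\exp(-c_12^{-C_1r^2}n)\le \exp(-c_1\sqrt n)=O(n^{-1/2})$. For $r\ge 1$ this is at most $O(2^{r/2}n^{-1/(r+1)})$. The price is giving up the non-inductive sunflower proof that this theorem is meant to illustrate.
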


\begin{proof}
Define for $P_i$ the subspace $V_i$ of dimension ${\le}r$ it depends on. By the pigeonhole principle, a subset $J\seq [n]$ of size at least $n/ (r+1)$ subspaces have the same dimension. 

By \cref{lemma:sunflower}, the family of subspaces $S=\{V_i : i\in J\}$ contains a sunflower with at least
\[
s=\left(\frac{n}{(r+1)\cdot \exp(\frac{r^2+r-2}{2})}\right)^{1/(r+1)}
\]
petals. Let $I\seq J$ denote the sub-collection that form a sunflower, where $\card{I}\ge s$.

Let $C = \bigcap_{i\in I} V_i$ denote the core of the sunflower. Note that $C$ is itself a (linear) subspace. We call a linear map $\alpha \colon  C\to \F_2$ an \emph{assignment}. Let $C^*$ denote the set of all assignments, where $\card{C^*}=2^{\dim(C)}$.

Note that after fixing the core $C$ by a given assignment, the distribution $\cP_I\rest\alpha=(P_i\rest\alpha : i \in I)$ is now pairwise independent, and by \cref{cor:chebyshev-lemma-bounded-rank1}
\begin{equation}\label{eq:after-rest-alpha}
\norm{\cP_I\rest\alpha - \cB_{1/3}^{\ox s}}_\tv \ge 1 - O(2^{r-\dim(C)}/s).
\end{equation}
Since $\E_{\alpha\sim C^*}[\cP_I\rest\alpha]=\cP_I$, we get
\begin{align*}
\norm{\cP-\cB_{1/3}^{\ox n}}_\tv &\ge \norm{\cP_I - \cB_{1/3}^{\ox s}}_\tv \\
&= \norm{\E_{\alpha\sim C^*}[\cP_I\rest\alpha] - \cB_{1/3}^{\ox s}}_\tv \\
&\ge 1 - 2\sum_{\alpha \in C^*}(1-\norm{\cP_I\rest\alpha - \cB_{1/3}^{\ox s}}_\tv) \tag{By \cref{lem:dist-of-convex-comb}}\\
&\ge 1 - 2^{\dim(C)+1} \cdot O(2^{r-\dim(C)}/s)\tag{By \cref{eq:after-rest-alpha}}\\
&\ge 1 - O(2^r/s)\\
&= 1 - O\Big( \frac{(r+1)^{1/(r+1)}\exp\big(\frac{r^2+2r-2}{2r+2}\big)}{n^{1/(r+1)}} \Big)\\
&\ge 1 - O\Big(\frac{2^{r/2}}{n^{1/(r+1)}}\Big) \tag{$x^{1/x}\le e^{1/e}$ for $x \ge 1$},\\
\end{align*}
as required. (Note that the hidden constants in big-$O$ are absolute.)
\end{proof}

\section{Lower bound against quadratic distributions}\label{sec:degree-2-lower-bound}

\begin{theorem}\label{thm:quadratic-sources}
    Let $\cQ=(Q_1,\dots,Q_n)$ be a distribution, where each $Q_i$ is a quadratic polynomial. Then
    \[
    \norm{\cQ-\cB_{1/3}^{\ox n}}_\tv \ge 1- n^{-\Omega(1/\log\log(n))}.
    \]
\end{theorem}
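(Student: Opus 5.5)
We follow the win-win scheme of \cref{main-fig:win-win}, with three parameters: a linear-rank threshold $R$, a pairwise threshold $r$, and an independent-set size $s$. All three will be optimized at the end.

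\textbf{Step 1 (low-rank polynomials).} If at least $n/2$ of the $Q_i$ satisfy $\rank_1(Q_i)\le R$, we apply \cref{cor:quadratics-of-bounded-rank-1} to that sub-collection. This gives distance at least $1-\exp(-c2^{-CR\log R}n/2)$, and dropping coordinates can only decrease the distance. So we may restrict to $n/2$ polynomials with $\rank_1(Q_i)>R$.

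\textbf{Step 2 (pairwise graph).} Build a graph on these indices with $i\sim j$ iff $\rank_1(Q_i+Q_j)\le r$. Note $r\ll R$, and in particular $Q_i\neq Q_j$ as functions.
\begin{itemize}
\item If there is an independent set $S$ of size $s$, then $(Q_i)_{i\in S}$ has $(Q_i;Q_j)$ pairwise $r$-regular. Each $Q_i$ has rank $>R\ge r$, and every nonzero combination of $Q_i$ and $Q_j$ is $Q_i$, $Q_j$, or $Q_i+Q_j$. By \cref{thm:exp-low-deg-far-from-third} (here $\delta_2=1/24$), each coordinate is $\Omega(1)$-far from $\Ber(1/3)$. \cref{cor:chebyshev-only-bounded-rank-2} with one quadratic per coordinate, using \cref{thm:structure-of-deg2} so that bias is at most $2^{-r/2}$, then gives distance $1-O(2^{-\Omega(r)})-O(1/s)$.
\item Otherwise, by the standard degree bound (greedy/Turán) some vertex $i^*$ has degree at least $t\ge n/(2s)$. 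For each neighbor $j$, Dickson's lemma gives $Q_j=Q_{i^*}+\Gamma_j(L_{j1},\dots,L_{jr})$. Let $T$ be this set of neighbors. Condition on $Q_{i^*}(X)=\sigma\in\{0,1\}$. Each value has probability at least $1/2-2^{-R/2}\ge 1/4$, since $Q_{i^*}$ has high rank and hence low bias.
\end{itemize}

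\textbf{Step 3 (the conditioned case).} For each fixed $\sigma$, the shifted distribution $(\sigma+\Gamma_j(L_j))_{j\in T}$ has bounded $\rank_1\le r$ over the full uniform $X$. We apply \cref{thm:bounded-rank-1} to it, taking $\delta=2^{-r-2}$ as in \cref{cor:bounded-rank-1}; this gives error $\exp(-2^{-O(r^2)}t)$. Then \cref{lem:dist-tv-common-Q} with $|S|=2$ and $\tau=1/4$ transfers this to $(Q_j)_{j\in T}$, losing a factor of $32$.

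Here lies the subtlety: \cref{lem:dist-tv-common-Q} needs the conditional-free maps $\Gamma_\sigma(Q_1,\dots,Q_k)$ evaluated on uniform $X$. These are exactly $\sigma\oplus\Gamma_j(L_j)$, which are bounded-$\rank_1$ functions, so the hypothesis is met. One must also check that $\sigma\oplus\Gamma_j$ keeps coordinate distance $\ge 2^{-r-2}$ from $\Ber(1/3)$. It does, since its acceptance probability is still of the form $a/2^r$.

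\textbf{Step 4 (balancing).} Collecting the cases, the error is the maximum of
\begin{itemize}
\item $\exp(-2^{-O(R\log R)}n)$ from Step 1,
\item $2^{-\Omega(r)}+1/s$ from the independent-set case,
\item $\exp(-2^{-O(r^2)}n/s)$ from the high-degree case.
\end{itemize}
Since $R$ enters only through Step 1, we take $R=r$. We choose $s=\sqrt n$ and $r=c\sqrt{\log n}$, which keeps the third term tiny. The dominant term is then $2^{-\Omega(r)}=2^{-\Omega(\sqrt{\log n})}$.

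The claimed bound $n^{-\Omega(1/\log\log n)}=2^{-\Omega(\log n/\log\log n)}$ is stronger than this, so the balancing is the main obstacle. The fix is to take $r=\Theta(\log n/\log\log n)$ and $s=n^{\Theta(1/\log\log n)}$. This requires a version of Step 3 whose error depends on $r$ only through $2^{-O(r\log r)}$ rather than $2^{-O(r^2)}$.

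For that, note that after conditioning, the polynomials $\sigma+\Gamma_j(L_j)$ are restrictions of quadratics to the affine set $\{Q_{i^*}=\sigma\}$. Equivalently, they are of the form $Q_j$ with $\rank_1(Q_j+Q_{i^*})\le r$. So I would rerun the proof of \cref{thm:bounded-rank-1} using coordinate distance $\delta=\Omega(1)$, which is available from \cref{thm:exp-low-deg-far-from-third} applied to $Q_j$ on affine subspaces, intersected with the conditioning event. That yields $\tau(r)=c^r/r!=2^{-O(r\log r)}$, as in \cref{cor:quadratics-of-bounded-rank-1}.

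With this in hand, the three error terms become $2^{-\Omega(r)}$, $1/s$, and $\exp(-2^{-O(r\log r)}n/s)$. The choices $r=\epsilon\log n/\log\log n$ and $s=2^{r}$ balance them, giving the claimed bound.

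The main obstacle I expect is justifying the $\Omega(1)$ coordinate gap under the non-affine conditioning $Q_{i^*}=\sigma$. I would handle it by instead applying the bounded-rank induction to the joint factor $(Q_{i^*};L_j)$, treating $Q_{i^*}$ as a common core throughout the induction.
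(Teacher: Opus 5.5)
Your skeleton is exactly the paper's: the low-rank reduction, the pairwise graph, Chebyshev on an independent set, a shared $Q_{i^*}$ handled by \cref{lem:dist-tv-common-Q}, and the final choice $r=\Theta(\log n/\log\log n)$, $s=2^r$. Your first pass correctly proves the weaker bound $1-2^{-\Omega(\sqrt{\log n})}$. The stated bound, however, needs an $\exp(-2^{-O(r\log r)}t)$ error in the high-degree case, and you leave that as an unresolved ``main obstacle''. The justification you sketch for it is also wrong. The set $\{Q_{i^*}=\sigma\}$ is a level set of a high-rank quadric, not an affine set. So \cref{thm:exp-low-deg-far-from-third} on affine subspaces tells you nothing about the conditioned coordinates. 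Rerunning the induction of \cref{thm:bounded-rank-1} ``with $Q_{i^*}$ as a common core'' is left unjustified.

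The obstacle is illusory, and the missing observation is one line: $\Gamma_j(L_{j1},\dots,L_{jr})=Q_j+Q_{i^*}$ is itself a quadratic polynomial on all of $\F_2^m$. Consequences:
\begin{itemize}
\item For each fixed $\sigma$, the family $(\sigma+Q_j+Q_{i^*})_{j\in T}$ on uniform $X$, with no conditioning, is a quadratic distribution with $\rank_1\le r$ in every coordinate.
\item Every affine restriction of each coordinate is again quadratic, so the hypothesis of \cref{thm:bounded-rank-1} holds with $\delta=\delta_2=1/24$ rather than $2^{-r-2}$.
\item Hence \cref{cor:quadratics-of-bounded-rank-1} gives error $\exp(-c2^{-Cr\log r}t)$ directly.
\end{itemize}
As you noted yourself in Step 3, \cref{lem:dist-tv-common-Q} only needs this unconditioned statement. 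The conditioning on $Q_{i^*}=\sigma$ happens inside that lemma, at the cost of the factor $32$, and never inside the bounded-rank induction. With this substitution your argument becomes the paper's proof.
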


\begin{proof}
    Let $r > 0$ be a parameter to be determined later.
    
    If at least $n/2$ many quadratics have $\rank_1(Q_i)\le 4r$, the apply \cref{cor:quadratics-of-bounded-rank-1} to get a distance of $1-\exp(-c2^{-Cr\log r}n)$ for some constants $C, c > 0$. $(\star)$
    
    Otherwise, at least $n/2$ many $Q_i$'s, which we assume are $Q_1,\dots, Q_{n/2}$, satisfy $\rank(Q_i) > 4r$. Construct a graph $G=(V, E)$ with vertex set $V=[n/2]$. We add an edge $ij\in E$ if the corresponding polynomials $Q_i,Q_j$ have small-rank difference, that is, whenever $\rank_1(Q_i - Q_j) \leq 4r$.
    
    Let $w$ be another parameter whose value will be chosen later.
    
    We consider two cases for this graph.
    \paragraph{Case 1 (large independent set). } The graph $G$ contains an independent set $I\seq V$ of size $w$. Then by \cref{cor:chebyshev-only-bounded-rank-2} we have that
    \[
    \norm{\cQ-\cB_{1/3}^{\ox n}}_\tv \ge \norm{\cQ_I-\cB_{1/3}^{\ox \card{I}}}_\tv \ge 1 - O(\frac{1}{w}) - O(\frac{1}{2^r}),
    \]
    where big-O is hiding absolute constants.

    \paragraph{Case 2 (large degree). } There is some vertex $i^*\in V$ whose degree in $G$ is at least $\lfloor\frac{n}{2w}\rfloor$. Let $S=N_G(i^*)$ be the set of neighbors of $i^*$. Let $s$ be equal to $\card{S}$, which is $\Omega(n/w)$.
    
    Since the difference of $Q_{i^*}$ and any of its neighbors has small rank, we may write
    \[
    Q_j= Q_{i^*} + \Gamma_j(L_{j1},\dots, L_{j, 4r}),
    \]
     for all $j\in S$ and some $4r$ linear functions.

    We are now in the setting of \cref{lem:dist-tv-common-Q}.
    Indeed, note that for any fixed $z \in \Bits$, the distribution
    \[
    \cR_z \defeq (\Gamma_j(L_{j1},\dots, L_{j,4r}) + z \; :\; j\in S) 
    \]
    has $\rank_1$ at most $4r$ in each coordinate. Hence, by our bounded $\rank_1$ theorem (\cref{thm:bounded-rank-1}, \cref{cor:quadratics-of-bounded-rank-1}) we have the following for some constants $c, C>0$.
    \begin{equation}\label{eq:R-z}
    \norm{\cR_z - \cB_{1/3}^{\ox s}}_\tv \ge 1- \exp\Big(-c2^{-C r\log r} n/w\Big).
    \end{equation}

    If $Q_{i^*}$ is constant then \cref{eq:R-z} is already a lower bound for $\cQ$. Else, $Q_{i^*}$ has two possible outcomes,
    and $\Pr[Q_{i^*} = z] \geq \tau = 1/4$ for each $z \in \supp(Q_{i^*})$, and by \cref{lem:dist-tv-common-Q} we have
    \[
    \norm{(Q_i)_{i \in S} - \cB_{1/3}^{\ox s}}_\tv
    \geq 1-  \frac{8}{\tau} \cdot \exp\Big(-c2^{-C r\log r} n/w\Big) 
    \geq 1 -  32 \cdot \exp\Big(-c2^{-C r\log r} n/w\Big).
    \]

    \paragraph{Combining cases 1 \& 2. } For some absolute constant $c>0$ we have,
    \[
    \norm{\cQ - \cB_{1/3}^{\ox n}}_\tv
    \ge 1 - \max\Big\{
    \underbracket{O(\frac{1}{w}) + O(\frac{1}{2^r})}_{\text{Case 1}};\;\; 
    \underbracket{32\exp\Big(-c2^{-C r\log r} n/w\Big)}_{\text{Case 2}}\Big\}.
    \]
    Since the error from Case 2 subsumes the error at $(\star)$, we omit the latter. It remains to choose $r=r(n)$ and $w=w(n)$ so that the maximum error is minimized.

    A straightforward calculation shows that for some constant $c'>0$, setting $r=c'\log n/\log\log n$ and $w=2^r$, the Case 1 error term eventually becomes larger than Case 2. It then follows that
    \[
    \norm{\cQ - \cB_{1/3}^{\ox n}}_\tv \ge 1 - O(\frac{1}{2^r}) \ge 1 - \exp(-c'\log n/\log\log n + O(1)) \ge 1- n^{-\Omega(1/\log\log n)},
    \]
    as required.
\end{proof}

\section{Lower bounds against distributions of bounded quadratic rank}\label{sec:bound-rank2-lower-bound}

\begin{theorem}\label{thm:bounded-rank-2}
    Let $r \geq 2$, and let $\cP = (P_1,\dots,P_n)$ be an n-tuple cubic polynomials $P_i \colon \F_2^m \to \F_2$, such that $\rank_2(P_i) \leq r \leq \sqrt{\log\log(n)}$ for all $i \in [n]$.
    Suppose that  $\abs{\Pr_X[P_i(X) = 1] - 1/3} \geq \delta$ for all $i \in [n]$.
    Then $\norm{\cP - \cB_{1/3}^{\ox n}}_\tv\ge 1 - \frac{\exp(-\log(n)^{c/r^2})}{\delta^2}$ 
    for some absolute constant $c>0$.
\end{theorem}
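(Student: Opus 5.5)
We will prove a stronger, induction-friendly statement for functions of \emph{bounded $\rank_{2,1}$}: if every $P_i = \Gamma_i(Q_{i1},\dots,Q_{ir};L_{i1},\dots,L_{is})$ with quadratics $Q_{ij}$ and linear $L_{ij}$, and $\abs{\Pr[P_i\rest{V+h}=1]-1/3}\ge\delta$ on every affine subspace, then $\norm{\cP-\cB_{1/3}^{\ox n}}_\tv \ge 1-\eps(r,s,n)/\delta^2$. The induction runs on $r$, the number of genuinely quadratic components, with $s$ allowed to grow. This is an induction on $r$ rather than on $(r,s)$ jointly, because every step trades a quadratic for a bounded number of linear functions. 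The base case $r=0$ is the bounded $\rank_1$ setting, handled by \cref{thm:bounded-rank-1} (or by \cref{thm:bounded-linear-rank-1-sunflower}). The hypothesis on affine restrictions holds in the cubic setting by \cref{thm:exp-low-deg-far-from-third}, since restrictions preserve degree.

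\textbf{Regularization step.} We fix a regularity parameter $k = k(r,s)$, of order $C(r+s)$ as demanded by \cref{cor:chebyshev-pairwise-high-rank-lin-part}. We first make each petal individually $k$-regular. If some nonzero combination $\sum_j \lambda_j Q_{ij}$ has $\rank_1\le k$, we replace one quadratic by at most $k$ linear functions, moving from $(r,s)$ to $(r-1,s+k)$. If this happens for at least half of the $P_i$, we apply induction directly on that half. Otherwise we discard the irregular ones and assume every $(Q_{i1},\dots,Q_{ir})$ is $k$-regular.

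\textbf{Win-win step.} Build a graph on the remaining indices, joining $i$ and $j$ when $(\cQ_i;\cQ_j)$ is not $k'$-regular, where $k' = 2(C+3)(r+s)$.

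If the graph has an independent set of size $w$, then \cref{cor:chebyshev-pairwise-high-rank-lin-part} applies to it. That corollary absorbs the arbitrary linear parts through its internal subspace-sunflower step. The resulting error is $O\big(2^{2s}/(\delta^2 w^{1/(s+1)})\big)+O\big(2^{-C(r+s)}/\delta^2\big)$.

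Otherwise some vertex $i^*$ has degree at least $n/w$. We pigeonhole over the at most $2^r$ nonzero combinations of $\cQ_{i^*}$ used on edges at $i^*$, keeping $\ge n/(w2^r)$ neighbours $j$ that share one combination $Q^*=\sum\lambda_\ell Q_{i^*\ell}$. For each such $j$ there is $R_j\in\Span(\cQ_j)$ with $\rank_1(Q^*+R_j)\le k'$. Individual regularity of $\cQ_{i^*}$ and $\cQ_j$ forces both $Q^*$ and $R_j$ to be nonzero genuine quadratic combinations, so $R_j$ involves some $Q_{j\ell}$. We replace that $Q_{j\ell}$ by $Q^*$ together with the $\le k'$ linear functions expressing $Q^*+R_j$. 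Now $P_j=\Gamma'_j(Q^*;\text{$r-1$ quadratics};\le s+k' \text{ linears})$ with $Q^*$ common to all these $j$. We condition on $Q^*=z\in\{0,1\}$. This is an affine-type restriction in spirit, and if $Q^*$ is nonconstant each value has probability $\ge1/4$ by Dickson. On each fibre we apply the induction hypothesis with parameters $(r-1, s+k')$. Strictly, the fibre $\{Q^*=z\}$ is not an affine subspace, so I would instead write $Q^*$ via Dickson, fix an appropriate set of its linear forms, or simply treat $Q^*$ as an extra output handled by \cref{lem:dist-tv-common-Q}. That lemma costs only a factor $16$, so I would take the latter route and regard $\Gamma_j'(z,\cdot)$ as a rank-$(r-1,s+k')$ function. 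The $\delta$-hypothesis survives because each $P_j$ itself is unchanged.

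\textbf{Parameters: the main obstacle.} The recursion gives $s_r \approx (C r)^{O(r)}$ linear functions and shrinks $n$ by a factor $w2^r$ per level. The Chebyshev error then requires $w^{1/(s+1)}$ to be large, so $w$ must be about $2^{s^2}$ or more, and the final error is governed by $n^{1/\prod(s_\ell+1)}$ terms. Balancing, with $w=n^{1/2}$ at each level so that $n$ becomes $n^{2^{-r}}$, I expect an error of order $\exp(-(\log n)^{c/r^2})$. This needs $r\le\sqrt{\log\log n}$ to keep $s$ and the $2^{-C(r+s)}$ term meaningful. Tracking these exponents so that the polynomial loss $2^{2s}$ against $w^{1/(s+1)}$ still leaves $\log(n)^{c/r^2}$ is the delicate step. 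If $s$ grows too fast under the naive recursion, I would first try to reuse the subspace-sunflower on linear parts at each level, collapsing accumulated linear functions into a fixed core and keeping petal size $O(rk)$, so that only the core, fixed via \cref{lem:dist-tv-common-Q}, grows.
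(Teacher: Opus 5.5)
Your architecture is the paper's: induction on the number $r$ of quadratics for $\rank_{2,1}\le(r,s)$ with $s$ growing, base case \cref{thm:bounded-rank-1}, individual regularization, and the win--win graph. In that graph an independent set goes to \cref{cor:chebyshev-pairwise-high-rank-lin-part}, while a high-degree vertex is pigeonholed to a common $Q^*$, peeled off with \cref{lem:dist-tv-common-Q}, and the recursion continues at $(r-1,s+O(C(r+s)))$. Removing one quadratic per step instead of using \cref{claim:regularize-deg-2} with a popular $k^*$ is an inessential variation. Two points, however, are not yet a proof.

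\textbf{The $\delta$-hypothesis.} Your sentence ``the $\delta$-hypothesis survives because each $P_j$ itself is unchanged'' is false. The induction is applied to $(\Gamma'_j(z,\cdot))_j$, not to $(P_j)_j$, and their biases are unrelated: if $P_j=Q^*A+(1+Q^*)B$ then $\Gamma'_j(1,\cdot)=A$. The same happens inside \cref{cor:chebyshev-pairwise-high-rank-lin-part}, which conditions on the core linear forms before invoking \cref{cor:chebyshev-only-bounded-rank-2}. The paper's write-up glosses over this as well. Falling back on \cref{lemma:deg-3-dyadic} does not help: the new functions are not cubic, and the available gap for a function of $\rank_2\le r+s$ is doubly exponentially small in $r+s$, which is useless once $s$ is a power of $\log n$.

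The fix is to drop $\delta$ from the inductive statement and produce the gap where it is used. In the Chebyshev case, $\cQ_i$ is $2(C+3)(r+s)$-regular and the petal forms $\cL'_i$ are independent. So $(\cQ_i;\cL'_i)$ is $2^{-\Omega(C(r+s))}$-close to uniform by \cref{thm:structure-of-deg2} and \cref{lem:vazirani-xor}. Hence $\Pr[\Gamma'_i(\cdot,z)=1]$ is that close to a multiple of $2^{-(r+s)}$, and by \cref{prop:gap-of-one-third-from-dyadics} it is at distance at least $2^{-(r+s)}/4$ from $1/3$. The resulting factor $1/\delta^2=O(4^{r+s})$ is absorbed by $2^{-C(r+s)}$. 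In the base case, a function of $s$ linear forms restricted to an affine subspace has $\Pr[\cdot=1]\in 2^{-s}\Z$, so $\delta=2^{-s-2}$ holds for free.

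\textbf{The quantitative bound.} This is the actual content of the theorem, and you leave it as an expectation. The missing idea is that $C$ must grow with $n$: for fixed $C$ the top-level term $2^{-Cr}$ is a constant. The paper takes $w=n^{1/(2r)}$ and $C+1=\log(n)^{1/(3r^2)}$. It then solves the recursion (\cref{claim:rank-2-eps-solution}) to show that the accumulated number of linear forms stays below $(C+1)^{r^2}(2r+s)=\log(n)^{1/3}(2r+s)$. With that bound, every sunflower term $4^{s_\ell}/w^{1/(s_\ell+1)}$ and the base-case term are negligible, and the remaining error is of the form $2^{-\Omega(Cr)}$, i.e.\ $\exp(-\log(n)^{\Omega(1/r^2)})$.

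Two further corrections to your heuristics. The recursion takes a maximum over levels, with a factor $32$ (not $16$) per application of \cref{lem:dist-tv-common-Q}. Nothing compounds into $n^{1/\prod_\ell(s_\ell+1)}$, so the only constraint is $s_{\max}^2\ll\log w$, and your fallback of re-sunflowering the linear parts is unnecessary. Your one-quadratic-per-level recursion in fact multiplies $r+s$ by only $O(C)$ per level, giving $s_{\max}\le O(C)^r r$. The paper's $(C+1)^{r^2}$ comes from the lossy monotonicity step behind \cref{eq:rank2-eps-induction}, so you have room to spare.
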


\begin{remark}
By \cref{lemma:deg-3-dyadic} we have an explicit lower bound $\delta \geq \delta_{2,r} = \frac{1}{12} \cdot 2^{-((r+2) \cdot 2^r + r)}>0$.
In the special case when all $P_i$'s are polynomials of degree 3, we take $\delta = \delta_3 >0$ to be an absolute constant.
\end{remark}

We will use the following notation throughout this section. For a function $P\colon \F_2^m \to \F_2$, we write \defemph{$\rank_{2,1}(P) \le (r,s)$} if there exist quadratic polynomials $Q_1,\dots,Q_r$, linear polynomials $L_1,\dots,L_s$, and a function $\Gamma$ such that $P = \Gamma(Q_1,\dots,Q_r,L_1,\dots,L_s)$.

Before proving the theorem, we need the following procedure, an explicit case of \emph{regularization} for degree $2$ factors, which is summarized in the following claim.

\begin{claim}\label{claim:regularize-deg-2}
    Fix $C > 1$.
    Let $P \colon \F_2^m \to \F_2$ be a function of $\rank_{2,1}(P) \leq (r,s)$.
    Then, there exists some $0 \leq k \leq r$ such that $P$ can be written as
    \begin{equation*}
        P(x) = \Gamma'(\cQ(x);\cL(x)),
    \end{equation*}
    where
    \begin{itemize}
        \item either
            \begin{inparaenum}[(i)]
            \item $\cQ = \emptyset$, and
            \item $\cL$ is a set of linear functions of size $\abs{\cL} \leq (C+1)^r \cdot (r+s) - r$.
            \end{inparaenum}
        \item or
            \begin{inparaenum}[(i)]
            \item $\cQ$ is a subset of $\{Q_1(x),\dots,Q_r(x)\}$ of size $r-k$,
            \item $\cL$ is a set of linear functions of size $\abs{\cL} \leq (C+1)^k \cdot (r+s) - r$, and
            \item $\rank_1(\cQ) > C (C+1)^k \cdot (r+s)$.
            \end{inparaenum}
    \end{itemize}

\end{claim}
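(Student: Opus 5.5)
The plan is to run a greedy regularization on the quadratic part only, and to track the invariant
\[
\abs{\cQ} = r-k, \qquad \abs{\cL} \le (C+1)^k (r+s) - r
\]
after $k$ rounds. We start from the given representation $P=\Gamma(Q_1,\dots,Q_r;L_1,\dots,L_s)$ with $k=0$, $\cQ=\{Q_1,\dots,Q_r\}$ and $\cL=\{L_1,\dots,L_s\}$. The invariant holds initially since $(C+1)^0(r+s)-r = s$. Throughout, $P$ is a function of $(\cQ;\cL)$.

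In round $k$ (with $\abs{\cQ}=r-k\ge 1$) we test whether $\rank_1(\cQ) > C(C+1)^k(r+s)$.
\begin{itemize}
\item If it is, we stop and output the second alternative. Item (i) holds because $\cQ$ is still a subset of the original quadratics of size $r-k$; items (ii) and (iii) hold by the invariant and the stopping test.
\item If it is not, there is a nonzero coefficient vector $\lambda$ such that $R=\sum_{Q_i\in\cQ}\lambda_i Q_i$ has $\rank_1(R)\le t \defeq C(C+1)^k(r+s)$. So $R=\Gamma_R(L'_1,\dots,L'_t)$ for some linear functions $L'_j$.
\end{itemize}
Two degenerate situations need a remark. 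If $R$ happens to have degree $\le 1$, then $\rank_1(R)\le 1\le t$, so it is caught by the same test. If $R\equiv 0$, we may take $t=0$.

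Pick any $Q_j\in\cQ$ with $\lambda_j\ne0$. Then
\[
Q_j = R + \sum_{i\ne j}\lambda_i Q_i = \Gamma_R(L'_1,\dots,L'_t) + \sum_{i\ne j}\lambda_iQ_i,
\]
which is a function of $\cQ\setminus\{Q_j\}$ and $\cL\cup\{L'_1,\dots,L'_t\}$. Substituting this into $\Gamma$ gives a new function $\Gamma'$ with $P=\Gamma'(\cQ\setminus\{Q_j\};\cL\cup\{L'_1,\dots,L'_t\})$. The new sizes are $\abs{\cQ}=r-(k+1)$ and
\[
\abs{\cL}\le (C+1)^k(r+s)-r + C(C+1)^k(r+s) = (C+1)^{k+1}(r+s)-r,
\]
so the invariant holds with $k+1$. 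This calculation is exactly why the thresholds in the claim are $C(C+1)^k(r+s)$: each round multiplies the linear budget by $C+1$.

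Each round removes one quadratic, so the process ends after at most $r$ rounds. Either it stops at some $k<r$ with the regularity condition satisfied, or it reaches $k=r$. In the latter case $\cQ=\emptyset$ and $\abs{\cL}\le (C+1)^r(r+s)-r$, which is the first alternative.

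I do not expect a real obstacle. The only points that need care are the conventions: that "$\rank_1(\cQ)$" means the regularity of the factor in the sense of the factor-rank definition, and the handling of degenerate linear combinations (of degree $\le 1$, or identically zero), both covered above. Since the claim allows $\cL$ to be merely a set of linear functions rather than a linearly independent one, no cleanup of $\cL$ is needed.
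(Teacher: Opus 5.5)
Your proposal is correct and follows the paper's proof. Both run the same greedy loop: while the remaining quadratics fail the $C(C+1)^k(r+s)$ regularity threshold, a low-$\rank_1$ combination is rewritten through at most $C(C+1)^k(r+s)$ new linear functions, one quadratic is dropped, and the same inductive bound $(C+1)^k(r+s)-r+C(C+1)^k(r+s)=(C+1)^{k+1}(r+s)-r$ on $\abs{\cL}$ follows. Your treatment of the strict versus non-strict thresholds and of degenerate combinations (degree $\le 1$ or identically zero) is, if anything, more consistent than the paper's algorithm.
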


\begin{proof}
Write $P \colon \F_2^m \to \F_2$ as
    \begin{equation*}
        P(x) = \Gamma(Q_1(x),\dots,Q_r(x);L_1(x),\dots,L_s(x)),
    \end{equation*}
where $Q_i$'s are quadratic polynomials and $L_i$'s are linear functions.    

Define the collections $\cQ$ and $\cL$ as in \cref{alg:regularize-each-Qi}.
\begin{algorithm}[h]
    \caption{Regularizing $\cQ$}
    \label{alg:regularize-each-Qi}
    
    \begin{algorithmic}[1]
        \State Initialize $\cL = \{L_1,\dots,L_s\}$.
        \State Initialize $I = \{1,\dots,r\}$.
            \Comment{{\footnotesize The set $I \seq \{1,\dots,r\}$ will correspond to $\cQ = \{Q_i : i \in I\}$.}}
        
        \State Set $k = 0$.
            \Comment{{\footnotesize In each iteration we will have $\abs{I} = r-k$ and $\abs{\cL} \leq (C+1)^k \cdot (r+s)$.}}
        
        \While{$I \neq \emptyset$ and $\rank_1(\cQ) \leq C (C+1)^k \cdot (r+s)$}
        
            \State Let $(a_i)_{i \in I}$ be such that $\rank_1(\sum_{i \in I} a_i Q_i) < C (C+1)^k \cdot (r+s)$.
    
            \State Augment $\cL$ by adding to it at most $C (C+1)^k \cdot (r+s)$ linear functions so that $\sum_{i \in I} a_i Q_i$ can be expressed using the functions in $\cL$. If $\cL$ has linear dependencies, remove them.

            \State Remove from $I$ one of the $i$'s for which $a_i = 1$.
            
            \State Redefine $\Gamma$ to be a function that takes $\cQ$ and $\cL$ and computes $P_i$.

            \State Increment $k$ by 1.
        \EndWhile
    \end{algorithmic}
\end{algorithm}

    Next we show that the obtained sets $\cL$ and $\cQ = \{Q_i : i \in I\}$ satisfy the properties stated in the claim.

    Consider the value of $k$ at the end of the procedure, and note that $k$ is equal to the number of iterations performed by the while loop.

    \medskip
    Note first that we decrease the size of $I$ by one in each iteration, and hence in the end $\cQ$ is a subset of $\{Q_1(x),\dots,Q_r(x)\}$ of size $\abs{I} = r-k$.

    We show an upper bound on the size of $\cL$ by induction on $k$. For $k=0$ we have $\card{\cL} \leq s = (C+1)^k(r+s)-r$ by the assumption of the claim. Suppose the bound $\abs{\cL} \leq (C+1)^{k-1} \cdot (r+s) - r$ holds just before the last iteration. Then, in the last iteration iteration we increase the size of $\cL$ by at most $C (C+1)^{k-1} \cdot (r+s)$, and hence the size of $\cL$ is now upper bounded by
    \begin{equation*}
        \big((C+1)^{k-1} \cdot (r+s) - r\big) + \big(C (C+1)^{k-1} \cdot (r+s)\big)
        = (C+1)^{k} \cdot (r+s) - r.
    \end{equation*}

    Finally, by the stopping condition of the while loop, if $k = r$, then $\cQ = \emptyset$, and otherwise $\rank_1(\cQ) \geq C (C+1)^k \cdot (r+s)$, as required.
\end{proof}

We are now ready to prove \cref{thm:bounded-rank-2}.

\begin{proof}[Proof of \cref{thm:bounded-rank-2}.]
Let $\eps(n,r,s) > 0$ be such that for any $n$-tuple $\cP = (P_1,\dots,P_n)$ of functions $P_i \colon \F_2^m \to \F_2$, with $\rank_{2,1}(P_i) \leq (r,s)$ satisfies $\norm{\cP-\cB_{1/3}^{\ox n}}_\tv \geq 1 - \eps(n,r,s)$.
Below we derive a bound on $\eps$ using induction.

For the base case, note that by \cref{thm:bounded-rank-1} we have $\eps(n,r=0,s) \leq \exp\Big(-\frac{c^s\delta^{2s}}{s!}\cdot n + s\Big)$ for some absolute constant $c>0$.

Next we bound $\eps(n,r,s)$ for $r \geq 1$ and arbitrary $s \geq 0$.
Let $\cP = (P_1,\dots,P_n)$ be an n-tuple of functions $P_i \colon \F_2^m \to \F_2$, such that $\rank_{2,1}(P_i) \leq (r,s)$ for all $i \in [n]$.

We start by regularizing each $P_i$ as in \cref{claim:regularize-deg-2} with some parameter $C \geq 8$ to be chosen later.
After the regularization, for each $P_i$ there is some $0 \leq k_i \leq r$ such that
$P_i = \Gamma_i(\cQ^{(i)},\cL^{(i)})$, where $\cQ^{(i)}$ is a set of quadratic functions of size at most $r-k_i$, $\cL^{(i)}$ is a set of linear functions, $\abs{\cQ^{(i)}} + \abs{\cL^{(i)}} \leq (C+1)^{k_i} \cdot (r+s)$, and $\rank_1(\cQ^{(i)}) > C \cdot (C+1)^{k_i} \cdot (r+s)$.

\medskip

Let $k^*$ be the most popular among $k_i$'s, and let $S_0 \seq [n]$ be a subset of size $\geq n/(r+1)$ such that $\abs{\cQ^{(i)}} + \abs{\cL^{(i)}} \leq (C+1)^{k^*} \cdot (r+s)$ and $\rank_1(\cQ^{(i)}) > C \cdot (C+1)^{k^*} \cdot (r+s)$ for all $i \in S_0$.

We consider the following three cases.

\begin{itemize}
    \item Suppose that $k^* \geq 1$. Then $\rank_{2,1}(P_i) \leq (r-k^*, (C+1)^{k^*} \cdot (r+s))$ for all $i \in S_0$. And by the induction hypothesis we get
    \begin{equation}\label{eq:rank-2-bound1}
        \norm{\cP-\cB_{1/3}^{\ox n}}_\tv \geq 1 - \eps(\frac{n}{r+1}, r-k^*, (C+1)^{k^*} \cdot (r+s)).
    \end{equation}

    \item Assume now that $k^*=0$.
    Let $w \in \N$ be a parameter to be chosen later, and suppose there is a subset $S \seq S_0$ of size $\abs{S} = w$ such that for all $i \neq i'$ in $S$ it holds that $\rank_1(Q^{(i)} \cup Q^{(i')}) > C(r+s)$. Then, by applying \cref{cor:chebyshev-pairwise-high-rank-lin-part} we get
    \begin{equation}\label{eq:rank-2-bound2}
        \norm{\cP-\cB_{1/3}^{\ox n}}_\tv
        \geq 1 - \frac{4^s}{\delta^2 \cdot w^{1/(s+1)}} - \frac{2^{-(C-3)(r+s)/2}}{\delta^2}
        \geq 1 - \frac{4^s}{\delta^2 \cdot w^{1/(s+1)}} - \frac{2^{-C(r+s)/4}}{\delta^2}.
    \end{equation}
    
    \item Otherwise, we have $k^* = 0$ and there exists $i^* \in [n]$ and a subset $S \seq [n]$ of size $\abs{S} \geq (1/2^r) \cdot (\abs{S_0}/w) $ such that 
    $\rank_1(Q^{(i^*)} \cup Q^{(i)}) \leq C(r+s)$, and furthermore
    the bounded rank is realized with the \emph{same linear combination in $Q^{(i^*)}$}. That is, there exists some linear combination $\sum_{j=1}^r \beta_j Q^{(i^*)}_j$ in  such that for each $i \in S$ we have
    $\rank_1(\sum_{j=1}^r \alpha^{(i)} Q^{(i)}_j + \sum_{j=1}^r \beta_j Q^{(i^*)}_j)) < C(r+s)$. 

    Hence, for each $i \in S$, we can remove one of the $Q^{(i)}_j$ from $\cQ^{(i)}$, augment $\cL^{(i)}$ by adding to it at most $C(r+s)$ linear functions, and rewrite $\Gamma_i$ as a function that depends on the new $\cQ^{(i)}$ (of size at most $r-1$), new $\cL^{(i)}$ and $Q^{*} = \sum_{j=1}^r \beta_j Q^{(i^*)}_j$. We emphasize that $Q^{*}$ is the same for all $i \in S$.

    That is, for each $i \in S$, we can write $P_i$ as
    \begin{equation*}
        P_i = \Gamma_i(Q^{(i)}_1,\dots,Q^{(i)}_{r-1};\cL^{(i)};Q^{*}).
    \end{equation*}
    For $z \in \Bits$ denote by $P_i^{z} = \Gamma_i(Q^{(i)}_1,\dots,Q^{(i)}_{r-1};\cL^{(i)};z).$
    Note that $\rank_{2,1}(P_i^{z}) \leq (r-1, s + C(r+s))$.
    Hence, by the induction hypothesis, we have
    \begin{equation*}
        \norm{(P_i^{z})_{i \in S}-\cB_{1/3}^{\ox \abs{S}}}_\tv \geq 1 - \eps(\abs{S}, r-1, s + C(r+s)).
    \end{equation*}
    Therefore, since $\deg(Q^*) \leq 2$, each $z \in \supp(Q^*) \seq \Bits$ is obtained with probability at least $1/4$, and thus, by \cref{lem:dist-tv-common-Q} we get
    \begin{equation}\label{eq:rank-2-bound3}
        \norm{(P_i)_{i \in S}-\cB_{1/3}^{\ox \abs{S}}}_\tv \geq 1 - 32\eps(\frac{n}{w \cdot (r+1) \cdot 2^r}, r-1, s + C(r+s)).
    \end{equation}
\end{itemize}

Combining \cref{eq:rank-2-bound1,eq:rank-2-bound2,eq:rank-2-bound3}, we get
\begin{equation*}
    \eps(n,r,s) \leq \max \Big(
    \max_{k^*}\eps(\frac{n}{r+1},r-k^*,(C+1)^{k^*} (r+s));
    \ \
    \frac{4^s}{\delta^2 \cdot w^{1/(s+1)}} + \frac{2^{-C(r+s)/4}}{\delta^2};
    \ \
    32\eps(\frac{n}{w \cdot (r+1) \cdot 2^r}, r-1, s+ C(r+s))
    \Big).
\end{equation*}

Using monotonicity, we can bound $\eps(n,r,s)$ by
\begin{equation}\label{eq:rank2-eps-induction}
    \eps(n,r,s)
    \leq \max \Big(
    \frac{4^s}{\delta^2 \cdot w^{1/(s+1)}} + \frac{2^{-C(r+s)/4}}{\delta^2};
    \ \
    32\eps(\frac{n}{w \cdot (r+1) \cdot 2^r}, r-1, (C+1)^r \cdot (r+s))
    \Big)
\end{equation}
with the base case
\begin{equation}\label{eq:rank2-eps-base-case}
\eps(n,r=0,s) \leq \exp\Big(-\frac{c^s\delta^{2s}}{s!}\cdot n + s\Big).
\end{equation}

We have the following upper bound on $\eps$.
\begin{claim}\label{claim:rank-2-eps-solution}
    Let $\eps(n,r,s)$ be defined inductively by \cref{eq:rank2-eps-induction,eq:rank2-eps-base-case}.
    Then,
    \begin{equation*}
        \eps(n,r,s)
        \leq    
        32^r \cdot \max \Big(
        \term;
        \quad
        \exp \left( - \frac{(c\delta)^{h^*}}{h^*!} \cdot \frac{n}{w^r \cdot (r+1)! \cdot 2^{r(r+1)/2}} +  h^* \right)
        \Big)
    \end{equation*}
    for $h^* \leq 2(C+1)^{r^2} \cdot (2r+s)$ and $\term = \frac{4^s}{\delta^2 \cdot w^{1/(s+1)}} + \frac{2^{-C(r+s)/4}}{\delta^2}$.

\end{claim}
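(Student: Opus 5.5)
The plan is to unroll the recursion \cref{eq:rank2-eps-induction} exactly $r$ times and close it off with the base case \cref{eq:rank2-eps-base-case}. Formally this is an induction on $r$, with the statement of \cref{claim:rank-2-eps-solution} itself as the hypothesis for $r-1$.

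\textbf{Sequences of parameters.} Define $r_j = r-j$, $s_0 = s$, $n_0 = n$, and for $0 \le j < r$:
\[
n_{j+1} = \frac{n_j}{w\,(r_j+1)\,2^{r_j}}, \qquad s_{j+1} = (C+1)^{r_j}(r_j+s_j).
\]
Applying \cref{eq:rank2-eps-induction} repeatedly gives
\[
\eps(n,r,s) \le \max\Bigl(\max_{0\le j<r} 32^j\,\term(s_j);\ \ 32^r\,\eps(n_r,0,s_r)\Bigr),
\]
where $\term(s_j)$ is the first argument of the max at level $j$. Since $32^j \le 32^r$, the common prefactor $32^r$ can be pulled out.

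\textbf{Base case.} For the last term I plug in \cref{eq:rank2-eps-base-case}:
\[
\eps(n_r,0,s_r) \le \exp\Bigl(-\tfrac{c^{s_r}\delta^{2s_r}}{s_r!}\,n_r + s_r\Bigr).
\]
Two bookkeeping computations then remain.
\begin{enumerate}
\item Telescoping the denominators gives
\[
n_r = \frac{n}{w^r\cdot (r+1)!\cdot 2^{r+(r-1)+\dots+1}} = \frac{n}{w^r (r+1)!\,2^{r(r+1)/2}},
\]
which is exactly the quantity in the claim.
\item For $h^* \defeq s_r$, I bound $s_{j+1} \le (C+1)^{r}(r+s_j)$ and iterate. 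This gives $s_r \le (C+1)^{r^2}\bigl(s + r\sum_{i\ge0}(C+1)^{-ri}\bigr)$, which is at most $2(C+1)^{r^2}(2r+s)$ with room to spare.
\end{enumerate}
Finally, $c^{h}\delta^{2h} \ge (c\delta)^{h}\cdot \delta^{h}$. Absorbing the extra $\delta^h$ (recall $\delta \le 1$) means reading the displayed factor $(c\delta)^{h^*}$ with $c$ replaced by $c\delta$ in the base case. If one prefers not to reinterpret $c$, I would instead reverify that \cref{thm:bounded-rank-1} already yields the $(c\delta)^{s}$ form after adjusting the absolute constant.

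\textbf{Main obstacle.} The delicate step is the $\term$ contributions from levels $j \ge 1$. There the first summand $4^{s_j}/(\delta^2 w^{1/(s_j+1)})$ \emph{grows} with $s_j$, while the claim displays $\term$ only at the original $s$. My first attempt is to organize the induction so that the $\term$-branch is only ever invoked at the outermost level. Concretely, in the case analysis preceding \cref{eq:rank2-eps-induction}, the pairwise-high-rank case is charged with the original parameters, and deeper levels are charged only through the recursive $\eps$-term. If that reorganization does not go through, I would check whether \cref{eq:rank2-eps-induction} was meant with $\term$ evaluated at the maximal $s$ encountered, i.e.\ at $h^*$. The remaining monotonicity, $\eps$ being decreasing in $n$ and increasing in $r$ and $s$, is used exactly as in the derivation of \cref{eq:rank2-eps-induction}.
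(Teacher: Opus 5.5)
You unroll the recursion, which is the paper's induction on $r$ written out: your $s_r$ is exactly the paper's $h_{r,s}$, and your telescoping to $n/(w^r(r+1)!\,2^{r(r+1)/2})$ is the same. The proposal breaks at the final conversion of the base case. With $h^*=s_r$, the base case \eqref{eq:rank2-eps-base-case} gives the coefficient $c^{h^*}\delta^{2h^*}/h^*!$. This is smaller than the claimed $(c\delta)^{h^*}/h^*!$, so the bound you obtain is weaker than the one asserted.

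Neither of your patches works. The constant $c$ is absolute, so it cannot be ``replaced by $c\delta$''; $\delta$ is a parameter, and the paper only guarantees $\delta\ge 2^{-O(r2^r)}$. The $\delta^{2s}$ in the base case is part of the recursion's hypothesis. In any case, \cref{thm:bounded-rank-1} cannot be upgraded to $\delta^{s}$ by adjusting constants, because its Condition~III loses a factor $\delta^2$ at every level.

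The missing idea is to use the slack the statement provides. The allowed range $h^*\le 2(C+1)^{r^2}(2r+s)$ is twice the bound $(C+1)^{r^2}(2r+s)$ your computation gives for $s_r$, and the paper takes $h^*=2s_r$. Assume without loss of generality $c\le 1$, since shrinking $c$ only weakens the base case. Then $(c\delta)^{2h}/(2h)!\le c^h\delta^{2h}/h!$ and $h\le 2h$, so
\[
\exp\Bigl(-\tfrac{c^h\delta^{2h}}{h!}\,x+h\Bigr)\le \exp\Bigl(-\tfrac{(c\delta)^{2h}}{(2h)!}\,x+2h\Bigr)\qquad\text{for all }x\ge 0 .
\]

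On the $\mathbf{K}$ term, you have found a real problem, and the paper's proof does not resolve it either. Write $\mathbf{K}(r',s')$ for the first argument of the max in \eqref{eq:rank2-eps-induction} at parameters $(r',s')$. The paper's inductive step applies the hypothesis at $(n',r-1,s')$ but writes the same $\mathbf{K}$ as at $(n,r,s)$, even though the summand $4^{s'}/(\delta^2w^{1/(s'+1)})$ is larger.

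Your first remedy cannot work. The recursion is applied as a black box to the sub-collection at depth $j$, and that sub-collection may genuinely fall into the pairwise-high-rank case with a linear part of size $s_j$. There, \cref{cor:chebyshev-pairwise-high-rank-lin-part} only yields an error of order $4^{s_j}/(\delta^2 w^{1/(s_j+1)})$, and nothing lets you charge this to the outermost level.

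What unrolling actually proves is the claim with $\mathbf{K}$ replaced by
\[
\max_j\mathbf{K}(r-j,s_j)\le \frac{4^{h^*}}{\delta^2w^{1/(h^*+1)}}+\frac{2^{-C(r+s)/4}}{\delta^2}.
\]
This uses three facts: $s_j$ is nondecreasing with $s_j\le s_r\le h^*$; $r-j+s_j\ge r+s$; and $4^t/w^{1/(t+1)}$ is increasing in $t$. This is your second remedy, and it is a corrected statement, not a proof of the literal one. It does suffice for the parameters chosen afterwards ($w=n^{1/2r}$, $C+1=\log(n)^{1/3r^2}$, $s=0$). There $h^*\le 4r\log(n)^{1/3}$, and the first summand is $2^{-\Omega(\log(n)^{2/3}/r^2)}/\delta^2$.
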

Next, we set the parameters $w = n^{1/2r}$ and $C = \log(n)^{1/3r^2}-1$.
Hence, if $r,s < \sqrt{\log(n)\log(n)}$, then
$h^* \leq (\log(n)^{1/3r^2})^{r^2} \cdot (2r+s) = \log(n)^{\frac13} \cdot \sqrt{\log(n)\log(n)}
< \sqrt{\log(n)}$.
Therefore, for $r \geq 2$ we have
\begin{eqnarray*}
    \eps(n,r,s)
    & \leq &
    32^r \cdot \max \Big(
    \term;
    \quad
    \exp \left( - \frac{(c\delta)^{h^*}}{h^*!} \cdot \frac{n}{w^r \cdot (r+1)! \cdot 2^{r(r+1)/2}} +  h^* \right)
    \Big) \\
    & \leq &
    32^r \cdot \max \Big(
    \frac{4^{\sqrt{\log(n)\log(n)}}}{\delta^2 \cdot n^{\Omega(1/rs)}} + \frac{\exp(-\log(n)^{\Omega(1/r^2)})}{\delta^2}) ;
    \quad
    \exp \left( - \frac{(c\delta)^{\sqrt{\log(n)}}}{\sqrt{\log(n)}!} \cdot \frac{\sqrt{n}}{2^{r^2}} +  \log(n)^{1/6} \right)
    \Big) \\
\end{eqnarray*}
for some absolute constant $c>0$.

Recall, by \cref{lemma:deg-3-dyadic} we have $\delta \geq \frac{1}{12} \cdot 2^{-((r+2) \cdot 2^r + r)} \geq 2^{-C^{\sqrt{\log\log(n)}}}$ for some constant $C$ (say, $C=6$ works). Hence the dominant term in the displayed equation about is $\frac{\exp(-\log(n)^{\Omega(1/r^2)})}{\delta^2}$.
Therefore
    \begin{equation}
        \norm{(\cP-\cB_{1/3}^{\ox n}}_\tv
        \geq  \norm{(P_i)_{i \in S}-\cB_{1/3}^{\ox \abs{S}}}_\tv
        \geq 1 - \eps(n,r,s)
        \geq \exp(-\log(n)^{c/r^2})/\delta^2
    \end{equation}
for some absolute constant $c>0$, as required.
\end{proof}

We now return to the proof of \cref{claim:rank-2-eps-solution}.
\begin{proof}[Proof of \cref{claim:rank-2-eps-solution}]
Define $h_{r,s} = \sum_{i=1}^r (C+1)^{\sum_{j=1}^i j}\cdot i + (C+1)^{\sum_{j=1}^r j} \cdot s$.
Note that
\begin{itemize}
    \item $h_{0,s} = s$,
    \item $h_{r,s} = h_{r-1,(C+1)^r(r+s)}$,
    \item $h_{r,s} \leq (C+1)^{r^2} \cdot (2r+s)$.
\end{itemize}

We prove by induction that
\begin{equation}\label{eq:eps-closed-bound}
    \eps(n,r,s) \leq 32^r \cdot \max 
    \Big(
        \term;
        \quad
        \exp \left( - \frac{c^{h_{r,s}} \cdot \delta^{2h_{r,s}}}{h_{r,s}!} \cdot \frac{n}{w^r \cdot (r+1)! \cdot 2^{r(r+1)/2}} +  h_{r,s} \right)
    \Big)
\end{equation}
The claim follow by setting $h^* = 2h_{r,s}$.

For the base case of $r=0$ we have $h_{0,s} = s$, and hence $\eps(n,r=0,s) \leq \exp(-\frac{c^s \cdot \delta^{2s}}{s!}\cdot n + s)  = \exp(-\frac{c^{h_{0,s}}\cdot \delta^{2h_{0,s}}}{h_{0,s}!} \cdot n + h_{0,s})$.

For the induction step, we assume that the bound on $\eps$ holds for $r-1$, and  prove it for $r$. Indeed, by \cref{eq:rank2-eps-induction} we have
\begin{eqnarray*}
    \eps(n,r,s) 
    & \leq & \max \Big(
    \term;
    \ \
    32 \cdot \eps(\frac{n}{w \cdot (r+1) \cdot 2^r}, r-1, (C+1)^r \cdot (r+s))
    \Big) \\
    & = & \max \Big(
    \term;
    \ \
    32 \cdot \eps(n', r-1, s')
    \Big),
\end{eqnarray*}
where $n' = \frac{n}{w \cdot (r+1) \cdot 2^r}$ and $s' = (C+1)^r \cdot (r+s)$.
By the induction hypothesis we have
\begin{equation*}
    \eps(n',r-1,s') \leq
    32^{r-1} \cdot \max \Big(
    \term;
    \quad
     \exp \left( - \frac{c^{h_{r-1,s'}} \cdot\delta^{2 h_{r-1,s'}}}{h_{r-1,s'}!} \cdot \frac{n'}{w^{r-1} \cdot r! \cdot 2^{(r-1)r/2}} +  h_{r-1,s'} \right)
    \Big)
\end{equation*}
Plugging it into the bound on $\eps(n,r,s)$ we get
\begin{eqnarray*}
    \eps(n,r,s)
    & \leq &
    32^r \cdot \max \Big(
    \term;
    \quad
     \exp \left( - \frac{c^{h_{r-1,s'}}\cdot \delta^{2h_{r-1,s'}}}{h_{r-1,s'}!} \cdot \frac{n'}{w^{r-1} \cdot r! \cdot 2^{(r-1)r/2}} +  h_{r-1,s'} \right)
    \Big) \\
    & = &
    32^r \cdot \max \Big(
    \term;
    \quad
     \exp \left( - \frac{c^{h_{r,s}} \cdot \delta^{2 h_{r,s}}}{h_{r,s}!} \cdot \frac{n}{w^r \cdot (r+1)! \cdot 2^{r(r+1)/2}} +  h_{r,s} \right)
    \Big),
\end{eqnarray*}
where the last equality uses the fact that $h_{r,s} = h_{r-1,s'}$.
This completes the proof of \cref{eq:eps-closed-bound}.
\end{proof}

\section{Lower bounds against distributions of degree three}\label{sec:degree-3-lower-bound}

\begin{theorem}\label{thm:cubics}
    Let $\cP = (P_1,\dots,P_n)$ be an n-tuple of cubic polynomials $P_i \colon \F_2^m \to \F_2$.
    Then 
    \[
    \norm{\cP - \cB_{1/3}^{\ox n}}_\tv\ge 1 - \exp(-c\sqrt{\log\log(n)}).
    \]
\end{theorem}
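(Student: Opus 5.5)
We lift \cref{thm:bounded-rank-2} to cubics in the same way \cref{thm:quadratic-sources} lifted \cref{cor:quadratics-of-bounded-rank-1}: a win-win between bounded quadratic rank and pairwise high rank, organized through a graph on the cubics. Throughout, let $r$ be a rank threshold to be chosen (eventually $r \approx \sqrt{\log\log n}$), and use $\delta_3$ from \cref{lemma:deg-3-dyadic} as the coordinate-wise gap. This gap is an absolute constant, and it persists under affine restrictions and under fixing a common polynomial.

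\textbf{Step 1 (low-rank majority).} Suppose at least $n/2$ of the $P_i$ satisfy $\rank_2(P_i)\le r$. Then we apply \cref{thm:bounded-rank-2} with $\delta=\delta_3$, which gives error $\exp(-\log(n)^{c/r^2})$.

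\textbf{Step 2 (high-rank majority).} Otherwise restrict to the $n/2$ cubics of $\rank_2 > r$. Build the graph $G$ with an edge $ij$ whenever $\rank_2(P_i+P_j)\le r$. There are two cases.

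\emph{Case 1: $G$ has an independent set $I$ of size $w$.} Then every pair $(P_i;P_j)$ with $i\ne j\in I$ is a factor of cubics all of whose nonzero combinations ($P_i$, $P_j$, $P_i+P_j$) have $\rank_2>r$. We apply \cref{lem:chebyshev-bounded-rank-d} with $d=3$, using $P_i$ itself as the single "$Q_{i1}$". Its pairwise regularity hypothesis holds provided $r> c_\KL(3,\eta/2)=O(\log^4(1/\eta))$ by \cref{thm:haramaty-shpilka}. So we may take $\eta=\exp(-\Omega(r^{1/4}))$, and the error is $O(\eta/\delta_3^2)+O(1/w)$.

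\emph{Case 2: some vertex $i^*$ has degree at least $n/(2w)$.} Write $P_j=P_{i^*}+R_j$ with $\rank_2(R_j)\le r$ for each neighbor $j$. Fixing $z=P_{i^*}(X)\in\{0,1\}$ gives the distributions $(R_j+z)_j$, each of bounded quadratic rank. Since $P_{i^*}$ has degree $3$, each value $z$ in its support has probability at least $1/8$. \cref{thm:bounded-rank-2} needs the gap $\abs{\Pr[R_j+z=1]-1/3}\ge\delta$. Here we only know $\delta\ge\delta_{2,r}=2^{-O(r2^r)}$ from \cref{lemma:deg-3-dyadic}(1), applied to the rank-$(r+1)$ function $R_j+z$. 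This suffices because that theorem's proof already absorbed such $\delta$ when $r\le\sqrt{\log\log n}$. Combining the two values of $z$ with \cref{lem:dist-tv-common-Q} gives error $O(\exp(-\log(n/w)^{c/r^2})/\delta_{2,r}^2)$.

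\textbf{Balancing parameters.} Choose $w=\sqrt n$ and $r=\epsilon\sqrt{\log\log n}$ for a small constant $\epsilon$. Then in Step 1 and Case 2 we have $\log(n)^{c/r^2}=\exp(c\log\log n/r^2)=\exp(c/\epsilon^2)$. That is only a constant, which is too weak.

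\textbf{Main obstacle.} The tradeoff must be redone so that $\log(n)^{c/r^2}$ stays large. We want $r^2\le c\log\log n/(2\log\log\log n)$, so that $\log(n)^{c/r^2}\ge(\log\log n)^{2}$. This makes the Step 1 and Case 2 errors negligible, including against $1/\delta_{2,r}^2=2^{O(r2^r)}=\exp(O(\sqrt{\log\log n}\,2^{\sqrt{\log\log n}}))$. That last comparison needs checking: we need $\exp(-\log(n)^{c/r^2})\cdot 2^{O(r2^r)}$ small, which holds once $\log(n)^{c/r^2}\gg r2^r$, i.e.\ $c\log\log n/r^2\gg r$, true for $r\lesssim(\log\log n)^{1/3}$. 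With $r=(\log\log n)^{1/3}$, Case 1 dominates with error $\exp(-\Omega(r^{1/4}))$, which is much weaker than claimed.

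So the stated $\exp(-c\sqrt{\log\log n})$ must come from a sharper Case 1 bound. The plan is to use in Case 1 the explicit Haramaty–Shpilka structure rather than the generic $c_\KL$, i.e.\ to measure rank so that the bias is $2^{-\Omega(r)}$, and then set $r=\Theta(\sqrt{\log\log n})$. Verifying that this sharper bias–rank tradeoff is available, and reconciling it with the $\delta_{2,r}$ loss in Case 2, is the step I expect to be delicate.
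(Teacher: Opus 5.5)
Your case analysis is the paper's own: a majority of bounded $\rank_2$ goes to \cref{thm:bounded-rank-2}. Otherwise you build the graph with edges $\rank_2(P_i+P_j)\le r$, apply Chebyshev on an independent set, and on a high-degree vertex fix the common cubic $P_{i^*}$ via \cref{lem:dist-tv-common-Q}. The paper then takes $w=\sqrt n$, $r=c\sqrt{\log\log n}$ and asserts that the maximum of the three errors is $c/\exp(r)$. Your objection to that balancing is correct, and it applies verbatim to the paper's final display. Since $\log(n)^{c/r^2}=2^{c\log\log n/r^2}=\Theta(1)$ when $r=\Theta(\sqrt{\log\log n})$, the two error terms inherited from \cref{thm:bounded-rank-2} do not vanish. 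In addition, the paper's independent-set error $c/\exp(r)$ presumes bias $2^{-\Omega(r)}$ at $\rank_2>r$, whereas the quoted $c_\KL(3,\eta)=O(\log^4(1/\eta))$ only gives your $2^{-\Omega(r^{1/4})}$.

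Even so, your proposal has a genuine gap: it does not prove the stated rate, and the repair you suggest cannot. The binding constraint is not the independent-set case but the two branches that invoke \cref{thm:bounded-rank-2}. For their error $\exp(-\log(n)^{c/r^2})$ to be at most $2^{-r}$ you need $(c/r^2)\log\log n\ge\log r$, i.e.\ $r^2\log r\lesssim\log\log n$. So even granting a linear bias--rank tradeoff for cubics in the Chebyshev step, setting $r=\Theta(\sqrt{\log\log n})$ still leaves a constant error, exactly as you computed one paragraph earlier. With \cref{thm:bounded-rank-2} as a black box, the best this scheme yields under a linear tradeoff is $1-\exp(-\Omega(\sqrt{\log\log n/\log\log\log n}))$, taking $r\asymp\sqrt{\log\log n/\log\log\log n}$. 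With the Haramaty--Shpilka tradeoff as stated, it yields $1-\exp(-\Omega((\log\log n/\log\log\log n)^{1/8}))$. Reaching $\exp(-c\sqrt{\log\log n})$ would need a stronger ingredient, such as a bounded-$\rank_2$ bound whose error at $r\asymp\sqrt{\log\log n}$ is still $2^{-\Omega(r)}$. Neither your proposal nor the paper supplies one.

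Separately, your $\delta_{2,r}$ worry in the high-degree case is unnecessary. $R_j+z=P_j+P_{i^*}+z$ is itself a cubic polynomial of $\rank_2\le r$, so the constant gap $\delta_3$ applies. Dropping that detour removes your extra restriction $r\lesssim(\log\log n)^{1/3}$, but it does not change the conclusion above.
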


\begin{proof}
    We have several cases.

    \begin{itemize}
        \item There is a subset $S_0 \seq[n]$ of size at least $n/2$ such that $\rank_2(P_i) \leq r$ for all $i \in S_0$. This reduces to \cref{thm:bounded-rank-2}, and we get $\norm{\cP - \Ber(1/3)^{\ox n}}_\tv\ge 1 - \exp(-\log(n)^{c/r^2})$.

        \item Otherwise, there is a subset $S_0 \seq[n]$ of size at least $n/2$ such that $\rank_2(P_i) > r$ for all $i \in S_0$. Is there is $S \seq S_0$ of size at least $w$ such that $\rank_2(P_i + P_{i'}) > r$ for all $i \neq i' \in S$. The we apply Chebyshev, and get 
        $\norm{\cP - \Ber(1/3)^{\ox n}}_\tv\ge 1 - \frac{c}{w} - \frac{c}{\exp(r)}$.

        \item Otherwise, there exists some $i^* \in S_0$ and $S \seq S_0$ of size at least $\abs{S_0}/w$ such that $\rank_2(P_i + P_{i^*}) \leq r$ for all $i \in S$.
        This means for all $i \in S$ we can write $P_i = \Gamma_i(\cQ^{(i)}) + P_{i^*}$ for some $\Gamma_i \colon \Bits^r \to \Bits$ and $\cQ^{(i)}$ consisting of $r$ quadratic polynomials. This gives the bound
        $\norm{\cP - \Ber(1/3)^{\ox n}}_\tv\ge 1 - \exp(-\log(n/w)^{c/r^2})$.
    \end{itemize}
    Taking $w = \sqrt{n}$ and $r = c \sqrt{\log\log(n)}$ for some $0<c<1/4$, we get that
    \begin{eqnarray*}
        \norm{\cP - \Ber(1/3)^{\ox n}}_\tv
        & \geq & 1 - \max\left\{
        \exp(
        \log(n)^{c/r^2});
        \quad 
        \frac{c}{w} + \frac{c}{\exp(r)};
        \quad
        \exp(-\log(n/w)^{c/r^2})
        \right\} \\
        & = & 1 - \frac{c}{\exp(r)} \\
        & = & 1 - \exp(-c\sqrt{\log\log(n)})
    \end{eqnarray*}
    for some absolute constant $c>0$.
\end{proof}

\section{Lower bound against degree $d$ distributions}\label{sec:deg-d-lower-bound}

In this section we show that any polynomial distribution $\cP$ of degree $d$ is $1-o_n(1)$ far from $\cB_{1/3}^{\ox n}$, where $o_n(1)$ is a vanishing function of $n$ for any constant $d$.

We begin with the following proposition, and then proceed to the proof of the main theorem of this section.

\begin{proposition}\label{prop:density-of-high-rank-variety}
    Let $\cF = (P_1,\dots, P_K)$ be a degree $d$ factor whose rank is $c_\KL(d, 2^{-K}\eta)$ for some $\eta\in (0, 1]$, and let $Z$ denote the set of common zeros of the $P_i$. Then the density of $Z$, i.e., $\Pr_X[X\in Z]$, is equal to $2^{-K} \pm \eta$.
\end{proposition}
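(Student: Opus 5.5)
The plan is a direct Fourier computation: expand the indicator of $Z$ in characters of the factor, and bound every nonconstant term using the rank hypothesis together with \cref{thm:kaufman-lovett}. Since $x\in Z$ iff $P_i(x)=0$ for all $i$,
\[
\indicator{X\in Z}=\prod_{i=1}^K \frac{1+(-1)^{P_i(X)}}{2} = 2^{-K}\sum_{S\seq[K]} (-1)^{\sum_{i\in S}P_i(X)}.
\]
Taking expectations over uniform $X$ gives
\[
\Pr_X[X\in Z] = 2^{-K} + 2^{-K}\sum_{\emptyset\neq S\seq [K]} \E_X\big[(-1)^{P_S(X)}\big],
\]
where $P_S=\sum_{i\in S}P_i$.

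Each $P_S$ with $S\neq\emptyset$ is a nonzero linear combination of the factor, of some degree $\ell\le d$. By hypothesis, $\rank_{\ell-1}(P_S)$ exceeds $c_\KL(d,2^{-K}\eta)$; I am reading ``rank is $c_\KL(\cdot)$'' as ``$\cF$ is $c_\KL(d,2^{-K}\eta)$-regular'', i.e., strictly greater. The contrapositive of \cref{thm:kaufman-lovett} then gives $\bias(P_S)<2^{-K}\eta$.

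The one subtlety, and the main point to check, is the degree mismatch. The combination has degree $\ell$, possibly less than $d$, whereas the threshold is stated with $c_\KL(d,\cdot)$. Here I would either use monotonicity of $c_\KL$ in $d$, taking $c_\KL(d,\cdot)\ge c_\KL(\ell,\cdot)$ without loss of generality by redefining it as a maximum over degrees $\le d$, or apply \cref{thm:kaufman-lovett} with $\ell$ in place of $d$. If $\ell=1$, regularity means $P_S$ is a nonconstant linear function, so its bias is $0$. Constant $P_S$ are excluded because their rank is $0$.

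Summing the $2^K-1$ error terms, each of absolute value below $2^{-K}\eta$, gives
\[
\Bigl|\Pr_X[X\in Z]-2^{-K}\Bigr| \le 2^{-K}(2^K-1)\,2^{-K}\eta \le \eta,
\]
which is the claimed $2^{-K}\pm\eta$ (with room to spare). Alternatively one could cite \cref{lem:vazirani-xor} to get that $\cF$ is $2^{K/2}2^{-K}\eta$-close to uniform and read off the mass of the all-zero point. The direct expansion is cleaner, however, since it avoids the $2^{K/2}$ factor.
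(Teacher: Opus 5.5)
Your proposal is correct and follows the same route as the paper: write $\Pr_X[X\in Z]=2^{-K}\sum_{S\seq[K]}\E_X\bigl[(-1)^{\sum_{i\in S}P_i(X)}\bigr]$, bound each of the $2^K-1$ nontrivial terms using the contrapositive of \cref{thm:kaufman-lovett}, and sum. Your explicit handling of the degree mismatch (a combination of degree $\ell<d$ needs $c_\KL(\ell,\cdot)\le c_\KL(d,\cdot)$) fills in a step the paper leaves implicit, and your bookkeeping gives the slightly sharper error $(1-2^{-K})\,2^{-K}\eta$ instead of $\eta$.
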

\begin{proof}
    Observe that $\Pr[P_1(X)=\dots = P_K(X) = 0] = \E[(-1)^{Y_1P_1(X) + \dots + Y_KP_K(X)}]$ where $X$ is drawn uniformly at random and the $Y_i\sim \cU$ are i.i.d. variables mutually independent from $X$. Then we have the following.
    \begin{align*}
        \E[(-1)^{Y_1P_1 + \dots + Y_KP_K}] &= 2^{-K}\sum_{S\seq [K]} \E\Big[(-1)^{\sum_{i\in S} P_i}\Big] \\
        &= 2^{-K} + \sum_{S\seq [K], S\neq \emptyset} (\pm2^{-K}\eta)\\
        &= 2^{-K} \pm \eta. \qedhere
    \end{align*}
\end{proof}

\begin{theorem}[Degree-$d$]\label{thm:deg-d-lower-bound}
For every $0 < \eps \le 1$ and $d\in \N$ there exists some $\eps',d' > 0$ such that the following holds. Let $\cP = (P_1,\dots, P_n)$ be a distribution generated by polynomials of degree $d$. Then, 
\[
\norm{\cP - \cB_{1/3}^{\ox n}}_\tv \ge 1 - O(d'n^{-\eps'}) - O(\eps),
\]
where the hidden constants in big $O$ are absolute. It follows that,
\[
\norm{\cP - \cB_{1/3}^{\ox n}}_\tv \ge 1 - o_n(1),
\]
where $o_n(1)$ is a vanishing function of $n$ for any constant $d$.
\end{theorem}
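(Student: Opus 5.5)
We combine the sunflower pairwise regularization (\cref{thm:lossy-pairwise-regularization}), the Chebyshev lemma (\cref{lem:chebyshev-bounded-rank-d}), and the fixing-coordinates lemma (\cref{lem:dist-tv-common-Q}).

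\emph{Setup.} Let $\delta=\delta_d>0$ be the constant of \cref{thm:exp-low-deg-far-from-third}. Since $\Pr[P_i=1]$ is within $\delta$ of $1/3$ for no degree-$d$ polynomial, every coordinate is $\delta$-far from $\cB_{1/3}$ regardless of how it is represented. Given $\eps$, we will choose a growth function $f$ so large that any $f$-regular factor of dimension $k$ has every nonzero linear combination of bias at most $2^{-2k}\eps\delta^2$. Concretely we take
\[
f(k)\ge c_\KL\bigl(d,\,2^{-3k}\eps\delta^2\bigr),
\]
enlarged further as needed below. Apply \cref{thm:lossy-pairwise-regularization} to the one-element factors $\cF_i=\{P_i\}$ with $K=1$ and this $f$. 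This yields constants $\eps_0>0$ and $K'$ (depending only on $d,\eps$), a sub-collection of size $w=n^{\eps_0}$, and refinements $\cG_i$ with $|\cG_i|\le K'$. These refinements form a sunflower with core $\cC=\{Q^*_1,\dots,Q^*_c\}$ and petals $\cQ_i=\cG_i\setminus\cC$, and every union $\cG_i\cup\cG_j$ is $f$-regular. We set $\eps'=\eps_0$ and $d'=2^{O(K')}$.

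\emph{Conditioning on the core.} Write $P_{\pi(i)}=\Gamma_i(Q^*;\cQ_i)$. For each $z\in\F_2^c$, let $P_i^z=\Gamma_i(z;\cQ_i)$, viewed on the uniform distribution restricted to $\{Q^*=z\}$.
\begin{itemize}
\item By \cref{prop:density-of-high-rank-variety} (or Vazirani's \cref{lem:vazirani-xor}) applied to the $f$-regular factor $\cC$, each such event has probability $2^{-c}\pm\eta$, so at least $2^{-c-1}$.
\item Conditioned on $Q^*=z$, the joint distribution of $(\cQ_i;\cQ_j)$ is close to uniform on $\F_2^{2|\cQ_i|}$. This follows because $(\cC;\cQ_i;\cQ_j)$ is regular, hence jointly $\eta$-close to uniform, and conditioning on an event of probability at least $2^{-c-1}$ inflates the distance by at most a factor $2^{c+1}$.
\end{itemize}
Therefore the covariance bound of \cref{claim:cov-p-i-p-j} holds for the conditioned coordinates with error $O(2^{c}\eta)=O(\eps\delta^2)$.

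\emph{Applying Chebyshev and recombining.} The obstacle lies in the marginals. The conditioned marginal $\Pr[P_i^z=1\mid Q^*=z]$ need not be $\delta$-far from $1/3$: it is approximately $\Pr_U[\Gamma_i(z,U)=1]$, a dyadic number of granularity at most $K'$. To handle this, we use the dyadic gap of \cref{prop:gap-of-one-third-from-dyadics}. Up to error $O(\eta)$, the conditioned marginal is $2^{-K'}/3$-far from $1/3$. We therefore replace $\delta$ by $\delta'=2^{-K'}/12$ in the Chebyshev argument and make $f$ larger accordingly: we require $\eta\ll \eps\,\delta'^2 2^{-c}$, which is legitimate since $K'$ depends only on $d$ and $f$. (This is mildly circular, since $K'$ depends on $f$. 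Instead we use $|\cG_i\cup\cG_j|$ inside $f$ itself: an $f$-regular factor of dimension $k$ has bias at most $2^{-3k}\eps$, which controls everything in terms of the factor's own size.) With this, running the Chebyshev argument for each $z$ gives
\[
\bigl\|(P_i^z)_i\mid_{Q^*=z}-\cB_{1/3}^{\ox w}\bigr\|_\tv\ge 1-O\bigl(2^{2K'}/w\bigr)-O\bigl(2^{-c}\eps\bigr).
\]
Finally, \cref{lem:dist-tv-common-Q}, with $|S|=2^c$ and $\tau=2^{-c-1}$, yields an overall bound of $1-O(2^{4K'}n^{-\eps_0})-O(\eps)$ once the $\eps$ term is rescaled inside $f$. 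Letting $\eps\to0$ slowly in $n$ gives the $1-o_n(1)$ conclusion.

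The main step I expect to be delicate is this conditioning on the core: the marginals and the pairwise near-independence must both survive restricting to $\{Q^*=z\}$, and the parameter bookkeeping between $f$, $K'$, and $\eta$ must be non-circular.
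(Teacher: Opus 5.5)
Your argument has the same skeleton as the paper's proof: \cref{thm:lossy-pairwise-regularization} applied to $\cF_i=\{P_i\}$, the Chebyshev lemma for each assignment of the core, then recombination over assignments. It is correct up to one bookkeeping fix, but it differs from the paper in two execution details.

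The first difference, on the marginals, improves on the paper. After the core is fixed to $\sigma$, the coordinates $\Gamma_i(\sigma,\cQ_i)$ are no longer degree-$d$ polynomials. The paper nonetheless applies \cref{lem:chebyshev-bounded-rank-d} to them with $\delta=\delta_d$, without justification. Using $\delta_{d,K'}$ from \cref{thm:exp-low-deg-far-from-third} instead would make $f$ depend on $K'$, which is exactly the circularity you spotted. Your dyadic argument is what makes this step valid. Regularity of the petal puts $\Pr[\Gamma_i(\sigma,\cQ_i)=1]$ within a negligible error of some $a/2^{\card{\cQ_i}}$, hence at distance at least $2^{-\card{\cQ_i}}/4$ from $1/3$ by \cref{prop:gap-of-one-third-from-dyadics}. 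Since $f$ is evaluated at the dimension of the factor it regularizes, this costs only a factor $2^{O(K')}$, which goes into $d'$.

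The bookkeeping fix: a single $\delta'=2^{-K'}/12$ does not suffice. The error $\eta_{ij}$ is controlled only in terms of $\card{\cG_i\cup\cG_j}$, which may be much smaller than $K'$, so $\eta_{ij}/\delta'^2$ need not be $O(\eps)$. Pigeonhole the sunflower by petal size first, losing a factor $K'+1$ in $w$, and take $\delta'=2^{-r_Q}/4$ for the common petal size $r_Q$. Then $\eta/\delta'^2$ is $\eps$ times $2^{-\Omega(c+r_Q)}$. This absorbs the $2^{O(c)}$ recombination loss, provided the constant in the exponent of $f$ is large enough.

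The second difference is the order of conditioning, where the paper's route is more economical. The paper applies the Chebyshev lemma to $\Gamma_i(\sigma,\cQ_i(X))$ with $X$ uniform on all of $\F_2^m$. This needs only the regularity of the sub-factor $\cQ_i\cup\cQ_j$ of $\cG_i\cup\cG_j$. Then \cref{lem:dist-tv-common-Q} transfers the bound to the event $\{\cC=\sigma\}$ through \cref{lem:dist-conditioning}, with $\tau\ge 2^{-c-1}$ from \cref{prop:density-of-high-rank-variety}.

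Your alternative shows that $(\cQ_i;\cQ_j)$ stays near uniform conditioned on $\cC=z$, at an $O(2^{c})$ loss; this is valid. But once you have bounds for the conditioned distributions, combine them directly with \cref{lem:dist-of-convex-comb}. \cref{lem:dist-tv-common-Q} takes the \emph{unconditioned} distance as its hypothesis, so invoking it at that point conditions twice. That is harmless but costs an extra $2^{O(c)}$ factor.
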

\begin{proof}
    Define the collection of factors $\cF_1,\dots, \cF_n$ with $\cF_i = \{P_i\}$. Let $f\colon \N\to \N$ be given by $f(r) = c_\KL(d, 2^{-3r}\eps\delta^2)$, where $\delta=\delta_d$ is from \cref{thm:exp-low-deg-far-from-third}.
    
    We invoke \cref{thm:lossy-pairwise-regularization} to obtain a collection of factors $(\cG_i : i\in I)$ where $I\seq [n]$ has size at least $\card{I} = s \ge n^{\eps'}$, and $\cG_i$ are pairwise $f$-regular. For simplicity, assume that $I=[s]$. Note that \cref{thm:lossy-pairwise-regularization} guarantees $\card{\cG_i}\le r$ where $r$ only depends on $d,\eps$.

    Let $\cC = \bigcap_{i\in I}\cG_i = \{C_1,\dots, C_\ell\}$ be the common intersection, and write each $\cG_i$ as the disjoint union $\cC \sqcup \cQ_i$. Now we can write the joint distribution of the set $I$ as,
    \[
    \cP_I=(P_1,\dots, P_s)=(\Gamma_1(\cC, \cQ_1), \dots, \Gamma_s(\cC, \cQ_s)). 
    \]
    
    Let $\Gamma(\cQ_1,\dots, \cQ_s, \cC)$ denote $(\Gamma_1(\cC, \cQ_1), \dots, \Gamma_s(\cC, \cQ_s))$. Consider $\sigma\colon \cC\to\F_2$ that assigns values to each polynomial in $\cC$. Let us call $\sigma$ an assignment, noting that there are at most $2^\ell$ such assignments. Because of $f$-regularity and by \cref{prop:density-of-high-rank-variety}, $\Pr[\cC = \sigma(\cC)] \ge 2^{-\ell} - 2^\ell2^{-3r}\eps \ge 2^{-r-1}$.

    By \cref{lem:chebyshev-bounded-rank-d} (with $\eta = 2^{-2r}\eps\delta^2$), we have that for any fixed assignment $\sigma$,
    \begin{equation}\label{eq:a-single-sigma-assignment}
    \norm{\Gamma_{\sigma}(\cQ_1,\dots, \cQ_s) - \cB_{1/3}^{\ox s}}_\tv \ge 1 - O(\frac{1}{\delta^2 s}) - O(2^{-2r}\eps).
    \end{equation}
    Now we apply \cref{lem:dist-tv-common-Q} (with $\card{S}= 2^{\ell}$, and $\tau = 2^{-r-1}$) to $\Gamma(\cQ_1,\dots, \cQ_s, \cC)$, where the common part is captured by $\cC$.
    \begin{align*}
        \norm{\cP - \cB_{1/3}^{\ox n}}_\tv &\ge \norm{\cP_I - \cB_{1/3}^{\ox s}}_\tv\\
        &\ge 1 - 4\cdot 2^\ell \cdot 2^{r+1} \cdot \Big(O(\frac{1}{\delta^2 s}) + O(2^{-2r}\eps)\Big)\tag{\cref{lem:dist-tv-common-Q} and \cref{eq:a-single-sigma-assignment}}\\
        &\ge 1 - O(\frac{2^{2r}}{\delta^2 n^{\eps'}}) - O(\eps).\tag{$2^\ell \le 2^r$}
    \end{align*}
    For constants $d$ and $\eps$, the term $O(\frac{2^{2r}}{\delta^2 n^{\eps'}})$ vanishes as $n$ grows. Since $\eps$ can be made arbitrarily small, it follows that
    \[
    \norm{\cP - \cB_{1/3}^{\ox n}}_\tv \ge 1 - o_{n}(1).\qedhere
    \]
\end{proof}

\section{Acknowledgments}

We thank Mark Kahn for providing the examples of low degree polynomials with $\Pr[P(x)=1]$ surprisingly close to 1/3 described in \cref{sec:open-problems}.

\newpage
\bibliographystyle{alpha}
\bibliography{refs}

@article{hatami2019higher,
  title={Higher-order fourier analysis and applications},
  author={Hatami, Hamed and Hatami, Pooya and Lovett, Shachar},
  journal={Foundations and Trends{\textregistered} in Theoretical Computer Science},
  volume={13},
  number={4},
  pages={247--448},
  year={2019}
}

@book{LidlNiederreiter1996,
  title={Finite Fields},
  author={Lidl, Rudolf and Niederreiter, Harald},
  year={1996},
  place={Cambridge},
  edition={2},
  series={Encyclopedia of Mathematics and its Applications},
  publisher={Cambridge University Press},
  collection={Encyclopedia of Mathematics and its Applications}
}

@inproceedings{haramaty2010structure,
  title={On the structure of cubic and quartic polynomials},
  author={Haramaty, Elad and Shpilka, Amir},
  booktitle={Proceedings of the forty-second ACM symposium on Theory of computing},
  pages={331--340},
  year={2010}
}

@inproceedings{bhattacharyya2014algorithmic,
  title={Algorithmic regularity for polynomials and applications},
  author={Bhattacharyya, Arnab and Hatami, Pooya and Tulsiani, Madhur},
  booktitle={Proceedings of the Twenty-Sixth Annual ACM-SIAM Symposium on Discrete Algorithms},
  pages={1870--1889},
  year={2014},
  organization={SIAM}
}

@article{horacsek2025sourcedecoding,
	title = {Constant-time source decoding},
	author = {Horacsek, Jordan and Lee, Chin Ho and Shinkar, Igor and Viola, Emanuele and Zhou, Renfei},
    journal = {Electronic Colloquium on Computational Complexity (ECCC)},
    year    = {2025},
    note    = {TR25-164},
}

@inproceedings{kaufman2008worst,
  title={Worst case to average case reductions for polynomials},
  author={Kaufman, Tali and Lovett, Shachar},
  booktitle={2008 49th Annual IEEE Symposium on Foundations of Computer Science},
  pages={166--175},
  year={2008},
  organization={IEEE}
}

@article{green2007distribution,
  title={The distribution of polynomials over finite fields, with applications to the Gowers norms},
  author={Green, Ben and Tao, Terence},
  journal = {Contributions to Discrete Mathematics},
  volume  = {4},
  number  = {2},
  pages   = {1--36},
  year    = {2009},
  doi     = {10.11575/CDM.V4I2.62086}
}

@inproceedings{viola2023new,
  title={New sampling lower bounds via the separator},
  author={Viola, Emanuele},
  booktitle={38th Computational Complexity Conference (CCC 2023)},
  pages={26--1},
  year={2023},
  organization={Schloss Dagstuhl--Leibniz-Zentrum f{\"u}r Informatik}
}

@article{Viob,
  title={The complexity of distributions},
  author={Viola, Emanuele},
  journal={SIAM Journal on Computing},
  volume={41},
  number={1},
  pages={191--218},
  year={2012},
  publisher={SIAM}
}

@article{kane2025symmetric,
  title={Symmetric Distributions from Shallow Circuits},
  author={Kane, Daniel M and Ostuni, Anthony and Wu, Kewen},
  journal={arXiv preprint arXiv:2511.14127},
  year={2025}
}

@inproceedings{shaltiel2024explicit,
  title={Explicit codes for poly-size circuits and functions that are hard to sample on low entropy distributions},
  author={Shaltiel, Ronen and Silbak, Jad},
  booktitle={Proceedings of the 56th Annual ACM Symposium on Theory of Computing},
  pages={2028--2038},
  year={2024}
}

@inproceedings{kane2025locally,
  title={Locally sampleable uniform symmetric distributions},
  author={Kane, Daniel M and Ostuni, Anthony and Wu, Kewen},
  booktitle={Proceedings of the 57th Annual ACM Symposium on Theory of Computing},
  pages={1807--1816},
  year={2025}
}

@inproceedings{watts2023unconditional,
  title={Unconditional quantum advantage for sampling with shallow circuits},
  author={Watts, Adam Bene and Parham, Natalie},
  booktitle =	{17th Innovations in Theoretical Computer Science Conference (ITCS 2026)},
  pages =	{17:1--17:12},
  series =	{Leibniz International Proceedings in Informatics (LIPIcs)},
  ISBN =	{978-3-95977-410-9},
  ISSN =	{1868-8969},
  year =	{2026},
  volume =	{362},
  editor =	{Saraf, Shubhangi},
  publisher =	{Schloss Dagstuhl -- Leibniz-Zentrum f{\"u}r Informatik},
  address =	{Dagstuhl, Germany},
  URL =		{https://drops.dagstuhl.de/entities/document/10.4230/LIPIcs.ITCS.2026.17},
  URN =		{urn:nbn:de:0030-drops-253048},
  doi =		{10.4230/LIPIcs.ITCS.2026.17}
}

@inproceedings{yu2024sampling,
  title={Sampling, flowers and communication},
  author={Yu, Huacheng and Zhan, Wei},
  booktitle={15th Innovations in Theoretical Computer Science Conference (ITCS 2024)},
  pages={100--1},
  year={2024},
  organization={Schloss Dagstuhl--Leibniz-Zentrum f{\"u}r Informatik}
}

@article{erdos1960intersection,
  title={Intersection theorems for systems of sets},
  author={Erd{\"o}s, Paul and Rado, Richard},
  journal={Journal of the London Mathematical Society},
  volume={1},
  number={1},
  pages={85--90},
  year={1960},
  publisher={Wiley Online Library}
}

@article{kasami1976weight,
  title={On the weight enumeration of weights less than 2.5 d of Reed—Muller codes},
  author={Kasami, Tadao and Tokura, Nobuki and Azumi, Saburo},
  journal={Information and control},
  volume={30},
  number={4},
  pages={380--395},
  year={1976},
  publisher={Elsevier}
}

@article{kasami1970weight,
  title={On the weight structure of Reed-Muller codes},
  author={Kasami, Tadao and Tokura, Nobuki},
  journal={IEEE Transactions on Information Theory},
  volume={16},
  number={6},
  pages={752--759},
  year={1970},
  publisher={IEEE}
}

@article{kaufman2012weight,
  title={Weight distribution and list-decoding size of reed--muller codes},
  author={Kaufman, Tali and Lovett, Shachar and Porat, Ely},
  journal={IEEE transactions on information theory},
  volume={58},
  number={5},
  pages={2689--2696},
  year={2012},
  publisher={IEEE}
}

@article{furst1984parity,
  title={Parity, circuits, and the polynomial-time hierarchy},
  author={Furst, Merrick and Saxe, James B and Sipser, Michael},
  journal={Mathematical systems theory},
  volume={17},
  number={1},
  pages={13--27},
  year={1984},
  publisher={Springer}
}

@article{ajtai198311,
  title={{$\Sigma_1^1$-formulae on finite structures}},
  author={Ajtai, Mikl{\'o}s},
  journal={Annals of pure and applied logic},
  volume={24},
  number={1},
  pages={1--48},
  year={1983},
  publisher={Elsevier}
}

@inproceedings{hastad1986almost,
  title={Almost optimal lower bounds for small depth circuits},
  author={Hastad, John},
  booktitle={Proceedings of the eighteenth annual ACM symposium on Theory of computing},
  pages={6--20},
  year={1986}
}

@inproceedings{kane2024locality,
  title={Locality bounds for sampling hamming slices},
  author={Kane, Daniel M and Ostuni, Anthony and Wu, Kewen},
  booktitle={Proceedings of the 56th Annual ACM Symposium on Theory of Computing},
  pages={1279--1286},
  year={2024}
}

@inproceedings{filmus2023sampling,
    author =	{Filmus, Yuval and Leigh, Itai and Riazanov, Artur and Sokolov, Dmitry},
    title =	{{Sampling and Certifying Symmetric Functions}},
    booktitle =	{Approximation, Randomization, and Combinatorial Optimization. Algorithms and Techniques (APPROX/RANDOM 2023)},
    pages =	{36:1--36:21},
    series =	{Leibniz International Proceedings in Informatics (LIPIcs)},
    ISBN =	{978-3-95977-296-9},
    ISSN =	{1868-8969},
    year =	{2023},
    volume =	{275},
    editor =	{Megow, Nicole and Smith, Adam},
    publisher =	{Schloss Dagstuhl -- Leibniz-Zentrum f{\"u}r Informatik},
    address =	{Dagstuhl, Germany},
    URL =		{https://drops.dagstuhl.de/entities/document/10.4230/LIPIcs.APPROX/RANDOM.2023.36},
    URN =		{urn:nbn:de:0030-drops-188611},
    doi =		{10.4230/LIPIcs.APPROX/RANDOM.2023.36}
}

@inproceedings{chattopadhyay2022space,
  title={The space complexity of sampling},
  author={Chattopadhyay, Eshan and Goodman, Jesse and Zuckerman, David},
  booktitle={13th Innovations in Theoretical Computer Science Conference (ITCS 2022)},
  pages={40--1},
  year={2022},
  organization={Schloss Dagstuhl--Leibniz-Zentrum f{\"u}r Informatik}
}

@article{goos2020lower,
  title={A lower bound for sampling disjoint sets},
  author={G{\"o}{\"o}s, Mika and Watson, Thomas},
  journal={ACM Transactions on Computation Theory (TOCT)},
  volume={12},
  number={3},
  pages={1--13},
  year={2020},
  publisher={ACM New York, NY, USA}
}

@article{viola2020sampling,
  title={Sampling lower bounds: boolean average-case and permutations},
  author={Viola, Emanuele},
  journal={SIAM Journal on Computing},
  volume={49},
  number={1},
  pages={119--137},
  year={2020},
  publisher={SIAM}
}

@article{viola2016quadratic,
  title={Quadratic maps are hard to sample},
  author={Viola, Emanuele},
  journal={ACM Transactions on Computation Theory (TOCT)},
  volume={8},
  number={4},
  pages={1--4},
  year={2016},
  publisher={ACM New York, NY, USA}
}

@article{de2012extractors,
  title={Extractors and lower bounds for locally samplable sources},
  author={De, Anindya and Watson, Thomas},
  journal={ACM Transactions on Computation Theory (TOCT)},
  volume={4},
  number={1},
  pages={1--21},
  year={2012},
  publisher={ACM New York, NY, USA}
}

@inproceedings{beck2012large,
  title={Large deviation bounds for decision trees and sampling lower bounds for AC0-circuits},
  author={Beck, Chris and Impagliazzo, Russell and Lovett, Shachar},
  booktitle={2012 IEEE 53rd Annual Symposium on Foundations of Computer Science},
  pages={101--110},
  year={2012},
  organization={IEEE}
}

@inproceedings{lovett2011bounded,
  title={Bounded-depth circuits cannot sample good codes},
  author={Lovett, Shachar and Viola, Emanuele},
  booktitle={2011 IEEE 26th Annual Conference on Computational Complexity},
  pages={243--251},
  year={2011},
  organization={IEEE}
}

\newpage
\appendix
\section{One polynomial's distance from Ber(non-dyadic)}\label{sec:non-constructive-lower-bound-for-a-single-poly}

In this section we prove \cref{thm:non-dyadic-gap} which states that for any non-dyadic number $\rho\in(0,1)$ and degree $d$ polynomial $P$
\[
\abs{\Pr_X[P(X)=1] - \rho} \ge \Omega_{d,\rho}(1).
\]
We prove this by showing that if one has a sequence of polynomials $P^{(n)}$ converging to a distribution $D$, then the limit distribution $D$ must have probability weights that are dyadic numbers. This means that there is no sequence of polynomials that samples $\Ber(\rho)$ better than a constant distance as the probability weight $\rho$ is not a dyadic number. Alas, the proof is non-constructive and does not elucidate what the constant $\Omega_{d,\rho}(1)$ should be. The proof relies on a key regularization lemma that uses similar techniques to \cite{green2007distribution, kaufman2008worst} but instead is applied to a \emph{sequence} of factors.  Then the bias-rank theorem of Kaufman and Lovett (\cref{thm:kaufman-lovett}) and Vazirani's XOR lemma (\cref{lem:vazirani-xor}) are used to conclude the result.

We will be using the standard concepts defined in \cref{sec:factors}, and we will extend those definitions to sequences of factors.

\paragraph{Sequences.}
A sequence of objects will be written $\{A^{(n)}\}_{n\in\N}$, and we often suppress the braces and index set, writing simply $A^{(n)}$.
Given two sequences $A^{(n)}$ and $B^{(n)}$, we say that $B^{(n)}$ is a \emph{subsequence} of $A^{(n)}$ if there exists a strictly increasing map $f\colon \N\to\N$ such that
\[
B^{(n)} = A^{(f(n))}\qquad \text{for all } n\in\N .
\]
In this case we write $B^{(n)} \prec A^{(n)}$.

\paragraph{Sequences of factors.}
We will frequently consider sequences of factors $\cF^{(n)}$.
Unless stated otherwise, it is understood that all factors in a given sequence have the same degree bound $d$ and the same dimension vector $(M_1,\dots,M_d)$.
Accordingly, we decorate by a superscript $(n)$ all data that may vary with $n$ (while $d$ and $(M_1,\dots,M_d)$ remain fixed); for instance,
\[
\cF^{(n)} = (P^{(n)}_{1},\dots,P^{(n)}_{K}).
\]

\begin{definition}[Refinement of a Sequence of Factors]
We say a sequence of factors $\cG^{(n)}$ is a refinement of $\cF^{(n)}$ (or that $\cG^{(n)}$ \emph{refines} $\cF^{(n)}$) if there is a sub-sequence $\cF'^{(n)}\prec \cF^{(n)}$ and a $\Gamma\colon \F_2^L \to \F_2^K$ such that $\Gamma(\cG^{(n)}) = \cF'^{(n)}$ for all $n$, where $L = \dim \cG^{(n)}$ and $K = \dim \cF^{(n)}$ (observe that $\Gamma, L$, and $K$ are the same for all $n$). Refining is transitive: if $\cH^{(n)}$ refines $\cG^{(n)}$ and $\cG^{(n)}$ refines $\cF^{(n)}$, then $\cH^{(n)}$ refines $\cF^{(n)}$.
\end{definition}

\begin{definition}[Regularity of a Sequence of Factors]
    Let $\cF^{(n)}$ denote a sequence of factors. We say this sequence is $r$-regular if there exists some $N_0\in \N$ such that for all $n\ge N_0$ the factor $\cF^{(n)}$ is $r$-regular. If $\cF^{(n)}$ is $r$-regular for every $r\in \N$ then we say this sequence is \emph{fully regular}.
\end{definition}

\begin{lemma}[Key Lemma]\label{lem:fully-regular-sub-sequence}
    Let $\cF^{(n)}$ be a sequence of degree $d$ factors with dimension vector $(M_1,\dots, M_d)$. Then there exists a fully regular sequence of degree $d$ factors $\cH^{(n)}$ that is a refinement of $\cF^{(n)}$.
\end{lemma}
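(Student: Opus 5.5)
The plan is to mimic the proof of \cref{thm:classic-regularization}: strong induction on the dimension vector $(M_1,\dots,M_d)$ under the inverse lexicographic order on $\N^d$, which is well-founded. The base case is the empty factor, which is fully regular by convention. Refinement of sequences is transitive, so it suffices to show the following. Either $\cF^{(n)}$ is already fully regular, or it admits a refinement (after passing to a subsequence) by a sequence of factors with a strictly smaller dimension vector. In the latter case the induction hypothesis finishes the proof.

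First we reduce to the case without linear dependencies. The set of possible linear dependency patterns among $K$ polynomials is finite (at most $2^{2^K}$ subsets of coefficient vectors). Hence by pigeonhole some pattern occurs for infinitely many $n$, and we pass to that subsequence. If the pattern is nontrivial, we drop one polynomial of maximal degree participating in a dependency, uniformly in $n$. The resulting sequence refines the old one with the same $\Gamma$ for all $n$, and has a smaller dimension vector, so we apply induction.

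Now assume $\cF^{(n)}$ is not fully regular. Then there is some $r$ such that for infinitely many $n$ the factor $\cF^{(n)}$ is not $r$-regular. For each such $n$ pick a nonzero coefficient vector $\lambda^{(n)}\in\F_2^K$ with $\rank_{\ell-1}(\sum_i\lambda^{(n)}_iP^{(n)}_i)\le r$. Since there are finitely many $\lambda$'s, ranks $r'\le r$, and degrees $\ell$, pigeonhole yields a subsequence on which $\lambda$, $r'$ and $\ell$ are constant. Moreover, the functions $\Gamma'\colon\F_2^{r'}\to\F_2$ witnessing the rank range over a finite set, so we may also fix $\Gamma'$ along the subsequence.

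On this subsequence we write $\sum_i\lambda_iP^{(n)}_i=\Gamma'(R^{(n)}_1,\dots,R^{(n)}_{r'})$ with $\deg R^{(n)}_j\le\ell-1$. Fix an index $i^*$ with $\lambda_{i^*}=1$ and $\deg_{\cF}P_{i^*}=\ell$. We replace $P^{(n)}_{i^*}$ by $R^{(n)}_1,\dots,R^{(n)}_{r'}$. Then $P^{(n)}_{i^*}=\Gamma'(R^{(n)})+\sum_{i\neq i^*}\lambda_iP^{(n)}_i$, so a single fixed map $\Gamma$ recovers $\cF'^{(n)}$ from the new factor. The new dimension vector is $(M_1,\dots,M_{\ell-1}+r',M_\ell-1,\dots,M_d)$ (more precisely, the $R_j$'s fall into degrees below $\ell$). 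This is strictly smaller in the inverse lexicographic order. Some degrees of $R_j$ may vary with $n$, but we pigeonhole once more on the degree tuple to get a fixed dimension vector, which keeps everything within the definition of a sequence of factors with fixed dimension vector.

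The main obstacle is the bookkeeping around uniformity in $n$. Refinement of sequences requires one $\Gamma$ and a fixed dimension vector for all $n$. Each step above achieves this by passing to a subsequence, using only that all relevant combinatorial data (coefficient vectors, ranks, degrees, combining functions) lie in finite sets depending on $K$ and $r$ only. One must also check that the notion of refinement composes through nested subsequences; it does, since a subsequence of a subsequence is a subsequence. Note also that no bound on the size of $\cH^{(n)}$ is required here, so we need not control $r$; well-foundedness of the order alone ensures termination.
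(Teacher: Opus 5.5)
Your proposal is correct and follows essentially the same route as the paper: strong induction on the dimension vector in the inverse lexicographic order, pigeonholing a finite ``certificate of non-regularity'' (coefficients, degree $\ell$, replaced index, combining function, degrees of the witnesses) to pass to a subsequence, then replacing a top-degree polynomial by the lower-degree witnesses and finishing by induction and transitivity of refinement. Your separate linear-dependency step is harmless but redundant, since a dependency already witnesses non-regularity. Two bookkeeping points the paper makes explicit are covered by your finite-data remark: the replaced index is itself pigeonholed, and for $\ell=1$ the witnesses are constants and are simply dropped.
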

\begin{proof}
    The proof goes by strong induction on the dimension vector $(M_1,\dots, M_d)\in \N^d$. The set $\N^d$ is a well-ordered set by the inverse lexicographic order\footnote{We have $(M'_1, \dots, M'_d) < (M_1,\dots, M_d)$ if there exists some $i\in[d]$ such that $M'_j = M_j$ for all $j > i$ and $M'_i < M_i$.}. As the base case of induction we have that any factor with dimension vector $(M_1,\dots, M_d) = (1,0, \dots, 0)$ is evidently fully regular.
    
    Let $\cF^{(n)}=(P^{(n)}_1,\dots, P^{(n)}_K)$ where $K = M_1 + \dots + M_d$. If this sequence is fully regular then we are done. Otherwise, there exists some $r\in \N$ such that for infinitely many $n$ the factor $\cF^{(n)}$ is not $r$-regular. Let $\cF'^{(n)}\prec \cF^{(n)}$ denote this subsequence so that every factor in $\cF'^{(n)}$ is \emph{not} $r$-regular.

    Now, for simplicity of notation assume $n$ as fixed, and consider some specific factor $\cF' = (P'_1,\dots, P'_K)$ in the sequence $\cF'^{(n)}$. Since $\cF'$ is not $r$-regular, there exist $\lambda_1,\dots, \lambda_K\in \F_2$ not all zeros, $\ell \in [d], j\in [K], \Gamma \colon  \F_2^r \to \F_2$ and polynomials $Q_1,\dots, Q_r$ of degree at most $\ell-1$ such that
    \[
    \lambda_1P'_1 + \dots +\lambda_KP'_K = \Gamma(Q_1, \dots, Q_r),
    \]
    where $\ell = \max_i \deg(\lambda_iP'_i)$ and $j$ is such that $\deg(\lambda_jP'_j) = \ell$. 

    Let us refer to the tuple $(\lambda_1,\dots,\lambda_K,\ell,j,\Gamma,\deg(Q_1),\dots,\deg(Q_r))$ as a \emph{certificate of non-regularity}.
    Since the number of possible certificates is bounded by a function of $r$, $d$, and $K$, it follows that infinitely many factors in the sequence $\cF'^{(n)}$ share the same certificate. Let $\cF''^{(n)}\prec \cF'^{(n)}$ be a sub-sequence for which the certificate of non-regularity is the same.
    
    Now by removing the $j$\textsuperscript{th} polynomial in $\cF''^{(n)}$ and adding $Q^{(n)}_i$'s we then refine 
    \[
    \cF''^{(n)} = (P''^{(n)}_1,\dots, P''^{(n)}_K) \quad\longmapsto\quad \cG^{(n)} = (P''^{(n)}_1, \dots, P''^{(n)}_{j-1}, P''^{(n)}_{j+1}, \dots, P''^{(n)}_K, Q^{(n)}_1,\dots, Q^{(n)}_{r}). 
    \]
    (If $\ell=1$, we simply remove $P_j$ and add no $Q_i$'s because $Q_i$'s would be constant in that case). Since the certificate was chosen in such a way that it is the same for all factors in $\cF''^{(n)}$, there exists a function $\Psi\colon \F_2^{K+r-1}\to\F_2^{K}$ such that $\Psi(\cG^{(n)}) = \cF''^{(n)}$ for all $n$. Note that the dimension vector of $\cG^{(n)}$ is at most $(M_1, \dots, M_{\ell-1} + r, M_\ell-1,  M_{\ell+1}, \dots, M_d)$ which is smaller than $(M_1,\dots, M_d)$ in our ordering of $\N^d$. By the induction hypothesis there exists some fully regular $\cH^{(n)}$ that refines $\cG^{(n)}$. Finally, by transitivity of refinement, $\cH^{(n)}$ also refines $\cF^{(n)}$.
\end{proof}

\begin{lemma}\label{lem:fully-regular-means-uniform}
    Let $\cF^{(n)} = (P^{(n)}_1,\dots, P^{(n)} _K)$ be a fully regular sequence of factors of degree $d$. Then the joint distribution of polynomials $(P^{(n)}_1,\dots, P^{(n)} _K)$ converges to the uniform distribution $U_K$.
\end{lemma}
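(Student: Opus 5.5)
The proof combines the Kaufman--Lovett bias--rank theorem (\cref{thm:kaufman-lovett}) with Vazirani's XOR lemma (\cref{lem:vazirani-xor}). The plan is to show that for every fixed $\eta>0$ the statistical distance $\norm{(P^{(n)}_1,\dots,P^{(n)}_K)-\cU^{\ox K}}_\tv$ is eventually at most $\eta$. Since $\eta$ is arbitrary, this is exactly convergence to $U_K$. On the finite set $\F_2^K$, convergence in total variation is equivalent to pointwise convergence of the probability weights, so no separate topological argument is needed.

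Fix $\eta>0$ and set $\eps = 2^{-K/2}\eta$ and $r = c_\KL(d,\eps)$. By full regularity there is some $N_0$ such that $\cF^{(n)}$ is $r$-regular for all $n\ge N_0$. Fix such an $n$ and a nonzero vector $\lambda\in\F_2^K$, and write $R=\sum_i\lambda_iP^{(n)}_i$. Let $\ell=\max_i\deg(\lambda_iP^{(n)}_i)$ be the degree of the linear combination. Then $R$ is a polynomial of degree at most $\ell$ with $\rank_{\ell-1}(R)>r$.

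I want to conclude $\bias(R)<\eps$. Contrapositively, \cref{thm:kaufman-lovett} gives $\rank_{\ell-1}(R)\le c_\KL(\ell,\eps)$ whenever $\bias(R)\ge\eps$. The obstacle is that $\ell$ may be smaller than $d$, while I chose the threshold using degree $d$. I handle this by taking
\[
r=\max_{1\le \ell\le d} c_\KL(\ell,\eps)
\]
instead, which is still a constant depending only on $d,K,\eta$. With this choice every nonzero combination has bias below $\eps$.

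There are two minor checks. First, $R$ is not constant. A constant has $\rank_{\ell-1}$ equal to $0$ or $1$ (depending on convention), so $r$-regularity with $r\ge1$ excludes it; I can enlarge $r$ if needed to guarantee this. Second, the convention that the empty or degree-$0$ case is vacuous should be respected; it causes no trouble here.

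Now every nonzero linear function $L(y)=\sum\lambda_iy_i$ satisfies $\Pr[L(P^{(n)}_1,\dots,P^{(n)}_K)=0]=\tfrac12\pm\tfrac{\eps}{2}$. Vazirani's lemma then gives a distance of at most $2^{K/2}\eps=\eta$ from uniform, for all $n\ge N_0$. This proves the claimed convergence.
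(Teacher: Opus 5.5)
Your proof is correct and is the paper's argument: pick the rank threshold from Kaufman--Lovett at bias $2^{-K/2}\eta$, use full regularity to make every nonzero combination nearly unbiased for large $n$, and finish with Vazirani's XOR lemma. Taking $r=\max_{1\le\ell\le d}c_\KL(\ell,\eps)$, with $r\ge 1$ to exclude constant combinations, is a worthwhile addition: it fixes a point the paper glosses over, since the paper applies $c_\KL(d,\cdot)$ to combinations whose degree may be below $d$.
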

\begin{proof}
    We show that for every positive $\eps$ there exists some $N_0$ such that for all $n\ge N_0$ we have $\norm{\cF^{(n)} - U_K}_\tv \le \eps$, which implies our claim.
    
    Let $\eps > 0$ be fixed. Then define $r = c_\KL(d, \eps2^{-K/2})$, where $c_\KL$ comes from \cref{thm:kaufman-lovett}. Since $\cF^{(n)}$ is fully regular there exists some $N_0$ such that $\cF^{(n)}$ is $r$-regular for all $n \ge N_0$. Then by \cref{thm:kaufman-lovett}
    \[
    \bias(\lambda_1P^{(n)}_1 + \dots + \lambda_KP^{(n)}_K) \le \eps2^{-K/2},\qquad \text{for all } n\ge N_0 \text{ and } \lambda_i \text{ not all zeros.}
    \]
    Thus, by Vazirani's XOR lemma, we have that $\norm{\cF^{(n)}- U_K}_\tv \le \eps$ for all $n\ge N_0$.
\end{proof}

\begin{definition}[Computable distribution]
    We say a distribution $D$ over $\Bits^s$ is \defemph{computable} if there exists some $K\in\N$ and function $\Gamma\colon  \Bits^K\to\Bits^s$ such that $D = \Gamma(U_K)$.
\end{definition}

It is an easy observation that a distribution $D$ is computable if and only if the probability of every outcome is a dyadic number.

\begin{lemma}\label{lem:computability-of-limit}
    Let $P^{(n)}$ be a sequence of polynomials of degree $d$ that converges to the distribution $D$ in $\tv$-distance. Then, there exists a positive integer $K$ and a function $\Gamma\colon \F_2^K\to\F_2$ such that $D=\Gamma(U_K)$. In other words, the limit distribution $D$ is computable.
\end{lemma}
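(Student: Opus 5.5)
We treat each $P^{(n)}$ as a one-element factor $\cF^{(n)} = (P^{(n)})$ of degree $d$, with dimension vector $(M_1,\dots,M_d)$ equal to a unit vector, fixed after passing to a subsequence. Since $\deg P^{(n)}\in\{0,\dots,d\}$, some degree occurs infinitely often. Passing to a subsequence does not change the limit $D$, so we may assume the dimension vector is constant.

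By \cref{lem:fully-regular-sub-sequence} there is a fully regular sequence $\cH^{(n)} = (H^{(n)}_1,\dots,H^{(n)}_K)$ refining $\cF^{(n)}$. Concretely, there is a subsequence $\cF'^{(n)} = (P^{(f(n))})$ and a single function $\Gamma\colon \F_2^K\to\F_2$, independent of $n$, with $P^{(f(n))} = \Gamma(H^{(n)}_1,\dots,H^{(n)}_K)$ for all $n$. If the refinement ends with $K=0$, so that $\Gamma$ is constant, then $D$ is a point mass and is trivially computable.

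By \cref{lem:fully-regular-means-uniform}, the joint distribution of $\cH^{(n)}$ converges in total variation to $U_K$. Applying a fixed map cannot increase total variation distance, so
\[
\norm{P^{(f(n))} - \Gamma(U_K)}_\tv \le \norm{\cH^{(n)} - U_K}_\tv \to 0 .
\]
The subsequence $P^{(f(n))}$ also converges to $D$, since $P^{(n)}\to D$. Limits in total variation are unique, because $\tv$ is a metric on distributions over $\Bits$. Hence $D = \Gamma(U_K)$, which is computable.

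The only delicate point is keeping $\Gamma$ and $K$ independent of $n$ along the refinement. This is exactly what the sequence version of refinement in \cref{lem:fully-regular-sub-sequence} guarantees, through the certificate-pigeonholing step, so nothing further needs checking. One should also note that the refinement lemma is stated for sequences with a fixed dimension vector, which is why we extract the constant-degree subsequence first.
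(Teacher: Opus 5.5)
Your proof is correct and follows the paper's argument. You view $P^{(n)}$ as a one-element factor, apply \cref{lem:fully-regular-sub-sequence} to obtain a fully regular refinement with a single fixed $\Gamma$, and then use \cref{lem:fully-regular-means-uniform} together with uniqueness of total variation limits along the subsequence; your extra steps (fixing the exact degree so the dimension vector is constant, the data-processing bound, and the trivial constant case) only make explicit what the paper leaves implicit.
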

\begin{proof}
    Let us describe $P^{(n)}$ by the trivial sequence of factors that computes it; that is to say $\cF^{(n)} = (P^{(n)})$ with the dimension vector $(0, \dots, 0, 1)\in \N^d$.

    By \cref{lem:fully-regular-sub-sequence} there exists $K$ and a fully regular degree $d$ sequence $\cH^{(n)}=(Q_1^{(n)}, \dots, Q_K^{(n)})$ of dimension $K$ that is a refinement of $\cF^{(n)}$. That is, for a subsequence $P'^{(n)}$ of $P^{(n)}$ there exists some $\Gamma$ such that for all $n$, $P'^{(n)} = \Gamma(Q_1^{(n)}, \dots, Q_K^{(n)})$.

    Since $\cH^{(n)}$ is fully regular, by \cref{lem:fully-regular-means-uniform} the joint distribution $(Q_1^{(n)}, \dots, Q_K^{(n)})$ converges to the uniform distribution, and therefore, $P'^{(n)} = \Gamma(Q^{(n)}_1,\dots,Q_K^{(n)})$ converges to $\Gamma(U_K)$.
\end{proof}

\begin{remark}
    In fact, \cref{lem:computability-of-limit} extends with essentially no additional effort to the setting of joint distributions of low-degree polynomials. We have presented it in the present form purely for the sake of slightly cleaner notation.
\end{remark}

\begin{theorem}\label{thm:non-dyadic-gap}
    Let $0 < \rho < 1$ be a \emph{non-dyadic} number, and let $d$ be a positive integer. Then for any degree $d$ polynomial $P\colon \F_2^n\to\F_2$ it holds that $\norm{P - \Ber(\rho)}_\tv \ge \Omega_{d, \rho}(1)$.
\end{theorem}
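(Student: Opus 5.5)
We argue by contradiction and compactness. Suppose the claim fails for some fixed non-dyadic $\rho\in(0,1)$ and degree $d$. Then for every $k$ there is a degree-$d$ polynomial $P^{(k)}\colon\F_2^{m_k}\to\F_2$ with $\norm{P^{(k)}-\Ber(\rho)}_\tv<1/k$. The number of variables $m_k$ may differ from term to term. This is harmless, since every definition in the sequence framework (factors, refinement, regularity, the bias-rank bound \cref{thm:kaufman-lovett}) is stated independently of the ambient dimension. If one prefers a fixed ambient space, each $P^{(k)}$ can be padded with dummy variables without changing its distribution. Consequently the sequence of output distributions of $P^{(k)}$ converges in $\tv$-distance to $D=\Ber(\rho)$.

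Next we apply \cref{lem:computability-of-limit} to this sequence. It gives a $K$ and a function $\Gamma\colon\F_2^K\to\F_2$ with $D=\Gamma(U_K)$. That lemma only produces a subsequence $P'^{(k)}$ converging to $\Gamma(U_K)$. But a subsequence of a sequence converging to $\Ber(\rho)$ still converges to $\Ber(\rho)$, and $\tv$-limits are unique, so $\Ber(\rho)=\Gamma(U_K)$. Hence $\rho=\Pr[\Gamma(U_K)=1]=\card{\Gamma^{-1}(1)}/2^K$, which is dyadic. This contradicts the choice of $\rho$.

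It follows that $\inf_P\norm{P-\Ber(\rho)}_\tv>0$, where the infimum ranges over all degree-$d$ polynomials in any number of variables. This infimum is the constant $\Omega_{d,\rho}(1)$.

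The main subtlety is the limit step, not the contradiction itself. We must check that \cref{lem:fully-regular-means-uniform} together with the fixed $\Gamma$ from \cref{lem:fully-regular-sub-sequence} really gives $\tv$-convergence of $\Gamma(Q^{(k)}_1,\dots,Q^{(k)}_K)$ to $\Gamma(U_K)$. This holds because applying a fixed function cannot increase $\tv$-distance. We also rely on the refinement function $\Gamma$ being the same for all terms of the subsequence, which the certificate-pigeonholing in \cref{lem:fully-regular-sub-sequence} guarantees. The argument is non-constructive: it gives no explicit value of the constant, in contrast with \cref{thm:exp-low-deg-far-from-third} for $\rho=1/3$.
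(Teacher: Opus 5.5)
Your proposal is correct and follows the paper's route: you assume a sequence of degree-$d$ polynomials converging to $\Ber(\rho)$ in total variation, invoke \cref{lem:computability-of-limit} to conclude the limit equals $\Gamma(U_K)$ and so has dyadic weights, and derive a contradiction. Your extra remarks (uniqueness of $\tv$-limits along the subsequence, the fact that a fixed $\Gamma$ cannot increase $\tv$-distance, and that varying ambient dimension is harmless) spell out steps the paper's short proof leaves implicit.
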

\begin{proof}
    Toward contradiction, suppose this were not the case. That is, suppose there exists a sequence $P^{(n)}$ of polynomials of degree $d$ such that
    \[
    \norm{P^{(n)} - \Ber(\rho)}_\tv \le o_{d, \rho}(1) \qquad \text{ for } n\to\infty.
    \]
    Then this sequence converges to $\Ber(\rho)$ in total variation distance. On the other hand, by \cref{lem:computability-of-limit} there exist $K$ and $\Gamma$ such that the limit of $P^{(n)}$ is the distribution $\Gamma(U_K)$. This is a contradiction as $\rho$ is non-dyadic.
\end{proof}

\section{Deferred proofs}\label{sec:deferred-proofs}

\begin{proof}[Proof of Vazirani's XOR lemma (\cref{lem:vazirani-xor})]
Let $\mu,u\colon \F_{2}^n\to [0,1]$ be the probability mass functions of $\cX$ and $\cU^{\ox n}$ respectively. We will bound $\norm{\cX - \cU^{\ox n}}_2$ from above. We have
\begin{align*}
\norm{\cX - \cU^{\ox n}}^2_2 &= \sum_{x}(\mu(x)-u(x))^2 \\
&= 2^n\E_{x}\big[(\mu(x)-u(x))^2\big] \\
&= 2^{n}\E\big[\mu(x)^2\big]+ 2^n\E\big[u(x)^2\big]-2^{n+1}\E\big[\mu(x)u(x)\big] \\
&= 2^n\sum_{\gamma}\hat{\mu}(\gamma)^2 + 2^{-n} - 2\hat{\mu}(0) \\
&= 2^{n} \sum_{\gamma\neq 0} \hat{\mu}(\gamma)^2\tag{because $\hat{\mu}(0)=2^{-n}$}.
\end{align*}

Consider some non-trivial character $\chi_\gamma = (-1)^L$ where $L$ is some nonzero linear form. Then, $\abs{\hat{\mu}(\gamma)}=\abs{\E_x[\chi_\gamma(x) \mu(x)]}=\abs{2^{-n} \bias(L(X))}\le \eps\cdot 2^{-n}$. Therefore,
\begin{align*}
\norm{\cX - \cU^{\ox n}}^2_2 &\le 2^{2n-1}(2^{-2n}\eps^2) \le \eps^2.
\end{align*}
Then $\norm{\cX - \cU^{\ox n}}_{1}\leq 2^{n/2}\eps$ is immediate by Cauchy-Schwarz.
\end{proof}

\begin{proof}[Proof of \cref{prop:properties-of-rapid-function}]
   We prove this claim by induction on $\vec M\defeq (M_1,\dots, M_d)$, where the tuples in $\N^d$ are ordered by inverse lexicographic ordering. Our base case covers all cases where $N_2=\dots = N_d = 0$. We have,
   \[
   \psi(N_1,0\dots, 0) = N_1 \le \sum_i M_i \le  \psi(M_1,\dots, M_d).
   \]
   From now on we assume that $N_i > 0$ for some $i\ge 2$. If $\vec N = \vec M$, then the statement clearly holds. Thus we further assume that $\vec N < \vec M$.
   
   Let $j$ be the right-most index such that $M_j > N_j$ (i.e. $j$ is the index that shows $\vec M$ is greater than $\vec N$ in inverse lexicographic order). In particular, the entry $M_j$ is nonzero.
   
   Define $\ell\ge 2$ (resp. $k\ge 2$) to be the first index of $\vec M$ (resp. $\vec N$) that is nonzero. Note that $\ell \le j$.

   Now by definition of $\psi$, we have $\psi(\vec M) = \psi(\vec M')$ and $\psi(\vec N)= \psi(\vec N')$ where,
   \begin{align*}
       \vec M' &= (M_1,\dots, M_{\ell-1}+f(\textstyle\sum_iM_i), M_\ell - 1, \dots, M_d)\\
       \vec N' &= (N_1,\dots, N_{k-1}+f(\textstyle\sum_iN_i), N_k - 1, \dots, N_d)
   \end{align*}
   
    Clearly, $\sum_i\vec N_i' \le \sum_i\vec M_i'$ still holds because of monotonicity of $f$. To show that $\vec N' < \vec M'$, we consider three cases.
   \begin{itemize}
       \item ($k \ge j$) If $k \ge j$ then $N'_k < M'_k$ and therefore $\vec N' < \vec M'$.
       \item ($k < j$ and $\ell <j$) Then $N'_j=N_j < M_j= M'_j$ and $\vec N' < \vec M'$.
       \item ($k < j$ and $\ell = j$) Then $M'_j\ge N'_j$, and if this inequality is strict, then we are done. Otherwise, $M'_{j-1} \ge f(\sum_i M_i) \ge \sum_i M_i > N'_{j-1}$.
   \end{itemize}
   The three cases show that $\vec M' > \vec N'$ and the claim follows by the induction hypothesis.
\end{proof}

\end{document}